\newtheorem{theorem}{Theorem}
\newtheorem{lemma}{Lemma}
\newtheorem{proposition}{Proposition}
\newtheorem{definition}{Definition}
\definecolor{MagentaXD}{RGB}{204, 48, 152}
\definecolor{MagentaXDdetail}{RGB}{150, 79, 126}
\definecolor{GreenMAF}{RGB}{28, 112, 46}
\definecolor{GreenMAFdetail}{RGB}{80, 117, 88}
\definecolor{detail}{RGB}{110,110,110}
\definecolor{quantumviolet}{HTML}{53257F} %Quantum violet
\definecolor{quantumgray}{HTML}{555555} %Quantum gray
\definecolor{quantumgreen}{HTML}{007474} %Quantum green
\definecolor{quantumblue}{HTML}{002366} %Quantum gray
\definecolor{quantumpurple}{HTML}{66023C} %Quantum purple
\definecolor{quantumdarkviolet}{HTML}{5D3954} %Quantum dark violet
\newcommand{\Cbb}{\mathbb{C}}
\newcommand{\Irr}{\operatorname{Irr}}
\newcommand{\Tr}{\operatorname{Tr}}
\newcommand{\End}{\operatorname{End}}
\definecolor{nblue}{rgb}{0.2,0.2,0.7}
\definecolor{ngreen}{rgb}{0.2,0.6,0.2}
\definecolor{nred}{rgb}{0.7,0.2,0.2}
\definecolor{nblack}{rgb}{0,0,0}
\newcommand{\be}{\begin{equation}}
	\newcommand{\ee}{\end{equation}}
\def\bea#1\eea{\begin{align}#1\end{align}}
\newif\ifcomments
\newif\ifdetails
\newcommand{\orcid}[1]{\href{https://orcid.org/#1}{\includegraphics[width=8pt]{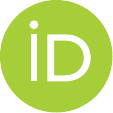}}}
\theoremstyle{definition}
\newtheorem{example}{Example}
\theoremstyle{remark}
\newtheorem{remark}{Remark}[section]
\theoremstyle{remark}
\newcommand\EA{\EuScript{A}}
\newcommand\EB{\EuScript{B}}
\newcommand\EC{\EuScript{C}}
\newcommand\ED{\EuScript{D}}
\newcommand\EE{\EuScript{E}}
\newcommand\EM{\EuScript{M}}
\newcommand\EN{\EuScript{N}}
\newcommand\EP{\EuScript{P}}
\newcommand{\Acal}{\mathcal{A}}
\newcommand{\Bcal}{\mathcal{B}}
\newcommand{\Hilb}{\mathsf{Hilb}}
\newcommand\Fun{\mathsf{Fun}}
\newcommand\Vect{\mathsf{Vect}}
\newcommand\Rep {\mathsf{Rep}}
\newcommand\id {\mathrm{id}}
\newcommand\Hom {\mathrm{Hom}}
\newcommand{\one}{\mathbf{1}}
\newcommand\Mod{\mathsf{Mod}}
\newcommand\biopencrossl{%
	\mathrel{\scalerel*{>\kern-.4\LMpt\joinrel\blacktriangleleft}{x}}}
\newcommand\biopencrossr{%
	\mathrel{\scalerel*{\blacktriangleright\joinrel\kern-.4\LMpt<}{x}}}
\begin{document}

\title{Weak Hopf symmetry and tube algebra of the generalized multifusion string-net model}

\author[a,b]{Zhian Jia\orcid{0000-0001-8588-173X},}

\author[c,d]{Sheng Tan\orcid{0009-0008-3318-9942},}

\author[a,b]{Dagomir Kaszlikowski}

\affiliation[a]{Centre for Quantum Technologies, National University of Singapore, Singapore 117543, Singapore}
\affiliation[b]{Department of Physics, National University of Singapore, Singapore 117543, Singapore}
\affiliation[c]{Beijing Institute of Mathematical Sciences and Applications, Beijing, 101408, China}
\affiliation[d]{Yau Mathematical Sciences Center, Tsinghua University, Beijing, 100084, China}

\emailAdd{giannjia@foxmail.com}
\emailAdd{tan296@bimsa.cn}
\emailAdd{phykd@nus.edu.sg}

\abstract{
We investigate the multifusion generalization of string-net ground states and lattice Hamiltonians, delving into its associated weak Hopf symmetry.  For the multifusion string-net, the gauge symmetry manifests as a general weak Hopf algebra, leading to a reducible vacuum string label; the charge symmetry, serving as a quantum double of gauge symmetry, constitutes a connected weak Hopf algebra. This implies that the associated topological phase retains its characterization by a unitary modular tensor category (UMTC). The bulk charge symmetry can also be captured by a weak Hopf tube algebra. We offer an explicit construction of the weak Hopf tube algebra structure and thoroughly discuss its properties. The gapped boundary and domain wall models are extensively discussed, with these $1d$ phases characterized by unitary multifusion categories (UMFCs). We delve into the gauge and charge symmetries of these $1d$ phases, as well as the construction of the boundary and domain wall tube algebras. Additionally, we illustrate that the domain wall tube algebra can be regarded as a cross product of two boundary tube algebras. 
As an application of our model, we elucidate how to interpret the defective string-net as a restricted multifusion string-net.
}

\maketitle

%\tableofcontents

\section{Introduction}

Exotic quantum phases of matter have attracted substantial attention in recent decades \cite{sachdev1999quantum, Wen2004, zeng2015quantum,simon2023topological}. Apart from their fundamental significance, these phases hold potential applications in quantum information and computation tasks, such as the development of robust topological quantum error-correction codes (QECC) \cite{Dennis2002topological, Terhal2015quantum} and topological quantum computation (TQC) \cite{Kitaev2003, freedman2002modular, Nayak2008,wang2010topological,pachos2012introduction}.
There are roughly three types of quantum phases at zero temperature \cite{chatterjee2023emergent,Wen2023emergent}: (i) Gapped phase, characterized by the absence of low-energy excitations, with all excitations exhibiting an energy gap; (ii) Gapless phase with finite low-energy modes, exemplified by Dirac or Weyl semimetals, superfluids, etc.; (iii) Gapless phase with infinite low-energy modes, as observed in Fermi metals, Bose metals, etc.
In this work, we will focus on the gapped phase.

In a gapped quantum phase, there exist two types of excitations: local excitations, which can be created or annihilated by local operators, and topological excitations, which cannot. A gapped phase featuring topological excitations is termed a topologically ordered phase.
The universal low-energy and long-distance properties of gapped phases of quantum matter are described by topological quantum field theory (TQFT) \cite{witten1988topological,dijkgraaf1990topological,Wen2004,wang2010topological,pachos2012introduction,simon2023topological,zeng2015quantum}.
Topological phases display unique attributes like long-range entanglement \cite{zeng2015quantum}, fractional statistics \cite{wilczek1990fractional}, and topologically protected ground state degeneracy \cite{Wen1989,Wen1990,Kitaev2003,Levin2005}. These characteristics distinguish them from conventional models, notably deviating from Landau's spontaneous symmetry breaking framework \cite{landau1937theorie}.

%\subsection{$2d$ topologically ordered phase}

Topologically ordered phases are termed anomaly-free if they can be realized by a local lattice Hamiltonian in the same dimension \cite{kong2014braided,Wen2015classifying}. Otherwise, they are referred to as anomalous.
In the context of $2d$ fusion topologically ordered phases\,\footnote{Here by fusion we mean the vacuum sector of the topological phase is simple.}, it is well known that they are classified, up to $E_8$ quantum Hall states, by the unitary modular tensor categories (UMTCs) representing the topological excitations in the bulk \cite{KITAEV2006}.
The topological excitations in the bulk are point-like quasi-particles that cannot be created or annihilated by local operators from the vacuum sector or its direct sums. They exhibit fusion, splitting into a direct sum of other topological excitations, and braiding interactions among each other.
The vacuum sector $\one$ is the tensor unit of the UMTC $\EB$. 
The  $2d$ fusion topologically ordered phases can also be classified into two families depending on its chiral central charge $c=c_R-c_L$ \cite{Kane1997quantized}: (i) chiral $c\neq 0$, where time-reversal symmetry is broken, and (ii) non-chiral $c=0$, where time-reversal symmetry is preserved.
In the UMTC framework, a non-chiral UMTC denoted as $\EB$ is defined such that there exists a unitary multifusion category (UMFC) $\ED$ where $\EB$ is braided monoidal equivalent to the Drinfeld center $\mathcal{Z}(\ED)$.

Turaev and Viro introduced a state-sum invariant for 3-manifolds rooted in a ribbon fusion category \cite{turaev1992state}. This framework was later broadened by Barrett and Westbury to encompass the spherical fusion category \cite{barrett1996invariants}.
The Levin-Wen string-net model\,\footnote{Throughout this work, when referring to the ``Levin-Wen string-net model'', we specifically mean the string-net model with input data being a unitary fusion category (UFC).} provides a lattice interpretation of the Turaev-Viro-Barrett-Westbury TQFT \cite{Levin2005,kirillov2011stringnet,lan2014topological,Lin2014generalizations,Lin2021generalized,Hahn2020generalized} and a construction of the explicit Hamiltonian. 
The Levin-Wen string-net model is a lattice realization of non-chiral topologically ordered phases.
Kitaev quantum double model is also a crucial lattice model for non-chiral topological phases, which can be regarded as a lattice gauge theory \cite{Kitaev2003}. It has been established that for a $C^*$ Hopf algebra $H$, the Kitaev quantum double model is equivalent to the Levin-Wen string-net model with input UFC $\EC=\Rep(H)$ \cite{Buerschaper2009mapping,Buerschaper2013a,buerschaper2013electric,jia2023boundary}.
However, it is established that not all UFCs can be represented as the representation category of a Hopf algebra. Thus, we need to consider more general quantum symmetries, and it turns out the symmetries are characterized by weak Hopf algebras \cite{BOHM1998weak}.
The quantum symmetry behind a general Levin-Wen string-net model is a weak Hopf algebra in the sense that every UFC can be expressed as a representation category for some weak Hopf algebra \cite{szlachanyi2000finite,ostrik2003module}.
The lattice gauge theory with gauge symmetry characterized by a weak Hopf algebra is established in Refs.~\cite{Jia2023weak,chang2014kitaev}. This prompts the question: what is the general string-net realization of a topological phase with general weak Hopf symmetry? The answer lies in the multifusion string-net model. This is because the representation category of a weak Hopf algebra is generally not a fusion category but a multifusion category.

In this work, our primary focus is on the multifusion generalization of the Levin-Wen string-net model, exploring both the  macroscopic (algebraic) theory and the microscopic lattice Hamiltonian realization.
We will use the correspondence between the multifusion string-net and the weak Hopf lattice gauge theory to study the bulk topological phases, gapped boundary theory and gapped domain wall theory.
The connection and difference between the multifusion string-net and the Levin-Wen string-net will be stressed.

To characterize the topological excitations, we introduce a weak Hopf tube algebra. It is worth noting that the investigation of tube algebras has a long history since Ocneanu's original works \cite{ocneanu1994chirality,ocneanu2001operator}.
A key property of the tube algebra is that its representation category is equivalent to the category of topological excitations of the model \cite{ocneanu1994chirality,ocneanu2001operator,simon2023topological,izumi2000structure,izumi2001structure,muger2003subfactorsI,Christian2023,kawagoe2024levinwen}. Previous formulations of tube algebras lack a coalgebra structure and therefore do not qualify as weak Hopf algebras. In line with this principle, we use the term ``tube algebra'' to denote the local algebra for which the representation category coincides with the topological phase.
For the boundary of the Levin-Wen model, a weak Hopf local algebra is proposed in Ref.~\cite{Kitaev2012boundary}, where the authors utilize topological local moves to derive the simplest configuration of a boundary tube region and construct the weak Hopf structure for this region.
However, for the bulk, performing topological local moves to obtain the simplest configuration results in a Q-algebra \cite{lan2014topological}, which lacks a coalgebra structure.
Our observation suggests that we should treat the bulk excitation as a domain wall defect. By doing so, when performing the topological local moves, we can obtain a tube region with a symmetric structure, enabling us to equip it with a weak Hopf algebra structure.

The gapped boundary and domain wall theory for Levin-Wen string-net model (and Kitaev quantum double model) have been extensively investigated in Refs.~\cite{bravyi1998quantum,Bombin2008family,freedman2001projective,Beigi2011the,Kitaev2012boundary,Levin2013protected,Levin2013protected,Wang2015boundary,Cong2017,wang2020electric,jia2023boundary,Jia2023weak}.
For the multifusion string-net model, we also explore the gapped boundary and gapped domain wall theories from the perspective of weak Hopf symmetry. The weak Hopf tube algebras for the boundary and domain wall are also constructed.
A crucial observation is that the $1d$ phases for the multifusion string-net model are generally not UFCs, but rather UMFCs.
We develop the anyon condensation theory for these scenarios, where the condensation is generally characterized by a pair of algebras in the bulk phase: one condensable algebra and another algebra over this condensable algebra.

As an application of our model, we also demonstrate how to interpret a defective Levin-Wen string-net as a special case of the multifusion string-net following the discussion outlined in Ref.~\cite{Kitaev2012boundary}.

This paper is organized as follows:

In Sec.~\ref{sec:mac}, we delve into the macroscopic (algebraic) theory of the multifusion string-net. We emphasize the weak Hopf symmetry underlying the model and provide a review of the necessary preliminaries of concepts about multifusion categories and weak Hopf algebras in this section.
We introduce the weak Hopf gauge symmetry for the multifusion string-net, based on the fact that a multifusion string-net is equivalent to a weak Hopf lattice gauge theory.
We also introduce the weak Hopf charge symmetry, for which the topological charges are irreducible representations of this weak Hopf algebra. As will be discussed in Sec.~\ref{sec:tube}, we construct the explicit weak Hopf tube algebra, which can be regarded as the charge symmetry of the multifusion string-net.
The boundary tube algebra constructed in Sec.~\ref{sec:bdtheory} can be regarded as a gauge symmetry of the bulk via the bulk-boundary duality.

In Sec.~\ref{Sec:SNmulti}, we give a detailed construction of the multifusion string-net ground state and the lattice Hamiltonian. We show that this lattice Hamiltonian is a local commutative projector (LCP) Hamiltonian and some properties of the Hamiltonian are discussed in detail.

In Sec.~\ref{sec:tube}, we develop the theory of the bulk tube algebra. By treating the bulk excitations as defects on domain walls and performing the topological local moves, we derive a tube algebra. We then construct a $C^*$ weak Hopf algebra structure for this tube algebra. The bulk topological excitations are represented by the tube algebra, and we also explore the Morita theory of the tube algebra.

Sec.~\ref{sec:bdtheory} and Sec.~\ref{sec:walltheory} focus on the gapped boundary and domain wall theories of the multifusion string-net. We demonstrate that these $1d$ phases are characterized by UMFCs. The gauge symmetry and charge symmetry for these phases are thoroughly discussed. Furthermore, we construct the boundary and domain wall tube algebras, which can be interpreted as the charge symmetries of the corresponding phases. Additionally, we illustrate how to derive the $1d$ UMFC topological phase from the bulk UMFC phases by establishing the corresponding anyon condensation theory.

In the last part, we present an application of the multifusion string-net model, demonstrating how to interpret the defective Levin-Wen string-net as a restricted multifusion string-net. In Sec.~\ref{sec:conclusion}, we offer  concluding remarks and discuss potential future directions and questions.

\section{Macroscopic theory of multifusion string-net and weak Hopf symmetry}
\label{sec:mac}

In Landau's symmetry-breaking theory \cite{landau1937theorie}, symmetries are described by groups. Consider a quantum system $(H,\mathcal{V}_{GS})$, where $H$ is the Hamiltonian and $\mathcal{V}_{GS}$ is the ground state subspace. A Hamiltonian symmetry $G_H$ is a group for which there exists a unitary or antiunitary representation $g\mapsto U_g$ such that $[U_g, H]=0$ for all $g\in G_H$.
The ground state symmetry $G_{GS}$ is a group such that $U_g$ stabilizes $\mathcal{V}_{GS}$ for all $g\in G_{GS}$.
It is clear that $G_H\subset G_{GS}$.
The topologically ordered phase is beyond the Landau's symmetry breaking paradigm, as their symmetries are described by quantum groups \cite{bais2003hopf,Kitaev2012boundary,Buerschaper2013a,meusburger2017kitaev,jia2023boundary,Jia2023weak,szlachanyi2023oriented} or fusion categories \cite{Levin2005,lan2014topological,Lin2014generalizations,Lin2021generalized}.

The original Levin-Wen string-net model is constructed from a UFC $\EC$, where the topological excitations are described by the UMTC $\mathcal{Z}(\EC)$, which is the Drinfeld center of $\EC$. It is natural to inquire about the quantum symmetry of the string-net model. The answer to this question is that its symmetry is characterized by a weak Hopf algebra $W$.
For any UFC $\EC$ we can find a \emph{connected weak Hopf algebra} such that $\EC$ is equivalent to the representation category $\Rep(W)$ of $W$ as fusion category \cite{ostrik2003module}.
This enables us to establish an equivalence between the string-net model and lattice gauge theory with gauge symmetry described by a weak Hopf algebra, which has been recently constructed as a generalization of the Kitaev quantum double model \cite{chang2014kitaev, Jia2023weak}.
It is worthy to mentioning that Hopf algebra lattice gauge theory \cite{Buerschaper2013a,meusburger2017kitaev,jia2023boundary} is not enough to describe this equivalence, since there exist some examples of monoidal categories that cannot be regarded as the representation category of Hopf algebras \cite{ostrik2003module}.
This implies that we need to consider weak Hopf lattice gauge theory. Consequently, a natural question arises: what is the string-net model corresponding to a general weak Hopf lattice gauge theory? The answer is the multifusion string-net model, which will be elaborated later in this work.
To summarize, we have a correspondence as shown in Table~\ref{tab:MFSN}.

\begin{table}[t]
\centering \small 
\begin{tabular} {|l|c|c|} 
\hline
   &Weak Hopf lattice gauge theory & Multifusion string-net model   \\ \hline
 Bulk gauge symmetry  & Weak Hopf algebra $W$ & UMFC $\EC=\mathsf{Rep}(W)$  \\ \hline
 Bulk phase  & 
 $\EP=\mathsf{Rep}(D(W))$ & $\mathsf{Fun}_{\EC |\EC}(\EC,\EC) $  \\ \hline
 Boundary gauge symmetry& $W$-comodule algebra $\mathfrak{A}$ & $\EC$-module category ${_{\mathfrak{A}}}\EM={_{\mathfrak{A}}}\mathsf{Mod}$ \\\hline
 Boundary phase & $\EB\simeq {_{\mathfrak{A}}}\mathsf{Mod}_{\mathfrak{A}}^W$ & $\mathsf{Fun}_{\EC}({_{\mathfrak{A}}}\EM,{_{\mathfrak{A}}}\EM)$ \\\hline
Boundary defect & ${_{\mathfrak{B}}}\mathsf{Mod}_{\mathfrak{A}}^W$ & $\mathsf{Fun}_{\EC}({_{\mathfrak{A}}}\EM,{_{\mathfrak{B}}}\EM)$ \\\hline
\end{tabular}
\caption{The correspondence between descriptions of non-chiral topologically ordered phases in the framework of the lattice gauge theory with weak Hopf gauge symmetry and the multifusion string-net model can be elucidated through a dictionary.\label{tab:MFSN}}
\end{table}

\subsection{Unitary multifusion category}

Let us first recall some basic definitions and results about the multifusion category.
See Refs.~\cite{etingof2016tensor,turaev2016quantum,turaev2017monoidal} for more details.
The reader familiar with the notion of the multifusion category might skip this part and get back if necessary.

%\vspace{1em}
%\emph{Definition of unitary multifusion category.} ---

\begin{itemize}
    \item A $\Cbb$-linear category $\EC$ is a category where all Hom spaces are finite-dimensional vector spaces and the compositions of morphisms  are linear maps with respect to each component. An object $X\in \EC$ is called simple if $\operatorname{End}(X)\cong \Cbb \id_X$.
    $\EC$ is called finite if there is only a finite number of inequivalent simple objects in $\EC$; we usually denote the set of equivalence classes of simple objects of $\EC$ as $\operatorname{Irr}(\EC)$. $\EC$ is called semisimple if every object in $\EC$ is a direct sum of simple objects.
    
    \item A monoidal category $\EC$ is a category equipped with a tensor product $\otimes:\EC\times \EC \to \EC$ and a tensor unit $\one$. The tensor product is associative with the associativity isomorphisms $a_{X,Y,Z}:(X\otimes Y)\otimes Z \to X\otimes (Y\otimes Z)$, which are required to obey the pentagon relation. The left and right unit isomorphisms $l_X: \one \otimes X \to X$ and $r_X: X\otimes \one \to X$ are required to obey the triangle relation.
    A monoidal category $\EC$ is called $\Cbb$-linear if $\EC$ is $\Cbb$-linear and the tensor product of morphisms is $\Cbb$-linear with respect to each component. In our context, a monoidal category that is $\Cbb$-linear will be referred to as a tensor category.

    \item A monoidal category $\EC$ is called rigid if each object $X$ has a left dual ${^{\vee}}\!X$ and a right dual $X^{\vee}$ together with the duality maps\,\footnote{In this work, the notation $d_X$ primarily represents the quantum dimension of an object. However, in this particular context, we employ it to symbolize the ``death'' map. This dual usage should not lead to confusion given the distinct scenarios where each is applied.}:
    \begin{align}
          &  \text{left:}\quad   b_X:\one \to X\otimes X^{\vee},\quad d_X: X^{\vee}\otimes X\to \one, \\
           &  \text{right:}\quad   b'_X:\one \to {^{\vee}}\!X\otimes X,\quad d'_X: X\otimes {^{\vee}}\!X\to \one,
    \end{align}
    which satisfy the following conditions:
    \begin{align}
       & \id_{X^{\vee}}= (d_X\otimes \id_{X}) \comp (\id_{X^{\vee}}\otimes b_X), \quad 
        \id_X=(\id_X\otimes d_X)\comp (b_X\otimes \id_X),\\
     & \id_{{^{\vee}}\!X}=(\id_{{^{\vee}}\!X}\otimes d'_X)\comp (b'_X\otimes \id_{{^{\vee}}\!X}),\quad 
     \id_X=(d'_X\otimes \id_X)\comp (\id_X\otimes b'_X).
    \end{align}
    $\EC$ is called sovereign if ${^{\vee}}\!X \cong X^{\vee}$ for all $X\in \EC$.
    
    \item A $\dagger$-category is a $\Cbb$-linear category $\EC$ equipped with an involutive antilinear contravariant functor $\dagger:\EC \to \EC$ which acts trivially on objects $X^{\dagger}=X$ for all $X\in \EC$. When acting on Hom spaces, it satisfies 
    (i) $f^{\dagger}\in \Hom(Y,X)$ if $f\in \Hom(X,Y)$ and  $f^{\dagger\dagger}=f$ for all $f\in \Hom(X,Y)$; 
    (ii) $(c_1 f_1+c_2 f_2)^{\dagger}=\bar{c}_1 f_1^{\dagger}+\bar{c}_2 f_2^{\dagger}$ for all $c_1,c_2\in \Cbb$ and $f_1,f_2\in \Hom (X,Y)$;
    (iii) ${\id}_X^{\dagger}=\id_X$ for all $X\in \EC$\,\footnote{This condition can be inferred from the others: $\id=\id^{\dagger \dagger} = (\id \comp \id^{\dagger})^{\dagger}=\id\comp \id^{\dagger}=\id^{\dagger}$. 
    But for clarity, we have explicitly stated it.}; (iv) $(f\comp g)^{\dagger}=g^{\dagger}\comp f^{\dagger}$ for all morphisms $f\in \Hom(Y,Z),g\in \Hom(X,Y)$.
    A $\dagger$-category $\EC$ is called unitary if $f^{\dagger}\comp f=0$ implies $f=0$.

    \item A monoidal $\dagger$-category $\EC$ is a category that possesses both a monoidal and a $\dagger$-category structures. It ensures:
   (i) For all $f\in \Hom(X,Y)$ and $g\in \Hom(Z,W)$, $(f\otimes g)^{\dagger}=f^{\dagger}\otimes g^{\dagger}$; 
   (ii) The associativity isomorphisms uphold $a_{X,Y,Z}^{\dagger}=a^{-1}_{X,Y,Z}$, while the left and right isomorphisms abide by $l_X^{\dagger}=l_X^{-1}$ and $r_X^{\dagger}=r_X^{-1}$ respectively.
    
    \item A pivotal structure on a rigid monoidal category $\EC$ is a natural isomorphism $p: \id_{\EC} \to (\cdot)^{\vee\vee}$ which satisfies $p_{X\otimes Y}\cong p_X\otimes p_Y$ for all $X, Y\in \EC$ (here ``$\cong$'' means that they are equal up to some canonical isomorphisms).
    A pivotal category is a rigid monoidal category with a given pivotal structure.
    Via pivotal structure $\{p_X\}$, we can define left and right traces of a morphism $f\in \End(X)$ as (we omit the isomorphism $\one^{\vee}\cong \one$)
    \begin{gather}
       \operatorname{Tr}^L(f)=  b_{X^{\vee}}^{\vee} \comp \gamma_{X^{\vee\vee}\otimes X^{\vee}} \comp 
        (p_{X^{\vee}} \otimes p_X) \comp
        (\id_{X^{\vee}}\otimes f) \comp (\id_{X^{\vee}}\otimes p_{X}^{-1} ) \comp b_{X^{\vee}}, \\
        \Tr^R(f)=   b_{X^{\vee}}^{\vee}  \comp  \gamma_{X\otimes X^{\vee}} \comp (p_{X}\otimes \id_{X^{\vee}})\comp (f\otimes \id_{X^{\vee}}) \comp b_X,
    \end{gather}
    where $\gamma_{X\otimes Y}:X^\vee\otimes Y^\vee \to (X\otimes Y)^\vee$ are natural transformations induced by the rigidity structure. Notice that both $\Tr^L(f),\Tr^R(f)\in \End(\one)$.    
    A spherical category is a pivotal category for which $\Tr^L(f)=\Tr^R(f)$ for all morphisms $f\in \End(X)$.\,\footnote{In the context of multifusion categories, the concept of ``spherical'' can be directly extended. The essential requirement is the equality of the left and right traces as complex numbers. Notably, given that the left and right traces are morphisms in distinct components, a more in-depth analysis can be found in Refs.~\cite{cui2017state,etingof2016tensor}.}
    
    \item In a monoidal $\dagger$-category, given an object $X$ with a right dual $(X^{\vee},b_X,d_X)$, it inherently possesses a left dual $(X^{\vee},d_X^{\dagger},b_X^{\dagger})$, and the converse also holds true. In this case, we denote the dual object as $\bar{X}=X^*:=X^{\vee}={^{\vee}}X$.

    \item For a monoidal category $\EC$, we define $\EC^{\rm rev}$ such that $X\otimes^{\rm rev}Y:=Y\otimes X$. For a braided category $\EC$, $\overline{\EC}$ is defined with reversed braiding morphisms.
\end{itemize}

%\vspace{1em}
%\emph{Algebraic theory of string-net model based on multifusion category.} ---

\begin{definition}
    A multifusion category over $\Cbb$ is a finite semisimple $\Cbb$-linear rigid monoidal category.
    A fusion category is a multifusion category whose tensor unit is simple.
\end{definition}

When $\ED$ is a multifusion category, since $\one$ may not be simple, it can be decomposed as $\one=\oplus_{i\in I} \one_i$ with $\one_i$ simple for all $i\in I$.
It can be proved that $\one_i\otimes \one_j\cong \delta_{i,j}\one_i$ and each $\one_i$ has left and right duals such that $\one_i^{\vee}\cong \one_i \cong {^{\vee}}\one_i$.
Let $\ED_{i,j}:=\one_i \otimes \ED \otimes \one_j$, then we have a decomposition (as additive category)
\begin{equation}
    \ED=\bigoplus_{i,j\in I} \ED_{i,j},
\end{equation}
which will be referred to as the canonical grading of $\ED$.
Every simple object in $\ED$ belongs to some $\ED_{i,j}$.
The tensor product maps $\ED_{i,j}\times \ED_{k,l}$ to zero when $j\neq k$ and to $\ED_{i,l}$ when $j=k$.
The categories $\ED_{i,i}$ are fusion categories with the tensor unit $\one_i$.
If $X\in \ED_{i,j}$ has left or right dual, then this dual must be in $\ED_{j,i}$.

It has been demonstrated in Refs.~\cite{Reutter2023uniqueness, ciamprone2023weak} that the unitary structure of every $C^*$ fusion category, if it exists, must be unique (up to natural isomorphisms). However, for $C^*$ multifusion categories, there may exist inequivalent unitary structures on them.

\begin{example}[UMFC $\mathsf{Mat}_n$]
    A typical example of UMFC is the one constructed from $n\times n$ matrices $\mathsf{Mat}_n=\oplus_{i,j\in I} \mathsf{M}_{i,j}$. The index set $I=\{1,\cdots,n\}$, and each component $\mathsf{M}_{i,j}$ contains exactly one simple object $\mathds{I}_{i,j}$ which can be regarded as the matrix basis $\mathds{I}_{i,j}=|i\rangle \langle j|$.
    The tensor product obeys the matrix multiplication rule $\mathds{I}_{i,j}\otimes \mathds{I}_{k,l}=\delta_{j,k}\mathds{I}_{i,l}$.
    Each diagonal component $\mathsf{M}_{i,i}$ is a UFC equivalent to $\mathsf{Vect}$, the category of all finite-dimensional vector spaces.
    The rigidity is given by $\mathds{I}_{i,j}^*=\mathds{I}_{j,i}$, and all structural isomorphisms are the identity map.
\end{example}

\vspace{1em}
\emph{Drinfeld center of multifusion category.} ---
The Drinfeld center $\mathcal{Z}(\ED)$ of a multifusion category $\ED$ is defined as follows. The objects of $\mathcal{Z}(\ED)$  are pairs $(X,c_{X,\bullet})$ with $c_{X,\bullet}=\{c_{X,Y}:X\otimes Y\xrightarrow{\sim} Y\otimes X\}_{Y\in \ED}$ called half-braiding.
A morphism $f$ between $(X,c_{X,\bullet})$ and $(Y,c_{Y,\bullet})$ is a morphism $f:X\to Y$ such that $(\id_Z\otimes f)\comp c_{X,Z}= c_{Y,Z}\comp (f\otimes \id_Z)$. The Drinfeld center $\mathcal{Z}(\ED)$ is a braided category. It is well known that the Drinfeld center $\mathcal{Z}(\ED)$ is equivalent to the $\ED|\ED$-bimodule functor category $\Fun_{\ED|\ED}(\ED,\ED)$ as braided monoidal categories \cite{Kong2018the}.
If $\ED$ is rigid, then $\mathcal{Z}(\ED)$ is also rigid. Similarly, if $\ED$ is pivotal, then $\mathcal{Z}(\ED)$ is also pivotal. Moreover, the following result holds:

\begin{lemma} \label{prop:UMFCcenter}For an indecomposable multifusion category $\ED=\oplus_{i,j\in I}\ED_{i,j}$, we have
\begin{enumerate}
    \item The Drinfeld center of the diagonal component $\ED_{i,i}$ are braided monoidal equivalent, \emph{viz.}, $\mathcal{Z}(\ED_{i,i})\simeq \mathcal{Z}(\ED_{j,j})$ for all $i,j\in I$. 
    \item The Drinfeld center $\mathcal{Z}(\ED)$ of $\ED$ is braided monoidal equivalent to $\mathcal{Z}(\ED_{i,i})$ for all $i \in I$.
\end{enumerate}
\end{lemma}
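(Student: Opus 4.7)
The plan is to reduce both statements to the Morita invariance of the Drinfeld center. First I would unpack indecomposability: $\ED$ being indecomposable as a multifusion category means it does not split as a nontrivial direct sum of multifusion subcategories, which is equivalent to the reflexive, symmetric (since rigidity sends $\ED_{i,j}$ into $\ED_{j,i}$), transitive (since the tensor product maps $\ED_{i,j}\times\ED_{j,k}$ into $\ED_{i,k}$ with nonzero image) relation ``$\ED_{i,j}\neq 0$'' exhausting all of $I\times I$. Consequently every component $\ED_{i,j}$ is a nonzero finite semisimple $(\ED_{i,i},\ED_{j,j})$-bimodule category, and the duality data of $\ED$ exhibits $\ED_{j,i}$ as its two-sided inverse under the relative tensor product, so $\ED_{i,j}$ is an invertible bimodule.

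For part (1), I would then invoke the classical theorem of M\"uger (with extensions by Schauenburg and by Etingof--Nikshych--Ostrik) that two fusion categories connected by an invertible bimodule category have braided monoidal equivalent Drinfeld centers. Applied to the invertible bimodule $\ED_{i,j}$, this yields $\mathcal{Z}(\ED_{i,i})\simeq \mathcal{Z}(\ED_{j,j})$ as braided tensor categories.

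For part (2), fix $i\in I$ and consider the row $\EM_i:=\bigoplus_{j\in I}\ED_{j,i}$, regarded as a right $\ED_{i,i}$-module category via the restricted tensor product. Indecomposability of $\ED$ forces $\EM_i$ to be indecomposable and faithful over $\ED_{i,i}$. Reconstruction via internal End then provides a monoidal equivalence $\ED\simeq \Fun_{\ED_{i,i}}(\EM_i,\EM_i)^{\mathrm{op}}$, realizing $\ED$ as the multifusion category dual to $\ED_{i,i}$ along $\EM_i$; in particular $\ED$ and $\ED_{i,i}$ are Morita equivalent as multifusion categories. Morita invariance of the Drinfeld center, in its multifusion version, then gives $\mathcal{Z}(\ED)\simeq \mathcal{Z}(\ED_{i,i})$, and by part (1) the right hand side is independent of $i$.

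The main obstacle is the Morita-invariance input itself. I must cite it in the form that applies to multifusion (not merely fusion) categories in step (2), and I must carefully verify that indecomposability of $\ED$ really translates into indecomposability and faithfulness of $\EM_i$ as a module category; this is precisely where the connectedness of the index set under the relation ``$\ED_{i,j}\neq 0$'' is used. The remaining ingredient, the identification $\mathcal{Z}(\ED)\simeq \Fun_{\ED|\ED}(\ED,\ED)$ already recalled in the text, may be treated as a black box, so the rest of the argument is bookkeeping with the canonical grading.
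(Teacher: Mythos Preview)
Your argument is sound, but note that the paper does not actually prove this lemma: its entire proof consists of two citations, one for each part. Your Morita-invariance strategy is the standard route and is essentially what underlies those cited results. Your part (1)---indecomposability forces all $\ED_{i,j}$ to be nonzero, rigidity makes each $\ED_{i,j}$ an invertible $(\ED_{i,i},\ED_{j,j})$-bimodule, and then one invokes M\"uger/Schauenburg/ENO---is precisely the usual proof. Your part (2) is also correct: the identification $\ED\simeq\Fun_{\ED_{i,i}}(\EM_i,\EM_i)^{\mathrm{op}}$ via $\EM_i$ (a minor slip: $\bigoplus_j\ED_{j,i}$ is the $i$-th column, not row) is exactly the Morita context needed. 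The only caveat is that the invertibility of $\ED_{i,j}$ as a bimodule and the multifusion version of center-invariance are themselves nontrivial inputs requiring citation, so in the end your proof and the paper's differ more in presentation than in content: you unpack what the references contain, while the paper simply defers to them.
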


\begin{proof}
   The proofs of statements 1 and 2 can be found in Refs.~\cite[Theorem 2.4]{chang2015enriching} and \cite[Theorem 2.5.1]{Kong2018the}, respectively.
\end{proof}

Notice that the non-chiral topological phase is usually defined as a UMTC $\EB$ for which there exists a UFC $\EC$ such that $\EB\simeq \mathcal{Z}(\EC)$ as braided monoidal categories. From the above result and the fact that any UFC can be embedded into a UMFC (see Sec.~\ref{sec:DefectSN}), we see that the non-chiral topological phase $\EB$ can be equivalently defined as a UMTC $\EB$ for which there exists a UMFC $\ED$ such that $\EB\simeq \mathcal{Z}(\ED)$ as braided monoidal categories.

For a UMFC $\ED$, given that $\ED_{i,i}$ are all UFCs, their Drinfeld centers $\mathcal{Z}(\ED_{i,i})$ are UMTCs. A direct consequence of statement 2 in the lemma above is that the Drinfeld center $\mathcal{Z}(\ED)$ of an indecomposable UMFC $\ED$ is a UMTC.
As we delve further into our exploration, it becomes evident that the above result guarantees the utility of any given UMFC in constructing defective string-net models characterized by gapped boundaries, boundary defects, and other features.

\subsection{Weak Hopf symmetry for $2d$ topological phase}

The group symmetry of a quantum system can be regarded as a group action on the system's Hilbert space $\mathcal{H}$, i.e., for each $g\in G$, we have $U_g:\mathcal{H}\to \mathcal{H}$ satisfying $U_{gh}=U_gU_h$ and $U_e=\id_\mathcal{H}$.
Consider the group algebra $\mathbb{C}[G]$, the Hilbert space $\mathcal{H}$ is a $\mathbb{C}[G]$-module.
The ground state  $\Psi\in \mathcal{H}$ is invariant under the $\mathbb{C}[G]$-action in the sense that 
\begin{equation}
   ( \sum_{g\in G} c_g g) \triangleright \Psi = \sum_{g\in G} c_g \Psi =\varepsilon(\sum_{g\in G} c_g g) \Psi,
\end{equation}
where $\varepsilon: \mathbb{C}[G] \to \Cbb$ is the counit map of the group algebra.
This motivates the introduction of Hopf algebra symmetry, defined as $h\triangleright \Psi = \varepsilon(h) \Psi$ \cite{bais2003hopf}. However, this definition cannot be straightforwardly extended to the weak Hopf case, as the trivial representation exhibits a nontrivial algebra action. To address this challenge, in Ref.~\cite{Jia2023weak}, we introduce the concept of weak Hopf symmetry.
The weak Hopf symmetry plays a pivotal role in comprehending the string-net model. In the following, we will recall the definition of weak Hopf symmetry and discuss its properties.

Let us first introduce the weak Hopf algebra within the braided multifusion category. In this work, our primary focus will be on the case of the category of finite-dimensional complex vector spaces, denoted as $\Vect_{\Cbb}$. A \emph{weak bialgebra} $(W, \mu, \eta, \Delta, \varepsilon)$ is an object $W\in \Vect_{\Cbb}$ equipped with the following structure morphisms (here $\one=\Cbb$):
	\begin{itemize}
		\item An algebra structure $(W,\mu,\eta)$, where $\mu: W\otimes W \to W$ and  $\eta: \one \to W$  are linear morphisms called multiplication and unit morphisms respectively
		\begin{equation}
			\mu=\begin{aligned}
			\begin{tikzpicture}
				 \draw[black, line width=1.0pt]  (-0.5, 0) .. controls (-0.4, 1) and (0.4, 1) .. (0.5, 0);
				 \draw[black, line width=1.0pt]  (0,0.75)--(0,1.15);
			%	 \draw [fill = black](0, 0.75) circle (2pt);
				\end{tikzpicture}
			\end{aligned},\quad 
		\eta= \begin{aligned}
			\begin{tikzpicture}
				%\draw[black, line width=1.5pt]  (-0.5, 0) .. controls (-0.4, 1) and (0.4, 1) .. (0.5, 0);
				\draw[black, line width=1.0pt]  (0,0.75)--(0,1.5);
				\draw [fill = white](0, 0.73) circle (2pt);
			\end{tikzpicture}
		\end{aligned}\,\,\, .
		\end{equation}
		They satisfy 
		\begin{gather}
			\mu \comp (\mu \otimes \id) =\mu \comp (\id \otimes \mu), \quad  \mu \comp (\eta \otimes \id)=\id =\mu \comp (\id \otimes \eta). \\
	\begin{aligned}
			\begin{tikzpicture}
				\draw[black, line width=1.0pt]  (-0.5, 0) .. controls (-0.4, 0.8) and (0.4, 0.8) .. (0.5, -0.5);
				\draw[black, line width=1.0pt]  (-0.08,0.53)--(-0.08,0.93);
			%	\draw [fill = black](-0.10,0.71) circle (2pt);
				\draw[black, line width=1.0pt]  (-0.8, -0.5) .. controls (-0.7, 0.19) and (-0.3, 0.19) .. (-0.2, -0.5);
				%	\draw [fill = black](-0.5,0) circle (2pt);
			\end{tikzpicture}
		\end{aligned}
=	
	\begin{aligned}
	\begin{tikzpicture}
		\draw[black, line width=1.0pt]  (-0.5, -0.5) .. controls (-0.4, 0.8) and (0.4, 0.8) .. (0.5, 0);
		\draw[black, line width=1.0pt]  (0.08,0.53)--(0.08,0.93);
		%	\draw [fill = black](-0.10,0.71) circle (2pt);
		\draw[black, line width=1.0pt]  (0.2, -0.5) .. controls (0.3, 0.19) and (0.7, 0.19) .. (0.8, -0.5);
		%	\draw [fill = black](-0.5,0) circle (2pt);
	\end{tikzpicture}
\end{aligned},
	\quad 
			\id\,\,\,
	\begin{aligned}
		\begin{tikzpicture}
			\draw[black, line width=1.0pt] (0,-0.5)--(0,0.7);
		\end{tikzpicture}
	\end{aligned}
= \begin{aligned}
			\begin{tikzpicture}
				%\draw[black, line width=1.5pt]  (-0.5, 0) .. controls (-0.4, 1) and (0.4, 1) .. (0.5, 0);
				\draw[black, line width=1.0pt]   (0,0.3) arc (90:180:0.5);
				\draw [fill = white](-0.5, -0.2) circle (2pt);
				\draw[black, line width=1.0pt] (0,-0.5)--(0,0.7);
			\end{tikzpicture}
		\end{aligned}
	=
			 \begin{aligned}
		\begin{tikzpicture}
			%\draw[black, line width=1.5pt]  (-0.5, 0) .. controls (-0.4, 1) and (0.4, 1) .. (0.5, 0);
			\draw[black, line width=1.0pt]   (0.5,-0.2) arc (0:90:0.5);
			\draw [fill = white](0.5, -0.2) circle (2pt);
			\draw[black, line width=1.0pt] (0,-0.5)--(0,0.7);
		\end{tikzpicture}
	\end{aligned}\,\,\, .
	\end{gather}
		
		\item A coalgebra structure $(W,\Delta,\varepsilon)$, where $\Delta: W\to W\otimes W$ and $\varepsilon: W\to \one$ are called comultiplication and counit morphisms respectively
				\begin{equation}
			\Delta=\begin{aligned}
				\begin{tikzpicture}
					\draw[black, line width=1.0pt]  (-0.5, 1) .. controls (-0.4, 0) and (0.4, 0) .. (0.5, 1);
					\draw[black, line width=1.0pt]  (0,0.23)--(0,-0.23);
					%\draw [fill = black](0, 0.23) circle (2pt);
				\end{tikzpicture}
			\end{aligned},\quad 
			\varepsilon= \begin{aligned}
				\begin{tikzpicture}
					%\draw[black, line width=1.5pt]  (-0.5, 0) .. controls (-0.4, 1) and (0.4, 1) .. (0.5, 0);
					\draw[black, line width=1.0pt]  (0,0.75)--(0,1.5);
					\draw [fill = white](0, 1.5) circle (2pt);
				\end{tikzpicture}
			\end{aligned}\,\,\, .
		\end{equation}
		They satisfy
		\begin{gather}
			(\Delta \otimes \id)\comp \Delta = (\id \otimes \Delta )\comp \Delta,\quad  (\varepsilon \otimes \id) \comp \Delta = \id = (\id \otimes \varepsilon)\comp \Delta. \\
	\begin{aligned}
			\begin{tikzpicture}
				\draw[black, line width=1.0pt]  (-0.5, 0.8) .. controls (-0.4, 0) and (0.4, 0) .. (0.5, 1.3);
				\draw[black, line width=1.0pt]  (-0.1,-0.16)--(-0.1,0.24);
				\draw[black, line width=1.0pt]  (-0.8, 1.3) .. controls (-0.7, 0.62) and (-0.3, 0.62) .. (-0.2, 1.3);
				%\draw [fill = black](0, 0.23) circle (2pt);
			\end{tikzpicture}
		\end{aligned}
	=
		\begin{aligned}
		\begin{tikzpicture}
			\draw[black, line width=1.0pt]  (-0.5, 1.3) .. controls (-0.4, 0) and (0.4, 0) .. (0.5, 0.8);
			\draw[black, line width=1.0pt]  (0.1,-0.16)--(0.1,0.24);
			\draw[black, line width=1.0pt]  (0.2, 1.3) .. controls (0.3, 0.62) and (0.7, 0.62) .. (0.8, 1.3);
			%\draw [fill = black](0, 0.23) circle (2pt);
		\end{tikzpicture}
	\end{aligned},
	\quad 
		\id\,\,\,
			\begin{aligned}
			\begin{tikzpicture}
				\draw[black, line width=1.0pt] (0,-0.5)--(0,0.7);
			\end{tikzpicture}
		\end{aligned}
		=\begin{aligned}
			\begin{tikzpicture}
				%\draw[black, line width=1.5pt]  (-0.5, 0) .. controls (-0.4, 1) and (0.4, 1) .. (0.5, 0);
				\draw[black, line width=1.0pt]   (-0.5,0.5) arc (180:270:0.5);
				\draw [fill = white](-0.5, 0.5) circle (2pt);
				\draw[black, line width=1.0pt] (0,-0.5)--(0,0.7);
			\end{tikzpicture}
		\end{aligned}
		=
		\begin{aligned}
			\begin{tikzpicture}
				%\draw[black, line width=1.5pt]  (-0.5, 0) .. controls (-0.4, 1) and (0.4, 1) .. (0.5, 0);
				\draw[black, line width=1.0pt]   (0,0) arc (270:360:0.5);
				\draw [fill = white](0.5, 0.5) circle (2pt);
				\draw[black, line width=1.0pt] (0,-0.5)--(0,0.7);
			\end{tikzpicture}
		\end{aligned}.
	\end{gather}
	\end{itemize}
	
To establish a weak bialgebra, the aforementioned algebra and coalgebra structures need to adhere to specific consistency conditions. Specifically:
	\begin{enumerate}
		\item[(1)] The comultiplication preserves multiplication, viz., $\Delta\comp\mu = \mu_{W\otimes W}\comp (\Delta\otimes \Delta)= (\mu\otimes \mu)\comp (\id\otimes c_{W,W} \otimes \id)\comp(\Delta\otimes \Delta)$\,\footnote{The $c_{W,W}$ is the braiding morphism in $\Vect_{\Cbb}$, diagrammatically denoted as $c_{W,W}=\begin{aligned}\begin{tikzpicture}\braid[
			width=0.4cm,
			height=0.1cm,
			line width=0.3pt,
			style strands={1}{black},
			style strands={2}{black}] (Kevin)
			s_1^{-1} ;	\end{tikzpicture}\end{aligned}\,$;
		we will also adopt the notations $\mu^{\rm op}=\mu \comp c_{W,W}$ and $\Delta^{\rm op}=c_{W,W}\comp \Delta$ in what follows.},  
			\begin{equation}
			\begin{aligned}
			\begin{tikzpicture}
				%\draw[black, line width=1.5pt]  (-0.5, 0) .. controls (-0.4, 1) and (0.4, 1) .. (0.5, 0);
				\draw[black, line width=1.0pt]   (-0.5,0)..   controls (-0.4,0.8) and (0.4,0.8)             ..(0.5,0);
				%\draw [fill = white](0.5, 0.5) circle (2pt);
				\draw[black, line width=1.0pt] (0,0.6)--(0,1);
					\draw[black, line width=1.0pt]   (-0.5,1.6)..   controls (-0.4,0.8) and (0.4,0.8)             ..(0.5,1.6);
			\end{tikzpicture}
		\end{aligned}
	=
		\begin{aligned}
		\begin{tikzpicture}
		\draw[black, line width=1.0pt] (-0.1,-1) arc (180:360:0.3);
		\draw[black, line width=1.0pt] (0.5,0) arc (0:180:0.3);
		 \draw[black, line width=1.0pt] (-0.1,0)--(-0.1,-1);
		 	\draw[black, line width=1.0pt] (1.6,0) arc (0:180:0.3);
		 	\draw[black, line width=1.0pt] (1,-1) arc (180:360:0.3);
		 \draw[black, line width=1.0pt] (1.6,0)--(1.6,-1);
		 	 \draw[black, line width=1.0pt] (0.2,-1.3)--(0.2,-1.6);
		 	  \draw[black, line width=1.0pt] (0.2,0.3)--(0.2,0.6);
		 	  \draw[black, line width=1.0pt] (1.3,0.3)--(1.3,0.6);
		 	  \draw[black, line width=1.0pt] (1.3,-1.3)--(1.3,-1.6);
			\braid[
			width=0.5cm,
			height=0.5cm,
			line width=1.0pt,
			style strands={1}{black},
			style strands={2}{black}] (Kevin)
			s_1^{-1} ;
		\end{tikzpicture}
	\end{aligned}\,\,\,.
	\end{equation}

		\item[(2)]Compatibility of comultiplication and unit  $(\Delta\otimes \id)\comp \Delta \comp \eta=(\id \otimes \mu \otimes \id )\comp  (\Delta \otimes \Delta)  \comp(\eta \otimes \eta) =  (\id \otimes \mu \otimes \id )\comp (\id \otimes c_{W,W}\otimes \id)\comp (\Delta \otimes \Delta)\comp (\eta \otimes \eta)$, 
			\begin{equation}\label{eq:unitG}
				\begin{aligned}
			\begin{tikzpicture}
				\draw[black, line width=1.0pt] (-0.1,-1) arc (180:360:0.3);
				\draw[black, line width=1.0pt] (-0.1,-0.8)--(-0.1,-1);
				\draw[black, line width=1.0pt] (0.5,-1).. controls (0.6,-0.6)and (0.7,-0.6) ..(0.8,-0.5);
				\draw[black, line width=1.0pt] (-0.4,-0.5) arc (180:360:0.3);
				\draw[black, line width=1.0pt] (0.2,-1.3)--(0.2,-1.6);
				\draw [fill = white] (0.2,-1.6) circle (2pt);
			\end{tikzpicture}		\end{aligned}
=			
	\begin{aligned}	\begin{tikzpicture}
		\draw[black, line width=1.0pt] (-0.1,-0.8) arc (180:360:0.3);
		\draw[black, line width=1.0pt] (1,-0.8) arc (180:360:0.3);
		\draw[black, line width=1.0pt] (1,0) arc (0:180:0.25);
		\draw[black, line width=1.0pt] (-0.1,0.5)--(-0.1,-0.8);
        \draw[black, line width=1.0pt] (0.5,0)--(0.5,-0.8);
          \draw[black, line width=1.0pt] (1,0)--(1,-0.8);
		\draw[black, line width=1.0pt] (0.75,0.25)--(0.75,0.5);
		\draw[black, line width=1.0pt] (1.6,0.5)--(1.6,-0.8);
		\draw[black, line width=1.0pt] (1.3,-1.1)--(1.3,-1.35);
		\draw[black, line width=1.0pt] (0.2,-1.1)--(0.2,-1.35);
		\draw [fill = white](0.2,-1.35) circle (2pt);
		\draw [fill = white] (1.3,-1.35) circle (2pt);
	\end{tikzpicture}
\end{aligned}
=
			\begin{aligned}	\begin{tikzpicture}
				\draw[black, line width=1.0pt] (-0.1,-0.8) arc (180:360:0.3);
					\draw[black, line width=1.0pt] (1,-0.8) arc (180:360:0.3);
			\draw[black, line width=1.0pt] (1,0) arc (0:180:0.25);
				\draw[black, line width=1.0pt] (-0.1,0.5)--(-0.1,-0.8);
					\braid[
				width=0.5cm,
				height=0.3cm,
				line width=1.0pt,
				style strands={1}{black},
				style strands={2}{black}] (Kevin)
				s_1^{-1} ;
				\draw[black, line width=1.0pt] (0.75,0.25)--(0.75,0.5);
				\draw[black, line width=1.0pt] (1.6,0.5)--(1.6,-0.8);
	    	\draw[black, line width=1.0pt] (1.3,-1.1)--(1.3,-1.35);
	    	\draw[black, line width=1.0pt] (0.2,-1.1)--(0.2,-1.35);
	    		\draw [fill = white](0.2,-1.35) circle (2pt);
	    		\draw [fill = white] (1.3,-1.35) circle (2pt);
			\end{tikzpicture}
		\end{aligned}\,\,\,.
	\end{equation}
		
		\item[(3)]Compatibility of multiplication and counit $\varepsilon \comp \mu \comp (\id \otimes \mu) = (\varepsilon \otimes \varepsilon)\comp (\mu \otimes \mu)\comp (\id \otimes  \Delta \otimes\id)=(\varepsilon \otimes \varepsilon)\comp (\mu \otimes \mu)\comp (\id \otimes  \Delta^{\rm op} \otimes\id)$, 
					\begin{equation} \label{eq:weak-mult}
			\begin{aligned}
				\begin{tikzpicture}
					\draw[black, line width=1.0pt] (0.5,-0.8) arc (0:180:0.3);
					\draw[black, line width=1.0pt] (-0.1,-0.8)--(-0.1,-1);
					\draw[black, line width=1.0pt] (0.2,-0.5)--(0.2,-0.25);
				    \draw[black, line width=1.0pt]  (0.5,-0.8).. controls (0.6,-1.18)and (0.7,-1.25) ..(0.8,-1.3);
					\draw[black, line width=1.0pt] (0.2,-1.3) arc (0:180:0.3);
							\draw [fill = white] (0.2,-0.25) circle (2pt);
			\end{tikzpicture}		\end{aligned}
			=			
\begin{aligned}	\begin{tikzpicture}
		\draw[black, line width=1.0pt] (0.5,0) arc (0:180:0.3);
		\draw[black, line width=1.0pt] (1.6,0) arc (0:180:0.3);
		\draw[black, line width=1.0pt] (0.5,-0.8) arc (180:360:0.25);
		\draw[black, line width=1.0pt] (0.5,0)--(0.5,-0.8);
			\draw[black, line width=1.0pt] (1,0)--(1,-0.8);
		\draw[black, line width=1.0pt] (-0.1,0)--(-0.1,-1.3);
		\draw[black, line width=1.0pt] (1.6,0)--(1.6,-1.3);
		\draw[black, line width=1.0pt] (0.75,-1.05)--(0.75,-1.3);
		\draw[black, line width=1.0pt] (1.3,0.3)--(1.3,0.55);
		\draw[black, line width=1.0pt] (0.2,0.3)--(0.2,0.55);
		\draw [fill = white]  (0.2,0.55) circle (2pt);
		\draw [fill = white] (1.3,0.55) circle (2pt);
	\end{tikzpicture}
\end{aligned}
			=
			\begin{aligned}	\begin{tikzpicture}
					\draw[black, line width=1.0pt] (0.5,0) arc (0:180:0.3);
               	\draw[black, line width=1.0pt] (1.6,0) arc (0:180:0.3);
               	 	\draw[black, line width=1.0pt] (0.5,-0.8) arc (180:360:0.25);
					\braid[
					width=0.5cm,
					height=0.3cm,
					line width=1.0pt,
					style strands={1}{black},
					style strands={2}{black}] (Kevin)
					s_1^{-1} ;
					\draw[black, line width=1.0pt] (-0.1,0)--(-0.1,-1.3);
					\draw[black, line width=1.0pt] (1.6,0)--(1.6,-1.3);
					\draw[black, line width=1.0pt] (0.75,-1.05)--(0.75,-1.3);
					\draw[black, line width=1.0pt] (1.3,0.3)--(1.3,0.55);
					\draw[black, line width=1.0pt] (0.2,0.3)--(0.2,0.55);
					\draw [fill = white]  (0.2,0.55) circle (2pt);
					\draw [fill = white] (1.3,0.55) circle (2pt);
				\end{tikzpicture}
			\end{aligned}\,\,\,.
		\end{equation} 
	\end{enumerate}

A \emph{weak Hopf algebra} $(W,\mu,\eta,\Delta,\varepsilon,S)$ is a weak bialgebra  $(W,\mu,\eta,\Delta,\varepsilon)$ equipped with a morphism $S: W\to W$ called antipode which satisfies the following three conditions:
	\begin{enumerate}
		\item[(1)] $\mu \comp (\mathrm{id}_W \otimes S) \comp \Delta  =  (\varepsilon \otimes \mathrm{id}_W) \comp (\mu \otimes \mathrm{id}_W) \comp (\mathrm{id}_W \otimes c_{W, W}) \comp (\Delta \otimes \mathrm{id}_W) \comp (\eta \otimes \mathrm{id}_W)$, 
					\begin{equation}
			\begin{aligned}	
				\begin{tikzpicture}
					\draw[black, line width=1.0pt] (0.5,0) arc (0:180:0.3);
					%	\draw[black, line width=1.0pt] (1.6,0) arc (0:180:0.3);
					\draw[black, line width=1.0pt] (-0.1,-0.8) arc (180:360:0.3);
					\draw[black, line width=1.0pt] (-0.1,0)--(-0.1,-0.8);
					\draw[black, line width=1.0pt] (0.5,-0.8)--(0.5,-0.65);
					\draw[black, line width=1.0pt] (0.5,0)--(0.5,-0.3);
					%\draw[black, line width=1.0pt] (1,0)--(1,0.55);
					%\draw[black, line width=1.0pt] (1.6,0)--(1.6,-1.3);
					%	\draw[black, line width=1.0pt] (1.3,0.3)--(1.3,0.55);
					\draw[black, line width=1.0pt] (0.2,-1.1)--(0.2,-1.35);
					\draw[black, line width=1.0pt] (0.2,0.3)--(0.2,0.55);
					%\draw [fill = white]  (0.2,0.55) circle (2pt);
					%\draw [fill = white] (0.2,-1.35) circle (2pt);
					%\draw [fill = white] (1.3,0.55) circle (2pt);
					\draw (0.3,-0.3) rectangle (0.7,-0.65);
					\node (start) [at=(Kevin-1-s),yshift=-0.47cm] {$S$};
					%	\node[squarednode]    at  (1.25,0.4)                {2};
				\end{tikzpicture}
			\end{aligned}
			=
			\begin{aligned}	
				\begin{tikzpicture}
					\draw[black, line width=1.0pt] (0.5,0) arc (0:180:0.3);
					%	\draw[black, line width=1.0pt] (1.6,0) arc (0:180:0.3);
					\draw[black, line width=1.0pt] (-0.1,-0.8) arc (180:360:0.3);
					\braid[
					width=0.5cm,
					height=0.3cm,
					line width=1.0pt,
					style strands={1}{black},
					style strands={2}{black}] (Kevin)
					s_1^{-1} ;
					\draw[black, line width=1.0pt] (-0.1,0)--(-0.1,-0.8);
					\draw[black, line width=1.0pt] (1,-0.8)--(1,-1.35);
					\draw[black, line width=1.0pt] (1,0)--(1,0.55);
					%\draw[black, line width=1.0pt] (1.6,0)--(1.6,-1.3);
					%	\draw[black, line width=1.0pt] (1.3,0.3)--(1.3,0.55);
					\draw[black, line width=1.0pt] (0.2,-1.1)--(0.2,-1.35);
					\draw[black, line width=1.0pt] (0.2,0.3)--(0.2,0.55);
					\draw [fill = white]  (0.2,0.55) circle (2pt);
					\draw [fill = white] (0.2,-1.35) circle (2pt);
					%\draw [fill = white] (1.3,0.55) circle (2pt);
				\end{tikzpicture}
			\end{aligned}\,\,\,.
		\end{equation}
        The map on the right-hand side will be denoted as $\varepsilon_L: W\to W$, whose image is denoted as $W_L$.

		\item[(2)] $\mu \comp (S \otimes \mathrm{id}_W) \comp \Delta = (\mathrm{id}_W \otimes \varepsilon) \comp (\mathrm{id}_W \otimes \mu) \comp (c_{W, W} \otimes \mathrm{id}_W) \comp (\mathrm{id}_W \otimes \Delta) \comp (\mathrm{id}_W \otimes \eta)$, 	
							\begin{equation}
			\begin{aligned}	
				\begin{tikzpicture}
					\draw[black, line width=1.0pt] (0.5,0) arc (0:180:0.3);
					%	\draw[black, line width=1.0pt] (1.6,0) arc (0:180:0.3);
					\draw[black, line width=1.0pt] (-0.1,-0.8) arc (180:360:0.3);
					\draw[black, line width=1.0pt] (-0.1,-0.65)--(-0.1,-0.8);
					\draw[black, line width=1.0pt] (-0.1,0)--(-0.1,-0.3);
					\draw[black, line width=1.0pt] (0.5,-0.8)--(0.5,0);
					%	\draw[black, line width=1.0pt] (0.5,0)--(0.5,-0.3);
					%\draw[black, line width=1.0pt] (1,0)--(1,0.55);
					%\draw[black, line width=1.0pt] (1.6,0)--(1.6,-1.3);
					%	\draw[black, line width=1.0pt] (1.3,0.3)--(1.3,0.55);
					\draw[black, line width=1.0pt] (0.2,-1.1)--(0.2,-1.35);
					\draw[black, line width=1.0pt] (0.2,0.3)--(0.2,0.55);
					%\draw [fill = white]  (0.2,0.55) circle (2pt);
					%\draw [fill = white] (0.2,-1.35) circle (2pt);
					%\draw [fill = white] (1.3,0.55) circle (2pt);
					\draw (-0.3,-0.3) rectangle (0.1,-0.65);
					\node (start) [at=(Kevin-1-s),xshift=-0.58cm,yshift=-0.47cm] {$S$};
					%	\node[squarednode]    at  (1.25,0.4)                {2};
				\end{tikzpicture}
			\end{aligned}
			=
			\begin{aligned}	
				\begin{tikzpicture}
					\draw[black, line width=1.0pt] (1.6,0) arc (0:180:0.3);
					\draw[black, line width=1.0pt] (1.3,0.3)--(1.3,0.55);
					\draw[black, line width=1.0pt] (1.0,-0.8) arc (180:360:0.3);
					\draw[black, line width=1.0pt] (1.6,0)--(1.6,-0.8);
					\draw[black, line width=1.0pt] (1.3,-1.1)--(1.3,-1.35);
					\braid[
					width=0.5cm,
					height=0.3cm,
					line width=1.0pt,
					style strands={1}{black},
					style strands={2}{black}] (Kevin)
					s_1^{-1} ;
					\draw[black, line width=1.0pt] (0.5,0)--(0.5,0.55);
					\draw[black, line width=1.0pt] (0.5,-0.6)--(0.5,-1.35);
					\draw [fill = white] (1.3,0.55) circle (2pt);
					\draw [fill = white] (1.3,-1.35) circle (2pt);
					%\draw [fill = white] (1.3,0.55) circle (2pt);
				\end{tikzpicture}
			\end{aligned}\,\,\,.
		\end{equation}
			The map on the right-hand side will be denoted as $\varepsilon_R: W\to W$, whose image is denoted as $W_R$.

	   \item[(3)] $S  = \mu \comp (\mu \otimes \mathrm{id}_W) \comp (S \otimes \mathrm{id}_W \otimes S) \comp (\Delta \otimes \mathrm{id}_W) \comp \Delta$, 
	   \begin{equation}
	   		\begin{aligned}	
	   		\begin{tikzpicture}
	   			\draw[black, line width=1.0pt] (-0.1,-0.65)--(-0.1,-1.05);
	   			\draw[black, line width=1.0pt] (-0.1,0.1)--(-0.1,-0.3);
	   			\draw (-0.3,-0.3) rectangle (0.1,-0.65);
	   			\node (start) [at=(Kevin-1-s),xshift=-0.58cm,yshift=-0.47cm] {$S$};
	   			%	\node[squarednode]    at  (1.25,0.4)                {2};
	   		\end{tikzpicture}
	   	\end{aligned}
   	=
	   	\begin{aligned}	
	   	\begin{tikzpicture}
	   		\draw[black, line width=1.0pt] (0.5,0) arc (0:180:0.3);
	   		\draw[black, line width=1.0pt] (-0.1,-0.4) arc (180:360:0.3);
	   		\draw[black, line width=1.0pt] (0.5,-0.4)--(0.5,0);
	   		\draw[black, line width=1.0pt] (0.2,-0.7) arc (180:360:0.3);
	   		\draw[black, line width=1.0pt] (0.8,0.3) arc (0:180:0.3);
	   		\draw[black, line width=1.0pt] (0.8,0.3)--(0.8,0);
	   		\draw[black, line width=1.0pt] (0.8,-0.7)--(0.8,-0.4);
	   		\draw[black, line width=1.0pt] (0.5,0.6)--(0.5,0.85);
	   		\draw[black, line width=1.0pt] (0.5,-1.25)--(0.5,-1);
	   		\draw (-0.3,0) rectangle (0.1,-0.4);
	   		\node (start) [at=(Kevin-1-s),xshift=-0.58cm,yshift=-0.2cm] {$S$};
	   		\draw (0.6,-0.4) rectangle (1,0);
	   		\node (start) [at=(Kevin-1-s),xshift=0.35cm,yshift=-0.2cm] {$S$};
	   		%	\node[squarednode]    at  (1.25,0.4)                {2};
	   	\end{tikzpicture}
	   \end{aligned}\,\,\,.
	   \end{equation}
	\end{enumerate}

A complex weak Hopf algebra $W$ is termed simple if its underlying algebra $(W, \mu, \eta)$ possesses no nontrivial subalgebras. It is referred to as semisimple if its underlying algebra can be expressed as a direct sum of simple algebras.

A useful tool for investigating the weak Hopf algebra is the canonical pairing between $W$ and its dual $\hat{W}:=\Hom(W,\mathbb{C})$. For $\phi\in \hat{W}$ and $x\in W$, we define $\langle \phi, x\rangle:=\phi(x)$.
$\hat{W}$ has a canonical weak Hopf algebra structure with structure maps given by:
\begin{align}
& \langle \hat{\mu}(\varphi\otimes \psi),x\rangle=\langle\varphi\otimes \psi, \Delta(x)\rangle,\\
& \langle \hat{\eta} (1),x\rangle= \varepsilon(x),\, \text{i.e.,}\hat{1}=\varepsilon,\\
& \langle \hat{\Delta}(\varphi), x\otimes y \rangle=\langle \varphi, \mu(x\otimes y)\rangle,\\
& \hat{\varepsilon}(\varphi)=\langle \varphi, \eta(1)\rangle,\\
& \langle \hat{S}(\varphi),x\rangle =\langle \varphi, S(x)\rangle.
\end{align}
The opposite weak Hopf algebra $W^{\rm op}$ is defined as $(W, \mu^{\rm op}, \eta, \Delta, \varepsilon, S^{-1})$, and the co-opposite weak Hopf algebra $W^{\rm cop}$ is defined as $(W, \mu, \eta, \Delta^{\rm op}, \varepsilon, S^{-1})$. Note that the antipode is always invertible \cite{BOHM1998weak}.

\begin{definition}[Quantum double, see Ref.~\cite{Jia2023weak} for details] \label{def:QuantumDouble}
For a weak Hopf algebra $W\in \mathsf{Vect}_{\mathbb{C}}$, its quantum double is defined as the space $D(W)=(\hat{W}\otimes W)/J$ where $J$ is generated by
\begin{equation}
    \phi\otimes lh - \phi(l\rightharpoonup \varepsilon) \otimes h,\quad \phi\otimes rh - \phi(\varepsilon \leftharpoonup r)\otimes h,\quad l\in W_L,\;r\in W_R.
\end{equation}
$D(W)$ is equipped with the following weak Hopf algebra structure:
\begin{enumerate}
	\item[(1)] The multiplication $[\varphi \otimes h] [\psi \otimes g]=\sum_{(\psi),(h)} [\varphi \psi^{(2)} \otimes  h^{(2)} g]  \langle \psi^{(1)}, S^{-1}(h^{(3)}) \rangle \langle \psi^{(3)}, h^{(1)}\rangle$.
	\item[(2)] The unit $[\varepsilon \otimes 1_W]$.
	\item[(3)] The comultiplication $\Delta([\varphi \otimes  h])= \sum_{(\varphi), (h)} [\varphi^{(2)} \otimes h^{(1)}] \otimes [\varphi^{(1)} \otimes h^{(2)}] $.
	\item[(4)] The counit $\varepsilon ([\varphi \otimes h])= \langle \varphi, \varepsilon_R(S^{-1}(h))\rangle$.
	\item[(5)] The antipode $S([\varphi \otimes h])
	= \sum_{(\varphi), (h)} [\hat{S}^{-1}(\varphi^{(2)}) \otimes S(h^{(2)})]
	\langle  \varphi^{(1)},h^{(3)}\rangle
	\langle  \varphi^{(3)}, S^{-1}(h^{(1)})\rangle$.
\end{enumerate}
The quantum double has a canonical quasitriangular structure, ensuring that the representation category of $D(W)$ is braided. Notice that here and henceforth we adopt Sweedler's notation for comultiplication $\Delta(x) = \sum_{(x)}x^{(1)}\otimes x^{(2)}$, and for arrows $(a\rightharpoonup \phi)(h) = \phi(ha)$, $(\phi\leftharpoonup a)(h)=\phi(ah)$.  
\end{definition}

\begin{example}
The following are some examples of weak Hopf algebras:
\begin{enumerate}
  %  \item For a groupoid $\EuScript{G}$ (a small category in which every morphism is an isomorphism), the groupoid algebra $\mathbb{C}[\EuScript{G}]$ generated by all isomorphisms is a weak Hopf algebra. 
    \item For every Hopf algebra $H$, its dynamical twist is a weak Hopf algebra.
    \item If $B$ is a separable algebra, then its enveloping algebra $B^{\rm op}\otimes B$ can be equipped with a weak Hopf algebra structure \cite{nikshych2003invariants}. 
\end{enumerate}
\end{example}

%\subsection{Representations of weak Hopf algebra and weak Hopf symmetry}

For a given weak Hopf symmetry $W$, the symmetry charges (or sectors) are defined as finite-dimensional irreducible representations of $W$ (left $W$-modules, the action will be denoted as ``$x\triangleright z$'').
The category of these symmetry charges is denoted as $\mathsf{Rep}(W)$, where the morphisms between two charges $X$ and $Y$ are $W$-linear homomorphisms.
The fusion between $X,Y\in \Rep(W)$ are defined as
\begin{equation}
    X\otimes Y:=\{x\in X\otimes_{\mathbb{C}} Y: \Delta(1)\triangleright x=x\}.
\end{equation}
Since $\Delta(1)$ is an idempotent, we see $X\otimes Y= \Delta(1)\triangleright (X\otimes_{\Cbb} Y)$.
The vacuum charge is the left counital subalgebra $\one=W_L=\varepsilon_L(W)$ with the left $W$-action given by $x\triangleright z=\varepsilon_L(xz)$.
For a charge $X$, its anti-charge is defined as $\bar{X}:=\Hom(X,\mathbb{C})$, with the left $W$ action given by $(x\triangleright f)(y)=f(S(x)y)$ for $f\in \bar{X}$.
The duality maps $b_X: \one \to X\otimes \bar{X}$ and $d_X: \bar{X}\otimes X \to \one$ are given by: $b_X(x)=x\,\triangleright \,(\sum_i v_i\otimes f_i)$ where $\{v_i\}$ and $\{f_i\}$ 
are bases of $X$ and $\bar{X}$ respectively which are dual to each other, and $d_X(\sum_j g_j\otimes z_j)=\sum_j\sum_{(1)} g_j(1^{(1)} z_j) 1^{(2)}$ where $\sum_j g_j\otimes z_j\in \bar{X}\otimes X$.
The weak Hopf algebra $W$ is called \emph{indecomposable} if and only if $\Rep(W)$ is indecomposable as a tensor category.

\begin{definition}[Weak Hopf symmetry] 
For a quantum system $(H, \mathcal{H})$, we assert that it possesses a weak Hopf symmetry if $\mathcal{H}$ forms a representation of $W$. 
This implies that we can establish a mapping from $g \in W$ to operators $U_g: \mathcal{H} \to \mathcal{H}$ such that $U_{gh} = U_gU_h$ and $U_1 = \operatorname{id}_{\mathcal{H}}$. We will also denote the action simply as $g \triangleright \Psi$.   A subspace $\mathcal{V}$ of $\mathcal{H}$ is called a vacuum space if and only if it is the ground state space of Hamiltonian $H$ and is isomorphic to the direct sum of trivial representations of $W$. 
\end{definition}

\begin{remark}
   In Ref.~\cite{Jia2023weak}, we introduce the weak Hopf symmetry for a given state $\Psi \in \mathcal{H}$. Recall that $\mathcal{H}$ is a $W$-module. $\Psi$ is called invariant under the action of $g\in W$ if $g\triangleright \Psi=\varepsilon_L(g) \triangleright\Psi$. This is a special case of the weak Hopf symmetry given above.
\end{remark}

\begin{lemma}\label{prop:WHAofmonoidal}
    For every multifusion category $\EC$, there is a (non-unique) corresponding weak Hopf algebra $W$ such that $\EC$ is equivalent to the representation category $\mathsf{Rep}(W)$ as UMFCs.
\end{lemma}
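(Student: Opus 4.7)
The plan is to apply a Tannaka–Krein style reconstruction in the form due to Hayashi, Szlach\'anyi, and Ostrik: exhibit a well-behaved fiber functor from $\EC$ into a category of bimodules over a finite-dimensional separable $\Cbb$-algebra $A$, and then take its algebra of natural endomorphisms as the desired weak Hopf algebra $W$.

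First I would extract a separable algebra $A$ from the module-category structure of $\EC$. Since $\EC$ is a UMFC with canonical decomposition $\EC=\oplus_{i,j\in I}\EC_{i,j}$, pick any faithful semisimple left $\EC$-module category $\EM$ with finitely many simples; the regular module $\EM=\EC$ always works. Let $\{M_k\}_{k\in K}$ be representatives of the simple objects of $\EM$ and set $A=\bigoplus_{k\in K}\End_{\EM}(M_k)\cong \Cbb^{|K|}$, a commutative separable $\Cbb$-algebra. Then $\EM\simeq \Mod_A$ as linear categories, and the $\EC$-action on $\EM$ becomes a faithful exact linear monoidal functor $F:\EC\to {}_A\Mod_A$ into the category of $A$-bimodules with tensor product $\otimes_A$. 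Crucially $F(\one)\cong A$, which is the monoidal unit of ${}_A\Mod_A$ but not of $\Mod_{A\otimes A}$, and this mismatch is precisely what forces a weak rather than an ordinary Hopf structure downstream.

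Next I would set $W:=\End(F)$, the $\Cbb$-algebra of natural endomorphisms of the underlying linear functor obtained by composing $F$ with the forgetful functor ${}_A\Mod_A\to \Vect_\Cbb$. Composition of natural transformations endows $W$ with its algebra structure; the tensor constraint of $F$ induces the comultiplication $\Delta:W\to W\otimes W$; evaluation at $\one$ gives the counit $\varepsilon$; and the rigidity of $\EC$ combined with duality of bimodules produces the antipode $S$. All the weak bialgebra relations, in particular Eqs.~\eqref{eq:unitG}--\eqref{eq:weak-mult} and the inequality $\Delta(1)\neq 1\otimes 1$, follow by diagrammatic manipulation exploiting that $A$ is a separable Frobenius $\Cbb$-algebra. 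A standard Tannakian argument then upgrades $F$ to a monoidal equivalence $\EC\xrightarrow{\sim}\Rep(W)$, sending $X\mapsto F(X)$ equipped with its canonical $W$-action; essential surjectivity and full faithfulness follow from semisimplicity, finiteness, and the faithfulness of $F$.

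For the unitary enhancement, one chooses a $C^*$ module category $\EM$ compatible with the $\dagger$-structure on $\EC$; this upgrades $F$ to a $*$-preserving functor and $W$ to a $C^*$ weak Hopf algebra, so the equivalence becomes one of UMFCs. Non-uniqueness of $W$ is built into the construction: different choices of $\EM$ yield Morita equivalent but typically non-isomorphic weak Hopf algebras. The main obstacle is the construction and coherence verification of the weak coalgebra structure on $W$: the tensor constraint of $F$ naturally lands in the subspace $W\otimes_{W_L}W\subset W\otimes W$, and one must check that this subspace coincides with the image of $\Delta$ and that the resulting quintuple $(\mu,\eta,\Delta,\varepsilon,S)$ satisfies every one of the weak Hopf axioms recalled earlier, including the two $\varepsilon_L,\varepsilon_R$ identities governing the antipode.
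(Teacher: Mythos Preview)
Your proposal is correct and follows precisely the Tannaka--Krein reconstruction that the cited references (Szlach\'anyi, Ostrik, Etingof--Nikshych--Ostrik) carry out; the paper's own proof is simply a pointer to those references, and the same fiber-functor construction $W=\End(F)$ with $F:\EC\to{}_A\Mod_A$ is spelled out verbatim in the Remark following Theorem~\ref{thm:UMFC_SN}. You have in fact supplied more detail than the paper does at this point.
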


\begin{proof}
    See Refs.~\cite{szlachanyi2000finite}, \cite[Theorem 4.1]{ostrik2003module}, \cite[Corollary 2.22]{etingof2005fusion} for a detailed discussion.
\end{proof}

\begin{definition}[Connected weak Hopf algebra \cite{etingof2005fusion}]
If the vacuum sector of a weak Hopf algebra $W$ is irreducible, then $W$ is called connected (or pure). In this case, the representation category $\mathsf{Rep}(W)$ is a UFC.   
The vacuum sector $W_L$ is irreducible if and only if the intersection of $W_L$ and the center $Z(W)$ is $\mathbb{C}$.
\end{definition}

\begin{proposition}\label{prop:chargesymmetry}
We have the following two crucial results about the quantum double of a weak Hopf symmetry:
   \begin{enumerate}
       \item The quantum double $D(W)$ of any indecomposable weak Hopf algebra $W$ is a connected weak Hopf algebra.
       \item For every non-chiral topological phase (UMTC) $\EB$, there is a weak Hopf algebra $W$ such that $\EB$ can be realized as weak Hopf lattice gauge theory in the sense that $\EB\simeq \Rep(D(W))$. 
   \end{enumerate}  
\end{proposition}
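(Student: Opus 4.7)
The plan is to reduce both statements to the categorical characterization of connectedness together with the Drinfeld center calculation already recorded in Lemma \ref{prop:UMFCcenter} and the reconstruction result in Lemma \ref{prop:WHAofmonoidal}. The central fact I would invoke is the braided monoidal equivalence $\mathsf{Rep}(D(W))\simeq \mathcal{Z}(\mathsf{Rep}(W))$, which is standard for the quantum double of a weak Hopf algebra (constructed in Definition \ref{def:QuantumDouble}); once this is in hand, both parts become essentially categorical.

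For part (1), I would argue as follows. By definition, $W$ is indecomposable iff $\mathsf{Rep}(W)$ is an indecomposable UMFC, which in the canonical grading takes the form $\mathsf{Rep}(W)=\bigoplus_{i,j\in I}\mathsf{Rep}(W)_{i,j}$. By Lemma \ref{prop:UMFCcenter}(2) the Drinfeld center $\mathcal{Z}(\mathsf{Rep}(W))$ is braided monoidal equivalent to $\mathcal{Z}(\mathsf{Rep}(W)_{i,i})$, and part (1) of the same lemma guarantees that each diagonal component $\mathsf{Rep}(W)_{i,i}$ is a UFC, so its Drinfeld center is a UMTC, in particular a fusion category with simple tensor unit. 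Transporting this through the equivalence $\mathsf{Rep}(D(W))\simeq \mathcal{Z}(\mathsf{Rep}(W))$, the tensor unit of $\mathsf{Rep}(D(W))$ is simple. Since the tensor unit of $\mathsf{Rep}(D(W))$ is precisely the vacuum sector $D(W)_L$ viewed as a left $D(W)$-module, simplicity of this object is equivalent to connectedness of $D(W)$, which is what we need.

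For part (2), I would unwind the definition of non-chiral topological phase given just after Lemma \ref{prop:UMFCcenter}: a UMTC $\EB$ is non-chiral precisely when there is an (indecomposable) UMFC $\ED$ with $\EB\simeq \mathcal{Z}(\ED)$ as braided monoidal categories. Lemma \ref{prop:WHAofmonoidal} then provides a weak Hopf algebra $W$ with $\ED\simeq\mathsf{Rep}(W)$ as UMFCs, and since $\ED$ is indecomposable so is $W$. Composing with the braided equivalence $\mathcal{Z}(\mathsf{Rep}(W))\simeq \mathsf{Rep}(D(W))$ yields $\EB\simeq \mathsf{Rep}(D(W))$, which is the claimed weak Hopf lattice gauge theoretic realization; by part (1), the weak Hopf algebra $D(W)$ is automatically connected, consistent with $\EB$ being a genuine UMTC.

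The main obstacle, as I see it, is not the categorical bookkeeping but rather \emph{justifying the equivalence} $\mathsf{Rep}(D(W))\simeq \mathcal{Z}(\mathsf{Rep}(W))$ for the weak Hopf double constructed in Definition \ref{def:QuantumDouble}. In the Hopf case this is classical, but in the weak setting one must (i) check that the quasitriangular $R$-matrix of $D(W)$ genuinely lives in $\Delta^{\mathrm{op}}(1)(D(W)\otimes D(W))\Delta(1)$, (ii) verify that the half-braiding induced by this $R$ on a $D(W)$-module coincides with the one coming from restricting the module along the natural inclusions $\hat W, W\hookrightarrow D(W)$, and (iii) show the functor from $\mathsf{Rep}(D(W))$ to $\mathcal{Z}(\mathsf{Rep}(W))$ is essentially surjective, using semisimplicity and Lemma \ref{prop:WHAofmonoidal} applied to $\mathcal{Z}(\mathsf{Rep}(W))$. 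The subtlety that $1$ is no longer grouplike is what makes the quotient by $J$ in Definition \ref{def:QuantumDouble} necessary, and handling this cleanly is where one must be careful; everything else then follows mechanically from the two lemmas cited above.
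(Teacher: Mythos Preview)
Your proposal is correct and follows essentially the same approach as the paper: both parts are reduced to Lemma~\ref{prop:UMFCcenter}, Lemma~\ref{prop:WHAofmonoidal}, and the standard braided equivalence $\mathcal{Z}(\mathsf{Rep}(W))\simeq \mathsf{Rep}(D(W))$. The paper simply invokes this last equivalence as a known fact (referring to the construction in Definition~\ref{def:QuantumDouble} and the accompanying reference) rather than treating it as an obstacle to be verified, so your additional discussion of (i)--(iii) is more cautious than what the paper requires.
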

\begin{proof}
1. Combining Lemma~\ref{prop:UMFCcenter} and the fact that $\mathcal{Z}(\Rep(W)) \simeq \Rep(D(W))$, we arrive at our conclusion.

2. Given that $\EB$ is non-chiral, there exists a multifusion category $\EC$ such that $\EB \simeq \mathcal{Z}(\EC)$. Utilizing Lemma~\ref{prop:WHAofmonoidal}, we can identify a weak Hopf algebra $W$ such that $\Rep(W) \simeq \EC$. Leveraging the fact $\mathcal{Z}(\Rep(W)) \simeq \Rep(D(W))$, we thus reach our conclusion.
\end{proof}

From a given weak Hopf algebra $W$, it is possible to construct a lattice gauge theory with \emph{gauge symmetry} $W$, wherein the topological excitations are classified by representations of $D(W)$ \cite{Jia2023weak}. Consequently, $D(W)$ can be viewed as a symmetry, which we will refer to as \emph{charge symmetry}.
As we will elaborate later, the gauge symmetry determines the string types in the multifusion string-net model, whereas the charge symmetry characterizes the topological charges of the model.
Proposition~\ref{prop:chargesymmetry} asserts that for non-chiral topological phases, the gauge symmetry and charge symmetry are distinct, they are related by a quantum double operation up to Morita equivalence. Table~\ref{tab:TubeAlgebra} is a summary of results regarding gauge symmetry and charge symmetry realized by the tube algebras, which will be clarified in Sec.~\ref{sec:tube}.

%\emph{Vacuum sector of weak Hopf symmetry.} ---

\subsection{Macroscopic theory of multifusion string-net}

\begin{table}[t]
\centering \small 
\begin{tabular} {|l|c|} 
\hline
 Bulk input data  & UMFC $\ED$   \\ 
   \hline
 Bulk input data  & $\ED$-module $\EM$  \\ 
   \hline  
 Bulk gauge symmetry  &

    $\text{boundary tube algebra}\,\,  W_{\rm gauge}\simeq_{\rm w. M.} \mathbf{Tube}({_{\ED}}\EM)$  \\ 
 \hline
 Bulk charge symmetry  & $\text{bulk tube algebra}\,\, W_{\rm charge}\simeq_{\rm s. M.} \mathbf{Tube}({_{\ED}}\ED_{\ED})$
   \\ 
 \hline
 Bulk phase & $\mathcal{Z}(\ED) \simeq_{\otimes,\rm br} \Rep(W_{\rm charge})\simeq_{\otimes,\rm br}  \Rep(D(W_{\rm gauge}))$
    \\ 
 \hline
\end{tabular}
\caption{The weak Hopf symmetries of the multifusion string-net model. We use ``$\simeq_{\rm w. M.}$'' to denote weak Morita equivalence, and ``$\simeq_{\rm s. M.}$'' to denote strong Morita equivalence; ``$\simeq_{\otimes}$'' to denote monoidal equivalence, and ``$\simeq_{\otimes,\rm br}$'' to denote braided monoidal equivalence.
\label{tab:TubeAlgebra}}
\end{table}

Now, let us proceed to develop the macroscopic theory of multifusion string-net for the $2d$ bulk, $1d$ gapped boundary, and $1d$ gapped domain wall.
This subsection serves as a summary of the forthcoming results, offering an overview of the broader picture that will be elucidated later in this work.

\vspace{1em}
\emph{Bulk theory.} ---
For the $2d$ bulk phase, the multifusion string-net model with input UMFC $\EC$  is equivalent to the lattice gauge theory with a weak Hopf gauge symmetry $W$ such that $\EC \simeq \Rep(W)$ in the sense that they have the same topological excitations $\EP\simeq \mathcal{Z}(\EC)\simeq \Rep(D(W))$.
For lattice gauge theory (denoted as $\mathsf{QD}$, abbreviation of ``quantum double''), we have
\begin{equation}
    \text{finite group}~\mathsf{QD} \subset \text{Hopf}~\mathsf{QD} \subset  \text{connected weak Hopf}~\mathsf{QD}\subset  \text{weak Hopf}~\mathsf{QD}.
\end{equation}
For string-net theory (denoted as $\mathsf{SN}$), we have
\begin{equation}
    \text{Levin-Wen}~\mathsf{SN} \subset \text{multifusion}~\mathsf{SN}.
\end{equation}
When $\EC$ is a UFC, the gauge symmetry $W$ must be a connected weak Hopf algebra, which encompasses all Hopf algebras as examples.
In other words, lattice gauge theories with connected weak Hopf gauge symmetry, including Hopf lattice gauge theories, correspond to Levin-Wen string-net models.
Using results in the previous subsection, we have
\begin{equation}
    \text{multifusion}~\mathsf{SN} \Leftrightarrow  \text{weak Hopf}~\mathsf{QD}, \quad  \text{Levin-Wen}~\mathsf{SN} \Leftrightarrow  \text{connected weak Hopf}~\mathsf{QD}.
\end{equation}
This correspondence has been carefully discussed in Refs.~\cite{Buerschaper2013a,meusburger2017kitaev,jia2023boundary} for Hopf symmetry.
General lattice gauge theories with (not necessarily connected) weak Hopf gauge symmetry correspond to multifusion string-net models, the lattice realization of this correspondence will be discussed in our future work.

The correspondence between the multifusion string-net model and weak Hopf lattice gauge theory enables us to comprehend the multifusion string-net through the lens of the weak Hopf lattice gauge theory established in our previous work~\cite{Jia2023weak}.
A crucial fact of weak Hopf symmetries for a multifusion string-net is that they are not unique.

\begin{definition}
    For $2d$ topological phases, two weak Hopf gauge symmetries $W$ and $W'$ are called weak Morita equivalent if and only if $\mathcal{Z}(\Rep(W))\simeq \mathcal{Z}(\Rep(W'))$ as UMFCs.
    Two weak Hopf charge symmetries $V$ and $V'$ are called strong Morita equivalent if and only if $\Rep(V)\simeq \Rep(V')$ as braided monoidal categories.
\end{definition}

\begin{proposition}
    For $2d$ topological phases, two weak Hopf gauge symmetries $W$ and $W'$ are weak Morita equivalent if and only if $\Rep(D(W))\simeq \Rep(D(W'))$ as braided monoidal categories, viz., $D(W)$ and $D(W')$ are strong Morita equivalent.
\end{proposition}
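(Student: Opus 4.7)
The plan is to prove this as a formal consequence of the quantum-double/Drinfeld-center correspondence $\mathcal{Z}(\Rep(W)) \simeq \Rep(D(W))$, which was already invoked in the proof of Proposition~\ref{prop:chargesymmetry}. The key observation is that, thanks to the canonical quasitriangular structure on $D(W)$ recorded in Definition~\ref{def:QuantumDouble}, this equivalence holds not merely as monoidal categories but as \emph{braided} monoidal categories, so no additional structure gets lost along the way.

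First I would unpack the two definitions in play. By definition, $W \sim_{\text{w.M.}} W'$ means $\mathcal{Z}(\Rep(W)) \simeq \mathcal{Z}(\Rep(W'))$, and by Lemma~\ref{prop:UMFCcenter} together with the remark that the non-chiral phase is always a UMTC, this equivalence automatically respects the braided structure inherited from the half-braidings. On the other hand, $D(W) \sim_{\text{s.M.}} D(W')$ means $\Rep(D(W)) \simeq \Rep(D(W'))$ as braided monoidal categories. So the statement reduces to chaining the equivalences
\begin{equation}
\mathcal{Z}(\Rep(W)) \;\simeq\; \Rep(D(W)), \qquad \mathcal{Z}(\Rep(W')) \;\simeq\; \Rep(D(W')),
\end{equation}
on both sides of the biconditional, in each case as braided monoidal categories. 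The forward implication uses these equivalences to transport a braided monoidal equivalence $\mathcal{Z}(\Rep(W)) \simeq \mathcal{Z}(\Rep(W'))$ into one $\Rep(D(W)) \simeq \Rep(D(W'))$; the reverse reads the same chain backwards.

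There is no real obstacle here — the argument is a two-line diagram chase once the correspondence $\mathcal{Z}(\Rep(W)) \simeq \Rep(D(W))$ is in hand. The only point that deserves care is verifying that the braiding induced on $\mathcal{Z}(\Rep(W))$ by the half-braidings matches the braiding induced on $\Rep(D(W))$ by the universal $R$-matrix built from the pairing $\langle \cdot,\cdot\rangle$ between $W$ and $\hat W$. This is standard for weak Hopf algebras (the half-braiding of a $D(W)$-module $X$ on a $W$-module $Y$ is given by $x \otimes y \mapsto \sum R^{(2)}\triangleright y \otimes R^{(1)}\triangleright x$ restricted to $\Delta(1)\triangleright(X\otimes_{\Cbb} Y)$), and I would simply cite the references already used for Proposition~\ref{prop:chargesymmetry} rather than redo the check. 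Thus the proposition is really a repackaging of the quantum-double theorem in terms of Morita language, which also clarifies why the weak/strong nomenclature is warranted: weak Morita equivalence at the level of gauge symmetries is precisely strong Morita equivalence at the level of their quantum doubles.
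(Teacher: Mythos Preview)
Your proposal is correct and follows essentially the same approach as the paper: both arguments reduce the biconditional to the single fact that $\mathcal{Z}(\Rep(A))\simeq \Rep(D(A))$ as braided monoidal categories for any weak Hopf algebra $A$. Your write-up is more detailed (in particular you take care to discuss the braiding compatibility, which the paper simply asserts), but the underlying logic is identical.
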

\begin{proof}
    This arises directly from the fact that $\mathcal{Z}(\Rep(A))\simeq \Rep(D(A))$ as braided monoidal categories for a weak Hopf algebra $A$.
\end{proof}

With the groundwork laid out above, we are prepared to summarize our main findings regarding the bulk phase of the multifusion string-net model:

\begin{theorem} \label{thm:UMFC_SN}
For a multifusion string-net $\mathsf{SN}_{\ED}$ with an input UMFC $\ED$, its gauge symmetry is described by a weak Hopf symmetry $W$ in the sense that $\ED\simeq \mathsf{Rep}(W)$ as multifusion categories. 
The weak Hopf symmetries of a multifusion string-net are not unique, and all these symmetries are Morita equivalent in the sense that $\mathsf{Rep}(W)\simeq \Rep(W')\simeq \ED$ as UMFCs.
The topological excitations of the multifusion string-net model are given by $\mathcal{Z}(\ED)$, which is a UMTC. The multifusion string-net model has the same topological excitations as the weak Hopf lattice gauge theory with gauge symmetry given by $W$, for which the topological excitations are given by the representation category $\mathsf{Rep}(D(W))$ of the quantum double $D(W)$ of $W$. More precisely, this is characterized by $\mathcal{Z}(\ED) \simeq \mathsf{Rep}(D(W))$ as UMTCs.
This means that the charge symmetry of the multifusion string-net model is given by $D(W)$.
Noticeably, the charge symmetries are not unique; rather, they are Morita equivalent to each other.
\end{theorem}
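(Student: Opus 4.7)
The plan is to assemble the theorem from the preparatory results already stated (the UMFC-center lemma, the weak-Hopf realization lemma, the quantum-double proposition), treating the claim as a synthesis whose only genuinely new ingredient is the identification of the bulk topological excitations of the multifusion string-net with $\mathcal{Z}(\ED)$. I would split the argument into four pieces corresponding to the four assertions in the statement: existence of gauge symmetry, Morita class of gauge symmetries, identification of topological excitations, and equivalence with the weak Hopf lattice gauge theory plus the Morita class of charge symmetries.

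First, the existence of a weak Hopf gauge symmetry $W$ with $\ED\simeq\Rep(W)$ as UMFCs is immediate from Lemma \ref{prop:WHAofmonoidal}. For non-uniqueness, suppose $W'$ is any other weak Hopf algebra with $\Rep(W')\simeq\ED$; composing the two equivalences yields $\Rep(W)\simeq\Rep(W')$ as UMFCs, hence by functoriality of the Drinfeld center $\mathcal{Z}(\Rep(W))\simeq\mathcal{Z}(\Rep(W'))$ as braided monoidal categories. This is exactly the defining condition of weak Morita equivalence of $W$ and $W'$ in the sense just introduced.

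Second, I would deduce that $\mathcal{Z}(\ED)$ is a UMTC by invoking Lemma \ref{prop:UMFCcenter}: after decomposing $\ED=\bigoplus_{i,j\in I}\ED_{i,j}$, the diagonal component $\ED_{i,i}$ is a UFC, its Drinfeld center is therefore a UMTC, and statement 2 of Lemma \ref{prop:UMFCcenter} transports the modular structure to $\mathcal{Z}(\ED)$ via the braided monoidal equivalence $\mathcal{Z}(\ED)\simeq\mathcal{Z}(\ED_{i,i})$. Identifying the bulk topological excitations of $\mathsf{SN}_{\ED}$ with $\mathcal{Z}(\ED)$ itself is the step I expect to be the principal obstacle: for the Levin-Wen case (input a UFC) this is classical via Ocneanu's tube algebra or the Kirillov-Balsam argument, but in the multifusion setting one must verify that the local operator algebra on a tube (to be constructed in Sec.~\ref{sec:tube}) has a representation category braided-equivalent to $\mathcal{Z}(\ED)$. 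My plan is therefore to defer the honest proof to the tube-algebra construction, where I would show $\Rep(\mathbf{Tube}({_{\ED}}\ED_{\ED}))\simeq\mathcal{Z}(\ED)$ as braided monoidal categories, and use strong Morita equivalence to compare with $\Rep(D(W))$.

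Finally, for the charge-symmetry statement I would combine the standard isomorphism $\mathcal{Z}(\Rep(W))\simeq\Rep(D(W))$ as braided monoidal categories with the equivalence $\Rep(W)\simeq\ED$ from step one, giving $\mathcal{Z}(\ED)\simeq\Rep(D(W))$ as UMTCs, which identifies $D(W)$ as a charge symmetry. The Morita class of charge symmetries then follows: if $W$ and $W'$ are weak Morita equivalent gauge symmetries, then $\Rep(D(W))\simeq\mathcal{Z}(\ED)\simeq\Rep(D(W'))$ as braided monoidal categories, which is the definition of strong Morita equivalence of $D(W)$ and $D(W')$; this is exactly the content of the proposition immediately preceding the theorem. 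The remaining conceptual point, namely that the lattice Hamiltonian of $\mathsf{SN}_{\ED}$ and that of the weak Hopf quantum double of $W$ indeed realize the same phase (not merely carry the same abstract excitation category), will be justified microscopically in Sec.~\ref{Sec:SNmulti} and Sec.~\ref{sec:tube} via the explicit lattice-level correspondence alluded to in Table~\ref{tab:MFSN}.
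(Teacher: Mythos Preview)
Your proposal is correct and matches the paper's own treatment: the theorem is positioned there as a synthesis of the preparatory results (Lemmas~\ref{prop:UMFCcenter} and~\ref{prop:WHAofmonoidal}, Proposition~\ref{prop:chargesymmetry}, and the Morita-equivalence proposition immediately preceding it), with no separate proof given, and the microscopic identification of the excitations with $\mathcal{Z}(\ED)$ is deferred to the tube-algebra construction in Sec.~\ref{sec:tube} exactly as you anticipate. Your breakdown into existence of $W$, Morita class of gauge symmetries, UMTC structure of $\mathcal{Z}(\ED)$ via the diagonal component, and the identification $\mathcal{Z}(\ED)\simeq\Rep(D(W))$ via $\mathcal{Z}(\Rep(W))\simeq\Rep(D(W))$ is the same logical assembly the paper relies on.
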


\begin{remark}
   Determining the explicit weak Hopf gauge and charge symmetries for a multifusion string-net is indeed a challenging endeavor. Here we give two approaches:
\begin{enumerate}
    \item \emph{Tannaka-Krein reconstruction from fiber functor.} --- For any UMFC $\ED$, there exists a suitable algebra $A$ over $\mathbb{C}$ such that we can equip $\ED$ with an exact faithful monoidal functor $F: \ED\to {_A}\Mod_A$ (called fiber functor over $\ED$). The $W=\operatorname{End}(F)$ consisting of natural transformations from $F$ to itself is an algebra with product given by composition. It can be shown that $W$ is a weak Hopf symmetry and $\Rep(W)\simeq \ED$ as UMFCs \cite{szlachanyi2000finite}. This means that $W$ is a weak Hopf gauge symmetry for the multifusion string-net $\mathsf{SN}_{\ED}$.
    The charge symmetry of $\mathsf{SN}_{\ED}$ is thus given by $D(W)$.
    \item \emph{Tube algebra.} --- A more physically intuitive approach is grounded in tube algebra, which will be thoroughly discussed in Sec.~\ref{sec:tube}. The boundary tube algebra is a weak Hopf symmetry for any gapped boundary of multifusion string-net. If the bulk UMFC is $\ED$, we choose the smooth boundary ${_{\ED}}\ED$ and construct the boundary tube algebra $T=\mathbf{Tube}({_{\ED}}\ED)$ (see Sec.~\ref{sec:bdtheory} for details).
    The boundary phase is given by the representation category of boundary tube algebra $\Rep(T)$. 
    Then using the boundary-bulk correspondence, the bulk phase is given by $\mathcal{Z}(\Rep(T))$. This means that we can regard boundary tube algebra $T$ as the weak Hopf gauge symmetry for the multifusion string-net $\mathsf{SN}_{\ED}$.
    The weak Hopf charge symmetry is given by the quantum double $D(T)$. Notice that there is another method to construct the weak Hopf charge symmetry directly from the bulk tube algebra $T'=\mathbf{Tube}({_{\ED}}\ED_{\ED})$, which is strongly Morita equivalent to $D(T)$.
    This aspect will be explored in detail in the subsequent sections.
\end{enumerate}

\end{remark}

\vspace{1em}
\emph{Gapped boundary theory.} ---
Utilizing the correspondence between multifusion string-net and lattice gauge theories, we can extend the notion of gauge symmetry and charge symmetry to the $1d$ gapped boundary. For a $2d$ multifusion string-net with input UMFC $\ED=\Rep(W)$, the boundary gauge symmetry is characterized by the $W$-comodule algebra $\mathfrak{A}$ (or equivalently, $W$-module algebra) \cite{Jia2023weak}.
This will be discussed in detail in Sec.~\ref{sec:bdtheory}.
The category $_{\mathfrak{A}}\Mod$ of $\mathfrak{A}$-modules is a module category over $\ED$.
Via the Kitaev-Kong construction, we obtain a boundary lattice model for multifusion string-net (see Sec.~\ref{sec:bdtheory} for details).
In general, this gapped boundary phase is a multifusion phase, which is different from the Levin-Wen string-net model.
If the gapped boundary phase is a multifusion phase, it will be referred to as unstable; if it is a fusion phase, it will be referred to as stable.

The gapped boundary theory of the Levin-Wen string-net model can be macroscopically described by anyon condensation theory. 
For the gapped boundary of a topological phase with topological excitations characterized by a UMTC $\EP$, the gapped boundary is determined by a Lagrangian algebra $\Acal\in \EP$ \cite{Bais2002,bais2003hopf,Bais2009,Kong2014,eliens2010anyon,burnell2018anyon}.
The boundary phase is given by the category of $\Acal$-modules in $\EP$, which will be denoted as $\EP_{\Acal}:={_{\Acal}}\Mod(\EP)$.
Since $\EP_{\Acal}$ is a $1d$ phase, the topological excitations can only fuse but not braid, thus, $\EP_{\Acal}$ is a monoidal category.
It can be proved that the Lagrangian algebra is always simple, thus the boundary phase is always a fusion phase.
For the multifusion string-net, we need to consider the anyon condensation that is described by a disconnected algebra (thus not a Lagrangian algebra) $\mathcal{B}$ over a Lagrangian algebra $\EA$.
The boundary phase is given by $\EP_{\mathcal{B}|\mathcal{B}}:={_{\Bcal}}\Mod_{\Bcal}(\EP)$, the category of $\mathcal{B}|\mathcal{B}$-bimodules in $\EP$. See Sec.~\ref{sec:bdtheory} for details.

From a symmetry point of view, for the gapped boundary of multifusion string-net $\EC=\Rep(W)$, its boundary gauge symmetry is given by a $W$-comodule algebra $\mathfrak{A}$. The category $_{\mathfrak{A}}\Mod$  of $\mathfrak{A}$-modules is a $\EC$-module category.
The boundary charge symmetry is still a weak Hopf algebra $K$ for which $\EP_{\Bcal|\Bcal}\simeq \Rep(K)$ as monoidal categories.
The $K$ and $\mathfrak{A}$ is related by a smash product \cite{jia2023boundary,Jia2023weak}, and $K$
 is strongly Morita equivalent to the boundary tube algebra.

\begin{theorem}
There are two equivalent macroscopic descriptions of the gapped boundary theory for multifusion string-net:
\begin{enumerate}
    \item   Weak Hopf symmetry breaking:  For gapped boundary of the multifusion string-net model $\mathsf{SN}_{\ED}$ with $\ED=\mathsf{Rep}(W)$ for the gauge weak Hopf symmetry $W$, the boundary gauge symmetry is characterized by a $W$-comodule algebra $\mathfrak{A}$. The boundary string labels are given by irreducible $\mathfrak{A}$-modules. 
    The boundary phase is given by the category of $\mathsf{Rep}(W)$-module functors $\mathsf{Fun}_{\mathsf{Rep}(W)}(_{\mathfrak{A}}\Mod,_{\mathfrak{A}}\Mod)$, which is equivalent to the category $_{\mathfrak{A}}\Mod_{\mathfrak{A}}^W$ of $W$-equivariant $\mathfrak{A}|\mathfrak{A}$-bimodules. The boundary weak Hopf charge symmetry is given by the boundary tube algebra.
    \item Anyon condensation: The gapped boundary is determined by a pair of algebras $(\Acal,\Bcal)$ in the bulk phase  $\EP\simeq \mathsf{Rep}(D(W))\simeq \mathcal{Z}(\ED)$, for which $\Acal$ is a Lagrangian algebra and there is an algebra homomorphism $\zeta_{\Acal,\Bcal}:\Acal \to \Bcal$ (this makes $\Bcal$ an algebra over $\Acal$). This $\Bcal$ is in general disconnected, and the boundary phase is given by the category $\EP_{\Bcal|\Bcal}$ of $\Bcal|\Bcal$-bimodules in $\EP$.
\end{enumerate}
\end{theorem}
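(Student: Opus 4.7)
The plan is to establish each of the two descriptions by going through module categories over $\ED$, and then exhibit an explicit bijection between them.

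\textbf{Setup via module categories.} First I would invoke the boundary-bulk correspondence, which says that a gapped boundary of the bulk phase $\EP \simeq \mathcal{Z}(\ED)$ is classified (up to equivalence) by an indecomposable $\ED$-module category $\EM$, with the boundary phase of excitations realized as $\mathsf{Fun}_{\ED}(\EM,\EM)$. This single geometric input will feed both descriptions.

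\textbf{Weak Hopf symmetry breaking.} Using Lemma~\ref{prop:WHAofmonoidal} to write $\ED\simeq \Rep(W)$, an Ostrik-type theorem for weak Hopf algebras gives that every indecomposable $\ED$-module category is of the form ${_{\mathfrak{A}}}\Mod$ for an indecomposable $W$-comodule algebra $\mathfrak{A}$, so boundary string labels are precisely simple $\mathfrak{A}$-modules. The identification
\[
\mathsf{Fun}_{\Rep(W)}({_{\mathfrak{A}}}\Mod,{_{\mathfrak{A}}}\Mod)\simeq {_{\mathfrak{A}}}\Mod_{\mathfrak{A}}^{W}
\]
follows from an Eilenberg--Watts style argument: any such module endofunctor is represented by tensoring over $\mathfrak{A}$ with an $\mathfrak{A}|\mathfrak{A}$-bimodule, and $\Rep(W)$-linearity translates exactly into $W$-equivariance. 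To match the boundary tube algebra, I would argue that the smash product $K=\mathfrak{A}^{*}\#_{W}\mathfrak{A}$ is strongly Morita equivalent to the tube algebra constructed in Sec.~\ref{sec:bdtheory}, so both yield $\Rep(K)\simeq {_{\mathfrak{A}}}\Mod_{\mathfrak{A}}^{W}$ as monoidal categories.

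\textbf{Anyon condensation picture.} Here I would use the central adjunction: for a module category $\EM$, the right adjoint of the canonical central functor $\ED\to \mathsf{Fun}_{\ED}(\EM,\EM)$ sends the unit to a commutative algebra $\Acal\in \mathcal{Z}(\ED)$, which is Lagrangian precisely when $\EM$ is faithful and indecomposable. In the multifusion setting the tensor unit of $\mathsf{Fun}_{\ED}(\EM,\EM)$ need not be simple, so an extra datum $\Bcal$ is needed: concretely, $\Bcal$ is the internal endomorphism algebra in $\mathcal{Z}(\ED)$ of a chosen simple object of $\EM$, and the unit map of its module structure yields the algebra homomorphism $\zeta_{\Acal,\Bcal}:\Acal\to \Bcal$. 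A relative-center argument (Schauenburg-type theorem) then identifies $\mathsf{Fun}_{\ED}(\EM,\EM)$ with $\EP_{\Bcal|\Bcal}={_{\Bcal}}\Mod_{\Bcal}(\EP)$ as monoidal categories.

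\textbf{Matching and the main obstacle.} Finally I would match the two pictures by setting $\EM={_{\mathfrak{A}}}\Mod$: the Lagrangian $\Acal$ is recovered as $R(\mathbf{1})$ in the central adjunction, $\Bcal$ is the internal endomorphism algebra of a simple $\mathfrak{A}$-module in $\mathcal{Z}(\ED)$, and conversely $\mathfrak{A}$ is recovered as the image of $\Bcal$ under the forgetful functor $\mathcal{Z}(\ED)\to \ED$ equipped with its canonical $W$-comodule structure. The hardest step is the multifusion refinement in Step 3: unlike the fusion (Levin--Wen) case, the Lagrangian algebra alone does not reconstruct the module category, and one must carefully verify that the pair $(\Acal,\Bcal)$ together with $\zeta_{\Acal,\Bcal}$ is exactly the right datum -- no less (otherwise one loses the non-simple decomposition of the boundary) and no more (otherwise distinct pairs describe the same boundary). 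This compatibility is where the novel content relative to the fusion case resides.
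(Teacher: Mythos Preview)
Your proposal is substantially more detailed than what the paper itself provides. In the paper this theorem is stated in the overview subsection as a \emph{summary of results}, without a dedicated proof environment; the two descriptions are then developed separately in Section~\ref{sec:bdtheory}. For description 1 the paper invokes exactly what you do (an Ostrik-type classification of $\ED$-modules by $W$-comodule algebras, and an Eilenberg--Watts identification $\Fun_{\Rep(W)}({_{\mathfrak{A}}}\Mod,{_{\mathfrak{A}}}\Mod)\simeq {_{\mathfrak{A}}}\Mod_{\mathfrak{A}}^{W}$), and for description 2 it simply asserts the two-step condensation picture $(\Acal,\Bcal)$ with $\Bcal$ a separable algebra over a Lagrangian $\Acal$, proving only that $\EP_{\Acal}^{\rm loc}\simeq\Hilb$ and deferring the full anyon-condensation analysis to a companion paper. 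In particular the paper does \emph{not} construct the explicit bijection between the two pictures that you outline in your matching step; it treats them as two independently valid macroscopic descriptions.

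Your route via the central adjunction $F:\mathcal{Z}(\ED)\to\Fun_{\ED}(\EM,\EM)$ with right adjoint $R$ is a genuinely different and more constructive approach, and it is the natural one if one wants to \emph{prove} equivalence rather than assert it. One point of imprecision: your description of $\Bcal$ as ``the internal endomorphism algebra in $\mathcal{Z}(\ED)$ of a chosen simple object of $\EM$'' is not quite right as stated --- the ordinary internal end of $m\in\EM$ lives in $\ED$, not in $\mathcal{Z}(\ED)$. What you want is $\Bcal=R(\one_i)$ for $\one_i$ a simple summand of the (possibly non-simple) tensor unit of $\Fun_{\ED}(\EM,\EM)$; the unit map $\Acal=R(\one)\to R(\one_i)=\Bcal$ is then the algebra homomorphism $\zeta_{\Acal,\Bcal}$. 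With that correction your strategy is sound and in fact sharper than the paper's treatment, which leaves the precise origin of $\Bcal$ and the bijection with module-category data to the cited future work.
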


\begin{remark}
   A gapped boundary can be regarded as a gapped domain wall that separates a multifusion string-net with the trivial phase $\Hilb$.
\end{remark}

\vspace{1em}
\emph{Gapped domain wall theory.} ---
Consider $2d$ topologically ordered phases characterized by UMTCs $\EP_1$ and $\EP_2$. Via the folding trick, a gapped domain wall separating two phases can be transformed into a gapped boundary of the phase $\EP_1\boxtimes \overline{\EP_2}$, where $\overline{\EP_2}$ is the same category with $\EP_2$ but equipped with the anti-braiding and ``$\boxtimes$'' is the Deligne tensor product (physically, it means stacking two topological phases together).
Subsequently, we can employ the gapped boundary theory established above to derive the macroscopic description of gapped domain walls.

From anyon condensation point of view, the gapped domain walls are classified by the Lagrangian algebras $\Acal$ in $\EP_1\boxtimes \overline{\EP_2}$ and an algebra $\Bcal$ in $\EP_1\boxtimes \overline{\EP_2}$ together with an algebra homomorphism $\zeta_{\Acal,\Bcal}:\Acal \to \Bcal$.
The domain wall phase is given by $(\EP_1\boxtimes \overline{\EP_2})_{\Bcal|\Bcal}:={_{\Bcal}}\Mod_{\Bcal}(\EP_1\boxtimes \overline{\EP_2})$, the category of $\Bcal|\Bcal$-bimodules in $\EP_1\boxtimes \overline{\EP_2}$.

From a weak Hopf symmetry point of view, for the multifusion string-net model with two UMFCs $\EC_1=\Rep(W_1)$ and $\EC_2=\Rep(W_2)$, the corresponding bulk phases are Drinfeld centers $\EP_1=\mathcal{Z}(\EC_1)$ and $\EP_2=\mathcal{Z}(\EC_2)$.
The domain wall is characterized by a $\EC_1|\EC_2$-bimodule category $\EM$.
Via folding trick, $\EM$ can be regarded as a $\EC_1\boxtimes \EC_2^{\rm rev}$-left module category, meaning that it can be regarded as a gapped boundary of the multifusion string-net with input UMFC as  $\EC_1\boxtimes \EC_2^{\rm rev}$.
Thus we can apply the the boundary theory to obtain the theory of gapped domain wall.
For this case, the corresponding bulk gauge symmetry is $W_1\otimes W_2^{\rm cop}$, and the domain wall gauge symmetry is a $W_1\otimes W_2^{\rm cop}$-comodule algebra.
This is equivalent to saying that the domain wall weak Hopf gauge symmetry is characterized by a $W_1|W_2$-bicomodule algebra $\mathfrak{A}$.
The charge symmetry of the gapped domain wall is given by the domain wall tube algebra (see Sec.~\ref{sec:walltheory}).

\begin{theorem}
   There are two equivalent macroscopic descriptions of the gapped domain wall that separates two multifusion string-net bulks with input UMFCs $\ED_1=\mathsf{Rep}(W_1)$ and  $\ED_2=\mathsf{Rep}(W_2)$ respectively: 
\begin{enumerate}
    \item Weak Hopf symmetry breaking: For a gapped domain wall with weak Hopf gauge symmetry given by a $W_1|W_2$-bicomodule algebra $\mathfrak{A}$, the category ${_{\mathfrak{A}}}\Mod$ of left $A$-modules is a $\Rep(W_1)|\Rep(W_2)$-bimodule category. The domain wall excitations are given by $\Fun_{\Rep(W_1)|\Rep(W_2)}({_{\mathfrak{A}}}\Mod,{_{\mathfrak{A}}}\Mod)$ which is equivalent to ${_{\mathfrak{A}}}\Mod_{\mathfrak{A}}^{W_1\otimes W_2^{\rm cop}}$, the category of $W_1\otimes W_2^{\rm cop}$-equivariant $\mathfrak{A}|\mathfrak{A}$-bimodules. The charge weak Hopf symmetry is given by the domain wall tube algebra.
    \item Anyon condensation: Two bulk topological phases are given by $\EP_i\simeq \Rep(D(W_i))\simeq \mathcal{Z}(\ED_i)$ with $i=1,2$. The gapped domain wall is characterized by a Lagrangian algebra $\Acal$ in $\EP_1\boxtimes \overline{\EP_2}$ together with an algebra $\Bcal$ in $\EP_1\boxtimes \overline{\EP_2}$ such that there is an algebra homomorphism $\zeta_{\Acal,\Bcal}:\Acal\to \Bcal$. The domain wall excitations are given by  $(\EP_1\boxtimes \overline{\EP_2})_{\Bcal|\Bcal}:={_{\Bcal}}\Mod_{\Bcal}(\EP_1\boxtimes \overline{\EP_2})$, the category of $\Bcal|\Bcal$-bimodules in $\EP_1\boxtimes \overline{\EP_2}$.
\end{enumerate}
\end{theorem}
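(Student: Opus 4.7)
The plan is to reduce the domain wall theorem to the gapped boundary theorem already established in the excerpt via the folding trick, and then carefully track how the algebraic data transforms under folding. First, I would observe that a gapped domain wall separating phases $\EP_1$ and $\EP_2$ is, by definition, equivalent to a gapped boundary of the stacked phase $\EP_1 \boxtimes \overline{\EP_2}$, where the orientation reversal on the $\EP_2$ side produces the bar. On the input-data side, a $\ED_1|\ED_2$-bimodule category $\EM$ is precisely the same as a left $\ED_1 \boxtimes \ED_2^{\rm rev}$-module category, so the gauge-theoretic input for a wall is identified with the gauge-theoretic input for a boundary of the folded theory $\mathsf{SN}_{\ED_1 \boxtimes \ED_2^{\rm rev}}$.

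Next, for statement (1), I would use Lemma~\ref{prop:WHAofmonoidal} applied to $\ED_1 \boxtimes \ED_2^{\rm rev}$ to identify a corresponding weak Hopf gauge symmetry. The key calculation here is that if $\ED_1 \simeq \Rep(W_1)$ and $\ED_2 \simeq \Rep(W_2)$, then $\ED_2^{\rm rev} \simeq \Rep(W_2^{\rm cop})$, and Deligne tensor product corresponds to tensor product of weak Hopf algebras, giving $\ED_1\boxtimes \ED_2^{\rm rev} \simeq \Rep(W_1 \otimes W_2^{\rm cop})$. Consequently, a boundary comodule algebra for the folded theory is a $W_1 \otimes W_2^{\rm cop}$-comodule algebra, which is equivalent data to a $W_1|W_2$-bicomodule algebra $\mathfrak{A}$. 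The boundary theorem then gives the excitation category as $\Fun_{\Rep(W_1 \otimes W_2^{\rm cop})}({_{\mathfrak{A}}}\Mod, {_{\mathfrak{A}}}\Mod) \simeq {_{\mathfrak{A}}}\Mod_{\mathfrak{A}}^{W_1 \otimes W_2^{\rm cop}}$, and under the unfolding identification this becomes $\Fun_{\Rep(W_1)|\Rep(W_2)}({_{\mathfrak{A}}}\Mod, {_{\mathfrak{A}}}\Mod)$, which is exactly the claim. The statement about the domain wall tube algebra as charge symmetry will follow once the construction is given in Section~\ref{sec:walltheory}, so here I would only cite forward.

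For statement (2), I would apply the anyon condensation half of the boundary theorem directly to the bulk phase $\EP_1 \boxtimes \overline{\EP_2}$. Since $\EP_i \simeq \mathcal{Z}(\ED_i) \simeq \Rep(D(W_i))$ are UMTCs, their stack $\EP_1 \boxtimes \overline{\EP_2}$ is again a UMTC (this is exactly the Drinfeld center of the folded theory, by Lemma~\ref{prop:UMFCcenter} applied after identifying $\mathcal{Z}(\ED_1\boxtimes \ED_2^{\rm rev}) \simeq \mathcal{Z}(\ED_1)\boxtimes \overline{\mathcal{Z}(\ED_2)}$). The boundary theorem then yields a Lagrangian algebra $\Acal$ in $\EP_1\boxtimes \overline{\EP_2}$ together with an algebra $\Bcal$ over $\Acal$ via $\zeta_{\Acal,\Bcal}:\Acal\to \Bcal$, with domain wall excitations given by the bimodule category ${_{\Bcal}}\Mod_{\Bcal}(\EP_1\boxtimes \overline{\EP_2})$, exactly as claimed.

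Finally, the equivalence of the two descriptions is inherited from the boundary case: on one hand, given $\mathfrak{A}$ one constructs $\Acal$ as the internal hom / image of the tensor unit and $\Bcal$ as the internal endomorphism algebra of a generator of ${_{\mathfrak{A}}}\Mod$ inside the folded bulk; on the other hand, given $(\Acal,\Bcal)$ one recovers $\mathfrak{A}$ from the module category ${_{\Bcal}}\Mod(\EP_1\boxtimes \overline{\EP_2})$ via Tannaka-Krein-type reconstruction (as sketched in the remark after Theorem~\ref{thm:UMFC_SN}). The main obstacle I anticipate is keeping the bookkeeping of the folding consistent, particularly the identification $\ED_2^{\rm rev} \simeq \Rep(W_2^{\rm cop})$ and the translation of a $W_1\otimes W_2^{\rm cop}$-comodule algebra into a $W_1|W_2$-bicomodule algebra at the level of coactions, including verifying that equivariance conditions match up on both sides. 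Once the folding dictionary is tightly established, both halves of the theorem follow by direct invocation of the previously proved boundary theorem.
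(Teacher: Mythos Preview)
Your proposal is correct and follows essentially the same approach as the paper. The paper does not give a separate proof environment for this theorem; instead, the justification is the discussion immediately preceding the statement, which applies the folding trick to reduce the domain wall to a gapped boundary of $\EP_1\boxtimes\overline{\EP_2}$, identifies $\ED_1\boxtimes\ED_2^{\rm rev}\simeq\Rep(W_1\otimes W_2^{\rm cop})$, and then invokes the boundary theorem---exactly the reduction you outline, though your version is more explicit about the bookkeeping.
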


%\subsection{Remark on multifusion topological order}

%\label{sec:MultiTO}

%In preceding subsections, our focus has been on the scenario where the $2d$ topological order is a UMTC, signifying a braided fusion category. However, for a general weak Hopf symmetry (which may not necessarily be indecomposable), the corresponding $2d$ topological order could be a braided multifusion category.
%We refer to these phases as multifusion topological phases.

\section{Generalized multifusion string-net model}
\label{Sec:SNmulti}
In this section, we will provide a thorough analysis of the generalized multifusion string-net model.
While we acknowledge that this aspect is well-established among experts, a comprehensive discussion of the multifusion string-net model has been notably absent, except for a succinct mathematical outline provided in Ref.~\cite{chang2015enriching}. In this section, we endeavor to offer a more elaborate treatment of the subject.

\subsection{Input data of multifusion string-net model}

Before delving into the intricacies of string-net ground states and lattice Hamiltonians, it is prudent to establish a foundation by elucidating the pertinent input data that will be utilized.
The input data of multifusion string-net model is a UMFC $\ED=\oplus_{i,j\in I}\ED_{i,j}=\Rep(W)$ for some weak Hopf algebra $W$. It is worth noting that the construction of a string-net model can also be achieved by addressing the self-consistency conditions associated with certain input data.
Throughout this study, we consistently presume the presence of a UMFC as our input data.

\vspace{1em}
\emph{Label set and involution.} --- 
Let $\Irr(\ED) = 
\{X_{i,j}^k : k=1, \cdots, n_{i,j}, i, j \in I\}$ denote the collection of equivalence classes pertaining to simple objects in the category $\ED$. A label set $L_{\ED}$ is the range of an injective map $L:\Irr(\ED) \to \mathbb{N}$.
We will not distinguish $\Irr(\ED)$ and  $L_{\ED}$ in this work and will use them interchangeably whenever there is no risk of ambiguity.
Notice that $L_{\ED}$ has a grading structure induced by the grading structure of $\ED$, $L_{\ED}=\sqcup_{i,j}L_{\ED_{i,j}}$.
For UMFC $\ED=\oplus_{i,j\in I}\ED_{i,j}$, we denote the label set of all objects $\one_i$, $i\in I$, as $L_{\ED}^0$.
The rigidity structure induces an involution on $L_{\ED}$, $*:L_{\ED} \to L_{\ED}$.
For $X\in L_{\ED}^0$, $X^*=X$; for $X\in L_{\ED_{i,j}}$, $X^*\in L_{\ED_{j,i}}$.

\vspace{1em}
\emph{Fusion rule and quantum dimension.} --- 
The fusion rule of $\ED$ is a map $N:L_{\ED} \times L_{\ED} \times L_{\ED} \to \mathbb{N}$, which is induced by the fusion structure of $\ED$, $X\otimes Y=\oplus_{Z\in \Irr(\ED)} N_{X,Y}^Z Z$.
The fusion rule is associative in the sense that $\sum_{Z\in \Irr(\ED)} N_{X,Y}^ZN_{Z,W}^V=\sum_{Z\in \Irr(\ED)} N_{X,Z}^V N_{Y,W}^Z$.
The fusion rule is not commutative in general: $N_{X,Y}^Z\neq N_{Y,X}^Z$.
For an object $X$, if we regard $N_{X,Y}^Z$ as a matrix with entries $[N_X]_Y^Z=N_{X,Y}^Z$, then from the Frobenius-Perron theorem \cite{etingof2005fusion} we know that the largest eigenvalue is always positive; this number is defined as the quantum dimension of $X$ and denoted as $d_X$. The total quantum dimension of $\ED$ is defined as $d_{\ED}=\sum_{X\in \Irr(\ED)} d_X^2$. $\ED$ is called multiplicity-free if the fusion coefficient $N_{X,Y}^Z\in \{0,1\}$ for all simple objects $X, Y, Z$.

\vspace{1em}
\emph{Local normalization and local evaluation.} ---
In a diagrammatic representation, each edge is represented as a line directing upwards, which is similar to anyon worldline. For fusion/splitting, there are associated vector spaces $V_c^{ab}=\Hom_\ED(c,a\otimes b)$ and $V_{ab}^c=\Hom_\ED(a\otimes b,c)$ spanned by the fusion/splitting vertex vectors. To normalize these local vertex vectors, we can introduce a factor $Y_{c}^{ab}$ in the following manner:
\begin{align} 
	(Y_{c}^{ab})^{-1/2}
	\begin{aligned}
		\begin{tikzpicture}
			\draw[-latex,line width=.6pt,black] (0,-0.5) -- (0,-.1);
			\draw[line width=.6pt,black] (0,-0.4) -- (0,0);
			\draw[-latex,line width=.6pt,black](0,0)--(-0.3,0.3);
			\draw[line width=.6pt,black](-0.1,0.1)--(-0.4,0.4);
			\draw[-latex,line width=.6pt,black](0,0)--(0.3,0.3);
			\draw[line width=.6pt,black](0.1,0.1)--(0.4,0.4);
			\node[ line width=0.6pt, dashed, draw opacity=0.5] (a) at (-0.5,0.6){$a$};
			\node[ line width=0.6pt, dashed, draw opacity=0.5] (a) at (0.5,0.6){$b$};
			\node[ line width=0.6pt, dashed, draw opacity=0.5] (a) at (0,-0.7){$c$};
			\node[ line width=0.6pt, dashed, draw opacity=0.5] (a) at (0.3,-0.1){$\alpha$};
		\end{tikzpicture}
	\end{aligned}
	&=|c\to a,b;\alpha\rangle, \label{eq:vertexvec1} \\
	 (Y_{c}^{ab})^{-1/2}
	\begin{aligned}
		\begin{tikzpicture}
			\draw[-latex,line width=.6pt,black] (0.4,-0.4) -- (0.1,-0.1);
			\draw[line width=.6pt,black] (0.3,-0.3) -- (0,0);
			\draw[-latex,line width=.6pt,black] (-0.4,-0.4) -- (-0.1,-0.1);
			\draw[line width=.6pt,black] (-0.3,-0.3) -- (0,0);
			\draw[-latex,line width=.6pt,black](0,0)--(0,0.3);
			\draw[line width=.6pt,black](0,0.1)--(0,0.5);
			\node[ line width=0.6pt, dashed, draw opacity=0.5] (a) at (0.4,-0.6){$b$};
			\node[ line width=0.6pt, dashed, draw opacity=0.5] (a) at (0,0.7){$c$};
			\node[ line width=0.6pt, dashed, draw opacity=0.5] (a) at (-0.4,-0.6){$a$};
			\node[ line width=0.6pt, dashed, draw opacity=0.5] (a) at (0.3,0.1){$\beta$};
		\end{tikzpicture}
	\end{aligned}
	&=\langle a,b\to c;\beta|. \label{eq:vertexvec2}
\end{align}
Notice that $\dim V^{ab}_{c}=\dim V_{ab}^{c} =N_{ab}^c$.
We can introduce a pairing between $V_{c}^{ab}$ and $V_{ab}^c$ by 
\begin{equation}\label{eq:inner}
	\langle c\to a,b;\alpha | a,b\to c';\beta \rangle = \delta_{c,c'}\delta_{\alpha,\beta}.
\end{equation}
Diagrammatically, this is characterized by the local evaluation
\begin{equation} 
\text{loop move:}\quad	\begin{aligned}
		\begin{tikzpicture}
			\draw[-latex,line width=.6pt,black](0,0)--(0,0.4);
			\draw[line width=.6pt,black](0,0.1)--(0,0.5);
			\draw[line width=.6pt,black](0,-0.3) circle (0.3);
			\draw[-latex,line width=.6pt,black](-0.3,-0.3)--(-.3,-.2);
			\draw[-latex,line width=.6pt,black](0.3,-0.3)--(.3,-.2);
			\draw[-latex,line width=.6pt,black](0,-1.1)--(0,-0.7);
			\draw[line width=.6pt,black](0,-1.1)--(0,-0.6);
			\node[ line width=0.6pt, dashed, draw opacity=0.5] (a) at (0.6,-0.3){$b$};
			\node[ line width=0.6pt, dashed, draw opacity=0.5] (a) at (0,0.75){$c'$};
			\node[ line width=0.6pt, dashed, draw opacity=0.5] (a) at (0,-1.3){$c$};
			\node[ line width=0.6pt, dashed, draw opacity=0.5] (a) at (-0.6,-0.3){$a$};
			\node[ line width=0.6pt, dashed, draw opacity=0.5] (a) at (0.3,0.2){$\beta$};
			\node[ line width=0.6pt, dashed, draw opacity=0.5] (a) at (-0.3,-0.8){$\alpha$};
		\end{tikzpicture}
	\end{aligned}
	=
	\delta_{c,c'}\delta_{\alpha,\beta} {Y_c^{ab}}
	\begin{aligned}
		\begin{tikzpicture}
			\draw[line width=.6pt,black](0,-1.1)--(0,0.5);
			\draw[-latex,line width=.6pt,black](0,-0.3)--(0,0);
			\node[ line width=0.6pt, dashed, draw opacity=0.5] (a) at (0,0.7){$c$};
		\end{tikzpicture}
	\end{aligned}. \label{eq:loopev}
\end{equation}
This means that we read the left-hand side of Eq.~\eqref{eq:inner} from the right to the left and read the diagrams from the bottom to the top.
Similarly, the completeness of the fusion basis can be expressed as
\begin{equation}
	\mathds{I}=\sum_{c,\alpha} |c\to a,b;\alpha\rangle \langle a,b \to c;\alpha|,
\end{equation}
where $\mathds{I}$ is identity over the linear space $\Hom_{\ED}(a\otimes b,a\otimes b)$.
Diagrammatically, this corresponds to
 \begin{equation} \label{eq:paraev}
	\text{parallel move:}\quad\begin{aligned}
		\begin{tikzpicture}
			\draw[line width=.6pt,black](0,-1.1)--(0,0.5);
			\draw[-latex,line width=.6pt,black](0,-0.3)--(0,0);
			\node[ line width=0.6pt, dashed, draw opacity=0.5] (a) at (0,0.7){$a$};
			\draw[line width=.6pt,black](0.6,-1.1)--(0.6,0.5);
			\draw[-latex,line width=.6pt,black](0.6,-0.3)--(0.6,0);
			\node[ line width=0.6pt, dashed, draw opacity=0.5] (a) at (0.6,0.7){$b$};
		\end{tikzpicture}
	\end{aligned}
	=
	\sum_{c,\alpha}\frac{1}{Y_c^{ab}}
	\begin{aligned}
		\begin{tikzpicture}
			\draw[-latex,line width=.6pt,black] (0.4,-0.4) -- (0.1,-0.1);
			\draw[line width=.6pt,black] (0.3,-0.3) -- (0,0);
			\draw[-latex,line width=.6pt,black] (-0.4,-0.4) -- (-0.1,-0.1);
			\draw[line width=.6pt,black] (-0.3,-0.3) -- (0,0);
			\draw[-latex,line width=.6pt,black](0,0)--(0,0.45);
			\draw[line width=.6pt,black](0,0.1)--(0,0.7);
			\draw[line width=.6pt,black](-0.4,1.1)--(0,0.7);
			\draw[line width=.6pt,black](0.4,1.1)--(0,0.7);
			\draw[-latex,line width=.6pt,black](0,0.7)--(0.3,1);
			\draw[line width=.6pt,black](0.4,1.1)--(0,0.7);
			\draw[-latex,line width=.6pt,black](0,0.7)--(-0.3,1);
			\node[ line width=0.6pt, dashed, draw opacity=0.5] (a) at (0.4,-0.6){$b$};
			\node[ line width=0.6pt, dashed, draw opacity=0.5] (a) at (-0.4,-0.6){$a$};
			\node[ line width=0.6pt, dashed, draw opacity=0.5] (a) at (0.3,0.1){$\alpha$};
			\node[ line width=0.6pt, dashed, draw opacity=0.5] (a) at (0.3,0.6){$\alpha$};
			\node[ line width=0.6pt, dashed, draw opacity=0.5] (a) at (-0.2,0.3){$c$};
			\node[ line width=0.6pt, dashed, draw opacity=0.5] (a) at (0.4,1.3){$b$};
			\node[ line width=0.6pt, dashed, draw opacity=0.5] (a) at (-0.4,1.3){$a$};
		\end{tikzpicture}
	\end{aligned}.
\end{equation}
Typically, the choice for $Y^{ab}_c$ is made as $(d_ad_b/d_c)^{1/2}$. It is noteworthy that this corresponds to a specific gauge selection.

\vspace{1em}
\emph{F-symbols and evaluation.} --- 
The associativity isomorphism $a: (i\otimes j)\otimes k \to i \otimes (j\otimes k)$ induces a linear map $F_{l}^{ijk}:\Hom_{\ED}(l,(i\otimes j)\otimes k)\to \Hom_{\ED}(l,i\otimes (j\otimes k))$. Similarly, $a^{\dagger}$ induces $(F_{l}^{ijk})^{\dagger}$. Since $\ED$ is unitary, we have $(F_{l}^{ijk})^{\dagger}=(F_{l}^{ijk})^{-1}$.  Diagrammatically, these maps are called F-moves:
\begin{gather}
	\begin{aligned}
		\begin{tikzpicture}
			\draw[-latex,line width=.6pt,black] (0,-0.5) -- (0,-.1);
			\draw[line width=.6pt,black] (0,-0.4) -- (0,0);
			\draw[-latex,line width=.6pt,black](0,0)--(-0.3,0.3);
			\draw[line width=.6pt,black](-0.1,0.1)--(-0.4,0.4);
			\draw[-latex,line width=.6pt,black](-0.4,0.4)--(-0.7,0.7);
			\draw[line width=.6pt,black](-0.4,0.4)--(-0.8,0.8);
			\draw[-latex,line width=.6pt,black](-0.4,0.4)--(-0.1,0.7);
			\draw[line width=.6pt,black](-0.4,0.4)--(-0,0.8);
			\draw[-latex,line width=.6pt,black](0,0)--(0.7,0.7);
			\draw[line width=.6pt,black](0.1,0.1)--(0.8,0.8);
			\node[ line width=0.6pt, dashed, draw opacity=0.5] (a) at (-0.8,1){$i$};
			\node[ line width=0.6pt, dashed, draw opacity=0.5] (a) at (0,1){$j$};
			\node[ line width=0.6pt, dashed, draw opacity=0.5] (a) at (0.8,1){$k$};
			\node[ line width=0.6pt, dashed, draw opacity=0.5] (a) at (0,-0.7){$l$};
			\node[ line width=0.6pt, dashed, draw opacity=0.5] (a) at (0.3,-0.1){$\alpha$};
			\node[ line width=0.6pt, dashed, draw opacity=0.5] (a) at (-0.7,0.3){$\beta$};
			\node[ line width=0.6pt, dashed, draw opacity=0.5] (a) at (-0.4,0){$m$};
		\end{tikzpicture}
	\end{aligned}
	 =\sum_{n,\mu,\nu} [F^{ijk}_{l}]^{n\mu\nu}_{m\alpha\beta}
	\begin{aligned}
		\begin{tikzpicture}
			\draw[-latex,line width=.6pt,black] (0,-0.5) -- (0,-.1);
			\draw[line width=.6pt,black] (0,-0.4) -- (0,0);
			\draw[-latex,line width=.6pt,black](0,0)--(-0.7,0.7);
			\draw[line width=.6pt,black](0,0)--(-0.8,0.8);
			\draw[-latex,line width=.6pt,black](0.4,0.4)--(0.1,0.7);
			\draw[line width=.6pt,black](0.4,0.4)--(0,0.8);
			\draw[-latex,line width=.6pt,black](0,0)--(0.7,0.7);
			\draw[-latex,line width=.6pt,black](0,0)--(0.3,0.3);
			\draw[line width=.6pt,black](0.1,0.1)--(0.8,0.8);
			\node[ line width=0.6pt, dashed, draw opacity=0.5] (a) at (-0.8,1){$i$};
			\node[ line width=0.6pt, dashed, draw opacity=0.5] (a) at (0,1){$j$};
			\node[ line width=0.6pt, dashed, draw opacity=0.5] (a) at (0.8,1){$k$};
			\node[ line width=0.6pt, dashed, draw opacity=0.5] (a) at (0,-0.7){$l$};
			\node[ line width=0.6pt, dashed, draw opacity=0.5] (a) at (0.3,-0.1){$\mu$};
			\node[ line width=0.6pt, dashed, draw opacity=0.5] (a) at (0,0.3){$n$};
				\node[ line width=0.6pt, dashed, draw opacity=0.5] (a) at (0.6,0.3){$\nu$};
		\end{tikzpicture}
	\end{aligned}, \label{eq:F-move1} \\
 \begin{aligned}
		\begin{tikzpicture}
			\draw[-latex,line width=.6pt,black] (0,-0.5) -- (0,-.1);
			\draw[line width=.6pt,black] (0,-0.4) -- (0,0);
			\draw[-latex,line width=.6pt,black](0,0)--(-0.7,0.7);
			\draw[line width=.6pt,black](0,0)--(-0.8,0.8);
			\draw[-latex,line width=.6pt,black](0.4,0.4)--(0.1,0.7);
			\draw[line width=.6pt,black](0.4,0.4)--(0,0.8);
			\draw[-latex,line width=.6pt,black](0,0)--(0.7,0.7);
			\draw[-latex,line width=.6pt,black](0,0)--(0.3,0.3);
			\draw[line width=.6pt,black](0.1,0.1)--(0.8,0.8);
			\node[ line width=0.6pt, dashed, draw opacity=0.5] (a) at (-0.8,1){$i$};
			\node[ line width=0.6pt, dashed, draw opacity=0.5] (a) at (0,1){$j$};
			\node[ line width=0.6pt, dashed, draw opacity=0.5] (a) at (0.8,1){$k$};
			\node[ line width=0.6pt, dashed, draw opacity=0.5] (a) at (0,-0.7){$l$};
			\node[ line width=0.6pt, dashed, draw opacity=0.5] (a) at (0.3,-0.1){$\mu$};
			\node[ line width=0.6pt, dashed, draw opacity=0.5] (a) at (0,0.3){$n$};
				\node[ line width=0.6pt, dashed, draw opacity=0.5] (a) at (0.6,0.3){$\nu$};
		\end{tikzpicture}
	\end{aligned}
	 =\sum_{m,\alpha,\beta} [(F^{ijk}_{l})^{-1}]_{n\mu\nu}^{m\alpha\beta}
	\begin{aligned}
        \begin{tikzpicture}
			\draw[-latex,line width=.6pt,black] (0,-0.5) -- (0,-.1);
			\draw[line width=.6pt,black] (0,-0.4) -- (0,0);
			\draw[-latex,line width=.6pt,black](0,0)--(-0.3,0.3);
			\draw[line width=.6pt,black](-0.1,0.1)--(-0.4,0.4);
			\draw[-latex,line width=.6pt,black](-0.4,0.4)--(-0.7,0.7);
			\draw[line width=.6pt,black](-0.4,0.4)--(-0.8,0.8);
			\draw[-latex,line width=.6pt,black](-0.4,0.4)--(-0.1,0.7);
			\draw[line width=.6pt,black](-0.4,0.4)--(-0,0.8);
			\draw[-latex,line width=.6pt,black](0,0)--(0.7,0.7);
			\draw[line width=.6pt,black](0.1,0.1)--(0.8,0.8);
			\node[ line width=0.6pt, dashed, draw opacity=0.5] (a) at (-0.8,1){$i$};
			\node[ line width=0.6pt, dashed, draw opacity=0.5] (a) at (0,1){$j$};
			\node[ line width=0.6pt, dashed, draw opacity=0.5] (a) at (0.8,1){$k$};
			\node[ line width=0.6pt, dashed, draw opacity=0.5] (a) at (0,-0.7){$l$};
			\node[ line width=0.6pt, dashed, draw opacity=0.5] (a) at (0.3,-0.1){$\alpha$};
			\node[ line width=0.6pt, dashed, draw opacity=0.5] (a) at (-0.7,0.3){$\beta$};
			\node[ line width=0.6pt, dashed, draw opacity=0.5] (a) at (-0.4,0){$m$};
		\end{tikzpicture}
	\end{aligned}, \label{eq:F-move2}  \\
 \begin{aligned}
		\begin{tikzpicture}
			\draw[-latex,line width=.6pt,black] (0,0) -- (0,0.4);
			\draw[line width=.6pt,black] (0,0.5) -- (0,0.1);
			\draw[-latex,line width=.6pt,black](-0.4,-0.4)--(-0.1,-0.1);
			\draw[line width=.6pt,black](-0.3,-0.3)--(0,0);
			\draw[-latex,line width=.6pt,black](-0.8,-0.8)--(-0.5,-0.5);
			\draw[line width=.6pt,black](-0.7,-0.7)--(-0.4,-0.4);
			\draw[-latex,line width=.6pt,black](0,-0.8)--(-0.3,-0.5);
			\draw[line width=.6pt,black](-0.4,-0.4)--(-0.1,-0.7);
			\draw[-latex,line width=.6pt,black](0.8,-0.8)--(0.5,-0.5);
			\draw[line width=.6pt,black](0,0)--(0.7,-0.7);
			\node[ line width=0.6pt, dashed, draw opacity=0.5] (a) at (-0.8,-1){$i$};
			\node[ line width=0.6pt, dashed, draw opacity=0.5] (a) at (0,-1){$j$};
			\node[ line width=0.6pt, dashed, draw opacity=0.5] (a) at (0.8,-1){$k$};
			\node[ line width=0.6pt, dashed, draw opacity=0.5] (a) at (0,0.7){$l$};
			\node[ line width=0.6pt, dashed, draw opacity=0.5] (a) at (0.3,0.1){$\alpha$};
			\node[ line width=0.6pt, dashed, draw opacity=0.5] (a) at (-0.7,-0.3){$\beta$};
			\node[ line width=0.6pt, dashed, draw opacity=0.5] (a) at (-0.4,0){$m$};
		\end{tikzpicture}
	\end{aligned}
	 =\sum_{n,\mu,\nu} [F_{ijk}^{l}]^{n\mu\nu}_{m\alpha\beta}
	\begin{aligned}
		\begin{tikzpicture}
			\draw[-latex,line width=.6pt,black] (0,0) -- (0,0.4);
			\draw[line width=.6pt,black] (0,0.5) -- (0,0.1);
			% \draw[-latex,line width=.6pt,black](-0.4,-0.4)--(-0.1,-0.1);
			% \draw[line width=.6pt,black](-0.3,-0.3)--(0,0);
			\draw[-latex,line width=.6pt,black](-0.8,-0.8)--(-0.5,-0.5);
			\draw[line width=.6pt,black](-0.7,-0.7)--(0,0);
			\draw[-latex,line width=.6pt,black](0,-0.8)--(0.3,-0.5);
			\draw[line width=.6pt,black](0.4,-0.4)--(0,-0.8);
			\draw[-latex,line width=.6pt,black](0.8,-0.8)--(0.5,-0.5);
			\draw[line width=.6pt,black](0,0)--(0.7,-0.7);
			\node[ line width=0.6pt, dashed, draw opacity=0.5] (a) at (-0.8,-1){$i$};
			\node[ line width=0.6pt, dashed, draw opacity=0.5] (a) at (0,-1){$j$};
			\node[ line width=0.6pt, dashed, draw opacity=0.5] (a) at (0.8,-1){$k$};
			\node[ line width=0.6pt, dashed, draw opacity=0.5] (a) at (0,0.7){$l$};
			\node[ line width=0.6pt, dashed, draw opacity=0.5] (a) at (0.3,0.1){$\mu$};
			\node[ line width=0.6pt, dashed, draw opacity=0.5] (a) at (0.65,-0.3){$\nu$};
			\node[ line width=0.6pt, dashed, draw opacity=0.5] (a) at (0.05,-0.35){$n$};
		\end{tikzpicture}
	\end{aligned}, \label{eq:F-move3}  \\
 \begin{aligned}
        \begin{tikzpicture}
			\draw[-latex,line width=.6pt,black] (0,0) -- (0,0.4);
			\draw[line width=.6pt,black] (0,0.5) -- (0,0.1);
			% \draw[-latex,line width=.6pt,black](-0.4,-0.4)--(-0.1,-0.1);
			% \draw[line width=.6pt,black](-0.3,-0.3)--(0,0);
			\draw[-latex,line width=.6pt,black](-0.8,-0.8)--(-0.5,-0.5);
			\draw[line width=.6pt,black](-0.7,-0.7)--(0,0);
			\draw[-latex,line width=.6pt,black](0,-0.8)--(0.3,-0.5);
			\draw[line width=.6pt,black](0.4,-0.4)--(0,-0.8);
			\draw[-latex,line width=.6pt,black](0.8,-0.8)--(0.5,-0.5);
			\draw[line width=.6pt,black](0,0)--(0.7,-0.7);
			\node[ line width=0.6pt, dashed, draw opacity=0.5] (a) at (-0.8,-1){$i$};
			\node[ line width=0.6pt, dashed, draw opacity=0.5] (a) at (0,-1){$j$};
			\node[ line width=0.6pt, dashed, draw opacity=0.5] (a) at (0.8,-1){$k$};
			\node[ line width=0.6pt, dashed, draw opacity=0.5] (a) at (0,0.7){$l$};
			\node[ line width=0.6pt, dashed, draw opacity=0.5] (a) at (0.3,0.1){$\mu$};
			\node[ line width=0.6pt, dashed, draw opacity=0.5] (a) at (0.65,-0.3){$\nu$};
			\node[ line width=0.6pt, dashed, draw opacity=0.5] (a) at (0.05,-0.35){$n$};
		\end{tikzpicture}
	\end{aligned}
	 =\sum_{m,\alpha,\beta} [(F_{ijk}^{l})^{-1}]_{n\mu\nu}^{m\alpha\beta}
	\begin{aligned}
		\begin{tikzpicture}
			\draw[-latex,line width=.6pt,black] (0,0) -- (0,0.4);
			\draw[line width=.6pt,black] (0,0.5) -- (0,0.1);
			\draw[-latex,line width=.6pt,black](-0.4,-0.4)--(-0.1,-0.1);
			\draw[line width=.6pt,black](-0.3,-0.3)--(0,0);
			\draw[-latex,line width=.6pt,black](-0.8,-0.8)--(-0.5,-0.5);
			\draw[line width=.6pt,black](-0.7,-0.7)--(-0.4,-0.4);
			\draw[-latex,line width=.6pt,black](0,-0.8)--(-0.3,-0.5);
			\draw[line width=.6pt,black](-0.4,-0.4)--(-0.1,-0.7);
			\draw[-latex,line width=.6pt,black](0.8,-0.8)--(0.5,-0.5);
			\draw[line width=.6pt,black](0,0)--(0.7,-0.7);
			\node[ line width=0.6pt, dashed, draw opacity=0.5] (a) at (-0.8,-1){$i$};
			\node[ line width=0.6pt, dashed, draw opacity=0.5] (a) at (0,-1){$j$};
			\node[ line width=0.6pt, dashed, draw opacity=0.5] (a) at (0.8,-1){$k$};
			\node[ line width=0.6pt, dashed, draw opacity=0.5] (a) at (0,0.7){$l$};
			\node[ line width=0.6pt, dashed, draw opacity=0.5] (a) at (0.3,0.1){$\alpha$};
			\node[ line width=0.6pt, dashed, draw opacity=0.5] (a) at (-0.7,-0.3){$\beta$};
			\node[ line width=0.6pt, dashed, draw opacity=0.5] (a) at (-0.4,0){$m$};
		\end{tikzpicture}
	\end{aligned}.   \label{eq:F-move4} 
\end{gather}
The associator in UMFC adheres to the pentagon relation, as depicted in Fig.~\ref{fig:pentagon}. In explicit terms, this relationship is expressed as
\begin{gather}
    \sum_{\mu}[F_a^{mkl}]_{n\alpha\beta}^{b\mu\nu}[F_a^{ijb}]_{m\mu\gamma}^{c\sigma\zeta} = \sum_{d,\lambda,\tau,\nu} [F_n^{ijk}]^{d\lambda\tau}_{m\beta\gamma}[F_a^{idl}]^{c\sigma\eta}_{n\alpha\lambda}[F_c^{jkl}]_{d\tau\eta}^{b\zeta\nu}. 
\end{gather}
Since $\ED$ is unitary, the F-matrix can also be chosen as a unitary matrix
\begin{equation} 
    (F_{l}^{ijk})^{\dagger}=(F_{l}^{ijk})^{-1}.
\end{equation}
The F-symbols exhibit a mirror conjugate symmetry, a property pivotal for the construction of the Levin-Wen model with a boundary. By drawing parallels to the methods outlined in Ref.~\cite{hong2009symmetrization}, we can demonstrate that
\begin{equation} \label{eq:F-symbol-delta}
    \sum_{n,\mu,\nu} [F_l^{ijk}]_{m\alpha\beta}^{n\mu\nu}[F^l_{ijk}]_{m'\alpha'\beta'}^{n\mu\nu}=\delta_{m,m'}\delta_{\alpha,\alpha'}\delta_{\beta,\beta'}.
\end{equation}

\begin{figure}[t]
		\centering
		\includegraphics[width=10.5cm]{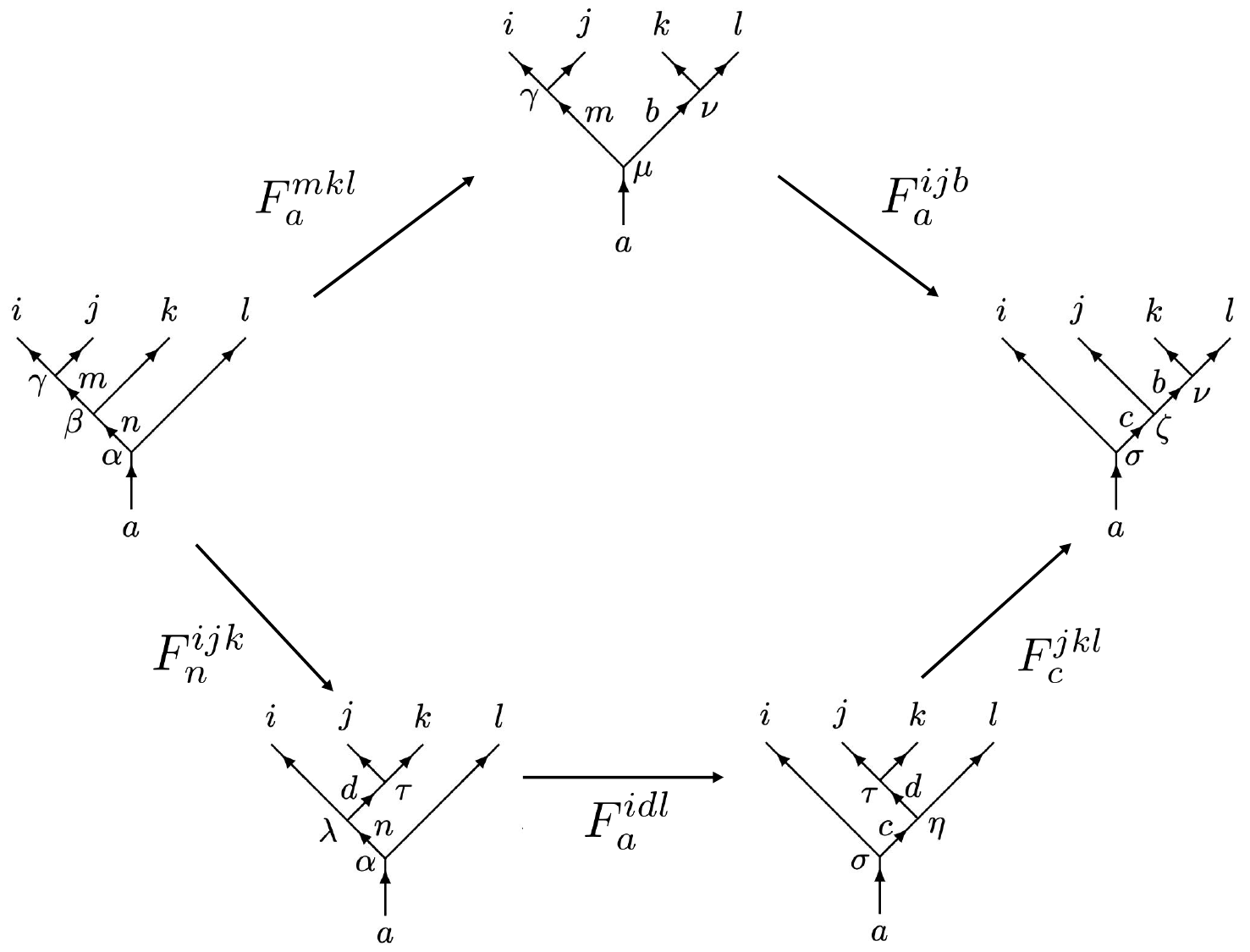}
		\caption{Pentagon relation for the F-symbols. \label{fig:pentagon}}
\end{figure}

\begin{proposition}
 The loop move, parallel move, and F-move, collectively known as topological local moves, are equivalent to the Pachner moves.
\end{proposition}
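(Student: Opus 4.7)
The plan is to work in the dual picture, where a triangulation $\mathcal{T}$ of an oriented surface is replaced by the trivalent ribbon graph $\mathcal{T}^{\vee}$: triangles become trivalent vertices, edges become edges, and $2$-cells of $\mathcal{T}^\vee$ correspond to vertices of $\mathcal{T}$. A labeling of $\mathcal{T}^\vee$ by simple objects of $\ED$ at edges together with fusion/splitting vertex vectors at trivalent vertices is exactly a string-net configuration. Under this duality, Pachner's theorem for $2$-manifolds says that any two triangulations of the same PL surface are connected by a finite sequence of two elementary bistellar moves: the $2\leftrightarrow 2$ move (flipping the diagonal of a quadrilateral made of two adjacent triangles) and the $1\leftrightarrow 3$ move (replacing a single triangle by three triangles meeting at an interior vertex, and its inverse).

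First, I would establish the dictionary $2\leftrightarrow 2 \;\Leftrightarrow\; \text{F-move}$. Dually, two triangles sharing an edge with labels $(i,j,k)$ form a tree $(i\otimes j)\otimes k \to l$, while the flipped configuration gives $i\otimes (j\otimes k)\to l$; these are precisely the two sides of \eqref{eq:F-move1}, with $F^{ijk}_l$ realizing the change of basis between $\Hom(l, (i\otimes j)\otimes k)$ and $\Hom(l, i\otimes(j\otimes k))$. Consistency across different quadrilaterals is controlled by the pentagon equation (Fig.~\ref{fig:pentagon}), which is exactly the coherence condition the Pachner $2$-$2$ moves must satisfy for a state-sum invariant. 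Next, I would show $3\to 1 \;\Leftrightarrow\; \text{loop move}$. Dually, three triangles meeting at an interior vertex form a ``bubble'' in $\mathcal{T}^\vee$: two vertices connected by two parallel edges labeled $a,b$ with external legs $c,c'$, matching the left-hand side of \eqref{eq:loopev}. The loop move collapses this bubble to a single edge, with the factor $Y_c^{ab}=d_ad_b/d_c$ (up to gauge) absorbing the quantum dimensions of the removed cell; enforcing $\delta_{c,c'}\delta_{\alpha,\beta}$ corresponds to the fact that the vertex being removed is in the interior.

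Third, I would treat the parallel (completeness) move \eqref{eq:paraev} as a derived move rather than an independent one. Two parallel strands $a,b$ can be cut by inserting a contractible disk, which dually corresponds to subdividing an edge by two $1$-$3$ moves (adding two interior vertices joined by the cut edge) followed by a $2$-$2$ flip; equivalently, the parallel move is the composition of an inverse loop move with an F-move and then a loop move. Writing down this decomposition explicitly, and using \eqref{eq:F-symbol-delta} together with \eqref{eq:loopev}, reproduces \eqref{eq:paraev} with the prefactor $1/Y_c^{ab}$. Since the parallel move is generated by the other two, and those are in bijection with the two Pachner moves, the three local moves collectively generate the same equivalence relation on diagrams as Pachner moves on triangulations.

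Finally, I would close the logical loop by showing the converse: each Pachner move is implementable as a finite composition of loop, parallel, and F-moves, since the dual graph transformations induced by $2$-$2$ and $1$-$3$ are literally the F-move and the loop move. The hard part is not the bijection itself but the bookkeeping: one must verify that the gauge choice $Y_c^{ab}$, the unitarity $(F_l^{ijk})^\dagger=(F_l^{ijk})^{-1}$, and the pentagon/mirror-conjugate symmetry \eqref{eq:F-symbol-delta} all conspire so that the composite representing a Pachner move depends only on the PL equivalence class of the triangulation, and in particular is independent of the order in which elementary Pachner moves are performed. This is precisely the $2d$ analogue of the Turaev--Viro--Barrett--Westbury coherence argument, adapted to the multifusion setting (where the grading $\ED=\oplus_{i,j}\ED_{i,j}$ must be tracked across faces), and constitutes the main technical content of the proposition.
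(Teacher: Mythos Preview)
The paper does not actually prove this proposition: its ``proof'' consists of the single sentence ``This can be demonstrated in a manner similar to the fusion category case. See Refs.~\cite{Hu2012ground,hu2018boundary}.'' Your proposal, by contrast, attempts the argument directly via the standard dualization of triangulations to trivalent graphs, which is essentially what those references do. So in spirit you are supplying the content the paper merely points to.

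There is, however, a genuine slip in your identification of the $1\leftrightarrow 3$ move. Three triangles meeting at an interior vertex of degree three dualize to \emph{three} trivalent vertices surrounding a \emph{triangular} face with three external legs --- not to the bigon (two vertices, two parallel internal edges, two external legs) appearing in \eqref{eq:loopev}. The loop move by itself collapses a bigon, not a triangle. The correct statement is that the dual $3\to 1$ move is realized as an F-move (which turns the inner triangle into a bigon plus a tadpole-free configuration) followed by a loop move; conversely, the bigon collapse is not a single Pachner move but is generated by a $2$-$2$ followed by a $3$-$1$. Once you fix this bookkeeping, your decomposition of the parallel move and your coherence discussion go through as written. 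The remaining content --- pentagon for consistency of iterated $2$-$2$ moves, unitarity of $F$, and tracking the $\ED_{i,j}$-grading --- is exactly right.
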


\begin{proof}
  This can be demonstrated in a manner similar to the fusion category case. See Refs.~\cite{Hu2012ground,hu2018boundary}.
\end{proof}

\emph{Markov trace.} --- 
Consider a general morphism  $X\in \Hom_{\ED}$ that has the same input and output objects, we have the following string diagram:
\begin{equation}
A =	\begin{aligned}
		\begin{tikzpicture}
			\draw[-latex,line width=.6pt,black] (-1.2,-0.5) -- (-1.2,-.1);
			\draw[line width=.6pt,black] (-1.2,-0.5) -- (-1.2,0);
			\node[ line width=0.6pt, dashed, draw opacity=0.5] (a) at (-1.2,-0.7){$a_1$};
			\draw[-latex,line width=.6pt,black] (-0.7,-0.5) -- (-0.7,-.1);
			\draw[line width=.6pt,black] (-0.7,-0.5) -- (-0.7,0);
			\node[ line width=0.6pt, dashed, draw opacity=0.5] (a) at (-0.7,-0.7){$a_2$};
			\draw[-latex,line width=.6pt,black] (1.2,-0.5) -- (1.2,-.1);
			\draw[line width=.6pt,black] (1.2,-0.5) -- (1.2,0);
			\node[ line width=0.6pt, dashed, draw opacity=0.5] (a) at (1.2,-0.7){$a_n$};
			\draw[line width=0.6 pt, fill=gray, fill opacity=0.2] 
			(-1.5,0) -- (1.5,0) -- (1.5,0.5) -- (-1.5,0.5) -- cycle; 
			\node[ line width=0.6pt, dashed, draw opacity=0.5] (a) at (0,0.25){$X$};
			\node[ line width=0.6pt, dashed, draw opacity=0.5] (a) at (0.3,-0.5){$\cdots$};
			\draw[-latex,line width=.6pt,black] (-1.2,0.5) -- (-1.2,0.85);
			\draw[line width=.6pt,black] (-1.2,0.5) -- (-1.2,1);
			\node[ line width=0.6pt, dashed, draw opacity=0.5] (a) at (-1.2,1.2){$a_1$};
			\draw[-latex,line width=.6pt,black] (1.2,0.5) -- (1.2,0.85);
			\draw[line width=.6pt,black] (1.2,0.5) -- (1.2,1);
			\node[ line width=0.6pt, dashed, draw opacity=0.5] (a) at (1.2,1.2){$a_n$};
			\draw[-latex,line width=.6pt,black] (-0.7,0.5) -- (-0.7,0.85);
			\draw[line width=.6pt,black] (-0.7,0.5) -- (-0.7,1);
			\node[ line width=0.6pt, dashed, draw opacity=0.5] (a) at (-0.7,1.2){$a_2$};
			\node[ line width=0.6pt, dashed, draw opacity=0.5] (a) at (0.3,1.0){$\cdots$};
		\end{tikzpicture}
	\end{aligned}\in V_{a_1\cdots a_n}^{a_1\cdots a_n},
\end{equation}
where all the coefficients and different inner vertex morphism labels are contained in the box.
Its \emph{Markov trace} is defined as follows
\begin{align}
	\tilde{\Tr} \left(  \begin{aligned}
		\begin{tikzpicture}
			\draw[-latex,line width=.6pt,black] (-1.2,-0.5) -- (-1.2,-.1);
			\draw[line width=.6pt,black] (-1.2,-0.5) -- (-1.2,0);
			\node[ line width=0.6pt, dashed, draw opacity=0.5] (a) at (-1.2,-0.7){$a_1$};
			\draw[-latex,line width=.6pt,black] (-0.7,-0.5) -- (-0.7,-.1);
			\draw[line width=.6pt,black] (-0.7,-0.5) -- (-0.7,0);
			\node[ line width=0.6pt, dashed, draw opacity=0.5] (a) at (-0.7,-0.7){$a_2$};
			\draw[-latex,line width=.6pt,black] (1.2,-0.5) -- (1.2,-.1);
			\draw[line width=.6pt,black] (1.2,-0.5) -- (1.2,0);
			\node[ line width=0.6pt, dashed, draw opacity=0.5] (a) at (1.2,-0.7){$a_n$};
			\draw[line width=0.6 pt, fill=gray, fill opacity=0.2] 
			(-1.5,0) -- (1.5,0) -- (1.5,0.5) -- (-1.5,0.5) -- cycle; 
			\node[ line width=0.6pt, dashed, draw opacity=0.5] (a) at (0,0.25){$X$};
			\node[ line width=0.6pt, dashed, draw opacity=0.5] (a) at (0.3,-0.5){$\cdots$};
			\draw[-latex,line width=.6pt,black] (-1.2,0.5) -- (-1.2,0.85);
			\draw[line width=.6pt,black] (-1.2,0.5) -- (-1.2,1);
			\node[ line width=0.6pt, dashed, draw opacity=0.5] (a) at (-1.2,1.2){$a_1$};
			\draw[-latex,line width=.6pt,black] (1.2,0.5) -- (1.2,0.85);
			\draw[line width=.6pt,black] (1.2,0.5) -- (1.2,1);
			\node[ line width=0.6pt, dashed, draw opacity=0.5] (a) at (1.2,1.2){$a_n$};
			\draw[-latex,line width=.6pt,black] (-0.7,0.5) -- (-0.7,0.85);
			\draw[line width=.6pt,black] (-0.7,0.5) -- (-0.7,1);
			\node[ line width=0.6pt, dashed, draw opacity=0.5] (a) at (-0.7,1.2){$a_2$};
			\node[ line width=0.6pt, dashed, draw opacity=0.5] (a) at (0.3,1.0){$\cdots$};
		\end{tikzpicture}
	\end{aligned} \right)
=\quad  \begin{aligned}
	\begin{tikzpicture}
		\draw[-latex,line width=.6pt,black] (-1.2,-0.5) -- (-1.2,-.1);
		\draw[line width=.6pt,black] (-1.2,-0.5) -- (-1.2,0);
		\node[ line width=0.6pt, dashed, draw opacity=0.5] (a) at (-1.5,-0.4){$a_1$};
		\draw[-latex,line width=.6pt,black] (-0.7,-0.5) -- (-0.7,-.1);
		\draw[line width=.6pt,black] (-0.7,-0.5) -- (-0.7,0);
		\node[ line width=0.6pt, dashed, draw opacity=0.5] (a) at (-0.3,-0.4){$a_2$};
		\draw[-latex,line width=.6pt,black] (1.2,-0.5) -- (1.2,-.1);
		\draw[line width=.6pt,black] (1.2,-0.5) -- (1.2,0);
		\node[ line width=0.6pt, dashed, draw opacity=0.5] (a) at (0.8,-0.4){$a_n$};
		\draw[line width=0.6 pt, fill=gray, fill opacity=0.2] 
		(-1.5,0) -- (1.5,0) -- (1.5,0.5) -- (-1.5,0.5) -- cycle; 
		\node[ line width=0.6pt, dashed, draw opacity=0.5] (a) at (0,0.25){$X$};
		\node[ line width=0.6pt, dashed, draw opacity=0.5] (a) at (0.3,-0.5){$\cdots$};
		\draw[-latex,line width=.6pt,black] (-1.2,0.5) -- (-1.2,0.85);
		\draw[line width=.6pt,black] (-1.2,0.5) -- (-1.2,1);
		\node[ line width=0.6pt, dashed, draw opacity=0.5] (a) at (-1.5,0.9){$a_1$};
		\draw[-latex,line width=.6pt,black] (1.2,0.5) -- (1.2,0.85);
		\draw[line width=.6pt,black] (1.2,0.5) -- (1.2,1);
		\node[ line width=0.6pt, dashed, draw opacity=0.5] (a) at (0.8,0.9){$a_n$};
		\draw[-latex,line width=.6pt,black] (-0.7,0.5) -- (-0.7,0.85);
		\draw[line width=.6pt,black] (-0.7,0.5) -- (-0.7,1);
		\node[ line width=0.6pt, dashed, draw opacity=0.5] (a) at (-0.4,0.9){$a_2$};
		\node[ line width=0.6pt, dashed, draw opacity=0.5] (a) at (0.3,1.0){$\cdots$};
		\draw[line width=.6pt,black] (1.2,1.0) -- (1.2,1.2)--(1.7,1.2)--(1.7,-0.7)--(1.2,-0.7)--(1.2,-0.5);
		\draw[line width=.6pt,black] (-0.7,1.0) -- (-0.7,1.4)--(1.9,1.4)--(1.9,-0.9)--(-0.7,-0.9)--(-0.7,-0.5);
		\draw[line width=.6pt,black] (-1.2,1.0) -- (-1.2,1.6)--(2.1,1.6)--(2.1,-1.1)--(-1.2,-1.1)--(-1.2,-0.5);
	\end{tikzpicture}
\end{aligned}\quad. \label{eq:Markovtrace}
\end{align}

\begin{definition}[Input data]
    The input data for the generalized multifusion string-net are:
    \begin{enumerate}
        \item String type: $\Irr(\ED)$;
        \item Fusion rule: $N_{a,b}^c$\,\footnote{Recall that the quantum dimensions $d_a$'s are determined by the fusion rule.};
        \item Local normalization factor: $Y_{c}^{ab}$.
        \item F-symbols: $F_{l}^{ijk}$, $F_{ijk}^l$.
    \end{enumerate}  
\end{definition}

\begin{remark}[Tetrahedral symmetry]
The original Levin-Wen string-net model requires that F-symbols satisfy the tetrahedral symmetry \cite{Levin2005,chang2015enriching}. 
However, it is known that not all UMFCs allow F-symbols that have tetrahedral symmetry \cite{turaev2016quantum,hong2009symmetrization,fuchs2023tetrahedral}. For the unimodal ribbon category, Turaev proved that it is always possible to have tetrahedral symmetric F-symbols \cite{turaev2016quantum}. Recently, a more comprehensive understanding of the tetrahedral symmetry is provided in Ref.~\cite{fuchs2023tetrahedral}.
\end{remark}

%\begin{remark}[Gauge equivalence of the input data]\label{remk:gauge}
%There are two types of gauge transformation of the input data \cite{Lin2021generalized}. {\color{red}ddd}

%\end{remark}

\subsection{Generalized multifusion string-net ground states}

Imagine a connected oriented trivalent lattice $\Sigma$, called a string-net\,\footnote{It is worth noting that bivalent vertices are permissible, provided one of the edges is designated as vacant edge $\one \in \ED$. Given that vacant edges can be arbitrarily added or removed, this does not alter the model's fundamentals.}, situated on a spatial $2d$-manifold. 
We will assume that the string-net forms a cellulation of the given surface and denote the sets of vertices, edges and faces as $C^k(\Sigma)$, $k=0,1,2$ respectively.
A \emph{generalized string-net} is characterized by edges that are oriented in either an upward or downward direction\,\footnote{Using the Whitney embedding theorem, the string-net can be embedded into $\mathbb{R}^N$, allowing for the definition of the up or down direction in relation to a specified axis.
It is also noteworthy that, we only consider the spatial manifold $\Sigma$, for which a trivalent lattice can always be identified on it.
}. See Fig.~\ref{fig:StringNet} for an illustration.
In this lattice, the physical degrees of freedom are located on its edges and vertices.
We label edges with simple objects in $\ED$ and vertices with morphisms in $\ED$.
The Hilbert space $\mathcal{H}[\Sigma, \ED]$ is spanned by all configurations of the labels on edges and vertices.

In a graphical representation, each bulk edge is labeled by elements in $\Irr(\ED)$, and if we reverse the direction of one edge labeled by $k$, then the label should be replaced by $k^*$. 
The vacant edge, which is usually drawn as dotted line or  is invisible, is labeled by $\mathds{1}_i$. To summarize: 
\begin{gather}\label{eq:edgelabel}
 \mathrm{bulk\,  edge:}
 \begin{aligned}
		\begin{tikzpicture}
			\draw[line width=.6pt,black](0,-1.1)--(0,0.5);
			\draw[-latex,line width=.6pt,black](0,-0.3)--(0,0);
			\node[ line width=0.6pt, dashed, draw opacity=0.5] (a) at (0,0.8){$k$};
		\end{tikzpicture}
	\end{aligned}
 =
         \begin{aligned}
		\begin{tikzpicture}
			\draw[line width=.6pt,black](0,-1.1)--(0,0.5);
			\draw[-latex,line width=.6pt,black](0,0)--(0,-0.3);
			\node[ line width=0.6pt, dashed, draw opacity=0.5] (a) at (0,0.8){$k^*$};
		\end{tikzpicture}
	\end{aligned}=\begin{aligned}
		\begin{tikzpicture}
			\draw[line width=.6pt,black](0,-1.1)--(0,0.5);
			\draw[-latex,line width=.6pt,black](0,-0.3)--(0,0);
			\node[ line width=0.6pt, dashed, draw opacity=0.5] (a) at (0,0.8){$k^{**}$};
		\end{tikzpicture}
	\end{aligned},\quad 
  \text{vacant\, edge}:
  \begin{aligned}
		\begin{tikzpicture}
			\draw[dotted,line width=.8pt,black](0,-1.1)--(0,0.5);
		%	\draw[-latex,line width=.6pt,black](0,-0.3)--(0,0);
			\node[ line width=0.6pt, dashed, draw opacity=0.5] (a) at (0,0.8){$\mathbf{1}_i$};
		\end{tikzpicture}
	\end{aligned}.
\end{gather}
It is important to observe that in multifusion string-net models, vacant edges are represented by a set of edge labels denoted as $\mathbf{1}_i$, where $i \in I$. Unlike the case of original fusion string-nets, the addition or removal of vacant edges is constrained. Specifically, the condition $\one_i\otimes X \neq 0$ holds only when $X\in \Irr(\ED_{i,j})$. Thus, the inclusion or exclusion of vacant edges is subject to the grading structure imposed by the UMFC $\ED$.

The trivalent vertices are labeled by morphisms in $\ED$.
Graphically the vertices look like
\begin{gather}
\text{bulk\, vertex:}	\begin{aligned}
		\begin{tikzpicture}
			\draw[-latex,line width=.6pt,black] (0,-0.5) -- (0,-.1);
			\draw[line width=.6pt,black] (0,-0.4) -- (0,0);
			\draw[-latex,line width=.6pt,black](0,0)--(-0.3,0.3);
			\draw[line width=.6pt,black](-0.1,0.1)--(-0.4,0.4);
			\draw[-latex,line width=.6pt,black](0,0)--(0.3,0.3);
			\draw[line width=.6pt,black](0.1,0.1)--(0.4,0.4);
			\node[ line width=0.6pt, dashed, draw opacity=0.5] (a) at (-0.5,0.6){$a$};
			\node[ line width=0.6pt, dashed, draw opacity=0.5] (a) at (0.5,0.6){$b$};
			\node[ line width=0.6pt, dashed, draw opacity=0.5] (a) at (0,-0.7){$c$};
			\node[ line width=0.6pt, dashed, draw opacity=0.5] (a) at (0.3,-0.1){$\alpha$};
		\end{tikzpicture}
	\end{aligned},
 	\begin{aligned}
		\begin{tikzpicture}
			\draw[-latex,line width=.6pt,black] (0.4,-0.4) -- (0.1,-0.1);
			\draw[line width=.6pt,black] (0.3,-0.3) -- (0,0);
			\draw[-latex,line width=.6pt,black] (-0.4,-0.4) -- (-0.1,-0.1);
			\draw[line width=.6pt,black] (-0.3,-0.3) -- (0,0);
			\draw[-latex,line width=.6pt,black](0,0)--(0,0.3);
			\draw[line width=.6pt,black](0,0.1)--(0,0.5);
			\node[ line width=0.6pt, dashed, draw opacity=0.5] (a) at (0.4,-0.6){$b$};
			\node[ line width=0.6pt, dashed, draw opacity=0.5] (a) at (0,0.7){$c$};
			\node[ line width=0.6pt, dashed, draw opacity=0.5] (a) at (-0.4,-0.6){$a$};
			\node[ line width=0.6pt, dashed, draw opacity=0.5] (a) at (0.3,0.1){$\beta$};
		\end{tikzpicture}
	\end{aligned}.
\end{gather}
Notice that here for $a\in \ED_{i,j}$ and $b\in \ED_{k,l}$, the morphism is nonzero only if $j=k$ and $c\in \ED_{i,l}$. 
When $c=\one_i$, $a\in \ED_{i,j}$ and $b=a^*\in \ED_{j,i}$, the vertex becomes bivalent.

\begin{definition}[Inner product of vertex morphisms] \label{def:innerproduct}
For the vertex morphisms $\mu, \nu \in V_{c}^{ab}$ we define their Markov inner product via stacking two vertex diagrams corresponding to $\mu$ and $\nu^{\dagger}$ together 
\begin{equation}
 \begin{aligned}
		\begin{tikzpicture}
			\draw[-latex,line width=.6pt,black](0,0)--(0,0.4);
			\draw[line width=.6pt,black](0,0.1)--(0,0.5);
			\draw[line width=.6pt,black](0,-0.3) circle (0.3);
			\draw[-latex,line width=.6pt,black](-0.3,-0.3)--(-.3,-.2);
			\draw[-latex,line width=.6pt,black](0.3,-0.3)--(.3,-.2);
			\draw[-latex,line width=.6pt,black](0,-1.1)--(0,-0.7);
			\draw[line width=.6pt,black](0,-1.1)--(0,-0.6);
			\node[ line width=0.6pt, dashed, draw opacity=0.5] (a) at (0.6,-0.3){$b$};
			\node[ line width=0.6pt, dashed, draw opacity=0.5] (a) at (0,0.7){$c$};
			\node[ line width=0.6pt, dashed, draw opacity=0.5] (a) at (0,-1.3){$c$};
			\node[ line width=0.6pt, dashed, draw opacity=0.5] (a) at (-0.6,-0.3){$a$};
			\node[ line width=0.6pt, dashed, draw opacity=0.5] (a) at (0.3,0.2){$\nu^{\dagger}$};
			\node[ line width=0.6pt, dashed, draw opacity=0.5] (a) at (-0.3,-0.8){$\mu$};
		\end{tikzpicture}
	\end{aligned}  = \langle \nu|\mu\rangle_M  \id_c\;.
\end{equation}
The Markov trace provided in Eq.~\eqref{eq:Markovtrace} can be applied to both sides of the preceding expression, resulting in an equivalent definition. This justifies the term ``Markov inner product''.
In juxtaposition with the standard inner product presented in Eqs.~\eqref{eq:inner} and \eqref{eq:loopev}, the sole discrepancy lies in a fixed factor incorporating the quantum dimensions associated with the edge labels.
We will interchangeably use these terms, following the convention that the vertex labels $\alpha\in V_{c}^{ab}$, with $\beta\in V_{ab}^c$ (treated as $\beta=\gamma^{\dagger}$, where $\gamma\in V_{c}^{ab}$), are chosen to ensure mutual duality.
\end{definition}

We can introduce a tensor product structure for generalized multifusion string-net as follows. The total Hilbert space is defined as $\mathcal{H}_{\mathrm{tot}}[\Sigma, \ED]=\otimes_{v\in{C}^0(\Sigma)}\mathcal{H}_v$, where ${C}^k(\Sigma)$ denotes the set of $k$-cells in a lattice, and $\mathcal{H}_v = \oplus_{a,b,c\in\Irr(\ED)}V_c^{ab}$ or $\oplus_{a,b,c\in\Irr(\ED)}V^c_{ab}$, depending on the shape of $v$. 
We can introduce edge projectors $E_e$ on the space $\mathcal{H}_{v_1}\otimes \mathcal{H}_{v_2}$, where $v_1$ and $v_2$ represent adjacent vertices connected by $e$, in the following manner:
\begin{equation}
    E_e (|a,b\rightarrow c; \alpha\rangle \otimes |i\rightarrow j,k; \beta\rangle)= \delta_{c,i} |a,b\rightarrow c; \alpha\rangle \otimes |c\rightarrow j,k; \beta\rangle.
\end{equation}
We call a string-net labeling $|S\rangle$ stable if the labels on incident edges match with each other, \emph{viz.}, $|S\rangle \in (\otimes_{e\in C^1(\Sigma)} E_e) \mathcal{H}_{\mathrm{tot}}[\Sigma, \ED]$. The space generated by all stable labeling is the physical space $\mathcal{H}[\Sigma, \ED] \hookrightarrow \mathcal{H}_{\mathrm{tot}}[\Sigma, \ED]$.

A fully labeled string-net is viewed as a specific composition of morphisms in $\ED$, with its domain being the tensor product of all incoming edge labels, and its codomain being the tensor product of all outgoing edge labels. In contrast, a partially labeled string-net symbolizes the projection of the ground state subspace onto the subspace of $\mathcal{H}[\Sigma, \ED]$ where the states maintain fixed labels on the labeled vertices and edges.

Since a fully labeled string-net can be regarded as a composition of morphisms in the multifusion category $\ED$, the graphical calculus of multifusion category $\ED$ can be applied in the Hilbert space. 
It turns out that the topological features of the string-net gapped boundary wavefunction can be captured by some local evaluations of the string-net with boundary, which can be regarded as a state renormalization procedure. 
When applying state renormalization, the lattice on which the string-net is defined will change.
Nevertheless, we still can hope that the ground state space remains unchanged, since string-net state is a fixed-point state under renormalization. 
This leads us to  compare the ground state spaces corresponding to different graphs. 
Consider two graphs $\Sigma_1$ and $\Sigma_2$ which only differ in some local area and the corresponding total Hilbert spaces $\mathcal{H}_1$ and $\mathcal{H}_2$ satisfy $\dim \mathcal{H}_1\geq \dim \mathcal{H}_2$; if two lattices can be renormalized into the same fixed-point class, the respective invariant ground state subspaces $\mathcal{K}_1$ and $\mathcal{K}_2$ are isomorphic.
An evaluation $\mathrm{ev}:\mathcal{H}_1\to\mathcal{H}_2$ is a linear map which is identity map on the overlapping part of two graphs and is unitary when restricted on $\mathcal{K}_1$. The evaluation reduces some freedom of the system thus represents the renormalization procedure of the states.

The equations \eqref{eq:loopev}, \eqref{eq:paraev}, and the F-moves specified in Eqs.~\eqref{eq:F-move1}-\eqref{eq:F-move4} are referred to as \emph{topological local moves}.
By employing these topological local moves, we can construct string-net ground states characterized by their topological nature, representing fixed-point states.

The ground state can be expressed as $|\Psi\rangle=\sum_X\Psi(X)|X\rangle$, where the $|X\rangle$'s represent string-net configurations and the $\Psi(X)$'s are the corresponding complex coefficients.
Notice that $|X\rangle$ contains all the edge labels $a_e \in \Irr(\ED)$ with $e\in C^1({\Sigma})$ and vertex labels $\alpha_v\in \Hom_{\ED}$ with $v\in C^0(\Sigma)$; a more explicit notation should be $|X(a_e,\alpha_v);e\in C^1({\Sigma}),v\in C^0(\Sigma)\rangle$. A string-net configuration represents a composition of vertex morphisms in $\ED$, \emph{viz.}, $X(a_e,\alpha_v)=\alpha_n\comp \cdots \comp \alpha_1\in \Hom_{\ED}$.
The dual vector $\langle X(a_e,\alpha_v)|$ can be regarded as the dual morphism of $X(a_e,\alpha_v)$, \emph{viz.}, $X(a_e,\alpha_v)^{\dagger}=\alpha_1^{\dagger}\comp \cdots \comp \alpha_n^{\dagger}$.
The Definition~\ref{def:innerproduct} of vertex morphisms can thus be generalized to the string-net configurations:

\begin{definition}[Inner product of generalized multifusion string-net configurations] \label{def:innerSN}
For two string-net configurations $X(a_e,\alpha_v)$ and $Y(b_e,\beta_v)$ in $\Hom_{\ED}(i_1\otimes\cdots \otimes i_s,j_1\otimes\cdots \otimes j_s)$, we define their inner product via stacking the morphisms  $X(a_e,\alpha_v)$ and $Y(b_e,\beta_v)^{\dagger}=Y(b_e,\beta_v^{\dagger})$.
The Markov inner product is defined by
\begin{equation}
   \begin{aligned}
		\begin{tikzpicture}
			\draw[-latex,line width=.6pt,black] (-1.2,-0.5) -- (-1.2,-.1);
			\draw[line width=.6pt,black] (-1.2,-0.5) -- (-1.2,0);
			\node[ line width=0.6pt, dashed, draw opacity=0.5] (a) at (-1.2,-0.7){$i_1$};
			\draw[-latex,line width=.6pt,black] (-0.7,-0.5) -- (-0.7,-.1);
			\draw[line width=.6pt,black] (-0.7,-0.5) -- (-0.7,0);
			\node[ line width=0.6pt, dashed, draw opacity=0.5] (a) at (-0.7,-0.7){$i_2$};
			\draw[-latex,line width=.6pt,black] (1.2,-0.5) -- (1.2,-.1);
			\draw[line width=.6pt,black] (1.2,-0.5) -- (1.2,0);
			\node[ line width=0.6pt, dashed, draw opacity=0.5] (a) at (1.2,-0.7){$i_s$};
			\draw[line width=0.6 pt, fill=gray, fill opacity=0.2] 
			(-2,0) -- (2,0) -- (2,0.5) -- (-2,0.5) -- cycle; 
			\node[ line width=0.6pt, dashed, draw opacity=0.5] (a) at (0,0.25){$Y(b_e,\beta_v)^{\dagger} \comp X(a_e,\alpha_v)$};
			\node[ line width=0.6pt, dashed, draw opacity=0.5] (a) at (0.3,-0.5){$\cdots$};
			\draw[-latex,line width=.6pt,black] (-1.2,0.5) -- (-1.2,0.85);
			\draw[line width=.6pt,black] (-1.2,0.5) -- (-1.2,1);
			\node[ line width=0.6pt, dashed, draw opacity=0.5] (a) at (-1.2,1.2){$i_1$};
			\draw[-latex,line width=.6pt,black] (1.2,0.5) -- (1.2,0.85);
			\draw[line width=.6pt,black] (1.2,0.5) -- (1.2,1);
			\node[ line width=0.6pt, dashed, draw opacity=0.5] (a) at (1.2,1.2){$i_s$};
			\draw[-latex,line width=.6pt,black] (-0.7,0.5) -- (-0.7,0.85);
			\draw[line width=.6pt,black] (-0.7,0.5) -- (-0.7,1);
			\node[ line width=0.6pt, dashed, draw opacity=0.5] (a) at (-0.7,1.2){$i_2$};
			\node[ line width=0.6pt, dashed, draw opacity=0.5] (a) at (0.3,1.0){$\cdots$};
		\end{tikzpicture}
	\end{aligned}=\langle Y|X\rangle_M  \id_{i_1\otimes \cdots \otimes i_s}. 
\end{equation}
We can take Markov trace for both sides of the above expression to obtain an equivalent definition.
\end{definition}

Thus to determine the state, we only need to determine $\Psi(X)$.
This can be done by fixing the convention that $\Psi(\text{vac})=1$ and using the local evaluations to transform all the configurations into the vacant string-net.
Here by vacant string-net we mean a morphism from $\one$ to $\one$ in the UMFC $\ED$. Indeed, the vacant configuration in the UMFC context is nuanced due to the complexity introduced by the fact that $\one$ is no longer a simple object.
Taking a disk as an example (see Fig.~\ref{fig:StringNetBd}), a fully labeled string-net can be regarded as a morphism from $\one_j$ to $\one_j$. Every string-net configuration can be transformed into the vacant configuration via the local moves.

\begin{figure}[t]
		\centering
		\includegraphics[width=11cm]{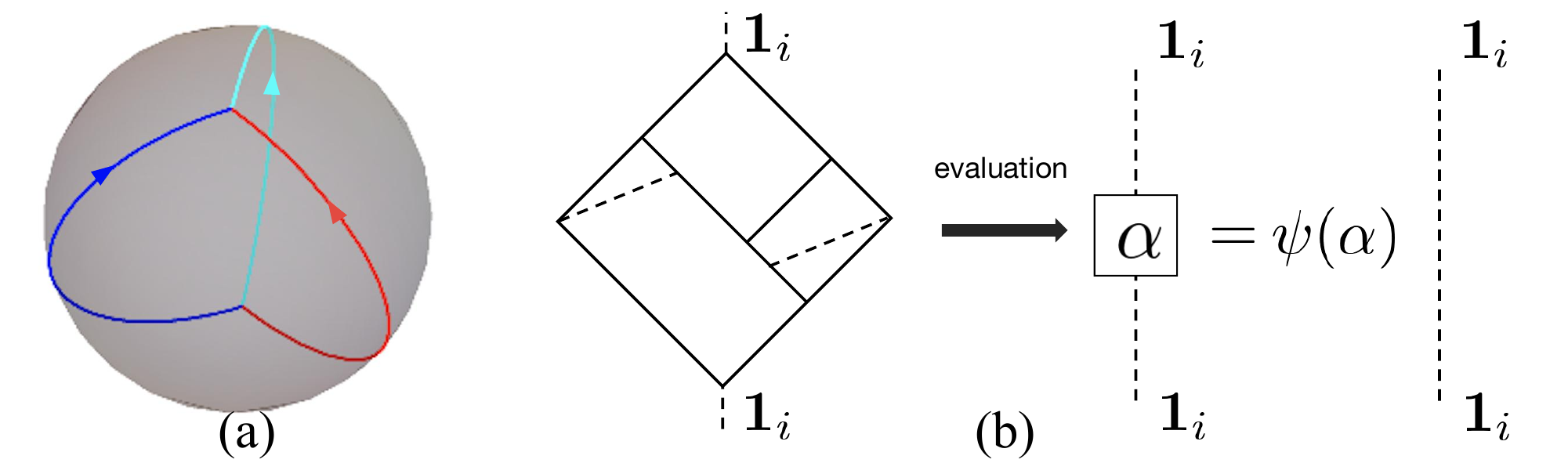}
		\caption{Depiction of the generalized multifusion string-net. \label{fig:StringNet}}
\end{figure}

\subsection{Generalized multifusion string-net lattice Hamiltonian}

Having introduced the construction of a generalized multifusion string-net ground state, this subsection will focus on the development of the corresponding lattice Hamiltonian.

For a given vertex in the generalized multifusion string-net, we define $\delta_{a,b\to c}=1$ if $V_{ab}^c\neq 0$; otherwise, we set $\delta_{a,b\to c}=0$. Similarly $\delta_{c\to a,b}$ is defined. Using these, we introduce a vertex projector $Q_v$ as follows:
\begin{equation}
    Q_v \big{|}  \begin{aligned}
		\begin{tikzpicture}
			\draw[-latex,line width=.6pt,black] (0,-0.5) -- (0,-.1);
			\draw[line width=.6pt,black] (0,-0.4) -- (0,0);
			\draw[-latex,line width=.6pt,black](0,0)--(-0.3,0.3);
			\draw[line width=.6pt,black](-0.1,0.1)--(-0.4,0.4);
			\draw[-latex,line width=.6pt,black](0,0)--(0.3,0.3);
			\draw[line width=.6pt,black](0.1,0.1)--(0.4,0.4);
			\node[ line width=0.6pt, dashed, draw opacity=0.5] (a) at (-0.5,0.6){$a$};
			\node[ line width=0.6pt, dashed, draw opacity=0.5] (a) at (0.5,0.6){$b$};
			\node[ line width=0.6pt, dashed, draw opacity=0.5] (a) at (0,-0.7){$c$};
			\node[ line width=0.6pt, dashed, draw opacity=0.5] (a) at (0.3,-0.1){$\alpha$};
		\end{tikzpicture}
	\end{aligned} \big{\rangle} =\delta_{c\to a,b} \,\big{|}\begin{aligned}
		\begin{tikzpicture}
			\draw[-latex,line width=.6pt,black] (0,-0.5) -- (0,-.1);
			\draw[line width=.6pt,black] (0,-0.4) -- (0,0);
			\draw[-latex,line width=.6pt,black](0,0)--(-0.3,0.3);
			\draw[line width=.6pt,black](-0.1,0.1)--(-0.4,0.4);
			\draw[-latex,line width=.6pt,black](0,0)--(0.3,0.3);
			\draw[line width=.6pt,black](0.1,0.1)--(0.4,0.4);
			\node[ line width=0.6pt, dashed, draw opacity=0.5] (a) at (-0.5,0.6){$a$};
			\node[ line width=0.6pt, dashed, draw opacity=0.5] (a) at (0.5,0.6){$b$};
			\node[ line width=0.6pt, dashed, draw opacity=0.5] (a) at (0,-0.7){$c$};
			\node[ line width=0.6pt, dashed, draw opacity=0.5] (a) at (0.3,-0.1){$\alpha$};
		\end{tikzpicture}
	\end{aligned}\big{\rangle}\,.
\end{equation}
The condition $Q_v=1$ corresponds to satisfying Gauss's law at the vertex $v$ in an analogy with lattice gauge theory.

The face operator $B_f$ is more complicated. It is a linear combination of some basic face operators $B^k_f$:
\begin{equation}
    B_f=\sum_{k\in \Irr(\ED)}w_k B^k_f,
\end{equation}
where $w_k=Y^{k^* k}_{\one}/\sum_{l\in \Irr(\ED)} d_l^2$. By selecting the gauge as $Y^{ab}_c=(d_ad_b/d_c)^{1/2}$, we observe that this aligns with the gauge choice conventionally made in the original string-net model.
The face operator $B_f^k$ acts on the face of the string-net by inserting a $k$-loop into the face $f$, which graphically looks like
\begin{gather}
    B_f^k\, \Big{|} 
    \begin{aligned}
        \begin{tikzpicture}
            % six-polygon 
            \draw[line width=.6pt,black] (0,0) -- (0.87,0.5);
            \draw[line width=.6pt,black] (0.87,1.5) -- (0.87,0.5);
            \draw[line width=.6pt,black] (0.87,1.5) -- (0,2);
            \draw[line width=.6pt,black] (0,0) -- (-0.87,0.5);
            \draw[line width=.6pt,black] (-0.87,1.5) -- (-0.87,0.5);
            \draw[line width=.6pt,black] (-0.87,1.5) -- (0,2);
            % arrows 
            \draw[-latex,line width=.6pt,black] (0,0) -- (0.5742,0.33);
            \draw[-latex,line width=.6pt,black] (0.87,0.5) -- (0.87,1.1);
            \draw[-latex,line width=.6pt,black] (0.87,1.5) -- (0.2958,1.83);
            \draw[-latex,line width=.6pt,black] (0,0) -- (-0.5742,0.33);
            \draw[-latex,line width=.6pt,black] (-0.87,0.5) -- (-0.87,1.1);
            \draw[-latex,line width=.6pt,black] (-0.87,1.5) -- (-0.2958,1.83);
            % six legs
            \draw[line width=.6pt,black] (0,-0.6) -- (0,0);
            \draw[line width=.6pt,black] (0,2) -- (0,2.6);
            \draw[line width=.6pt,black] (0.87,0.5) -- (1.5225,0.125);
            \draw[line width=.6pt,black] (0.87,1.5) -- (1.5225,1.875);
            \draw[line width=.6pt,black] (-0.87,0.5) -- (-1.5225,0.125);
            \draw[line width=.6pt,black] (-0.87,1.5) -- (-1.5225,1.875);
            % arrows 
            \draw[-latex,line width=.6pt,black] (0,-0.6) -- (0,-0.2);
            \draw[-latex,line width=.6pt,black] (0,2) -- (0,2.4);
            \draw[-latex,line width=.6pt,black] (1.5225,0.125) -- (1.1,0.3678);
            \draw[-latex,line width=.6pt,black] (0.87,1.5) -- (1.2925,1.7428);
            \draw[-latex,line width=.6pt,black] (-1.5225,0.125) -- (-1.1,0.3678);
            \draw[-latex,line width=.6pt,black] (-0.87,1.5) -- (-1.2925,1.7428);
            % six-sides labels 
            \node[ line width=0.6pt, dashed, draw opacity=0.5] (a) at (1.2,1){$j_1$};
            \node[ line width=0.6pt, dashed, draw opacity=0.5] (a) at (0.6,2.1){$j_2$};
            \node[ line width=0.6pt, dashed, draw opacity=0.5] (a) at (-0.6,2.05){$j_3$};
            \node[ line width=0.6pt, dashed, draw opacity=0.5] (a) at (-1.2,1){$j_4$};
            \node[ line width=0.6pt, dashed, draw opacity=0.5] (a) at (-0.55,0){$j_5$};
            \node[ line width=0.6pt, dashed, draw opacity=0.5] (a) at (0.55,0){$j_6$};
            % six-leg labels
            \node[ line width=0.6pt, dashed, draw opacity=0.5] (a) at (1.8,2){$i_1$};
            \node[ line width=0.6pt, dashed, draw opacity=0.5] (a) at (0,2.9){$i_2$};
            \node[ line width=0.6pt, dashed, draw opacity=0.5] (a) at (-1.8,2){$i_3$};
            \node[ line width=0.6pt, dashed, draw opacity=0.5] (a) at (-1.8,0){$i_4$};
            \node[ line width=0.6pt, dashed, draw opacity=0.5] (a) at (0,-0.9){$i_5$};
            \node[ line width=0.6pt, dashed, draw opacity=0.5] (a) at (1.8,0){$i_6$};
            % vertex labels
            \node[ line width=0.6pt, dashed, draw opacity=0.5] (a) at (0.65,1.35){$\alpha_1$};
            \node[ line width=0.6pt, dashed, draw opacity=0.5] (a) at (0,1.7){$\alpha_2$};
            \node[ line width=0.6pt, dashed, draw opacity=0.5] (a) at (-0.55,1.35){$\alpha_3$};
            \node[ line width=0.6pt, dashed, draw opacity=0.5] (a) at (-0.55,0.55){$\alpha_4$};
            \node[ line width=0.6pt, dashed, draw opacity=0.5] (a) at (0,0.3){$\alpha_5$};
            \node[ line width=0.6pt, dashed, draw opacity=0.5] (a) at (0.62,0.65){$\alpha_6$};
        \end{tikzpicture}
    \end{aligned}
    \Big{\rangle} = 
    \Big{|}
    \begin{aligned}
        \begin{tikzpicture}
            % six-polygon 
            \draw[line width=.6pt,black] (0,0) -- (0.87,0.5);
            \draw[line width=.6pt,black] (0.87,1.5) -- (0.87,0.5);
            \draw[line width=.6pt,black] (0.87,1.5) -- (0,2);
            \draw[line width=.6pt,black] (0,0) -- (-0.87,0.5);
            \draw[line width=.6pt,black] (-0.87,1.5) -- (-0.87,0.5);
            \draw[line width=.6pt,black] (-0.87,1.5) -- (0,2);
            % arrows 
            \draw[-latex,line width=.6pt,black] (0,0) -- (0.5742,0.33);
            \draw[-latex,line width=.6pt,black] (0.87,0.5) -- (0.87,1.1);
            \draw[-latex,line width=.6pt,black] (0.87,1.5) -- (0.2958,1.83);
            \draw[-latex,line width=.6pt,black] (0,0) -- (-0.5742,0.33);
            \draw[-latex,line width=.6pt,black] (-0.87,0.5) -- (-0.87,1.1);
            \draw[-latex,line width=.6pt,black] (-0.87,1.5) -- (-0.2958,1.83);
            % six legs
            \draw[line width=.6pt,black] (0,-0.6) -- (0,0);
            \draw[line width=.6pt,black] (0,2) -- (0,2.6);
            \draw[line width=.6pt,black] (0.87,0.5) -- (1.5225,0.125);
            \draw[line width=.6pt,black] (0.87,1.5) -- (1.5225,1.875);
            \draw[line width=.6pt,black] (-0.87,0.5) -- (-1.5225,0.125);
            \draw[line width=.6pt,black] (-0.87,1.5) -- (-1.5225,1.875);
            % arrows 
            \draw[-latex,line width=.6pt,black] (0,-0.6) -- (0,-0.2);
            \draw[-latex,line width=.6pt,black] (0,2) -- (0,2.4);
            \draw[-latex,line width=.6pt,black] (1.5225,0.125) -- (1.1,0.3678);
            \draw[-latex,line width=.6pt,black] (0.87,1.5) -- (1.2925,1.7428);
            \draw[-latex,line width=.6pt,black] (-1.5225,0.125) -- (-1.1,0.3678);
            \draw[-latex,line width=.6pt,black] (-0.87,1.5) -- (-1.2925,1.7428);
            % six-sides labels 
            \node[ line width=0.6pt, dashed, draw opacity=0.5] (a) at (1.2,1){$j_1$};
            \node[ line width=0.6pt, dashed, draw opacity=0.5] (a) at (0.6,2.1){$j_2$};
            \node[ line width=0.6pt, dashed, draw opacity=0.5] (a) at (-0.6,2.05){$j_3$};
            \node[ line width=0.6pt, dashed, draw opacity=0.5] (a) at (-1.2,1){$j_4$};
            \node[ line width=0.6pt, dashed, draw opacity=0.5] (a) at (-0.55,0){$j_5$};
            \node[ line width=0.6pt, dashed, draw opacity=0.5] (a) at (0.55,0){$j_6$};
            % six-leg labels
            \node[ line width=0.6pt, dashed, draw opacity=0.5] (a) at (1.8,2){$i_1$};
            \node[ line width=0.6pt, dashed, draw opacity=0.5] (a) at (0,2.9){$i_2$};
            \node[ line width=0.6pt, dashed, draw opacity=0.5] (a) at (-1.8,2){$i_3$};
            \node[ line width=0.6pt, dashed, draw opacity=0.5] (a) at (-1.8,0){$i_4$};
            \node[ line width=0.6pt, dashed, draw opacity=0.5] (a) at (0,-0.9){$i_5$};
            \node[ line width=0.6pt, dashed, draw opacity=0.5] (a) at (1.8,0){$i_6$};
            % vertex labels
            \node[ line width=0.6pt, dashed, draw opacity=0.5] (a) at (0.65,1.35){$\alpha_1$};
            \node[ line width=0.6pt, dashed, draw opacity=0.5] (a) at (0,1.7){$\alpha_2$};
            \node[ line width=0.6pt, dashed, draw opacity=0.5] (a) at (-0.55,1.35){$\alpha_3$};
            \node[ line width=0.6pt, dashed, draw opacity=0.5] (a) at (-0.55,0.55){$\alpha_4$};
            \node[ line width=0.6pt, dashed, draw opacity=0.5] (a) at (0,0.3){$\alpha_5$};
            \node[ line width=0.6pt, dashed, draw opacity=0.5] (a) at (0.62,0.65){$\alpha_6$};
            % insert a k-loop 
            \draw[line width=0.8pt,black] (0.35,1.05) .. controls +(0,0.4) and +(0,0.4) .. (-0.35,1.05);
            \draw[line width=0.8pt,black] (0.35,0.95) .. controls +(0,-0.4) and +(0,-0.4) .. (-0.35,0.95);
            \draw[line width=0.8pt,black](0.35,1.05) -- (0.35,0.95);
            \draw[line width=0.8pt,black](-0.35,1.05) -- (-0.35,0.95);
            \draw[-latex,line width=0.8pt,black](0.35,0.9) -- (0.35,1.12);
            \node[ line width=0.1pt, dashed, draw opacity=0.5] (a) at (0.1,1){$k$};
        \end{tikzpicture}
    \end{aligned}
    \Big{\rangle}\;.
\end{gather}
It is noteworthy that, commonly, a hexagon is considered as an illustrative example, while the face operator is designed to act on a face with an arbitrary number of edges.
The condition $B_f=1$ can be conceptualized as ensuring that no flux penetrates the face $f$.

\begin{proposition}
 The local stabilizers $Q_v$ and $B_f$ functions are projectors and mutually commute. Consequently, the Hamiltonian of the generalized multifusion string-net\,\footnote{The Hamiltonian is formulated on the space $\mathcal{H}[\Sigma,\ED]$, encompassing string-nets with stable labelings. To emphasize the tensor product structure, one can define the Hamiltonian in the total space $\mathcal{H}_{\rm tot}[\Sigma,\ED]$, wherein the Hamiltonian incorporates edge projectors as well: $H=-J_e\sum_e E_e -J_v\sum_v Q_v-J_f\sum_f B_f$.} ($J_v,J_f>0$)
 \begin{equation}
     H=-J_v\sum_v Q_v-J_f\sum_f B_f
 \end{equation}
 being a local commutative projector (LCP) Hamiltonian, exhibits a gap in the thermodynamic limit.
\end{proposition}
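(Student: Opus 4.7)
\medskip

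\noindent\textbf{Proof proposal.} The plan is to break the statement into four separable claims: (i) $Q_v^2=Q_v$ and $B_f^2=B_f$; (ii) $[Q_v,Q_{v'}]=[B_f,B_{f'}]=0$; (iii) $[Q_v,B_f]=0$; and (iv) the LCP structure implies a gap in the thermodynamic limit. First, $Q_v$ is a projector almost by definition, since $\delta_{c\to a,b}\in\{0,1\}$ and $Q_v$ is diagonal in the fusion-tree basis of Eqs.~\eqref{eq:vertexvec1}--\eqref{eq:vertexvec2}; the commutation of $Q_v$ with $Q_{v'}$ and with $B_f$ when $v\notin \partial f$ is then immediate. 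For $Q_v$ and $B_f$ with $v\in \partial f$, the idea is to expand $B_f^k$ in the basis reached after a single F-move that pulls the inserted $k$-loop onto the edges adjacent to $v$; then use the loop move \eqref{eq:loopev} together with the fact that the resulting vertex labels arise from legitimate fusion channels $N_{ak}^{a'}N_{bk^\ast}^{b'}N_{a'b'}^c$, so that the stability condition $\delta_{c\to a',b'}=1$ is preserved by the action of $B_f^k$. This is the graded-multifusion analogue of the usual argument, and the grading $a\in\ED_{i,j}$, $k\in\ED_{j,j}$ is exactly what ensures $a'=a\otimes k\in\ED_{i,j}$ so that fusion channels at the vertex are respected.

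Next I would address $B_f^2=B_f$. The key algebraic input is the fusion rule for nested loops: using two applications of the parallel move \eqref{eq:paraev} and the loop move \eqref{eq:loopev} together with Eq.~\eqref{eq:F-symbol-delta}, one shows that
\begin{equation}
B_f^k\,B_f^l \;=\; \sum_{m\in\Irr(\ED)} N_{k,l}^m\, B_f^m,
\end{equation}
up to a normalization factor dictated by $Y_{\bullet}^{\bullet\bullet}$. Inserting this into $B_f^2=\sum_{k,l}w_k w_l B_f^kB_f^l$, commuting sums, and recognizing the Frobenius--Perron identity $\sum_{k,l}d_kd_l N_{kl}^m=d_{\ED}\,d_m$ (equivalently, that the regular object $\sum_k d_k k$ is an idempotent up to $d_{\ED}$), yields $B_f^2=B_f$ with the specific choice $w_k=Y_{\one}^{k^\ast k}/\sum_l d_l^2$. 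In the multifusion case one must be careful that ``inserting a $k$-loop'' is only nonzero when $k$ lies in the diagonal component $\ED_{i,i}$ compatible with the ambient vacuum label of the face, but this is automatic from the grading, so the sum over $k$ effectively restricts to the relevant diagonal sector and the argument proceeds component by component.

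The main obstacle, as usual, is $[B_f,B_{f'}]=0$ when $f,f'$ share a common edge. Here I plan to follow the standard string-net strategy: expand both $B_f^k$ and $B_{f'}^l$ using the F-moves \eqref{eq:F-move1}--\eqref{eq:F-move4} so that both loops are brought to a common ``fattened'' configuration crossing the shared edge; then the equality $B_f^k B_{f'}^l |\,\cdot\,\rangle = B_{f'}^l B_f^k|\,\cdot\,\rangle$ reduces to an identity among F-symbols on the shared edge, which is precisely a consequence of the pentagon relation (Fig.~\ref{fig:pentagon}) together with the unitarity relation \eqref{eq:F-symbol-delta}. The computation is lengthy but purely diagrammatic; the only new feature in the multifusion setting is bookkeeping the grading indices on the shared edge, which does not obstruct the pentagon identity since the pentagon holds component-wise inside each $\ED_{i,j}$. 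Having established all commutations, the operators are mutually commuting orthogonal projectors, hence $H=-J_v\sum_v Q_v-J_f\sum_f B_f$ is LCP.

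Finally, for the gap, the key observation is that for a commuting projector Hamiltonian on a lattice with uniformly bounded local Hilbert space dimension, the spectrum is contained in $\{-J_v|C^0(\Sigma)|-J_f|C^2(\Sigma)|+n_v J_v+n_f J_f : n_v,n_f\in\mathbb{N}\}$, so the gap between the ground-state subspace $\{Q_v=B_f=1\;\forall v,f\}$ and the first excited sector is at least $\min(J_v,J_f)>0$, independent of system size. I would conclude by noting that the ground-state subspace is non-empty (the canonical string-net state built from the empty configuration lies in it, by the standard verification that $Q_v$ and $B_f$ act trivially on the topological wavefunction defined through the local moves), so the gap is genuine in the thermodynamic limit.
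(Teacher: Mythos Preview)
Your approach is essentially the same as the paper's: both hinge on the loop-fusion identity $B_f^k B_f^l=\sum_m N_{kl}^m B_f^m$ obtained from the parallel and loop moves, combined with the Frobenius--Perron relation $\sum_{k,l}d_kd_l N_{kl}^m=d_{\ED}\,d_m$ to get $B_f^2=B_f$. You actually supply more than the paper does on $[Q_v,B_f]=0$, $[B_f,B_{f'}]=0$ (via the pentagon), and the spectral-gap argument, all of which the paper leaves implicit. The one ingredient the paper makes explicit that you omit is Hermiticity: it records $(B_f^k)^{\dagger}=B_f^{k^*}$ (via the Markov inner product of Definitions~\ref{def:innerproduct} and~\ref{def:innerSN}), which together with $w_k=w_{k^*}$ gives $B_f^{\dagger}=B_f$ and upgrades idempotence to an orthogonal projector; you should add this step.
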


\begin{proof}
By definition, we see that the $Q_v$'s are projectors and $[Q_v,Q_{v'}]=0$ for all distinct vertices $v$ and $v'$.

Using the same method as in Refs.~\cite{hong2009symmetrization,Hahn2020generalized}, we can show that \begin{equation}\label{eq:Bfdual}
    (B_f^k)^{\dagger}=B^{k^*}_f.
\end{equation}
Notice that the main trick here is to use the inner product that we introduced in Definitions ~\ref{def:innerproduct} and \ref{def:innerSN}. It also satisfies $B_f^kB_f^l = \sum_n N^n_{kl}B_f^n$. In fact, this follows from the following 
\begin{align}
    \begin{aligned}
        \begin{tikzpicture}
            \draw[line width=.6pt,black] (0.2,1.45) .. controls +(0,0.2) and +(0,0.2) .. (-0.2,1.45) (0.55,1.45) .. controls +(0,0.6) and +(0,0.6) .. (-0.55,1.45);
            \draw[line width=.6pt,black] (0.2,0.95) .. controls +(0,-0.2) and +(0,-0.2) .. (-0.2,0.95) (0.55,0.95) .. controls +(0,-0.6) and +(0,-0.6) .. (-0.55,0.95);
             \draw[line width=.6pt,black] (-0.2,1.45) .. controls +(-0.0,0) and +(-0.0,0) .. (-0.2,0.95) (-0.55,1.45) .. controls +(-0.0,0) and +(-0.0,0) .. (-0.55,0.95);
             \draw[line width=.6pt,black] (0.2,1.45) .. controls +(-0.0,0) and +(-0.0,0) .. (0.2,0.95) (0.55,1.45) .. controls +(-0.0,0) and +(-0.0,0) .. (0.55,0.95);
             \draw[-latex,line width=.6pt,black] (0.55,1.2) -- (0.55,1.3); 
              \draw[-latex,line width=.6pt,black] (0.2,1.2) -- (0.2,1.3); 
            \node[ line width=0.6pt, dashed, draw opacity=0.5] (a) at (0,1.2){$k$};
            \node[ line width=0.6pt, dashed, draw opacity=0.5] (a) at (0.38,1.2){$l$};
        \end{tikzpicture}
    \end{aligned} 
     = \sum_{n,\alpha} \frac{1}{Y_{kl}^n}\; 
    \begin{aligned}
        \begin{tikzpicture}
             \draw[line width=.6pt,black] (0.375,1.34) -- (0.375,1.05);
            \draw[line width=.6pt,black] (0.2,1.45) .. controls +(0,-0.15) and +(0,-0.15) .. (0.55,1.45) (0.2,0.95) .. controls +(0,0.15) and +(0,0.15) .. (0.55,0.95); 
            \draw[line width=.6pt,black] (0.2,1.55) .. controls +(0,0.2) and +(0,0.2) .. (-0.2,1.55) (0.55,1.55) .. controls +(0,0.6) and +(0,0.6) .. (-0.55,1.55);
            \draw[line width=.6pt,black] (0.2,1.45) -- (0.2,1.55);
            \draw[line width=.6pt,black] (0.55,1.45) -- (0.55,1.55);
            \draw[line width=.6pt,black] (0.2,0.85) .. controls +(0,-0.2) and +(0,-0.2) .. (-0.2,0.85) (0.55,0.85) .. controls +(0,-0.6) and +(0,-0.6) .. (-0.55,0.85);
            \draw[line width=.6pt,black] (0.2,0.95) -- (0.2,0.85);
            \draw[line width=.6pt,black] (0.55,0.95) -- (0.55,0.85);
             \draw[line width=.6pt,black] (-0.2,1.55) .. controls +(-0.0,0) and +(-0.0,0) .. (-0.2,0.85) (-0.55,1.55) .. controls +(-0.0,0) and +(-0.0,0) .. (-0.55,0.85);
             \draw[-latex,line width=.6pt,black] (-0.55,1.2) -- (-0.55,1.1); 
              \draw[-latex,line width=.6pt,black] (-0.2,1.2) -- (-0.2,1.1); 
               \draw[-latex,line width=.6pt,black] (0.375,1.2) -- (0.375,1.3);  
            % \draw[line width=.6pt,black] (0.3,1.3) arc (31:330:0.36);
            \node[ line width=0.6pt, dashed, draw opacity=0.5] (a) at (0.59,1.2){$n$};
            \node[ line width=0.6pt, dashed, draw opacity=0.5] (a) at (0.06,0.95){$k$};
            \node[ line width=0.6pt, dashed, draw opacity=0.5] (a) at (0.67,0.94){$l$};
            \node[ line width=0.6pt, dashed, draw opacity=0.5] (a) at (0.375,1.5){$\alpha$};
            \node[ line width=0.6pt, dashed, draw opacity=0.5] (a) at (0.375,0.9){$\alpha$};
        \end{tikzpicture}
    \end{aligned} 
    = \sum_{n,\alpha} \frac{1}{Y_{kl}^n}\; 
    \begin{aligned}
        \begin{tikzpicture}
            \draw[line width=.6pt,black] (0.35,1.45) .. controls +(0,0.5) and +(0,0.5) .. (-0.35,1.45);
            \draw[line width=.6pt,black]  (0.35,0.95) .. controls +(0,-0.5) and +(0,-0.5) .. (-0.35,0.95);
             \draw[line width=.6pt,black] (-0.35,1.45) .. controls +(-0.0,0) and +(-0.0,0) .. (-0.35,0.95);
             \draw[line width=0.6pt,black] (0.35,1.2) circle (0.25);
             \draw[-latex,line width=.6pt,black] (0.1,1.2) -- (0.1,1.3); 
             \draw[-latex,line width=.6pt,black] (0.6,1.2) -- (0.6,1.3); 
            \draw[-latex,line width=.6pt,black] (-0.35,1.2) -- (-0.35,1.1);  
            % \draw[line width=.6pt,black] (0.3,1.3) arc (31:330:0.36);
            \node[ line width=0.6pt, dashed, draw opacity=0.5] (a) at (0.17,0.85){$n$};
            \node[ line width=0.6pt, dashed, draw opacity=0.5] (a) at (-0.08,1.2){$k$};
            \node[ line width=0.6pt, dashed, draw opacity=0.5] (a) at (0.45,1.2){$l$};
            \node[ line width=0.6pt, dashed, draw opacity=0.5] (a) at (0.52,1.58){$\alpha$};
            \node[ line width=0.6pt, dashed, draw opacity=0.5] (a) at (0.52,0.83){$\alpha$};
        \end{tikzpicture}
    \end{aligned} 
    =  \sum_{\alpha,n} \; 
    \begin{aligned}
        \begin{tikzpicture}
            \draw[line width=.6pt,black] (0.45,1.35) .. controls +(0,0.6) and +(0,0.6) .. (-0.45,1.35);
            \draw[line width=.6pt,black]  (0.45,1.05) .. controls +(0,-0.6) and +(0,-0.6) .. (-0.45,1.05);
             \draw[line width=.6pt,black] (-0.45,1.35) .. controls +(-0.0,0) and +(-0.0,0) .. (-0.45,1.05) (0.45,1.35) .. controls +(-0.0,0) and +(-0.0,0) .. (0.45,1.05);
             \draw[-latex,line width=.6pt,black] (0.45,1.2) -- (0.45,1.3); 
            \node[ line width=0.6pt, dashed, draw opacity=0.5] (a) at (0.2,1.2){$n$};
        \end{tikzpicture}
    \end{aligned}\;,
\end{align}
where the first identity is from Eq.~\eqref{eq:paraev}, and the third from Eq.~\eqref{eq:loopev}. Using these facts one can show that the $B_f$'s are projectors. 
\end{proof}

\begin{proposition}
    The generalized multifusion string-net ground state $|\Psi\rangle$ constructed in the previous subsection is invariant under the action of $Q_v$ and $B_f$ for all vertices $v$ and faces $f$. 
\end{proposition}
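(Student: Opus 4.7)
The statement splits into two claims. Invariance under $Q_v$ is essentially tautological: by construction, $|\Psi\rangle = \sum_X \Psi(X) |X\rangle$ is supported only on labelings $X$ that represent honest compositions of vertex morphisms in $\ED$, because each $|X\rangle$ is literally built from basis vectors of the spaces $V_c^{ab}$ and $V_{ab}^c$. Any labeling that violates the fusion rule at a vertex $v$ would call for a nonzero vector in a trivial morphism space, which does not exist, so such configurations never appear in the sum. Hence $Q_v |\Psi\rangle = |\Psi\rangle$ is immediate from how the ground state was defined.

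Invariance under $B_f$ is deeper, and the plan is to exploit that $|\Psi\rangle$ was constructed precisely as a fixed point of the topological local moves (loop move, parallel move, F-moves). First I would show that on $|\Psi\rangle$ each basic operator $B_f^k$, which inserts a $k$-loop into the face $f$, can be rewritten as a sequence of these local moves acting on the ground state: use the parallel move \eqref{eq:paraev} to fuse the inserted $k$-loop with each edge of $f$, apply the F-moves \eqref{eq:F-move1}--\eqref{eq:F-move4} to recouple the resulting diagram back into the original trivalent shape, and finally apply the loop move \eqref{eq:loopev} to collapse what remains of the $k$-loop. Because $|\Psi\rangle$ is invariant under each of these moves, the net effect is multiplication by a scalar $\lambda_k$ depending only on $k$ and the chosen normalization gauge $Y$.

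With the standard gauge $Y_c^{ab} = (d_a d_b / d_c)^{1/2}$ one can compute $\lambda_k = d_k / Y^{k^*k}_{\one}$, and with the weight $w_k = Y^{k^*k}_{\one} / \sum_l d_l^2$ this yields $w_k \lambda_k = d_k^2 / D^2$ where $D^2 = \sum_l d_l^2$. Summing over $k$ then gives
\begin{equation}
B_f |\Psi\rangle = \sum_{k \in \Irr(\ED)} w_k B_f^k |\Psi\rangle = \frac{1}{D^2} \sum_k d_k^2 \, |\Psi\rangle = |\Psi\rangle,
\end{equation}
as desired. Commutativity of $Q_v$ with $B_f$, established in the preceding proposition, guarantees that this computation stays within the $Q_v = 1$ subspace.

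The main obstacle is the recoupling step on a face of arbitrary valence: one must repeatedly apply F-moves to drag the inserted $k$-loop past each of the vertices on $\partial f$, and show that the resulting sum over intermediate internal labels collapses back to the original labeling with the advertised scalar. This is the multifusion analogue of the corresponding lemma in Levin--Wen, and it rests on the orthogonality relation \eqref{eq:F-symbol-delta} together with the pentagon equation of Fig.~\ref{fig:pentagon}. An additional subtlety, absent in the fusion case, is that $\one = \bigoplus_i \one_i$ is reducible, so when the loop move shrinks a trivial loop in the interior of $f$ one must land in the particular simple summand $\one_i$ dictated by the grading sector of the face; tracking this is where the multifusion grading $\ED = \bigoplus_{i,j \in I} \ED_{i,j}$ enters the argument in an essential way.
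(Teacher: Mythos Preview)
Your $Q_v$ argument matches the paper's. For $B_f$, your strategy of absorbing the inserted $k$-loop into the face boundary by parallel and F-moves is the classical Levin--Wen computation and is in principle sound, but it is not the route the paper takes, and your execution has an arithmetic inconsistency: you state $\lambda_k = d_k / Y^{k^*k}_{\one}$, which in the standard gauge equals $1$, so $w_k\lambda_k = d_k/D^2$ rather than the $d_k^2/D^2$ you then write in the displayed sum. The correct eigenvalue is $\lambda_k = Y^{kk^*}_{\one_i}$ (equal to $d_k$ in the standard gauge), and with this value the sum over $k$ indeed gives $1$.

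The paper bypasses the whole recoupling step with a duality trick. From the previous proposition one has $(B_f^k)^\dagger = B_f^{k^*}$, so for any configuration $\langle S|$,
\[
\langle S \,|\, B_f^a \,|\, \Psi \rangle \;=\; \langle B_f^{a^*} S \,|\, \Psi \rangle .
\]
The bra $\langle B_f^{a^*} S|$ is simply $\langle S|$ with an isolated $a^*$-loop sitting in the face $f$, and the ground-state amplitude $\Psi$ is \emph{defined} so that the loop move \eqref{eq:loopev} holds; evaluating that loop contributes the scalar $Y^{aa^*}_{\one_i}$. Hence $B_f^a|\Psi\rangle = Y^{aa^*}_{\one_i}\,|\Psi\rangle$ without ever touching an F-symbol, and summing against $w_k$ gives $B_f|\Psi\rangle = |\Psi\rangle$. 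This two-line argument completely sidesteps the arbitrary-valence recoupling you flagged as the main obstacle; the grading subtlety you mention is absorbed into the subscript $\one_i$ picked out by the face and needs no separate analysis.
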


\begin{proof}
The condition $Q_v |\Psi\rangle=|\Psi\rangle$ is clear from the definition.
To show that $|\Psi\rangle$ is stabilized by all $B_f$'s, we claim that
\begin{equation}
    B_f^a=Y^{aa^*}_{\one_i} 
\end{equation}
for $a\in \ED_{i,j}$. Then from the definition of $B_f$ we derive $B_f |\Psi\rangle=|\Psi\rangle$.
To show the claim, consider the string-net configuration $\langle S|$, we see that 
\begin{equation}
    \langle S|B_f^a|\Psi\rangle= \langle B_f^{a^*} S |\Psi\rangle = Y^{aa^*}_{\one_i}  \langle S |\Psi\rangle,
\end{equation}
where we have used Eqs.~\eqref{eq:loopev} and \eqref{eq:Bfdual}. This proves the claim. 
\end{proof}

\subsection{Topological ground state degeneracy}

The simplest topological observable of the topological ordered phase is the ground state degeneracy.
For the original Levin-Wen string-net model, we know that it depends on the topology of the spatial manifold and the input UFC.
The ground state degeneracies for the Levin-Wen string-net model have been computed in prior works \cite{Hu2012ground, Vidal2022partition, ritzzwilling2023topological}.

A parallel observation holds for the multifusion string-net model. The ground state degeneracy in the generalized multifusion string-net model remains a topological invariant, resilient to topological local moves.
Using the local stabilizers, the ground state degeneracy can be expressed as
\begin{equation}
    \operatorname{GSD}= \Tr \left(\prod_v Q_v\right) \left(\prod_f B_f\right).
\end{equation}
For a given spatial manifold, the calculation of the $\operatorname{GSD}$ involves evaluating it through the consideration of the simplest cellulation of the manifold and using the fact that $\operatorname{GSD}$ is a topological invariant.

Moore-Seiberg formula \cite{Moore1989,ritzzwilling2023topological} of $\operatorname{GSD}$ also holds for the generalized multifusion string-net model (with an input indecomposable UMFC $\ED$), since the topological excitations still form a UMTC (see Theorem~\ref{thm:UMFC_SN}).
Consider an orientable $g$-genus surface $\Sigma_{g,k}$ with $k$ punctures, we can assign anyons (simple objects in $\mathcal{Z}(\ED)$) to the punctures of the surface, denoted as $X_1,\cdots,X_k$. The following is an example with 6 anyons on a $3$-genus surface:
\begin{equation*}
\begin{tikzpicture}[
  tqft,
  every outgoing boundary component/.style={fill=blue!50},
  outgoing boundary component 3/.style={fill=none,draw=black},
  every incoming boundary component/.style={fill=green!50},
  every lower boundary component/.style={draw,black},
  every upper boundary component/.style={draw,black},
  cobordism/.style={fill=gray!40},
  cobordism edge/.style={draw,black},
  genus=3,
  hole 2/.style={draw},
  view from=incoming,
  anchor=between incoming 1 and 2
]
\pic[name=a,tqft,
    incoming boundary components=5,
    skip incoming boundary components={2,4},
    outgoing boundary components=5,
    skip outgoing boundary components={2,3,5},
    offset=-.5];
\end{tikzpicture}
\end{equation*}
This surface can be decomposed into pant surfaces that are subsequently glued together. It is essential to note that this decomposition is not unique in general, but the associativity of fusion of anyons guarantees that the final result for GSD is the same:
\begin{equation}\label{eq:GSDTQFT}
    \operatorname{GSD}[\Sigma_{g,k},X_1,\cdots,X_k]=\sum_{Y\in \Irr(\mathcal{Z}(\ED))}\left(\prod_{j=1}^k S_{X_j,Y}\right)S_{\one,Y}^{2-2g-k},
\end{equation}
where $S$ is the S-matrix, $\one$ is the vacuum charge (tensor unit of $\mathcal{Z}(\ED)$), $g$ is the number of genus, and $k$ is the number of punctures.
When there are no punctures, we obtain the expression of the GSD for multifusion string-net model:
\begin{equation}
    \operatorname{GSD}[\Sigma_{g}]=\operatorname{GSD}[\Sigma_{g,k},X_1=\cdots=X_k=\one]=\sum_{Y\in \Irr(\mathcal{Z}(\ED))}S_{\one,Y}^{2-2g}.
\end{equation}
It is crucial to observe that the expression \eqref{eq:GSDTQFT} is not applicable in scenarios involving excitations. This limitation arises from the existence of additional fusion channels, as detailed in \cite{ritzzwilling2023topological}.

\section{Weak Hopf tube algebra and topological excitations}
\label{sec:tube}

In this section, we present an approach to investigate and classify the topological excitations in a multifusion string-net model.
We will use the string diagram calculus to construct a tube algebra and show that the tube algebra is a $C^*$ weak Hopf algebra.
This can be regarded as a generalization of Ocneanu's tube algebra \cite{ocneanu1994chirality,ocneanu2001operator}.
It is worth noting that our approach differs from the Q-algebra approach for Levin-Wen string-net proposed in Ref.~\cite{lan2014topological}, where the Q-algebra lacks a coalgebra structure.
Our construction of the bulk tube algebra exhibits another interesting property: the Kitaev-Kong boundary tube algebra (denoted as $\tilde{\mathbf{L}}({_{\EC}}\EM)$ in our notation, see Sec.~\ref{sec:bdtheory} for a detailed discussion) constructed in Ref.~\cite{Kitaev2012boundary} can be naturally embedded into it. Furthermore, intriguingly, we can consider its dual weak Hopf algebra as a mirror reflection of the diagram corresponding to $\tilde{\mathbf{L}}({_{\EC}}\EM)$. This reflection reveals that their crossed multiplication aligns well with the bulk tube algebra.
Notice that hereinafter, whenever necessary to stress the base UMFC category, we will denote a left (resp. right) $\ED$-module category by $_{\ED}\EM$ (resp. $\EM_{\ED}$), and a $\EC|\ED$-bimodule category by ${_{\EC}}\EM_{\ED}$.

The ground state space of multifusion string-net model consists of translational invariant constant energy density states which satisfy stabilizer conditions: $Q_v|\Psi\rangle=|\Psi\rangle$ and $B_f|\Psi\rangle=|\Psi\rangle$ for all $v\in C^0(\Sigma)$ and $f\in C^2(\Sigma)$.
For a local connected region of string-net, if the stabilizers in this region do not stabilize the state, then we say that there are excitations inside the region.
We call a state $|\Phi\rangle$ whereby $Q_v|\Psi\rangle=0$ an electric charge excitation; we call a state $|\Phi\rangle$ whereby $B_f|\Psi\rangle=0$ a magnetic flux excitation; if $Q_v|\Psi\rangle=0=B_f|\Psi\rangle$, we call them dyonic charge excitations.
Due to their topological nature, we can explore how to shrink or enlarge the excited region. This perspective leads us to the concept of the tube algebra, which establishes connections between the excited region and the unexcited region.
We will show that the excitations are modules of the tube algebra.
The coalgebra action can also be understood naturally in this framework.

\subsection{Bulk tube algebra as weak Hopf algebra}

\begin{figure}[t]
		\centering
		\includegraphics[width=12cm]{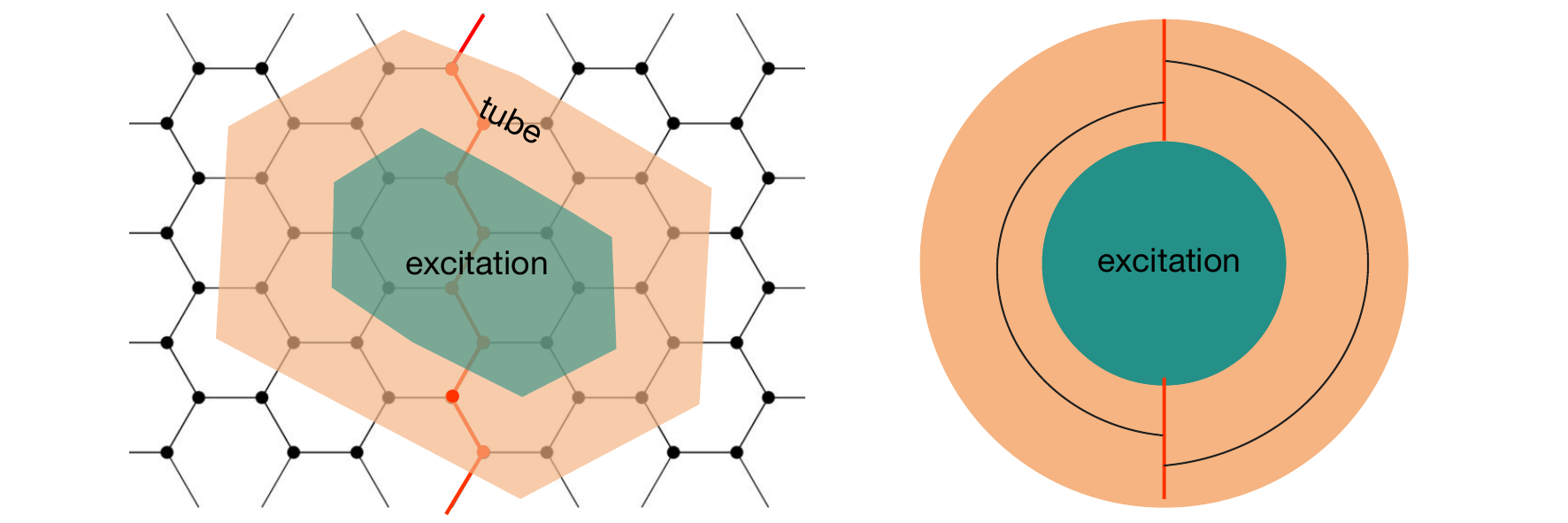}
		\caption{Depiction of the tube algebra. \label{fig:tube}}
\end{figure}

For a generalized string-net $\Sigma$, consider a disk region $\Gamma \subset \Sigma$ which consists of a connected set of vertices. This means that we always cut the disk region for a string-net along the edges, 
resulting in a boundary comprising a collection of faces. It is crucial to recall that, in constructing our generalized multifusion string-net, a tensor product structure is assigned to each edge, with the edge projector ensuring the alignment of the two labels on the same edge.
The boundary $\partial \Gamma$ of $\Gamma$ are edges that connect the vertices inside $\Gamma$ and that outside $\Gamma$.
See Fig.~\ref{fig:tube} for an illustration.
For $e \in \partial \Gamma$, we denote $\partial_i e$ and $\partial_o e$ the vertices of $e$ inside and outside $\Gamma$, respectively. A tube region $\operatorname{Tube}_{\Gamma}$ surrounding $\Gamma$ consists of vertices $\partial_o e$ for all $e \in \partial \Gamma$, along with vertices that form the shortest connected path in $\Sigma$ containing all $\partial_o e$ for $e \in \partial \Gamma$.
The boundary edges of $\operatorname{Tube}_{\Gamma}$  can be divided into two groups: 
$\partial_i\!\operatorname{Tube}_{\Gamma}$ consisting of all edges that connect $\Gamma$, and 
$\partial_o\!\operatorname{Tube}_{\Gamma}$ consisting of all edges that outside the $\Gamma \cup \operatorname{Tube}_{\Gamma}$.

Consider the disk region $\Gamma \subset \Sigma$, we have Hamiltonian for this disk
\begin{equation}
    H[\Gamma]=-\sum_{v\in \Gamma} Q_v -\sum_{f\in \Gamma} B_f.
\end{equation}
Notice that the boundary faces of $\Gamma$ are not contained in the above Hamiltonian, thus $H[\Gamma]$ is an open-boundary Hamiltonian.
This implies that we have a natural decomposition of the ground state space for $H[\Gamma]$: $\mathcal{V}_{\rm GS}[\Gamma]=\bigoplus_{C_{\rm bd}} \mathcal{V}_{\rm GS}[\Gamma]_{C_{\rm bd}}$ with ``$C_{\rm bd}$'' represents the open-boundary conditions.
When we consider the topological properties of the excitations in $\Gamma$, a crucial observation is that the disk region is unchanged under gluing a tube:
\begin{equation*}
    \begin{aligned}
      \begin{tikzpicture}[
  tqft,
  every outgoing boundary component/.style={fill=blue!50},
  outgoing boundary component 3/.style={fill=none,draw=red},
  every incoming boundary component/.style={fill=green!50},
  every lower boundary component/.style={draw=black},
  every upper boundary component/.style={draw,black},
  cobordism/.style={fill=orange!50},
  cobordism edge/.style={draw},
  view from=incoming,
  cobordism height=1cm,
]
\begin{scope}[every node/.style={rotate=90}]
\pic[name=a,
  tqft,
  incoming boundary components=1,
 % skip incoming boundary components=2,
  outgoing boundary components=1
  ];
\end{scope}
\end{tikzpicture}
  \end{aligned}  
    \begin{aligned}
      \begin{tikzpicture}[
  tqft,
  every outgoing boundary component/.style={fill=blue!50},
  outgoing boundary component 3/.style={fill=none,draw=black},
  every incoming boundary component/.style={fill=green!50},
  every lower boundary component/.style={draw=black},
  every upper boundary component/.style={draw,black},
  cobordism/.style={fill=teal!80},
  cobordism edge/.style={draw},
  view from=incoming,
  cobordism height=3cm,
]
\begin{scope}[every node/.style={rotate=90}]
\pic[name=a,
  tqft,
  incoming boundary components=1,
 % skip incoming boundary components=2,
  outgoing boundary components=0
  ];
\end{scope}
\end{tikzpicture}
  \end{aligned}  
  \overset{\text{gluing}}{\rightarrow}
      \begin{aligned}
      \begin{tikzpicture}[
  tqft,
  every outgoing boundary component/.style={fill=blue!50},
  outgoing boundary component 3/.style={fill=none,draw=red},
  every incoming boundary component/.style={fill=green!50},
  every lower boundary component/.style={draw=black},
  every upper boundary component/.style={draw,black},
  cobordism/.style={fill=teal!80},
  cobordism edge/.style={draw},
  view from=incoming,
  cobordism height=5cm,
]
\begin{scope}[every node/.style={rotate=90}]
\pic[name=a,
  tqft,
  incoming boundary components=1,
 % skip incoming boundary components=2,
  outgoing boundary components=0
  ];
\end{scope}
\end{tikzpicture}
  \end{aligned}  
\end{equation*}
This means that we can topologically enlarge the disk $\Gamma$ to $\Gamma'=\Gamma\cup \operatorname{Tube}_{\Gamma}$.
The excitations for $H[\Gamma']$ and $H[\Gamma]$ are topologically equivalent.
From the algebraic point of view, this means that there is an action of tube over the disk.
Consider a general tube $\operatorname{Tube}$, it has two boundaries, both of the boundaries consist of collections of boundary faces. We can also construct a Hamiltonian for the tube
\begin{equation}
    H[\operatorname{Tube}]=-\sum_{v\in \operatorname{Tube}}Q_v -\sum_{f\in \operatorname{Tube}} B_f,
\end{equation}
where by $v\in \operatorname{Tube}$ and $f\in \operatorname{Tube}$ we mean the vertices and faces that fully contained in $\operatorname{Tube}$.
The tube ground state space will also have decomposition depending on the choices of open-boundary conditions for two boundaries $\partial_0\!\operatorname{Tube}$ and $\partial_1\!\operatorname{Tube}$:
\begin{equation}
    \mathcal{V}_{\rm GS}[\operatorname{Tube}]=\bigoplus_{\partial_0\!\operatorname{Tube},\,\partial_1\!\operatorname{Tube}}\mathcal{V}_{\rm GS}[\operatorname{Tube}]_{\partial_0\!\operatorname{Tube},\,\partial_1\!\operatorname{Tube}}.
\end{equation}

\begin{equation*}
  \begin{aligned}
      \begin{tikzpicture}[
  tqft,
  every outgoing boundary component/.style={fill=blue!50},
  outgoing boundary component 3/.style={fill=none,draw=red},
  every incoming boundary component/.style={fill=green!50},
  every lower boundary component/.style={draw=black},
  every upper boundary component/.style={draw,purple},
  cobordism/.style={fill=orange!50},
  cobordism edge/.style={draw},
  view from=incoming,
  cobordism height=1cm,
]
\begin{scope}[every node/.style={rotate=90}]
\pic[name=a,
  tqft,
  incoming boundary components=1,
 % skip incoming boundary components=2,
  outgoing boundary components=1
  ];
\end{scope}
\end{tikzpicture}
  \end{aligned}  
    \begin{aligned}
      \begin{tikzpicture}[
  tqft,
  every outgoing boundary component/.style={fill=blue!50},
  outgoing boundary component 3/.style={fill=none,draw=red},
  every incoming boundary component/.style={fill=green!50},
  every lower boundary component/.style={draw=black},
  every upper boundary component/.style={draw,purple},
  cobordism/.style={fill=orange!50},
  cobordism edge/.style={draw},
  view from=incoming,
  cobordism height=1cm,
]
\begin{scope}[every node/.style={rotate=90}]
\pic[name=a,
  tqft,
  incoming boundary components=1,
 % skip incoming boundary components=2,
  outgoing boundary components=1
  ];
\end{scope}
\end{tikzpicture}
  \end{aligned}  
  \overset{\text{gluing}}{\rightarrow}
      \begin{aligned}
      \begin{tikzpicture}[
  tqft,
  every outgoing boundary component/.style={fill=blue!50},
  outgoing boundary component 3/.style={fill=none,draw=red},
  every incoming boundary component/.style={fill=green!50},
  every lower boundary component/.style={draw=black},
  every upper boundary component/.style={draw,purple},
  cobordism/.style={fill=orange!50},
  cobordism edge/.style={draw},
  view from=incoming,
  cobordism height=1cm,
]
\begin{scope}[every node/.style={rotate=90}]
\pic[name=a,
  tqft,
  incoming boundary components=1,
 % skip incoming boundary components=2,
  outgoing boundary components=1
  ];
  \pic[name=a,
  tqft,
  incoming boundary components=1,
 % skip incoming boundary components=2,
  outgoing boundary components=1
  ];
\end{scope}
\end{tikzpicture}
  \end{aligned}  
\end{equation*}

Consider a domain wall that separates two multifusion string-net phases with input UMFCs $\EC$ and $\ED$, respectively. Based on the Kitaev-Kong construction \cite{Kitaev2012boundary}, we know that the domain wall is characterized by a $\EC|\ED$-bimodule category $\EM$.
Suppose that there is a disk region $\Gamma$ across the domain wall where the stabilizer conditions do not hold for vertices and faces inside the region. Due to the topological nature of the string-net model, we can deform the tube region and $\Gamma$  using the topological local moves such that the tube region only consists of domain wall vertices  and the boundary of excited region only consists of two domain wall edges (see Fig.~\ref{fig:tube}).
Each tube vertex corresponds to a morphism space in the bimodule category. By taking the tensor product of these vertex spaces while applying the edge stability condition, we naturally obtain a vector space. This vector space will be denoted as $\mathbf{Tube}(_{\EC}\EM_{\ED})$.
We will demonstrate that this vector space forms an algebra, and the wall excitations are modules over this algebra.
Let us first consider a special case of trivial domain wall (viewing $\ED$ as a $\ED|\ED$-bimodule).
The topological excitations of the multifusion string-net model can be regarded as a point defect over trivial domain walls.

\begin{definition}[Bulk tube algebra]
The bulk tube algebra $\mathbf{Tube}(_{\ED}\ED_{\ED})$ is spanned by the following basis (up to planar isotopy):
\begin{equation}
   \left\{  
   \begin{aligned}
        \begin{tikzpicture}
             \draw[line width=.6pt,black] (0,0.5)--(0,1.5);
             \draw[line width=.6pt,black] (0,-0.5)--(0,-1.5);
             \draw[red] (0,0.8) arc[start angle=90, end angle=270, radius=0.8];
             \draw[blue] (0,1.3) arc[start angle=90, end angle=-90, radius=1.3];
            \node[ line width=0.6pt, dashed, draw opacity=0.5] (a) at (0,1.7){$h$};
             \node[ line width=0.6pt, dashed, draw opacity=0.5] (a) at (0,-1.7){$c$};
            \node[ line width=0.6pt, dashed, draw opacity=0.5] (a) at (-1,0){$a$};
            \node[ line width=0.6pt, dashed, draw opacity=0.5] (a) at (1.5,0){$b$};
            \node[ line width=0.6pt, dashed, draw opacity=0.5] (a) at (-0.2,-1){$d$};
            \node[ line width=0.6pt, dashed, draw opacity=0.5] (a) at (-0.4,-1.3){$\mu$};
            \node[ line width=0.6pt, dashed, draw opacity=0.5] (a) at (0.2,-0.8){$\nu$};
            \node[ line width=0.6pt, dashed, draw opacity=0.5] (a) at (0,-0.3){$e$};
            \node[ line width=0.6pt, dashed, draw opacity=0.5] (a) at (0,0.3){$f$};
            \node[ line width=0.6pt, dashed, draw opacity=0.5] (a) at (-0.2,1){$g$};
            \node[ line width=0.6pt, dashed, draw opacity=0.5] (a) at (-0.4,1.3){$\gamma$};
            \node[ line width=0.6pt, dashed, draw opacity=0.5] (a) at (0.2,0.8){$\zeta$};
        \end{tikzpicture}
    \end{aligned}
   :\quad  a,\cdots,h\in \Irr(\ED), \mu,\nu,\gamma,\zeta\in \Hom_{\ED} 
    \right\}.
    \label{eq:tubebasis}
\end{equation}
Note that the arrows have been omitted to avoid clutter in the equation; all edges are assumed to be directed upwards. Also, observe that these basis elements must adhere to the graded fusion rule of $\ED$, namely, the following fusions must be nonzero: $a\otimes e$, $d\otimes b$, $a\otimes f$, $g\otimes b$. 
\end{definition}

It is clear that the dimension of $\mathbf{Tube}({_{\ED}\ED_{\ED}})$ is equal to 
\begin{equation}
    \dim \mathbf{Tube}({_{\ED}\ED_{\ED}}) = \sum_{a,\cdots,h\in\Irr(\ED)} N_{c}^{db}N_{d}^{ae}N_{af}^gN_{gb}^h.
\end{equation}

\begin{proposition}
   For any given UMFC $\ED$, the tube algebra $\mathbf{Tube}({_{\ED}\ED_{\ED}})$ is an algebra with the following algebra structure. 
   \begin{itemize}
       \item The unit is given by
       \begin{equation}
           1=\sum_{a,b} \;\begin{aligned}
        \begin{tikzpicture}
             \draw[line width=.6pt,black] (0,0.5)--(0,1.5);
             \draw[line width=.6pt,black] (0,-0.5)--(0,-1.5);
             \draw[red, dotted] (0,0.8) arc[start angle=90, end angle=270, radius=0.8];
             \draw[blue,dotted] (0,1.3) arc[start angle=90, end angle=-90, radius=1.3];
            \node[ line width=0.6pt, dashed, draw opacity=0.5] (a) at (-0.2,-1){$a$};
            \node[ line width=0.6pt, dashed, draw opacity=0.5] (a) at (-0.2,1){$b$};
        \end{tikzpicture}
    \end{aligned}\;,
    \end{equation}
    where the dotted line represents the tensor unit $\one$. It is worth noting that from Eq.~\eqref{eq:tubebasis}, these dotted lines indeed represent $\one_i$, where $i\in I$ depending on the choices of $a,b\in \Irr(\ED)$. For example, if $a,b\in \ED_{i,j}$, the left dotted line represents $\one_i$ and the right dotted line represents $\one_j$.
    \item The multiplication is of the form  
    \begin{align}
       &\mu\left( \begin{aligned}
        \begin{tikzpicture}
             \draw[line width=.6pt,black] (0,0.5)--(0,1.5);
             \draw[line width=.6pt,black] (0,-0.5)--(0,-1.5);
             \draw[red] (0,0.8) arc[start angle=90, end angle=270, radius=0.8];
             \draw[blue] (0,1.3) arc[start angle=90, end angle=-90, radius=1.3];
            \node[ line width=0.6pt, dashed, draw opacity=0.5] (a) at (0,1.7){$h$};
             \node[ line width=0.6pt, dashed, draw opacity=0.5] (a) at (0,-1.7){$c$};
            \node[ line width=0.6pt, dashed, draw opacity=0.5] (a) at (-1,0){$a$};
            \node[ line width=0.6pt, dashed, draw opacity=0.5] (a) at (1.5,0){$b$};
            \node[ line width=0.6pt, dashed, draw opacity=0.5] (a) at (-0.2,-1){$d$};
            \node[ line width=0.6pt, dashed, draw opacity=0.5] (a) at (-0.4,-1.3){$\mu$};
            \node[ line width=0.6pt, dashed, draw opacity=0.5] (a) at (0.2,-0.8){$\nu$};
            \node[ line width=0.6pt, dashed, draw opacity=0.5] (a) at (0,-0.3){$e$};
            \node[ line width=0.6pt, dashed, draw opacity=0.5] (a) at (0,0.3){$f$};
            \node[ line width=0.6pt, dashed, draw opacity=0.5] (a) at (-0.2,1){$g$};
            \node[ line width=0.6pt, dashed, draw opacity=0.5] (a) at (-0.4,1.3){$\gamma$};
            \node[ line width=0.6pt, dashed, draw opacity=0.5] (a) at (0.2,0.8){$\zeta$};
        \end{tikzpicture}
    \end{aligned}\otimes \begin{aligned}
        \begin{tikzpicture}
             \draw[line width=.6pt,black] (0,0.5)--(0,1.5);
             \draw[line width=.6pt,black] (0,-0.5)--(0,-1.5);
             \draw[red] (0,0.8) arc[start angle=90, end angle=270, radius=0.8];
             \draw[blue] (0,1.3) arc[start angle=90, end angle=-90, radius=1.3];
            \node[ line width=0.6pt, dashed, draw opacity=0.5] (a) at (0,1.7){$h'$};
             \node[ line width=0.6pt, dashed, draw opacity=0.5] (a) at (0,-1.7){$c'$};
            \node[ line width=0.6pt, dashed, draw opacity=0.5] (a) at (-1,0){$a'$};
            \node[ line width=0.6pt, dashed, draw opacity=0.5] (a) at (1.5,0){$b'$};
            \node[ line width=0.6pt, dashed, draw opacity=0.5] (a) at (-0.2,-1){$d'$};
            \node[ line width=0.6pt, dashed, draw opacity=0.5] (a) at (-0.4,-1.4){$\mu'$};
            \node[ line width=0.6pt, dashed, draw opacity=0.5] (a) at (0.2,-0.8){$\nu'$};
            \node[ line width=0.6pt, dashed, draw opacity=0.5] (a) at (0,-0.3){$e'$};
            \node[ line width=0.6pt, dashed, draw opacity=0.5] (a) at (0,0.3){$f'$};
            \node[ line width=0.6pt, dashed, draw opacity=0.5] (a) at (-0.2,1){$g'$};
            \node[ line width=0.6pt, dashed, draw opacity=0.5] (a) at (-0.4,1.3){$\gamma'$};
            \node[ line width=0.6pt, dashed, draw opacity=0.5] (a) at (0.2,0.8){$\zeta'$};
        \end{tikzpicture}
    \end{aligned}\right)=\delta_{f,h'}\delta_{e,c'}\;\begin{aligned}
        \begin{tikzpicture}
             \draw[line width=.6pt,black] (0,0.5)--(0,2.5);
             \draw[line width=.6pt,black] (0,-0.5)--(0,-2.5);
             \draw[red] (0,0.8) arc[start angle=90, end angle=270, radius=0.8];
             \draw[blue] (0,1.3) arc[start angle=90, end angle=-90, radius=1.3];
            \node[ line width=0.6pt, dashed, draw opacity=0.5] (a) at (0.2,1.5){$f$};
             \node[ line width=0.6pt, dashed, draw opacity=0.5] (a) at (0.2,-1.5){$e$};
            \node[ line width=0.6pt, dashed, draw opacity=0.5] (a) at (-1,0){$a'$};
            \node[ line width=0.6pt, dashed, draw opacity=0.5] (a) at (1.5,0){$b'$};
            \node[ line width=0.6pt, dashed, draw opacity=0.5] (a) at (-0.2,-1){$d'$};
            \node[ line width=0.6pt, dashed, draw opacity=0.5] (a) at (-0.2,-1.3){$\mu'$};
            \node[ line width=0.6pt, dashed, draw opacity=0.5] (a) at (0.2,-0.8){$\nu'$};
            \node[ line width=0.6pt, dashed, draw opacity=0.5] (a) at (0,-0.3){$e'$};
            \node[ line width=0.6pt, dashed, draw opacity=0.5] (a) at (0,0.3){$f'$};
            \node[ line width=0.6pt, dashed, draw opacity=0.5] (a) at (-0.2,1){$g'$};
            \node[ line width=0.6pt, dashed, draw opacity=0.5] (a) at (-0.4,1.3){$\gamma'$};
            \node[ line width=0.6pt, dashed, draw opacity=0.5] (a) at (0.2,0.8){$\zeta'$};
            \draw[red] (0,1.6) arc[start angle=90, end angle=270, radius=1.6];
            \node[ line width=0.6pt, dashed, draw opacity=0.5] (a) at (-0.3,1.8){$\zeta$};
          \node[ line width=0.6pt, dashed, draw opacity=0.5] (a) at (-0.3,-1.8){$\nu$};
           \draw[blue] (0,2.1) arc[start angle=90, end angle=-90, radius=2.1];
            \node[ line width=0.6pt, dashed, draw opacity=0.5] (a) at (-1.3,0.4){$a$};
                \node[ line width=0.6pt, dashed, draw opacity=0.5] (a) at (1.8,0.4){$b$};
         \node[ line width=0.6pt, dashed, draw opacity=0.5] (a) at (0.2,1.8){$g$};
          \node[ line width=0.6pt, dashed, draw opacity=0.5] (a) at (0.2,-1.8){$d$};
        \node[ line width=0.6pt, dashed, draw opacity=0.5] (a) at (0.2,2.3){$\gamma$};
        \node[ line width=0.6pt, dashed, draw opacity=0.5] (a) at (0.2,-2.3){$\mu$};
       \node[ line width=0.6pt, dashed, draw opacity=0.5] (a) at (-0.3,2.5){$h$};
       \node[ line width=0.6pt, dashed, draw opacity=0.5] (a) at (-0.3,-2.5){$c$};
        \end{tikzpicture}
    \end{aligned}\nonumber\\
    =&\sum [(F_{ag'b'}^g)^{-1}]_{f \gamma' \zeta}^{i\rho \sigma}[(F^{ad'b'}_d)^{-1}]_{e \mu' \nu }^{j \omega \tau} [(F_{aa'f'}^i)^{-1}]_{g' \zeta' \rho}^{k\alpha \beta} [(F^{aa'e'}_j)^{-1}]_{d' \omega \nu'}^{k \alpha \beta' } \nonumber\\
    &\quad \quad \quad [F_{ib'b}^h]_{g\sigma\gamma}^{s \xi \lambda}[F_c^{jb'b}]_{d\tau\mu}^{s\xi \lambda'}\delta_{f,h'}\delta_{e,c'} \sqrt{\frac{d_ad_{a'}}{d_k}} \sqrt{\frac{d_bd_{b'}}{d_s}}   \begin{aligned}
        \begin{tikzpicture}
             \draw[line width=.6pt,black] (0,0.5)--(0,1.5);
             \draw[line width=.6pt,black] (0,-0.5)--(0,-1.5);
             \draw[red] (0,0.8) arc[start angle=90, end angle=270, radius=0.8];
             \draw[blue] (0,1.3) arc[start angle=90, end angle=-90, radius=1.3];
            \node[ line width=0.6pt, dashed, draw opacity=0.5] (a) at (0,1.7){$h$};
             \node[ line width=0.6pt, dashed, draw opacity=0.5] (a) at (0,-1.7){$c$};
            \node[ line width=0.6pt, dashed, draw opacity=0.5] (a) at (-1,0){$k$};
            \node[ line width=0.6pt, dashed, draw opacity=0.5] (a) at (1.5,0){$s$};
            \node[ line width=0.6pt, dashed, draw opacity=0.5] (a) at (-0.2,-1){$j$};
            \node[ line width=0.6pt, dashed, draw opacity=0.5] (a) at (-0.3,-1.4){$\lambda'$};
            \node[ line width=0.6pt, dashed, draw opacity=0.5] (a) at (0.2,-0.8){$\beta'$};
            \node[ line width=0.6pt, dashed, draw opacity=0.5] (a) at (0,-0.3){$e'$};
            \node[ line width=0.6pt, dashed, draw opacity=0.5] (a) at (0,0.3){$f'$};
            \node[ line width=0.6pt, dashed, draw opacity=0.5] (a) at (-0.2,1){$i$};
            \node[ line width=0.6pt, dashed, draw opacity=0.5] (a) at (-0.4,1.3){$\lambda$};
            \node[ line width=0.6pt, dashed, draw opacity=0.5] (a) at (0.2,0.8){$\beta$};
        \end{tikzpicture}
    \end{aligned}, \label{eq:tube-mult}
     \end{align}
     where the summation is taken over all the upper indices. 
     %$i,j,k,s\in\Irr(\ED)$, $\rho,\sigma,\tau,\omega,\alpha,\beta,\beta',\xi,\lambda,\lambda'\in\Hom_\ED$.
   \end{itemize}
\end{proposition}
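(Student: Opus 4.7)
The plan is to establish the algebra structure by interpreting the multiplication diagrammatically and reducing the composite tube to the standard basis via the topological local moves introduced in Eqs.~\eqref{eq:loopev}, \eqref{eq:paraev} and \eqref{eq:F-move1}--\eqref{eq:F-move4}. First I would observe that the product of two basis tubes can be nonzero only when the inner cap boundary of the first tube matches the outer cap boundary of the second, which accounts for the $\delta_{f,h'}\delta_{e,c'}$ factors; diagrammatically, this corresponds to stacking the two tubes along the shared middle disk, producing a ``double tube'' with two concentric red arcs (labeled $a$ and $a'$) on the left and two concentric blue arcs ($b$ and $b'$) on the right, together with the inherited tree structures on the upper and lower caps.

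The second step is to reduce this double tube to a single tube in the canonical form \eqref{eq:tubebasis}. On the red side I would fuse $a\otimes a'\to k$ via the parallel move, and similarly on the blue side $b\otimes b'\to s$; these introduce the normalization factors $\sqrt{d_a d_{a'}/d_k}$ and $\sqrt{d_b d_{b'}/d_s}$ visible in Eq.~\eqref{eq:tube-mult}. The fused $k$ and $s$ arcs are not yet attached to the tube in canonical form: the six F-moves appearing in the formula are precisely those needed to reorganize the trivalent trees around each of the four cusps. The two $(F)^{-1}$ factors $(F_{ag'b'}^g)^{-1}$ and $(F^{ad'b'}_d)^{-1}$ collapse the $b'$ arc onto the outer blue strand at the top and bottom; the two $(F)^{-1}$ factors $(F_{aa'f'}^i)^{-1}$ and $(F^{aa'e'}_j)^{-1}$ collapse the $a'$ arc into the fused $k$ strand; and the two $F$ factors $F_{ib'b}^h$ and $F_c^{jb'b}$ then bind the $b'$ strand into the fused $s$ arc on the right. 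Tracking these rearrangements vertex-by-vertex reproduces the explicit formula.

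For associativity, I would expand both $\mu\circ(\mu\otimes\id)$ and $\mu\circ(\id\otimes\mu)$ on triples of basis tubes. Diagrammatically, both sides represent the same triple-stacked tube reduced to a single tube; the equality of the two orders of F-moves is forced by the Pachner equivalence of the underlying cellulations and ultimately reduces to iterated applications of the pentagon relation of Fig.~\ref{fig:pentagon}, applied independently on the red and the blue sides. I expect the main obstacle to lie precisely in the combinatorial bookkeeping of matching the two resulting sums of F-symbol products index by index; however, no genuinely new ingredient beyond pentagon and the unitarity identity \eqref{eq:F-symbol-delta} is needed. Finally, the unit axioms are transparent: the element $1$ inserts vacuum arcs $\one_i$ on both the red and blue sides of a tube, and by the unit isomorphisms of $\ED$ together with the grading condition $\one_i\otimes X=X$ for $X\in \ED_{i,j}$, the corresponding F-symbols collapse to identities, returning the original basis tube.
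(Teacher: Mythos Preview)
Your approach is correct and matches the paper's: associativity follows because both bracketings yield the same triple-stacked diagram, and the unit property is immediate from tensoring with $\one$. The paper's own proof is in fact only two sentences and does not invoke the pentagon relation explicitly---associativity is taken as evident at the level of the unreduced stacked diagram, before any F-move reduction, so your additional bookkeeping with the F-symbol expansion, while not wrong, is more than is needed.
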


\begin{proof}
From the definition, it is evident that multiplication is associative, as the diagrams for both $(\bullet \cdot \bullet)\cdot \bullet$ and $\bullet\cdot (\bullet\cdot \bullet)$ are the same. The unit property follows directly from the definition.
\end{proof}

To simplify the notation, we will also represent the basis diagram in Eq.~\eqref{eq:tubebasis} as $[a;b]_{c,d,e;\mu,\nu}^{f,g,h;\zeta,\gamma}$.
In the same vein, we denote the diagram on the right-hand side of the multiplication as $[a,a';b',b]_{c,d,c',d',e';\mu,\nu,\mu',\nu'}^{f',g',h',g,h;\zeta',\gamma',\zeta,\gamma}$.
For the more general case, a similar notation applies.
Since the excited region has two boundary edges, we can denote it as $[\Psi]_x^y\in \Hom_{\ED}(x,y)$.
The space $\mathbf{Disk}_{_{\ED}\ED_{\ED}}=\bigoplus_{x,y\in \Irr(\ED)}\Hom_{\ED}(x,y)$ forms a left module over $\mathbf{Tube}({_{\ED}\ED_{\ED}})$ with the action given by
\begin{equation}\label{eq:TubeModule}
       \begin{aligned}
        \begin{tikzpicture}
             \draw[line width=.6pt,black] (0,0.5)--(0,1.5);
             \draw[line width=.6pt,black] (0,-0.5)--(0,-1.5);
             \draw[red] (0,0.8) arc[start angle=90, end angle=270, radius=0.8];
             \draw[blue] (0,1.3) arc[start angle=90, end angle=-90, radius=1.3];
            \node[ line width=0.6pt, dashed, draw opacity=0.5] (a) at (0,1.7){$h$};
             \node[ line width=0.6pt, dashed, draw opacity=0.5] (a) at (0,-1.7){$c$};
            \node[ line width=0.6pt, dashed, draw opacity=0.5] (a) at (-1,0){$a$};
            \node[ line width=0.6pt, dashed, draw opacity=0.5] (a) at (1.5,0){$b$};
            \node[ line width=0.6pt, dashed, draw opacity=0.5] (a) at (-0.2,-1){$d$};
            \node[ line width=0.6pt, dashed, draw opacity=0.5] (a) at (-0.4,-1.3){$\mu$};
            \node[ line width=0.6pt, dashed, draw opacity=0.5] (a) at (0.2,-0.8){$\nu$};
            \node[ line width=0.6pt, dashed, draw opacity=0.5] (a) at (0,-0.3){$e$};
            \node[ line width=0.6pt, dashed, draw opacity=0.5] (a) at (0,0.3){$f$};
            \node[ line width=0.6pt, dashed, draw opacity=0.5] (a) at (-0.2,1){$g$};
            \node[ line width=0.6pt, dashed, draw opacity=0.5] (a) at (-0.4,1.3){$\gamma$};
            \node[ line width=0.6pt, dashed, draw opacity=0.5] (a) at (0.2,0.8){$\zeta$};
        \end{tikzpicture}
    \end{aligned}\triangleright 
    \begin{aligned}
        \begin{tikzpicture}
             \draw[line width=.6pt,black] (0,0)--(0,0.6);
             \draw [draw=black,fill=gray, fill opacity=0.2] (-0.3,0) rectangle (0.3,-0.6);
             \draw[line width=.6pt,black] (0,-0.6)--(0,-1.2);
              \node[ line width=0.6pt, dashed, draw opacity=0.5] (a) at (0,-.3){$\Psi$};
             \node[ line width=0.6pt, dashed, draw opacity=0.5] (a) at (0.2,-.9){$x$};
               \node[ line width=0.6pt, dashed, draw opacity=0.5] (a) at (0.2,.3){$y$};
        \end{tikzpicture}
    \end{aligned}
    =\frac{\delta_{x,e}\delta_{y,f}}{\sqrt{d_ed_f}}
     \begin{aligned}
        \begin{tikzpicture}
             \draw[line width=.6pt,black] (0,0.3)--(0,1.5);
             \draw[line width=.6pt,black] (0,-0.3)--(0,-1.5);
             \draw[red] (0,0.8) arc[start angle=90, end angle=270, radius=0.8];
             \draw[blue] (0,1.3) arc[start angle=90, end angle=-90, radius=1.3];
            \node[ line width=0.6pt, dashed, draw opacity=0.5] (a) at (0,1.7){$h$};
             \node[ line width=0.6pt, dashed, draw opacity=0.5] (a) at (0,-1.7){$c$};
            \node[ line width=0.6pt, dashed, draw opacity=0.5] (a) at (-1,0){$a$};
            \node[ line width=0.6pt, dashed, draw opacity=0.5] (a) at (1.5,0){$b$};
            \node[ line width=0.6pt, dashed, draw opacity=0.5] (a) at (-0.2,-1){$d$};
            \node[ line width=0.6pt, dashed, draw opacity=0.5] (a) at (-0.4,-1.3){$\mu$};
            \node[ line width=0.6pt, dashed, draw opacity=0.5] (a) at (0.2,-0.8){$\nu$};
            \node[ line width=0.6pt, dashed, draw opacity=0.5] (a) at (-0.2,-0.6){$e$};
            \node[ line width=0.6pt, dashed, draw opacity=0.5] (a) at (-0.2,0.5){$f$};
            \node[ line width=0.6pt, dashed, draw opacity=0.5] (a) at (-0.2,1){$g$};
            \node[ line width=0.6pt, dashed, draw opacity=0.5] (a) at (-0.4,1.3){$\gamma$};
            \node[ line width=0.6pt, dashed, draw opacity=0.5] (a) at (0.2,0.8){$\zeta$};
             \draw [draw=black,fill=gray, fill opacity=0.2] (-0.3,-0.3) rectangle (0.3,0.3);
             \node[ line width=0.6pt, dashed, draw opacity=0.5] (a) at (0,0){$\Psi$};
        \end{tikzpicture}
    \end{aligned}\in \mathbf{Disk}_{_{\ED}\ED_{\ED}}. 
\end{equation}
It can be easily verified that this is indeed a module action by using the topological local moves.
We observe that the physical properties of a given region $\Gamma$ are encoded in the boundary conditions of the region. State renormalization allows us to treat the excited region as a left module over the tube algebra. The topological charges are classified by the irreducible representations of the tube algebra. We can also examine the relationship between the excited region before and after the topological local moves; as we will show later, their corresponding tube algebras are Morita equivalent.

Before we discuss the representation theory of the tube algebra, let us first construct the coalgebra structure for the tube algebra. Suppose there are two excited regions on the domain wall; we can consider the action of the tube algebra over these two regions.
Using parallel evaluation, we obtain the following:
\begin{equation}
    \begin{aligned}
        \begin{tikzpicture}
             \draw[line width=.6pt,black] (0,0)--(0,1.5);
             \draw [draw=black,fill=gray, fill opacity=0.2] (-0.3,0) rectangle (0.3,-0.6);
              \draw[line width=.6pt,black] (0,-0.6)--(0,-1.5);
             \draw [draw=black,fill=gray, fill opacity=0.2] (-0.3,1.5) rectangle (0.3,2.1);  
            \draw[line width=.6pt,black] (0,2.1)--(0,3);
            \draw[red] (0,2.4) arc[start angle=90, end angle=270, radius=1.65];
             \draw[blue] (0,2.7) arc[start angle=90, end angle=-90, radius=1.95];
            \node[ line width=0.6pt, dashed, draw opacity=0.5] (a) at (0.2,0.8){$i$};
            \node[ line width=0.6pt, dashed, draw opacity=0.5] (a) at (-1.35,0.8){$a$};
              \node[ line width=0.6pt, dashed, draw opacity=0.5] (a) at (1.65,0.8){$b$};
            \node[ line width=0.6pt, dashed, draw opacity=0.5] (a) at (0,-1.8){$c$};
            \node[ line width=0.6pt, dashed, draw opacity=0.5] (a) at (0.25,-0.8){$e$};
              \node[ line width=0.6pt, dashed, draw opacity=0.5] (a) at (-0.2,-1.1){$d$};
            \node[ line width=0.6pt, dashed, draw opacity=0.5] (a) at (0.2,-1.4){$\mu$};
                \node[ line width=0.6pt, dashed, draw opacity=0.5] (a) at (0.15,-1){$\nu$};
               \node[ line width=0.6pt, dashed, draw opacity=0.5] (a) at (0,-.3){$\Phi$}; 
                  \node[ line width=0.6pt, dashed, draw opacity=0.5] (a) at (0,1.8){$\Psi$}; 
           \node[ line width=0.6pt, dashed, draw opacity=0.5] (a) at (0.25,2.3){$f$};             \node[ line width=0.6pt, dashed, draw opacity=0.5] (a) at (-0.25,2.3){$\zeta$};  
               \node[ line width=0.6pt, dashed, draw opacity=0.5] (a) at (-0.3,2.6){$g$};
                    \node[ line width=0.6pt, dashed, draw opacity=0.5] (a) at (0.25,2.7){$\gamma$};  
        \node[ line width=0.6pt, dashed, draw opacity=0.5] (a) at (0,3.2){$h$};  
        \end{tikzpicture}
    \end{aligned}
    =
    \sum_{j,k,\rho,\sigma} \sqrt{ \frac{d_j}{d_ad_i}}\sqrt{ \frac{d_k}{d_jd_b}}
       \begin{aligned}
        \begin{tikzpicture}
             \draw[line width=.6pt,black] (0,0)--(0,1.5);
             \draw [draw=black,fill=gray, fill opacity=0.2] (-0.3,0) rectangle (0.3,-0.6);
              \draw[line width=.6pt,black] (0,-0.6)--(0,-1.5);
             \draw [draw=black,fill=gray, fill opacity=0.2] (-0.3,1.5) rectangle (0.3,2.1);  
            \draw[line width=.6pt,black] (0,2.1)--(0,3);
            \draw[red] (0,2.4) arc[start angle=90, end angle=270, radius=0.6];
             \draw[blue] (0,2.7) arc[start angle=90, end angle=-90, radius=0.9];
            \draw[red] (0,0.3) arc[start angle=90, end angle=270, radius=0.6];
    \draw[blue] (0,0.6) arc[start angle=90, end angle=-90, radius=0.9];
             \node[ line width=0.6pt, dashed, draw opacity=0.5] (a) at (0,-1.8){$c$};
            \node[ line width=0.6pt, dashed, draw opacity=0.5] (a) at (0.25,-0.8){$e$};
              \node[ line width=0.6pt, dashed, draw opacity=0.5] (a) at (-0.2,-1.1){$d$};
            \node[ line width=0.6pt, dashed, draw opacity=0.5] (a) at (0.2,-1.4){$\mu$};
                \node[ line width=0.6pt, dashed, draw opacity=0.5] (a) at (0.15,-1){$\nu$};
               \node[ line width=0.6pt, dashed, draw opacity=0.5] (a) at (0,-.3){$\Phi$}; 
                  \node[ line width=0.6pt, dashed, draw opacity=0.5] (a) at (0,1.8){$\Psi$}; 
      \node[ line width=0.6pt, dashed, draw opacity=0.5] (a) at (-0.45,-.3){$a$}; 
                  \node[ line width=0.6pt, dashed, draw opacity=0.5] (a) at (0.7,1.8){$b$};
  \node[ line width=0.6pt, dashed, draw opacity=0.5] (a) at (-0.45,1.8){$a$};
      \node[ line width=0.6pt, dashed, draw opacity=0.5] (a) at (0.7,-.3){$b$};   
        \node[ line width=0.6pt, dashed, draw opacity=0.5] (a) at (0.25,2.3){$f$};             \node[ line width=0.6pt, dashed, draw opacity=0.5] (a) at (-0.25,2.3){$\zeta$};  
               \node[ line width=0.6pt, dashed, draw opacity=0.5] (a) at (-0.3,2.6){$g$};
                    \node[ line width=0.6pt, dashed, draw opacity=0.5] (a) at (0.25,2.7){$\gamma$};  
        \node[ line width=0.6pt, dashed, draw opacity=0.5] (a) at (0,3.2){$h$};  
       \node[ line width=0.6pt, dashed, draw opacity=0.5] (a) at (0.25,1.2){$\sigma$};  
        \node[ line width=0.6pt, dashed, draw opacity=0.5] (a) at (-0.2,1.35){$i$}; 
          \node[ line width=0.6pt, dashed, draw opacity=0.5] (a) at (0.25,0.8){$\rho$};  
              \node[ line width=0.6pt, dashed, draw opacity=0.5] (a) at (-0.25,0.5){$\rho$};
                       \node[ line width=0.6pt, dashed, draw opacity=0.5] (a) at (0.25,0.35){$\sigma$}; 
             \node[ line width=0.6pt, dashed, draw opacity=0.5] (a) at (-0.2,1){$j$};               \node[ line width=0.6pt, dashed, draw opacity=0.5] (a) at (-0.4,0.75){$k$};   \node[ line width=0.6pt, dashed, draw opacity=0.5] (a) at (-0.1,0.4){$j$}; 
               \node[ line width=0.6pt, dashed, draw opacity=0.5] (a) at (0.1,0.15){$i$};  
        \end{tikzpicture}
    \end{aligned}. \label{eq:VerticalFusion}
\end{equation}
This naturally gives us the comultiplication structure, as follows.

\begin{proposition} \label{prop:tube-coalgebra}
 For any given UMFC $\ED$, the tube algebra $\mathbf{Tube}({_{\ED}\ED_{\ED}})$ is a coalgebra with the following coalgebra structure. 
\begin{itemize}
    \item The counit is of the form: 
    \begin{equation}
    \varepsilon\left(\begin{aligned}\begin{tikzpicture}
             \draw[line width=.6pt,black] (0,0.5)--(0,1.5);
             \draw[line width=.6pt,black] (0,-0.5)--(0,-1.5);
             \draw[red] (0,0.8) arc[start angle=90, end angle=270, radius=0.8];
             \draw[blue] (0,1.3) arc[start angle=90, end angle=-90, radius=1.3];
            \node[ line width=0.6pt, dashed, draw opacity=0.5] (a) at (0,1.7){$h$};
             \node[ line width=0.6pt, dashed, draw opacity=0.5] (a) at (0,-1.7){$c$};
            \node[ line width=0.6pt, dashed, draw opacity=0.5] (a) at (-1,0){$a$};
            \node[ line width=0.6pt, dashed, draw opacity=0.5] (a) at (1.5,0){$b$};
            \node[ line width=0.6pt, dashed, draw opacity=0.5] (a) at (-0.2,-1){$d$};
            \node[ line width=0.6pt, dashed, draw opacity=0.5] (a) at (-0.4,-1.3){$\mu$};
            \node[ line width=0.6pt, dashed, draw opacity=0.5] (a) at (0.2,-0.8){$\nu$};
            \node[ line width=0.6pt, dashed, draw opacity=0.5] (a) at (0,-0.3){$e$};
            \node[ line width=0.6pt, dashed, draw opacity=0.5] (a) at (0,0.3){$f$};
            \node[ line width=0.6pt, dashed, draw opacity=0.5] (a) at (-0.2,1){$g$};
            \node[ line width=0.6pt, dashed, draw opacity=0.5] (a) at (-0.4,1.3){$\gamma$};
            \node[ line width=0.6pt, dashed, draw opacity=0.5] (a) at (0.2,0.8){$\zeta$};
        \end{tikzpicture}
    \end{aligned}\right)
    =\frac{\delta_{e,f}\delta_{c,h}}{d_h}
    \begin{aligned}\begin{tikzpicture}
             \draw[line width=.6pt,black] (0,-1.8)--(0,1.8);
            % \draw[line width=.6pt,black] (0,-0.5)--(0,-1.5);
             \draw[red] (0,0.8) arc[start angle=90, end angle=270, radius=0.8];
             \draw[blue] (0,1.3) arc[start angle=90, end angle=-90, radius=1.3];
             \draw[black] (0,1.8) arc[start angle=90, end angle=-90, radius=1.8];
            \node[ line width=0.6pt, dashed, draw opacity=0.5] (a) at (1.6,0){$h$};
           %  \node[ line width=0.6pt, dashed, draw opacity=0.5] (a) at (0,-1.7){$c$};
            \node[ line width=0.6pt, dashed, draw opacity=0.5] (a) at (-0.6,0){$a$};
            \node[ line width=0.6pt, dashed, draw opacity=0.5] (a) at (1.1,0){$b$};
            \node[ line width=0.6pt, dashed, draw opacity=0.5] (a) at (-0.2,-1){$d$};
            \node[ line width=0.6pt, dashed, draw opacity=0.5] (a) at (-0.4,-1.3){$\mu$};
            \node[ line width=0.6pt, dashed, draw opacity=0.5] (a) at (0.2,-0.8){$\nu$};
          %  \node[ line width=0.6pt, dashed, draw opacity=0.5] (a) at (0,-0.3){$e$};
            \node[ line width=0.6pt, dashed, draw opacity=0.5] (a) at (0.3,0){$f$};
            \node[ line width=0.6pt, dashed, draw opacity=0.5] (a) at (-0.2,1){$g$};
            \node[ line width=0.6pt, dashed, draw opacity=0.5] (a) at (-0.4,1.3){$\gamma$};
            \node[ line width=0.6pt, dashed, draw opacity=0.5] (a) at (0.2,0.8){$\zeta$};
        \end{tikzpicture}
    \end{aligned}=\delta_{e,f}\delta_{c,h}\delta_{d,g}\delta_{\nu,\zeta}\delta_{\mu,\gamma} \sqrt{\frac{d_a d_f d_b}{d_h}}.
    \end{equation}
    \item The comultiplication is given by
    \begin{equation}
    \Delta\left(\begin{aligned}\begin{tikzpicture}
             \draw[line width=.6pt,black] (0,0.5)--(0,1.5);
             \draw[line width=.6pt,black] (0,-0.5)--(0,-1.5);
             \draw[red] (0,0.8) arc[start angle=90, end angle=270, radius=0.8];
             \draw[blue] (0,1.3) arc[start angle=90, end angle=-90, radius=1.3];
            \node[ line width=0.6pt, dashed, draw opacity=0.5] (a) at (0,1.7){$h$};
             \node[ line width=0.6pt, dashed, draw opacity=0.5] (a) at (0,-1.7){$c$};
            \node[ line width=0.6pt, dashed, draw opacity=0.5] (a) at (-1,0){$a$};
            \node[ line width=0.6pt, dashed, draw opacity=0.5] (a) at (1.5,0){$b$};
            \node[ line width=0.6pt, dashed, draw opacity=0.5] (a) at (-0.2,-1){$d$};
            \node[ line width=0.6pt, dashed, draw opacity=0.5] (a) at (-0.4,-1.3){$\mu$};
            \node[ line width=0.6pt, dashed, draw opacity=0.5] (a) at (0.2,-0.8){$\nu$};
            \node[ line width=0.6pt, dashed, draw opacity=0.5] (a) at (0,-0.3){$e$};
            \node[ line width=0.6pt, dashed, draw opacity=0.5] (a) at (0,0.3){$f$};
            \node[ line width=0.6pt, dashed, draw opacity=0.5] (a) at (-0.2,1){$g$};
            \node[ line width=0.6pt, dashed, draw opacity=0.5] (a) at (-0.4,1.3){$\gamma$};
            \node[ line width=0.6pt, dashed, draw opacity=0.5] (a) at (0.2,0.8){$\zeta$};
        \end{tikzpicture}
    \end{aligned}\right)
    =\sum_{i,j,k,\rho,\sigma} \sqrt{ \frac{d_j}{d_ad_i}}\sqrt{ \frac{d_k}{d_jd_b}}
    \begin{aligned}\begin{tikzpicture}
             \draw[line width=.6pt,black] (0,0.5)--(0,1.5);
             \draw[line width=.6pt,black] (0,-0.5)--(0,-1.5);
             \draw[red] (0,0.8) arc[start angle=90, end angle=270, radius=0.8];
             \draw[blue] (0,1.3) arc[start angle=90, end angle=-90, radius=1.3];
            \node[ line width=0.6pt, dashed, draw opacity=0.5] (a) at (0,1.7){$h$};
             \node[ line width=0.6pt, dashed, draw opacity=0.5] (a) at (0,-1.7){$k$};
            \node[ line width=0.6pt, dashed, draw opacity=0.5] (a) at (-1,0){$a$};
            \node[ line width=0.6pt, dashed, draw opacity=0.5] (a) at (1.5,0){$b$};
            \node[ line width=0.6pt, dashed, draw opacity=0.5] (a) at (-0.2,-1){$j$};
            \node[ line width=0.6pt, dashed, draw opacity=0.5] (a) at (-0.3,-1.4){$\rho$};
            \node[ line width=0.6pt, dashed, draw opacity=0.5] (a) at (0.2,-0.8){$\sigma$};
            \node[ line width=0.6pt, dashed, draw opacity=0.5] (a) at (0,-0.3){$i$};
            \node[ line width=0.6pt, dashed, draw opacity=0.5] (a) at (0,0.3){$f$};
            \node[ line width=0.6pt, dashed, draw opacity=0.5] (a) at (-0.2,1){$g$};
            \node[ line width=0.6pt, dashed, draw opacity=0.5] (a) at (-0.4,1.3){$\gamma$};
            \node[ line width=0.6pt, dashed, draw opacity=0.5] (a) at (0.2,0.8){$\zeta$};
        \end{tikzpicture}
    \end{aligned} 
    \otimes 
    \begin{aligned}\begin{tikzpicture}
             \draw[line width=.6pt,black] (0,0.5)--(0,1.5);
             \draw[line width=.6pt,black] (0,-0.5)--(0,-1.5);
             \draw[red] (0,0.8) arc[start angle=90, end angle=270, radius=0.8];
             \draw[blue] (0,1.3) arc[start angle=90, end angle=-90, radius=1.3];
            \node[ line width=0.6pt, dashed, draw opacity=0.5] (a) at (0,1.7){$k$};
             \node[ line width=0.6pt, dashed, draw opacity=0.5] (a) at (0,-1.7){$c$};
            \node[ line width=0.6pt, dashed, draw opacity=0.5] (a) at (-1,0){$a$};
            \node[ line width=0.6pt, dashed, draw opacity=0.5] (a) at (1.5,0){$b$};
            \node[ line width=0.6pt, dashed, draw opacity=0.5] (a) at (-0.2,-1){$d$};
            \node[ line width=0.6pt, dashed, draw opacity=0.5] (a) at (-0.4,-1.3){$\mu$};
            \node[ line width=0.6pt, dashed, draw opacity=0.5] (a) at (0.2,-0.8){$\nu$};
            \node[ line width=0.6pt, dashed, draw opacity=0.5] (a) at (0,-0.3){$e$};
            \node[ line width=0.6pt, dashed, draw opacity=0.5] (a) at (0,0.3){$i$};
            \node[ line width=0.6pt, dashed, draw opacity=0.5] (a) at (-0.2,1){$j$};
            \node[ line width=0.6pt, dashed, draw opacity=0.5] (a) at (-0.4,1.3){$\rho$};
            \node[ line width=0.6pt, dashed, draw opacity=0.5] (a) at (0.2,0.8){$\sigma$};
        \end{tikzpicture}
\end{aligned}\;.\label{eq:TubeComultiplication}
    \end{equation}
In the diagrammatic representation, if we omit the summation over $i$ and the coefficient factors, this means that we insert an $i$ domain wall edge between $e$ and $f$ and then perform the parallel move for $a$ and $i$, then for $b$ and $j$.
    
\end{itemize}
\end{proposition}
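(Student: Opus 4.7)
The plan is to verify the two coalgebra axioms, namely coassociativity and the counit laws, by a combination of diagrammatic and $F$-symbol bookkeeping. First I would check that the formula \eqref{eq:TubeComultiplication} is well-defined: since the summation indices $i,j,k,\rho,\sigma$ are constrained by the graded fusion rules (e.g., $i$ must be a simple subobject of $e\otimes$(some $\one$-piece), $j$ must arise from $a\otimes i$, etc.), one needs to observe that the index $k$ which appears as the bottom of the top tube and the top of the bottom tube forces the two tensor factors to lie in matching homogeneous components of $\mathbf{Tube}({_{\ED}\ED_{\ED}})$. This is immediate once one reads $\Delta$ as ``insert an additional object $i$ between the $e$-edge and the $f$-edge in the middle of the tube and apply two successive parallel moves, first on $(a,i)$ and then on $(b,j)$,'' since insertion respects the grading.

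Next I would establish coassociativity $(\Delta\otimes\id)\comp\Delta=(\id\otimes\Delta)\comp\Delta$. Diagrammatically this corresponds to inserting \emph{two} additional middle edges $i_1,i_2$ and applying the parallel move \eqref{eq:paraev} twice, either starting from the splitting $(e\to i_1\to(i_2\to f))$ or from $((e\to i_1)\to i_2\to f)$. Since the underlying operation is simply the completeness relation Eq.~\eqref{eq:paraev} applied repeatedly, both orders yield the same total splitting once one reassociates using the $F$-moves of Eqs.~\eqref{eq:F-move1}--\eqref{eq:F-move4}. The equality of the resulting $F$-symbol products is exactly the pentagon relation. A careful comparison of the normalization prefactors $\sqrt{d_j/(d_ad_i)}$ and $\sqrt{d_k/(d_jd_b)}$ in \eqref{eq:TubeComultiplication}, iterated twice, shows that the quantum dimension factors telescope identically on both sides, so coassociativity reduces to the pentagon.

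For the counit laws, consider $(\varepsilon\otimes\id)\comp\Delta$ applied to the basis element $[a;b]_{c,d,e;\mu,\nu}^{f,g,h;\zeta,\gamma}$. Plugging \eqref{eq:TubeComultiplication} into the formula for $\varepsilon$, the top tensor factor is a tube with bottom labels $(k,j,i;\rho,\sigma)$ and top labels $(h,g,f;\zeta,\gamma)$; the counit forces $i=f$ and $k=h$, and extracts the factor $\delta_{d,g}\delta_{\nu,\zeta}\delta_{\mu,\gamma}\sqrt{d_ad_fd_b/d_h}$ accompanied by closing the $i$-loop via Eq.~\eqref{eq:loopev}. Using $d_j=d_ad_i$-type relations encoded in the normalization $Y^{ab}_c=\sqrt{d_ad_b/d_c}$, the prefactors $\sqrt{d_j/(d_ad_i)}\sqrt{d_k/(d_jd_b)}$ cancel the $\sqrt{d_ad_fd_b/d_h}$ from $\varepsilon$, leaving exactly the original basis vector. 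The dual identity $(\id\otimes\varepsilon)\comp\Delta=\id$ follows by an entirely parallel computation with $\varepsilon$ acting on the bottom factor.

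The main obstacle is purely bookkeeping: tracking the graded $\one_i$ labels (since the dotted strands in \eqref{eq:TubeComultiplication} mean different simple vacuum objects depending on $a,b$), making sure that the $F$-symbol chains produced by the iterated parallel moves on both sides of the pentagon match exactly, and that all $\sqrt{d_\bullet}$ normalizations cancel so that counitality holds on the nose rather than up to a scalar. Both $F$-symbol identities and dimension arithmetic are standard in the multifusion graphical calculus developed in Section \ref{Sec:SNmulti}, so no new coherence data beyond the pentagon of $\ED$ is needed.
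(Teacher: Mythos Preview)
Your approach is the same as the paper's --- direct verification of the axioms --- and the counit argument is essentially right (modulo a labelling slip: applying $\varepsilon$ to the first tensor factor yields $\delta_{j,g}\delta_{\sigma,\zeta}\delta_{\rho,\gamma}$, not $\delta_{d,g}\delta_{\nu,\zeta}\delta_{\mu,\gamma}$; after these deltas collapse the sum the prefactors cancel arithmetically as $\sqrt{d_g/(d_ad_f)}\sqrt{d_h/(d_gd_b)}\cdot\sqrt{d_ad_fd_b/d_h}=1$, with no ``$d_j=d_ad_i$''-type relation needed or available).

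Where you overshoot is coassociativity. No $F$-moves and no pentagon are involved: writing $\Delta(X)=\sum_{i,j,k,\rho,\sigma}C(i,j,k)\,X^{(1)}_{i,j,k;\rho,\sigma}\otimes X^{(2)}_{i,j,k;\rho,\sigma}$ with $C(i,j,k)=\sqrt{d_j/(d_ad_i)}\sqrt{d_k/(d_jd_b)}$, the two iterated coproducts $(\Delta\otimes\id)\Delta$ and $(\id\otimes\Delta)\Delta$ are literally the same double sum after relabelling the two independent sets of dummy indices $(i,j,k,\rho,\sigma)$ and $(i',j',k',\rho',\sigma')$. The ``two orders of insertion'' you describe produce identical diagrams with three tube pieces and two free intermediate labels; nothing needs to be reassociated. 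This is why the paper dispatches the whole proposition in one line as ``easy to verify from the definition'' and only writes out the $(\id\otimes\varepsilon)\Delta=\id$ check.
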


\begin{proof}
The coassociativity and counit conditions are easy to verify from the definition. For example, applying $(\id\otimes\varepsilon)$ to Eq.~\eqref{eq:TubeComultiplication} yields 
\begin{align*}
    \sum_{i,j,k,\rho,\sigma} \sqrt{ \frac{d_j}{d_ad_i}}\sqrt{ \frac{d_k}{d_jd_b}}
    \begin{aligned}\begin{tikzpicture}
             \draw[line width=.6pt,black] (0,0.5)--(0,1.5);
             \draw[line width=.6pt,black] (0,-0.5)--(0,-1.5);
             \draw[red] (0,0.8) arc[start angle=90, end angle=270, radius=0.8];
             \draw[blue] (0,1.3) arc[start angle=90, end angle=-90, radius=1.3];
            \node[ line width=0.6pt, dashed, draw opacity=0.5] (a) at (0,1.7){$h$};
             \node[ line width=0.6pt, dashed, draw opacity=0.5] (a) at (0,-1.7){$k$};
            \node[ line width=0.6pt, dashed, draw opacity=0.5] (a) at (-1,0){$a$};
            \node[ line width=0.6pt, dashed, draw opacity=0.5] (a) at (1.5,0){$b$};
            \node[ line width=0.6pt, dashed, draw opacity=0.5] (a) at (-0.2,-1){$j$};
            \node[ line width=0.6pt, dashed, draw opacity=0.5] (a) at (-0.3,-1.4){$\rho$};
            \node[ line width=0.6pt, dashed, draw opacity=0.5] (a) at (0.2,-0.8){$\sigma$};
            \node[ line width=0.6pt, dashed, draw opacity=0.5] (a) at (0,-0.3){$i$};
            \node[ line width=0.6pt, dashed, draw opacity=0.5] (a) at (0,0.3){$f$};
            \node[ line width=0.6pt, dashed, draw opacity=0.5] (a) at (-0.2,1){$g$};
            \node[ line width=0.6pt, dashed, draw opacity=0.5] (a) at (-0.4,1.3){$\gamma$};
            \node[ line width=0.6pt, dashed, draw opacity=0.5] (a) at (0.2,0.8){$\zeta$};
        \end{tikzpicture}
    \end{aligned} \delta_{c,k}\delta_{e,i}\delta_{d,j}\delta_{\nu,\sigma}\delta_{\mu,\rho}\sqrt{\frac{d_ad_ed_b}{d_k}}\;  = \; \begin{aligned}\begin{tikzpicture}
             \draw[line width=.6pt,black] (0,0.5)--(0,1.5);
             \draw[line width=.6pt,black] (0,-0.5)--(0,-1.5);
             \draw[red] (0,0.8) arc[start angle=90, end angle=270, radius=0.8];
             \draw[blue] (0,1.3) arc[start angle=90, end angle=-90, radius=1.3];
            \node[ line width=0.6pt, dashed, draw opacity=0.5] (a) at (0,1.7){$h$};
             \node[ line width=0.6pt, dashed, draw opacity=0.5] (a) at (0,-1.7){$c$};
            \node[ line width=0.6pt, dashed, draw opacity=0.5] (a) at (-1,0){$a$};
            \node[ line width=0.6pt, dashed, draw opacity=0.5] (a) at (1.5,0){$b$};
            \node[ line width=0.6pt, dashed, draw opacity=0.5] (a) at (-0.2,-1){$d$};
            \node[ line width=0.6pt, dashed, draw opacity=0.5] (a) at (-0.4,-1.3){$\mu$};
            \node[ line width=0.6pt, dashed, draw opacity=0.5] (a) at (0.2,-0.8){$\nu$};
            \node[ line width=0.6pt, dashed, draw opacity=0.5] (a) at (0,-0.3){$e$};
            \node[ line width=0.6pt, dashed, draw opacity=0.5] (a) at (0,0.3){$f$};
            \node[ line width=0.6pt, dashed, draw opacity=0.5] (a) at (-0.2,1){$g$};
            \node[ line width=0.6pt, dashed, draw opacity=0.5] (a) at (-0.4,1.3){$\gamma$};
            \node[ line width=0.6pt, dashed, draw opacity=0.5] (a) at (0.2,0.8){$\zeta$};
        \end{tikzpicture}
    \end{aligned}\;. 
\end{align*}
One proves the other conditions similarly. 
\end{proof}

\begin{remark} \label{rmk:take-counit}
A direct observation is that when taking $\varepsilon$ for the product in equality~\eqref{eq:tube-mult}, it does not matter to complete the second or third diagram (up to a factor $1/d_h$). More precisely, to compute $\varepsilon(XY)$ for $X=[a;b]_{c,d,e;\mu,\nu}^{f,g,h;\zeta,\gamma}$ and $Y = [a';b']_{c',d',e';\mu',\nu'}^{f',g',h';\zeta',\gamma'}$, we can first stack the diagrams, and then complete the resulting diagram in the following sense:
\begin{align}
    \delta_{f,h'}\delta_{e,c'}\delta_{e',f'}\delta_{c,h} \frac{1}{d_h}\;
    \begin{aligned}
        \begin{tikzpicture}
            \draw[black] (0,2.8) arc[start angle=90, end angle=-90, radius=2.8];
             \draw[line width=.6pt,black] (0,0)--(0,2.8);
             \draw[line width=.6pt,black] (0,0)--(0,-2.8);
             \draw[red] (0,0.8) arc[start angle=90, end angle=270, radius=0.8];
             \draw[blue] (0,1.3) arc[start angle=90, end angle=-90, radius=1.3];
            \node[ line width=0.6pt, dashed, draw opacity=0.5] (a) at (0.2,1.5){$f$};
             \node[ line width=0.6pt, dashed, draw opacity=0.5] (a) at (0.2,-1.5){$e$};
            \node[ line width=0.6pt, dashed, draw opacity=0.5] (a) at (-1,0){$a'$};
            \node[ line width=0.6pt, dashed, draw opacity=0.5] (a) at (1.5,0){$b'$};
            \node[ line width=0.6pt, dashed, draw opacity=0.5] (a) at (-0.2,-1){$d'$};
            \node[ line width=0.6pt, dashed, draw opacity=0.5] (a) at (-0.2,-1.3){$\mu'$};
            \node[ line width=0.6pt, dashed, draw opacity=0.5] (a) at (0.2,-0.8){$\nu'$};
            \node[ line width=0.6pt, dashed, draw opacity=0.5] (a) at (0.3,0){$e'$};
            % \node[ line width=0.6pt, dashed, draw opacity=0.5] (a) at (0,0.3){$f'$};
            \node[ line width=0.6pt, dashed, draw opacity=0.5] (a) at (-0.2,1){$g'$};
            \node[ line width=0.6pt, dashed, draw opacity=0.5] (a) at (-0.4,1.3){$\gamma'$};
            \node[ line width=0.6pt, dashed, draw opacity=0.5] (a) at (0.2,0.8){$\zeta'$};
            \draw[red] (0,1.6) arc[start angle=90, end angle=270, radius=1.6];
            \node[ line width=0.6pt, dashed, draw opacity=0.5] (a) at (-0.3,1.8){$\zeta$};
          \node[ line width=0.6pt, dashed, draw opacity=0.5] (a) at (-0.3,-1.8){$\nu$};
           \draw[blue] (0,2.1) arc[start angle=90, end angle=-90, radius=2.1];
            \node[ line width=0.6pt, dashed, draw opacity=0.5] (a) at (-1.3,0.4){$a$};
                \node[ line width=0.6pt, dashed, draw opacity=0.5] (a) at (1.8,0.4){$b$};
         \node[ line width=0.6pt, dashed, draw opacity=0.5] (a) at (0.2,1.8){$g$};
          \node[ line width=0.6pt, dashed, draw opacity=0.5] (a) at (0.2,-1.8){$d$};
        \node[ line width=0.6pt, dashed, draw opacity=0.5] (a) at (0.2,2.3){$\gamma$};
        \node[ line width=0.6pt, dashed, draw opacity=0.5] (a) at (0.2,-2.3){$\mu$};
       \node[ line width=0.6pt, dashed, draw opacity=0.5] (a) at (2.6,0){$h$};
       % \node[ line width=0.6pt, dashed, draw opacity=0.5] (a) at (-0.3,-2.5){$c$};
        \end{tikzpicture}
    \end{aligned}\;.\label{eq:eps_prod}
\end{align}
Indeed, by definition, the counit of $XY$ is computed by completing 
the third diagram in \eqref{eq:tube-mult}: 
    \begin{align*}
         &\sum  [(F_{ag'b'}^g)^{-1}]_{f \gamma' \zeta}^{i\rho \sigma}[(F^{ad'b'}_d)^{-1}]_{e \mu' \nu }^{j \omega \tau} [(F_{aa'f'}^i)^{-1}]_{g' \zeta' \rho}^{k\alpha \beta} [(F^{aa'e'}_j)^{-1}]_{d' \omega \nu'}^{k \alpha\beta'}\\
        &\quad \quad [F_{ib'b}^h]_{g\sigma\gamma}^{s \xi \lambda}[F_c^{jb'b}]_{d\tau\mu}^{s\xi\lambda'}\delta_{f,h'}\delta_{e,c'}\delta_{e',f'}\delta_{c,h}\delta_{i,j}\delta_{\beta,\beta'}\delta_{\lambda,\lambda'}\sqrt{\frac{d_ad_{a'}d_{e'}d_{b'}d_b}{d_h}}  \\
         = & \sum  [(F_{ag'b'}^d)^{-1}]_{f \gamma' \zeta}^{i\rho \sigma}[(F^{ad'b'}_d)^{-1}]_{e  \mu' \nu }^{i \omega \tau} [(F_{aa'e'}^i)^{-1}]_{g' \zeta' \rho}^{k\alpha \beta} [(F^{aa'e'}_i)^{-1}]_{d' \omega \nu'}^{k \alpha \beta}\\
        &\quad \quad [F_{ib'b}^c]_{g\sigma\gamma}^{s \xi \lambda}[F_c^{ib'b}]_{d\tau\mu}^{s\xi \lambda}\delta_{f,h'}\delta_{e,c'}\delta_{e',f'}\delta_{c,h}\delta_{d,g}\sqrt{\frac{d_ad_{a'}d_{e'}d_{b'}d_b}{d_h}}  \\
         = & \sum [(F_{ag'b'}^d)^{-1}]_{f \gamma' \zeta}^{i\rho \sigma}[(F^{ad'b'}_d)^{-1}]_{e  \mu' \nu}^{i \omega \tau} \delta_{d',g'}\delta_{\omega,\zeta'}\delta_{\nu',\rho}\\
        &\quad \quad \delta_{\sigma,\tau}\delta_{\mu,\gamma}\delta_{f,h'}\delta_{e,c'}\delta_{e',f'}\delta_{c,h}\delta_{d,g}\sqrt{\frac{d_ad_{a'}d_{e'}d_{b'}d_b}{d_h}}  \\
        = & \sum [(F_{ad'b'}^d)^{-1}]_{f \gamma' \zeta}^{i\nu' \sigma}[(F^{ad'b'}_d)^{-1}]_{e  \mu'\nu}^{i \zeta'\sigma} \delta_{d',g'}\\
        &\quad \quad \delta_{\mu,\gamma}\delta_{f,h'}\delta_{e,c'}\delta_{e',f'}\delta_{c,h}\delta_{d,g}\sqrt{\frac{d_ad_{a'}d_{e'}d_{b'}d_b}{d_h}}  \\
        = & \; \; \delta_{e,f}\delta_{\nu,\zeta}\delta_{\mu',\gamma'} \delta_{\nu',\zeta'}\delta_{d',g'} \delta_{\mu,\gamma}\delta_{f,h'}\delta_{e,c'}\delta_{e',f'}\delta_{c,h}\delta_{d,g}\sqrt{\frac{d_ad_{a'}d_{e'}d_{b'}d_b}{d_h}},
\end{align*}
which is equal to the value of \eqref{eq:eps_prod}. Here we have used Eq.~\eqref{eq:F-symbol-delta}. This fact can be generalized to compute the counit of product of arbitrary number of diagrams.  We will use this fact frequently in the sequel. 
\end{remark}

Notice that for the tube algebra, we have $\Delta(1) \neq 1 \otimes 1$; instead, we have $\Delta(1) = \sum_{k} 1^{(1)}_k \otimes 1^{(2)}_k$:
\begin{equation} \label{eq:coprod-unit}
   \Delta(1)= \sum_{x,y,z\in \Irr(\ED)}
   \begin{aligned}
        \begin{tikzpicture}
             \draw[line width=.6pt,black] (0,0.2)--(0,0.6);
             \draw[line width=.6pt,black] (0,-0.2)--(0,-0.6);
            \node[ line width=0.6pt, dashed, draw opacity=0.5] (a) at (-0.2,0.3){$x$};
            \node[ line width=0.6pt, dashed, draw opacity=0.5] (a) at (-0.2,-0.3){$z$};
        \end{tikzpicture}
    \end{aligned} \otimes  \begin{aligned}
        \begin{tikzpicture}
             \draw[line width=.6pt,black] (0,0.2)--(0,0.6);
             \draw[line width=.6pt,black] (0,-0.2)--(0,-0.6);
                \node[ line width=0.6pt, dashed, draw opacity=0.5] (a) at (-0.2,0.3){$z$};
            \node[ line width=0.6pt, dashed, draw opacity=0.5] (a) at (-0.2,-0.3){$y$};
        \end{tikzpicture}
    \end{aligned}\,\, .
\end{equation}
(As usual, we omit the edges labeled by the tensor unit $\mathbf{1}$.) This indicates that the tube algebra should have a weak bialgebra structure, and indeed it does.

\begin{proposition}
For any given UMFC $\ED$, the tube algebra $\mathbf{Tube}({_{\ED}\ED_{\ED}})$ constitutes a weak bialgebra with the previously given structure morphisms $\mu,1,\Delta,\varepsilon$.
\end{proposition}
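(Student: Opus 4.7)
The plan is to verify the three weak bialgebra axioms (compatibility of $\Delta$ with $\mu$, of $\Delta$ with $\eta$, and of $\varepsilon$ with $\mu$) by exploiting the geometric meaning of the structure maps: multiplication $\mu$ is horizontal stacking of two tubes (gluing along the middle boundary circle), while comultiplication $\Delta$ is vertical insertion of an edge that splits a tube into two half-tubes via a parallel move in the central column. Since these two operations act in orthogonal directions on the tube, they ought to commute up to an application of topological local moves (F-moves, loop moves, parallel moves), which is exactly what the axioms demand.

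First, for $\Delta \comp \mu = (\mu\otimes\mu)\comp (\id\otimes c\otimes \id)\comp (\Delta\otimes\Delta)$: I would take two basis elements $X = [a;b]_{c,d,e;\mu,\nu}^{f,g,h;\zeta,\gamma}$ and $X' = [a';b']_{c',d',e';\mu',\nu'}^{f',g',h';\zeta',\gamma'}$, compute $\Delta(X\cdot X')$ by first stacking them (as in \eqref{eq:tube-mult}) and then applying the parallel-move splitting \eqref{eq:VerticalFusion} to the resulting tube. On the other side, compute $\Delta(X)\otimes \Delta(X')$, perform the horizontal composition of the two upper halves and of the two lower halves separately, and check the result agrees. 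Both operations amount to the same planar diagram consisting of two concentric red/blue loops of labels $a,a'$ on the left and $b,b'$ on the right with an inserted vertical strand in each half; the pentagon relation (Fig.~\ref{fig:pentagon}) together with \eqref{eq:F-symbol-delta} ensures that the resulting F-symbol contractions match. The arguments are most transparent if one first proves the equality of the underlying diagrams up to topological local moves and only then translates to coefficients, so as to avoid an explicit quadruple sum.

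Second, for the weak unit axiom \eqref{eq:unitG}: the coproduct of $1$ is given by \eqref{eq:coprod-unit} as a sum over triples $(x,z,y)$ of vacant-boundary tubes whose middle label matches. Applying $(\Delta\otimes\id)\Delta(1)$ gives a sum over quadruples of matching labels $(x,z,w,y)$, which is exactly the outcome of $(\id\otimes \mu\otimes\id)(\Delta\otimes\Delta)(\eta\otimes\eta)$ after the middle multiplication glues the two inner tubes along their matching boundary; the opposite ordering via the braiding $c_{W,W}$ produces the same sum because the indices $x,z,w,y$ are symmetric variables of summation. This is a direct bookkeeping check.

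Third, for the weak counit axiom \eqref{eq:weak-mult}: using Remark~\ref{rmk:take-counit}, $\varepsilon$ of any product may be computed by stacking the diagrams and then closing off the top with a single arc. For a triple product $XYZ$, I would compute $\varepsilon(XYZ)$ by closing the triple stack, and separately $\sum \varepsilon(XY^{(1)})\varepsilon(Y^{(2)}Z)$ by splitting $Y$ vertically via \eqref{eq:TubeComultiplication} and closing each half independently. The splitting inserts the identity $\id = \sum_i \sqrt{d_j/(d_a d_i)}\sqrt{d_k/(d_jd_b)}\,(\text{parallel pair})$, and closing the two resulting closed half-diagrams reassembles the same morphism as the triple-product closure thanks to the completeness relation \eqref{eq:paraev} and the loop move \eqref{eq:loopev}. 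The same calculation with $\Delta^{\rm op}$ holds by the symmetry of the parallel-move expansion in the two inserted strands. The main obstacle throughout will be the third axiom: keeping track of the normalization factors $Y^{ab}_c$ and the pairs of F-symbols introduced by the vertical splitting, and confirming via \eqref{eq:F-symbol-delta} that all intermediate-label sums collapse to the expected delta functions on the boundary data; once this is done for \eqref{eq:weak-mult}, the other two axioms follow with comparatively little effort.
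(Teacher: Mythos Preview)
Your proposal is correct and follows essentially the same route as the paper: each axiom is checked diagrammatically, with $\mu$ realized as stacking of tubes, $\Delta$ as wall-insertion via the parallel move, and the weak counit axiom handled using Remark~\ref{rmk:take-counit} together with \eqref{eq:F-symbol-delta}. The one simplification the paper makes is that the multiplicativity $\Delta(XY)=\Delta(X)\Delta(Y)$ requires no pentagon or F-symbol identity---after inserting the wall into the stacked diagram and performing successive parallel moves, the result reorganizes directly into $\Delta(X)\Delta(Y)$ by the $\delta$-constraints from multiplication.
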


\begin{proof}
    Consider $X=[a;b]_{c,d,e;\mu,\nu}^{f,g,h;\zeta,\gamma}, Y = [a';b']_{c',d',e';\mu',\nu'}^{f',g',h';\zeta',\gamma'}$ and $Z=[a'';b'']_{c'',d'',e'';\mu'',\nu''}^{f'',g'',h'';\zeta'',\gamma''}$.
    
    1. We verify that $\Delta$ is multiplicative: $\Delta(XY)=\Delta(X)\Delta(Y)$. By Proposition~\ref{prop:tube-coalgebra}, to compute the comultiplication, one can insert a wall labeled by $i\in\Irr(\ED)$ and take the summation over all such labels. Thus $\Delta(XY)$ is equal to 
    \begin{align*}
    &\sum_{i}\sum_{j,\rho}\sum_{k,\sigma}\sum_{s,\tau}\sum_{t,\theta} \sqrt{\frac{d_j}{d_{a'}d_i}}\sqrt{\frac{d_k}{d_{b'}d_j}}\sqrt{\frac{d_s}{d_{a}d_k}}\sqrt{\frac{d_t}{d_{b}d_s}} \delta_{f,h'}\delta_{e,c'}\\
    & \quad \quad \quad\quad \quad \quad\quad \quad \quad \begin{aligned}
        \begin{tikzpicture}
             \draw[line width=.6pt,black] (0,0.5)--(0,2.5);
             \draw[line width=.6pt,black] (0,-0.5)--(0,-2.5);
             \draw[red] (0,0.8) arc[start angle=90, end angle=270, radius=0.8];
             \draw[blue] (0,1.3) arc[start angle=90, end angle=-90, radius=1.3];
            \node[ line width=0.6pt, dashed, draw opacity=0.5] (a) at (0.2,1.5){$f$};
             \node[ line width=0.6pt, dashed, draw opacity=0.5] (a) at (0.2,-1.5){$k$};
            \node[ line width=0.6pt, dashed, draw opacity=0.5] (a) at (-1,0){$a'$};
            \node[ line width=0.6pt, dashed, draw opacity=0.5] (a) at (1.5,0){$b'$};
            \node[ line width=0.6pt, dashed, draw opacity=0.5] (a) at (-0.2,-1){$j$};
            \node[ line width=0.6pt, dashed, draw opacity=0.5] (a) at (-0.2,-1.3){$\sigma$};
            \node[ line width=0.6pt, dashed, draw opacity=0.5] (a) at (0.2,-0.8){$\rho$};
            \node[ line width=0.6pt, dashed, draw opacity=0.5] (a) at (0,-0.3){$i$};
            \node[ line width=0.6pt, dashed, draw opacity=0.5] (a) at (0,0.3){$f'$};
            \node[ line width=0.6pt, dashed, draw opacity=0.5] (a) at (-0.2,1){$g'$};
            \node[ line width=0.6pt, dashed, draw opacity=0.5] (a) at (-0.4,1.3){$\gamma'$};
            \node[ line width=0.6pt, dashed, draw opacity=0.5] (a) at (0.2,0.8){$\zeta'$};
            \draw[red] (0,1.6) arc[start angle=90, end angle=270, radius=1.6];
            \node[ line width=0.6pt, dashed, draw opacity=0.5] (a) at (-0.3,1.8){$\zeta$};
          \node[ line width=0.6pt, dashed, draw opacity=0.5] (a) at (-0.3,-1.8){$\tau$};
           \draw[blue] (0,2.1) arc[start angle=90, end angle=-90, radius=2.1];
            \node[ line width=0.6pt, dashed, draw opacity=0.5] (a) at (-1.3,0.4){$a$};
                \node[ line width=0.6pt, dashed, draw opacity=0.5] (a) at (1.8,0.4){$b$};
         \node[ line width=0.6pt, dashed, draw opacity=0.5] (a) at (0.2,1.8){$g$};
          \node[ line width=0.6pt, dashed, draw opacity=0.5] (a) at (0.2,-1.8){$s$};
        \node[ line width=0.6pt, dashed, draw opacity=0.5] (a) at (0.2,2.3){$\gamma$};
        \node[ line width=0.6pt, dashed, draw opacity=0.5] (a) at (0.2,-2.3){$\theta$};
       \node[ line width=0.6pt, dashed, draw opacity=0.5] (a) at (-0.3,2.5){$h$};
       \node[ line width=0.6pt, dashed, draw opacity=0.5] (a) at (-0.3,-2.5){$t$};
        \end{tikzpicture}
    \end{aligned} \otimes 
    \begin{aligned}
        \begin{tikzpicture}
             \draw[line width=.6pt,black] (0,0.5)--(0,2.5);
             \draw[line width=.6pt,black] (0,-0.5)--(0,-2.5);
             \draw[red] (0,0.8) arc[start angle=90, end angle=270, radius=0.8];
             \draw[blue] (0,1.3) arc[start angle=90, end angle=-90, radius=1.3];
            \node[ line width=0.6pt, dashed, draw opacity=0.5] (a) at (0.2,1.5){$k$};
             \node[ line width=0.6pt, dashed, draw opacity=0.5] (a) at (0.2,-1.5){$e$};
            \node[ line width=0.6pt, dashed, draw opacity=0.5] (a) at (-1,0){$a'$};
            \node[ line width=0.6pt, dashed, draw opacity=0.5] (a) at (1.5,0){$b'$};
            \node[ line width=0.6pt, dashed, draw opacity=0.5] (a) at (-0.2,-1){$d'$};
            \node[ line width=0.6pt, dashed, draw opacity=0.5] (a) at (-0.2,-1.3){$\mu'$};
            \node[ line width=0.6pt, dashed, draw opacity=0.5] (a) at (0.2,-0.8){$\nu'$};
            \node[ line width=0.6pt, dashed, draw opacity=0.5] (a) at (0,-0.3){$e'$};
            \node[ line width=0.6pt, dashed, draw opacity=0.5] (a) at (0,0.3){$i$};
            \node[ line width=0.6pt, dashed, draw opacity=0.5] (a) at (-0.2,1){$j$};
            \node[ line width=0.6pt, dashed, draw opacity=0.5] (a) at (-0.4,1.3){$\sigma$};
            \node[ line width=0.6pt, dashed, draw opacity=0.5] (a) at (0.2,0.8){$\rho$};
            \draw[red] (0,1.6) arc[start angle=90, end angle=270, radius=1.6];
            \node[ line width=0.6pt, dashed, draw opacity=0.5] (a) at (-0.3,1.8){$\tau$};
          \node[ line width=0.6pt, dashed, draw opacity=0.5] (a) at (-0.3,-1.8){$\nu$};
           \draw[blue] (0,2.1) arc[start angle=90, end angle=-90, radius=2.1];
            \node[ line width=0.6pt, dashed, draw opacity=0.5] (a) at (-1.3,0.4){$a$};
                \node[ line width=0.6pt, dashed, draw opacity=0.5] (a) at (1.8,0.4){$b$};
         \node[ line width=0.6pt, dashed, draw opacity=0.5] (a) at (0.2,1.8){$s$};
          \node[ line width=0.6pt, dashed, draw opacity=0.5] (a) at (0.2,-1.8){$d$};
        \node[ line width=0.6pt, dashed, draw opacity=0.5] (a) at (0.2,2.3){$\theta$};
        \node[ line width=0.6pt, dashed, draw opacity=0.5] (a) at (0.2,-2.3){$\mu$};
       \node[ line width=0.6pt, dashed, draw opacity=0.5] (a) at (-0.3,2.5){$t$};
       \node[ line width=0.6pt, dashed, draw opacity=0.5] (a) at (-0.3,-2.5){$c$};
        \end{tikzpicture}
    \end{aligned} \\
     = & \sum_{k,s,t,\tau,\theta}\sqrt{\frac{d_s}{d_{a}d_k}}\sqrt{\frac{d_t}{d_{b}d_s}}\sum_{i,j,k',\rho,\sigma} \sqrt{\frac{d_j}{d_{a'}d_i}}\sqrt{\frac{d_{k'}}{d_{b'}d_j}}  \\
    & \left( \begin{aligned}
        \begin{tikzpicture}
             \draw[line width=.6pt,black] (0,0.5)--(0,1.5);
             \draw[line width=.6pt,black] (0,-0.5)--(0,-1.5);
             \draw[red] (0,0.8) arc[start angle=90, end angle=270, radius=0.8];
             \draw[blue] (0,1.3) arc[start angle=90, end angle=-90, radius=1.3];
            \node[ line width=0.6pt, dashed, draw opacity=0.5] (a) at (0,1.7){$h$};
             \node[ line width=0.6pt, dashed, draw opacity=0.5] (a) at (0,-1.7){$t$};
            \node[ line width=0.6pt, dashed, draw opacity=0.5] (a) at (-1,0){$a$};
            \node[ line width=0.6pt, dashed, draw opacity=0.5] (a) at (1.5,0){$b$};
            \node[ line width=0.6pt, dashed, draw opacity=0.5] (a) at (-0.2,-1){$s$};
            \node[ line width=0.6pt, dashed, draw opacity=0.5] (a) at (-0.4,-1.3){$\theta$};
            \node[ line width=0.6pt, dashed, draw opacity=0.5] (a) at (0.2,-0.8){$\tau$};
            \node[ line width=0.6pt, dashed, draw opacity=0.5] (a) at (0,-0.3){$k$};
            \node[ line width=0.6pt, dashed, draw opacity=0.5] (a) at (0,0.3){$f$};
            \node[ line width=0.6pt, dashed, draw opacity=0.5] (a) at (-0.2,1){$g$};
            \node[ line width=0.6pt, dashed, draw opacity=0.5] (a) at (-0.4,1.3){$\gamma$};
            \node[ line width=0.6pt, dashed, draw opacity=0.5] (a) at (0.2,0.8){$\zeta$};
        \end{tikzpicture}
    \end{aligned} \begin{aligned}
        \begin{tikzpicture}
             \draw[line width=.6pt,black] (0,0.5)--(0,1.5);
             \draw[line width=.6pt,black] (0,-0.5)--(0,-1.5);
             \draw[red] (0,0.8) arc[start angle=90, end angle=270, radius=0.8];
             \draw[blue] (0,1.3) arc[start angle=90, end angle=-90, radius=1.3];
            \node[ line width=0.6pt, dashed, draw opacity=0.5] (a) at (0,1.7){$h'$};
             \node[ line width=0.6pt, dashed, draw opacity=0.5] (a) at (0,-1.7){$k'$};
            \node[ line width=0.6pt, dashed, draw opacity=0.5] (a) at (-1,0){$a'$};
            \node[ line width=0.6pt, dashed, draw opacity=0.5] (a) at (1.5,0){$b'$};
            \node[ line width=0.6pt, dashed, draw opacity=0.5] (a) at (-0.2,-1){$j$};
            \node[ line width=0.6pt, dashed, draw opacity=0.5] (a) at (-0.4,-1.4){$\sigma$};
            \node[ line width=0.6pt, dashed, draw opacity=0.5] (a) at (0.2,-0.8){$\rho$};
            \node[ line width=0.6pt, dashed, draw opacity=0.5] (a) at (0,-0.3){$i$};
            \node[ line width=0.6pt, dashed, draw opacity=0.5] (a) at (0,0.3){$f'$};
            \node[ line width=0.6pt, dashed, draw opacity=0.5] (a) at (-0.2,1){$g'$};
            \node[ line width=0.6pt, dashed, draw opacity=0.5] (a) at (-0.4,1.3){$\gamma'$};
            \node[ line width=0.6pt, dashed, draw opacity=0.5] (a) at (0.2,0.8){$\zeta'$};
        \end{tikzpicture}
    \end{aligned}\right) 
    \otimes 
    \left( \begin{aligned}
        \begin{tikzpicture}
             \draw[line width=.6pt,black] (0,0.5)--(0,1.5);
             \draw[line width=.6pt,black] (0,-0.5)--(0,-1.5);
             \draw[red] (0,0.8) arc[start angle=90, end angle=270, radius=0.8];
             \draw[blue] (0,1.3) arc[start angle=90, end angle=-90, radius=1.3];
            \node[ line width=0.6pt, dashed, draw opacity=0.5] (a) at (0,1.7){$t$};
             \node[ line width=0.6pt, dashed, draw opacity=0.5] (a) at (0,-1.7){$c$};
            \node[ line width=0.6pt, dashed, draw opacity=0.5] (a) at (-1,0){$a$};
            \node[ line width=0.6pt, dashed, draw opacity=0.5] (a) at (1.5,0){$b$};
            \node[ line width=0.6pt, dashed, draw opacity=0.5] (a) at (-0.2,-1){$d$};
            \node[ line width=0.6pt, dashed, draw opacity=0.5] (a) at (-0.4,-1.3){$\mu$};
            \node[ line width=0.6pt, dashed, draw opacity=0.5] (a) at (0.2,-0.8){$\nu$};
            \node[ line width=0.6pt, dashed, draw opacity=0.5] (a) at (0,-0.3){$e$};
            \node[ line width=0.6pt, dashed, draw opacity=0.5] (a) at (0,0.3){$k$};
            \node[ line width=0.6pt, dashed, draw opacity=0.5] (a) at (-0.2,1){$s$};
            \node[ line width=0.6pt, dashed, draw opacity=0.5] (a) at (-0.4,1.3){$\theta$};
            \node[ line width=0.6pt, dashed, draw opacity=0.5] (a) at (0.2,0.8){$\tau$};
        \end{tikzpicture}
    \end{aligned} \begin{aligned}
        \begin{tikzpicture}
             \draw[line width=.6pt,black] (0,0.5)--(0,1.5);
             \draw[line width=.6pt,black] (0,-0.5)--(0,-1.5);
             \draw[red] (0,0.8) arc[start angle=90, end angle=270, radius=0.8];
             \draw[blue] (0,1.3) arc[start angle=90, end angle=-90, radius=1.3];
            \node[ line width=0.6pt, dashed, draw opacity=0.5] (a) at (0,1.7){$k'$};
             \node[ line width=0.6pt, dashed, draw opacity=0.5] (a) at (0,-1.7){$c'$};
            \node[ line width=0.6pt, dashed, draw opacity=0.5] (a) at (-1,0){$a'$};
            \node[ line width=0.6pt, dashed, draw opacity=0.5] (a) at (1.5,0){$b'$};
            \node[ line width=0.6pt, dashed, draw opacity=0.5] (a) at (-0.2,-1){$d'$};
            \node[ line width=0.6pt, dashed, draw opacity=0.5] (a) at (-0.4,-1.4){$\mu'$};
            \node[ line width=0.6pt, dashed, draw opacity=0.5] (a) at (0.2,-0.8){$\nu'$};
            \node[ line width=0.6pt, dashed, draw opacity=0.5] (a) at (0,-0.3){$e'$};
            \node[ line width=0.6pt, dashed, draw opacity=0.5] (a) at (0,0.3){$i$};
            \node[ line width=0.6pt, dashed, draw opacity=0.5] (a) at (-0.2,1){$j$};
            \node[ line width=0.6pt, dashed, draw opacity=0.5] (a) at (-0.4,1.3){$\sigma$};
            \node[ line width=0.6pt, dashed, draw opacity=0.5] (a) at (0.2,0.8){$\rho$};
        \end{tikzpicture}
    \end{aligned}\right),
    \end{align*}
    which is $\Delta(X)\Delta(Y)$. 
    
    2. The first condition of the weak comultiplicativity of the unit is verified as follows: 
    \begin{align*}
        (\Delta(1)\otimes 1)\cdot &(1\otimes \Delta(1)) \\
        &= \left(\sum_{x,y,k,z,w} \begin{aligned}
        \begin{tikzpicture}
             \draw[line width=.6pt,black] (0,0.2)--(0,0.6);
             \draw[line width=.6pt,black] (0,-0.2)--(0,-0.6);
            \node[ line width=0.6pt, dashed, draw opacity=0.5] (a) at (-0.2,0.3){$x$};
            \node[ line width=0.6pt, dashed, draw opacity=0.5] (a) at (-0.2,-0.3){$k$};
        \end{tikzpicture}
    \end{aligned} \otimes  \begin{aligned}
        \begin{tikzpicture}
             \draw[line width=.6pt,black] (0,0.2)--(0,0.6);
             \draw[line width=.6pt,black] (0,-0.2)--(0,-0.6);
                \node[ line width=0.6pt, dashed, draw opacity=0.5] (a) at (-0.2,0.3){$k$};
            \node[ line width=0.6pt, dashed, draw opacity=0.5] (a) at (-0.2,-0.3){$y$};
        \end{tikzpicture}
    \end{aligned} \otimes  \begin{aligned}
        \begin{tikzpicture}
             \draw[line width=.6pt,black] (0,0.2)--(0,0.6);
             \draw[line width=.6pt,black] (0,-0.2)--(0,-0.6);
                \node[ line width=0.6pt, dashed, draw opacity=0.5] (a) at (-0.2,0.3){$z$};
            \node[ line width=0.6pt, dashed, draw opacity=0.5] (a) at (-0.2,-0.3){$w$};
        \end{tikzpicture}
    \end{aligned} \right)\cdot 
    \left(\sum_{x',y',k',z',w'} \begin{aligned}
        \begin{tikzpicture}
             \draw[line width=.6pt,black] (0,0.2)--(0,0.6);
             \draw[line width=.6pt,black] (0,-0.2)--(0,-0.6);
            \node[ line width=0.6pt, dashed, draw opacity=0.5] (a) at (-0.2,0.3){$x'$};
            \node[ line width=0.6pt, dashed, draw opacity=0.5] (a) at (-0.2,-0.3){$y'$};
        \end{tikzpicture}
    \end{aligned} \otimes  \begin{aligned}
        \begin{tikzpicture}
             \draw[line width=.6pt,black] (0,0.2)--(0,0.6);
             \draw[line width=.6pt,black] (0,-0.2)--(0,-0.6);
                \node[ line width=0.6pt, dashed, draw opacity=0.5] (a) at (-0.2,0.3){$z'$};
            \node[ line width=0.6pt, dashed, draw opacity=0.5] (a) at (-0.2,-0.3){$k'$};
        \end{tikzpicture}
    \end{aligned} \otimes  \begin{aligned}
        \begin{tikzpicture}
             \draw[line width=.6pt,black] (0,0.2)--(0,0.6);
             \draw[line width=.6pt,black] (0,-0.2)--(0,-0.6);
                \node[ line width=0.6pt, dashed, draw opacity=0.5] (a) at (-0.2,0.3){$k'$};
            \node[ line width=0.6pt, dashed, draw opacity=0.5] (a) at (-0.2,-0.3){$w'$};
        \end{tikzpicture}
    \end{aligned} \right) \\ 
    &= \sum_{x,y,k,z,w}\sum_{x',y',k',z',w'}\delta_{x,x'}\delta_{k,y'}\delta_{k,z'}\delta_{y,k'}\delta_{z,k'}\delta_{w,w'} \begin{aligned}
        \begin{tikzpicture}
             \draw[line width=.6pt,black] (0,0.2)--(0,0.6);
             \draw[line width=.6pt,black] (0,-0.2)--(0,-0.6);
            \node[ line width=0.6pt, dashed, draw opacity=0.5] (a) at (-0.2,0.3){$x$};
            \node[ line width=0.6pt, dashed, draw opacity=0.5] (a) at (-0.2,-0.3){$k$};
        \end{tikzpicture}
    \end{aligned} \otimes  \begin{aligned}
        \begin{tikzpicture}
             \draw[line width=.6pt,black] (0,0.2)--(0,0.6);
             \draw[line width=.6pt,black] (0,-0.2)--(0,-0.6);
                \node[ line width=0.6pt, dashed, draw opacity=0.5] (a) at (-0.2,0.3){$k$};
            \node[ line width=0.6pt, dashed, draw opacity=0.5] (a) at (-0.2,-0.3){$y$};
        \end{tikzpicture}
    \end{aligned} \otimes  \begin{aligned}
        \begin{tikzpicture}
             \draw[line width=.6pt,black] (0,0.2)--(0,0.6);
             \draw[line width=.6pt,black] (0,-0.2)--(0,-0.6);
                \node[ line width=0.6pt, dashed, draw opacity=0.5] (a) at (-0.2,0.3){$z$};
            \node[ line width=0.6pt, dashed, draw opacity=0.5] (a) at (-0.2,-0.3){$w$};
        \end{tikzpicture}
    \end{aligned}  \\
    & = \sum_{x,w,k,z}
        \begin{aligned}
        \begin{tikzpicture}
             \draw[line width=.6pt,black] (0,0.2)--(0,0.6);
             \draw[line width=.6pt,black] (0,-0.2)--(0,-0.6);
            \node[ line width=0.6pt, dashed, draw opacity=0.5] (a) at (-0.2,0.3){$x$};
            \node[ line width=0.6pt, dashed, draw opacity=0.5] (a) at (-0.2,-0.3){$k$};
        \end{tikzpicture}
    \end{aligned} \otimes  \begin{aligned}
        \begin{tikzpicture}
             \draw[line width=.6pt,black] (0,0.2)--(0,0.6);
             \draw[line width=.6pt,black] (0,-0.2)--(0,-0.6);
                \node[ line width=0.6pt, dashed, draw opacity=0.5] (a) at (-0.2,0.3){$k$};
            \node[ line width=0.6pt, dashed, draw opacity=0.5] (a) at (-0.2,-0.3){$z$};
        \end{tikzpicture}
    \end{aligned} \otimes  \begin{aligned}
        \begin{tikzpicture}
             \draw[line width=.6pt,black] (0,0.2)--(0,0.6);
             \draw[line width=.6pt,black] (0,-0.2)--(0,-0.6);
                \node[ line width=0.6pt, dashed, draw opacity=0.5] (a) at (-0.2,0.3){$z$};
            \node[ line width=0.6pt, dashed, draw opacity=0.5] (a) at (-0.2,-0.3){$w$};
        \end{tikzpicture}
    \end{aligned} \\
    & = (\Delta\otimes \id)\comp\Delta(1).  
    \end{align*} 
    Similarly one can verify the other condition $(1\otimes \Delta(1))\cdot (\Delta(1)\otimes 1)= (\Delta\otimes \id)\comp\Delta(1)$. 

    3. It remains to verify that the counit is weak multiplicative, namely, the following identities hold:  $$\varepsilon(XYZ)=\sum_{(Y)}\varepsilon(XY^{(1)})\varepsilon(Y^{(2)}Z)=\sum_{(Y)}\varepsilon(XY^{(2)})\varepsilon(Y^{(1)}Z).$$
    Indeed, according to Remark~\ref{rmk:take-counit}, one computes  
    \begin{align*}
        \varepsilon(XYZ)&=\delta_{f,h'}\delta_{e,c'}\delta_{f',h''}\delta_{e',c''}\delta_{c,h}\delta_{e'',f''}\frac{1}{d_h}\begin{aligned}
        \begin{tikzpicture}
             \draw[line width=.6pt,black] (0,0)--(0,3.6);
             \draw[line width=.6pt,black] (0,0)--(0,-3.6);
             \draw[red] (0,0.8) arc[start angle=90, end angle=270, radius=0.8];
             \draw[blue] (0,1.3) arc[start angle=90, end angle=-90, radius=1.3];
            \node[ line width=0.6pt, dashed, draw opacity=0.5] (a) at (-0.2,2.2){$f$};
             \node[ line width=0.6pt, dashed, draw opacity=0.5] (a) at (-0.2,-2.2){$e$};
             \node[ line width=0.6pt, dashed, draw opacity=0.5] (a) at (3.1,0){$b$};
             \node[ line width=0.6pt, dashed, draw opacity=0.5] (a) at (2.3,0){$b'$};
            \node[ line width=0.6pt, dashed, draw opacity=0.5] (a) at (1.6,0){$b''$};
            \node[ line width=0.6pt, dashed, draw opacity=0.5] (a) at (0.25,-1.8){$d'$};
            \node[ line width=0.6pt, dashed, draw opacity=0.5] (a) at (0.25,-1){$d''$};
            \node[ line width=0.6pt, dashed, draw opacity=0.5] (a) at (-0.3,-1.9){$\mu'$};
            \node[ line width=0.6pt, dashed, draw opacity=0.5] (a) at (0.3,-0.6){$\nu''$};
            \node[ line width=0.6pt, dashed, draw opacity=0.5] (a) at (0.25,-1.5){$\nu'$};
            \node[ line width=0.6pt, dashed, draw opacity=0.5] (a) at (-0.3,-1.1){$\mu''$};
            \node[ line width=0.6pt, dashed, draw opacity=0.5] (a) at (-0.2,-1.35){$e'$};
            \node[ line width=0.6pt, dashed, draw opacity=0.5] (a) at (0.3,0){$e''$};
            \node[ line width=0.6pt, dashed, draw opacity=0.5] (a) at (-0.2,1.3){$f'$};
            \node[ line width=0.6pt, dashed, draw opacity=0.5] (a) at (0.2,1.85){$g'$};
            \node[ line width=0.6pt, dashed, draw opacity=0.5] (a) at (-0.2,1.9){$\gamma'$};
            \node[ line width=0.6pt, dashed, draw opacity=0.5] (a) at (-0.2,1){$\gamma''$};
            \node[ line width=0.6pt, dashed, draw opacity=0.5] (a) at (0.3,1){$g''$};
            \node[ line width=0.6pt, dashed, draw opacity=0.5] (a) at (0.3,0.6){$\zeta''$};
            \node[ line width=0.6pt, dashed, draw opacity=0.5] (a) at (0.3,1.55){$\zeta'$};
            \draw[red] (0,1.6) arc[start angle=90, end angle=270, radius=1.6];
            \node[ line width=0.6pt, dashed, draw opacity=0.5] (a) at (0.2,2.4){$\zeta$};
          \node[ line width=0.6pt, dashed, draw opacity=0.5] (a) at (0.2,-2.3){$\nu$};
           \draw[blue] (0,2.1) arc[start angle=90, end angle=-90, radius=2.1];
            \node[ line width=0.6pt, dashed, draw opacity=0.5] (a) at (-2.6,0){$a$};
            \node[ line width=0.6pt, dashed, draw opacity=0.5] (a) at (-1.8,0){$a'$};
            \node[ line width=0.6pt, dashed, draw opacity=0.5] (a) at (-1,0){$a''$};
         \node[ line width=0.6pt, dashed, draw opacity=0.5] (a) at (-0.2,2.6){$g$};
          \node[ line width=0.6pt, dashed, draw opacity=0.5] (a) at (0.2,-2.6){$d$};
        \node[ line width=0.6pt, dashed, draw opacity=0.5] (a) at (-0.2,3){$\gamma$};
        \node[ line width=0.6pt, dashed, draw opacity=0.5] (a) at (-0.2,-2.9){$\mu$};
       \node[ line width=0.6pt, dashed, draw opacity=0.5] (a) at (3.8,0){$h$};
       % \node[ line width=0.6pt, dashed, draw opacity=0.5] (a) at (-0.2,-3.2){$c$};
        \draw[red] (0,2.4) arc[start angle=90, end angle=270, radius=2.4];
       \draw[blue] (0,2.9) arc[start angle=90, end angle=-90, radius=2.9];
       \draw[black] (0,3.6) arc[start angle=90, end angle=-90, radius=3.6];
        \end{tikzpicture}
        \end{aligned} \\
        & = \delta_{f,h'}\delta_{e,c'}\delta_{f',h''}\delta_{e',c''}\delta_{c,h}\delta_{e'',f''} \delta_{\nu,\zeta}\delta_{\mu,\gamma}\delta_{\nu',\zeta'}\delta_{\mu',\gamma'}\delta_{\nu'',\zeta''}\delta_{\mu'',\gamma''}\\
        &\quad\quad \delta_{d'',g''}\delta_{e',f'}\delta_{d',g'}\delta_{e,f}\delta_{d,g}\sqrt{\frac{d_ad_{a'}d_{a''}d_{e''}d_bd_{b'}d_{b''}}{d_h}}. 
    \end{align*}
    On the other hand, one computes 
    \begin{align*}
        \sum_{(Y)}\varepsilon(XY^{(1)})\varepsilon(Y^{(2)}Z) & = \sum_{i,j,k,\rho,\sigma}\sqrt{\frac{d_j}{d_{a'}d_i}}\sqrt{\frac{d_k}{d_jd_{b'}}} \delta_{f,h'}\delta_{e,k}\delta_{i,h''}\delta_{e',c''}\delta_{i,f'}\delta_{c,h}\delta_{e'',f''}\delta_{c',k}\frac{1}{d_h}\frac{1}{d_k}\\
        & \begin{aligned}
        \begin{tikzpicture}
             \draw[line width=.6pt,black] (0,0)--(0,2.9);
             \draw[line width=.6pt,black] (0,0)--(0,-2.9);
             \draw[red] (0,0.8) arc[start angle=90, end angle=270, radius=0.8];
             \draw[blue] (0,1.3) arc[start angle=90, end angle=-90, radius=1.3];
            \node[ line width=0.6pt, dashed, draw opacity=0.5] (a) at (0.2,1.5){$f$};
             \node[ line width=0.6pt, dashed, draw opacity=0.5] (a) at (0.2,-1.5){$e$};
            \node[ line width=0.6pt, dashed, draw opacity=0.5] (a) at (-1,0){$a'$};
            \node[ line width=0.6pt, dashed, draw opacity=0.5] (a) at (1.5,0){$b'$};
            \node[ line width=0.6pt, dashed, draw opacity=0.5] (a) at (-0.2,-1){$j$};
            \node[ line width=0.6pt, dashed, draw opacity=0.5] (a) at (-0.2,-1.4){$\rho$};
            \node[ line width=0.6pt, dashed, draw opacity=0.5] (a) at (0.2,-0.8){$\sigma$};
            \node[ line width=0.6pt, dashed, draw opacity=0.5] (a) at (0.15,0){$i$};
            \node[ line width=0.6pt, dashed, draw opacity=0.5] (a) at (-0.2,1){$g'$};
            \node[ line width=0.6pt, dashed, draw opacity=0.5] (a) at (-0.4,1.3){$\gamma'$};
            \node[ line width=0.6pt, dashed, draw opacity=0.5] (a) at (0.2,0.8){$\zeta'$};
            \draw[red] (0,1.6) arc[start angle=90, end angle=270, radius=1.6];
            \node[ line width=0.6pt, dashed, draw opacity=0.5] (a) at (-0.3,1.8){$\zeta$};
          \node[ line width=0.6pt, dashed, draw opacity=0.5] (a) at (-0.3,-1.8){$\nu$};
           \draw[blue] (0,2.1) arc[start angle=90, end angle=-90, radius=2.1];
           \draw[black] (0,2.9) arc[start angle=90, end angle=-90, radius=2.9];
            \node[ line width=0.6pt, dashed, draw opacity=0.5] (a) at (-1.8,0){$a$};
                \node[ line width=0.6pt, dashed, draw opacity=0.5] (a) at (2.3,0){$b$};
         \node[ line width=0.6pt, dashed, draw opacity=0.5] (a) at (0.2,1.8){$g$};
          \node[ line width=0.6pt, dashed, draw opacity=0.5] (a) at (0.2,-1.8){$d$};
        \node[ line width=0.6pt, dashed, draw opacity=0.5] (a) at (-0.2,2.3){$\gamma$};
        \node[ line width=0.6pt, dashed, draw opacity=0.5] (a) at (-0.2,-2.3){$\mu$};
       \node[ line width=0.6pt, dashed, draw opacity=0.5] (a) at (3.1,0){$h$};
        \end{tikzpicture}
    \end{aligned}
    \begin{aligned}
        \begin{tikzpicture}
             \draw[line width=.6pt,black] (0,0)--(0,2.9);
             \draw[line width=.6pt,black] (0,0)--(0,-2.9);
             \draw[red] (0,0.8) arc[start angle=90, end angle=270, radius=0.8];
             \draw[blue] (0,1.3) arc[start angle=90, end angle=-90, radius=1.3];
            \node[ line width=0.6pt, dashed, draw opacity=0.5] (a) at (0.2,1.5){$i$};
             \node[ line width=0.6pt, dashed, draw opacity=0.5] (a) at (0.2,-1.5){$e'$};
            \node[ line width=0.6pt, dashed, draw opacity=0.5] (a) at (-1,0){$a''$};
            \node[ line width=0.6pt, dashed, draw opacity=0.5] (a) at (1.5,0){$b''$};
            \node[ line width=0.6pt, dashed, draw opacity=0.5] (a) at (-0.2,-1){$d''$};
            \node[ line width=0.6pt, dashed, draw opacity=0.5] (a) at (-0.2,-1.3){$\mu''$};
            \node[ line width=0.6pt, dashed, draw opacity=0.5] (a) at (0.35,-0.8){$\nu''$};
            \node[ line width=0.6pt, dashed, draw opacity=0.5] (a) at (0.3,0){$e''$};
            \node[ line width=0.6pt, dashed, draw opacity=0.5] (a) at (-0.2,1){$g''$};
            \node[ line width=0.6pt, dashed, draw opacity=0.5] (a) at (-0.4,1.3){$\gamma''$};
            \node[ line width=0.6pt, dashed, draw opacity=0.5] (a) at (0.2,0.8){$\zeta''$};
            \draw[red] (0,1.6) arc[start angle=90, end angle=270, radius=1.6];
            \node[ line width=0.6pt, dashed, draw opacity=0.5] (a) at (-0.2,1.8){$\sigma$};
          \node[ line width=0.6pt, dashed, draw opacity=0.5] (a) at (-0.3,-1.8){$\nu'$};
           \draw[blue] (0,2.1) arc[start angle=90, end angle=-90, radius=2.1];
           \draw[black] (0,2.9) arc[start angle=90, end angle=-90, radius=2.9];
            \node[ line width=0.6pt, dashed, draw opacity=0.5] (a) at (-1.8,0){$a'$};
                \node[ line width=0.6pt, dashed, draw opacity=0.5] (a) at (2.3,0){$b'$};
         \node[ line width=0.6pt, dashed, draw opacity=0.5] (a) at (0.2,1.8){$j$};
          \node[ line width=0.6pt, dashed, draw opacity=0.5] (a) at (0.2,-1.8){$d'$};
        \node[ line width=0.6pt, dashed, draw opacity=0.5] (a) at (-0.2,2.3){$\rho$};
        \node[ line width=0.6pt, dashed, draw opacity=0.5] (a) at (-0.2,-2.3){$\mu'$};
       \node[ line width=0.6pt, dashed, draw opacity=0.5] (a) at (3.1,0){$k$};
        \end{tikzpicture}
    \end{aligned} \\ 
    & = \sum_{i,j,k,\rho,\sigma}\sqrt{\frac{d_j}{d_{a'}d_i}}\sqrt{\frac{d_k}{d_jd_{b'}}} \delta_{f,h'}\delta_{e,k}\delta_{i,h''}\delta_{e',c''}\delta_{i,f'}\delta_{c,h}\delta_{e'',f''}\delta_{c',k} \frac{1}{d_h}\frac{1}{d_k}\\
    & \quad  \quad \delta_{\sigma,\zeta'}\delta_{\rho,\gamma'}\delta_{\nu,\zeta}\delta_{\mu,\gamma} \delta_{j,g'}\delta_{e,f}\delta_{d,g}\sqrt{d_ad_{a'}d_id_bd_{b'}d_h} \\
    &\quad \quad \delta_{\nu'',\zeta''}\delta_{\mu'',\gamma''}\delta_{\nu',\sigma}\delta_{\mu',\rho}\delta_{d'',g''}\delta_{e',i}\delta_{d',j} \sqrt{d_{a'}d_{a''}d_{e''}d_{b'}d_{b''}d_k} \\
    & = \delta_{f,h'}\delta_{e,c'}\delta_{f',h''}\delta_{e',c''}\delta_{c,h}\delta_{e'',f''} \delta_{\nu,\zeta}\delta_{\mu,\gamma}\delta_{\nu',\zeta'}\delta_{\mu',\gamma'}\delta_{\nu'',\zeta''}\delta_{\mu'',\gamma''}\\
    &\quad \quad \delta_{d'',g''}\delta_{e',f'}\delta_{d',g'}\delta_{e,f}\delta_{d,g}\sqrt{\frac{d_ad_{a'}d_{a''}d_{e''}d_bd_{b'}d_{b''}}{d_h}}, 
    \end{align*}
    as desired. Similarly, one has $\varepsilon(XYZ) = \sum_{(Y)}\varepsilon(XY^{(2)})\varepsilon(Y^{(1)}Z)$. % The proof is complete. 
\end{proof}

Before we proceed to construct the weak Hopf algebra structure of the tube algebra, it is worth discussing the left and right counital maps. Recall that by definition, the left and the right counital maps are 
\begin{equation*}
    \varepsilon_L(X) = \sum_{(1)}\varepsilon(1^{(1)}\cdot X)1^{(2)}\quad\text{and}\quad \varepsilon_R(X) = \sum_{(1)}1^{(1)}\varepsilon(X\cdot 1^{(2)}),
\end{equation*}
respectively. Using the coproduct of the unit given by Eq.~\eqref{eq:coprod-unit},  it follows immediately that  
\begin{align}
          &\varepsilon_L\left( 
          \begin{aligned}\begin{tikzpicture}
             \draw[line width=.6pt,black] (0,0.5)--(0,1.5);
             \draw[line width=.6pt,black] (0,-0.5)--(0,-1.5);
             \draw[red] (0,0.8) arc[start angle=90, end angle=270, radius=0.8];
             \draw[blue] (0,1.3) arc[start angle=90, end angle=-90, radius=1.3];
            \node[ line width=0.6pt, dashed, draw opacity=0.5] (a) at (0,1.7){$h$};
             \node[ line width=0.6pt, dashed, draw opacity=0.5] (a) at (0,-1.7){$c$};
            \node[ line width=0.6pt, dashed, draw opacity=0.5] (a) at (-1,0){$a$};
            \node[ line width=0.6pt, dashed, draw opacity=0.5] (a) at (1.5,0){$b$};
            \node[ line width=0.6pt, dashed, draw opacity=0.5] (a) at (-0.2,-1){$d$};
            \node[ line width=0.6pt, dashed, draw opacity=0.5] (a) at (-0.4,-1.3){$\mu$};
            \node[ line width=0.6pt, dashed, draw opacity=0.5] (a) at (0.2,-0.8){$\nu$};
            \node[ line width=0.6pt, dashed, draw opacity=0.5] (a) at (0,-0.3){$e$};
            \node[ line width=0.6pt, dashed, draw opacity=0.5] (a) at (0,0.3){$f$};
            \node[ line width=0.6pt, dashed, draw opacity=0.5] (a) at (-0.2,1){$g$};
            \node[ line width=0.6pt, dashed, draw opacity=0.5] (a) at (-0.4,1.3){$\gamma$};
            \node[ line width=0.6pt, dashed, draw opacity=0.5] (a) at (0.2,0.8){$\zeta$};
        \end{tikzpicture}
    \end{aligned}
          \right) =  \delta_{e,f}\delta_{c,h}\delta_{d,g}\delta_{\nu,\zeta}\delta_{\mu,\gamma} \sqrt{\frac{d_a d_f d_b}{d_h}}\;\sum_y
          \;\,\begin{aligned}
        \begin{tikzpicture}
             \draw[line width=.6pt,black] (0,0.5)--(0,1.5);
             \draw[line width=.6pt,black] (0,-0.5)--(0,-1.5);
             \draw[red, dotted] (0,0.8) arc[start angle=90, end angle=270, radius=0.8];
             \draw[blue,dotted] (0,1.3) arc[start angle=90, end angle=-90, radius=1.3];
            \node[ line width=0.6pt, dashed, draw opacity=0.5] (a) at (-0.2,-1){$y$};
            \node[ line width=0.6pt, dashed, draw opacity=0.5] (a) at (-0.2,1){$c$};
        \end{tikzpicture}
    \end{aligned}\;, \\
    &\varepsilon_R\left( 
          \begin{aligned}\begin{tikzpicture}
             \draw[line width=.6pt,black] (0,0.5)--(0,1.5);
             \draw[line width=.6pt,black] (0,-0.5)--(0,-1.5);
             \draw[red] (0,0.8) arc[start angle=90, end angle=270, radius=0.8];
             \draw[blue] (0,1.3) arc[start angle=90, end angle=-90, radius=1.3];
            \node[ line width=0.6pt, dashed, draw opacity=0.5] (a) at (0,1.7){$h$};
             \node[ line width=0.6pt, dashed, draw opacity=0.5] (a) at (0,-1.7){$c$};
            \node[ line width=0.6pt, dashed, draw opacity=0.5] (a) at (-1,0){$a$};
            \node[ line width=0.6pt, dashed, draw opacity=0.5] (a) at (1.5,0){$b$};
            \node[ line width=0.6pt, dashed, draw opacity=0.5] (a) at (-0.2,-1){$d$};
            \node[ line width=0.6pt, dashed, draw opacity=0.5] (a) at (-0.4,-1.3){$\mu$};
            \node[ line width=0.6pt, dashed, draw opacity=0.5] (a) at (0.2,-0.8){$\nu$};
            \node[ line width=0.6pt, dashed, draw opacity=0.5] (a) at (0,-0.3){$e$};
            \node[ line width=0.6pt, dashed, draw opacity=0.5] (a) at (0,0.3){$f$};
            \node[ line width=0.6pt, dashed, draw opacity=0.5] (a) at (-0.2,1){$g$};
            \node[ line width=0.6pt, dashed, draw opacity=0.5] (a) at (-0.4,1.3){$\gamma$};
            \node[ line width=0.6pt, dashed, draw opacity=0.5] (a) at (0.2,0.8){$\zeta$};
        \end{tikzpicture}
    \end{aligned}\right) =  \delta_{e,f}\delta_{c,h}\delta_{d,g}\delta_{\nu,\zeta}\delta_{\mu,\gamma} \sqrt{\frac{d_a d_f d_b}{d_h}}\;\sum_x
    \;\,\begin{aligned}
        \begin{tikzpicture}
             \draw[line width=.6pt,black] (0,0.5)--(0,1.5);
             \draw[line width=.6pt,black] (0,-0.5)--(0,-1.5);
             \draw[red, dotted] (0,0.8) arc[start angle=90, end angle=270, radius=0.8];
             \draw[blue,dotted] (0,1.3) arc[start angle=90, end angle=-90, radius=1.3];
            \node[ line width=0.6pt, dashed, draw opacity=0.5] (a) at (-0.2,-1){$f$};
            \node[ line width=0.6pt, dashed, draw opacity=0.5] (a) at (-0.2,1){$x$};
        \end{tikzpicture}
    \end{aligned}\;.
\end{align}

Now we are ready to introduce the antipode. Together with the weak bialgebra structure above, the tube algebra is endowed with a weak Hopf algebra structure. 

\begin{proposition}
  For any given UMFC $\ED$, the tube algebra $\mathbf{Tube}({_{\ED}\ED_{\ED}})$ constitutes a weak Hopf algebra with the following antipode operation $S$:
      \begin{equation} \label{eq:tube-antipode}
          S\left( 
          \begin{aligned}\begin{tikzpicture}
             \draw[line width=.6pt,black] (0,0.5)--(0,1.5);
             \draw[line width=.6pt,black] (0,-0.5)--(0,-1.5);
             \draw[red] (0,0.8) arc[start angle=90, end angle=270, radius=0.8];
             \draw[blue] (0,1.3) arc[start angle=90, end angle=-90, radius=1.3];
            \node[ line width=0.6pt, dashed, draw opacity=0.5] (a) at (0,1.7){$h$};
             \node[ line width=0.6pt, dashed, draw opacity=0.5] (a) at (0,-1.7){$c$};
            \node[ line width=0.6pt, dashed, draw opacity=0.5] (a) at (-1,0){$a$};
            \node[ line width=0.6pt, dashed, draw opacity=0.5] (a) at (1.5,0){$b$};
            \node[ line width=0.6pt, dashed, draw opacity=0.5] (a) at (-0.2,-1){$d$};
            \node[ line width=0.6pt, dashed, draw opacity=0.5] (a) at (-0.4,-1.3){$\mu$};
            \node[ line width=0.6pt, dashed, draw opacity=0.5] (a) at (0.2,-0.8){$\nu$};
            \node[ line width=0.6pt, dashed, draw opacity=0.5] (a) at (0,-0.3){$e$};
            \node[ line width=0.6pt, dashed, draw opacity=0.5] (a) at (0,0.3){$f$};
            \node[ line width=0.6pt, dashed, draw opacity=0.5] (a) at (-0.2,1){$g$};
            \node[ line width=0.6pt, dashed, draw opacity=0.5] (a) at (-0.4,1.3){$\gamma$};
            \node[ line width=0.6pt, dashed, draw opacity=0.5] (a) at (0.2,0.8){$\zeta$};
        \end{tikzpicture}
    \end{aligned}
          \right) =\frac{d_f}{d_h}\;\begin{aligned}\begin{tikzpicture}
             \draw[line width=.6pt,black] (0,0.5)--(0,1.5);
             \draw[line width=.6pt,black] (0,-0.5)--(0,-1.5);
             \draw[red] (0,1.3) arc[start angle=90, end angle=270, radius=1.3];
             \draw[blue] (0,0.8) arc[start angle=90, end angle=-90, radius=0.8];
            \node[ line width=0.6pt, dashed, draw opacity=0.5] (a) at (0,1.7){$e$};
             \node[ line width=0.6pt, dashed, draw opacity=0.5] (a) at (0,-1.7){$f$};
            \node[ line width=0.6pt, dashed, draw opacity=0.5] (a) at (-1,0){$\bar{a}$};
            \node[ line width=0.6pt, dashed, draw opacity=0.5] (a) at (1,0){$\bar{b}$};
            \node[ line width=0.6pt, dashed, draw opacity=0.5] (a) at (-0.2,-1){$g$};
            \node[ line width=0.6pt, dashed, draw opacity=0.5] (a) at (0.2,-1.3){$\zeta$};
            \node[ line width=0.6pt, dashed, draw opacity=0.5] (a) at (-0.2,-0.7){$\gamma$};
            \node[ line width=0.6pt, dashed, draw opacity=0.5] (a) at (0,-0.3){$h$};
            \node[ line width=0.6pt, dashed, draw opacity=0.5] (a) at (0,0.3){$c$};
            \node[ line width=0.6pt, dashed, draw opacity=0.5] (a) at (-0.2,1){$d$};
            \node[ line width=0.6pt, dashed, draw opacity=0.5] (a) at (-0.4,1.4){$\nu$};
            \node[ line width=0.6pt, dashed, draw opacity=0.5] (a) at (0.3,1.0){$\mu$};
        \end{tikzpicture}
    \end{aligned}\;.
      \end{equation}

\end{proposition}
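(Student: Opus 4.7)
My plan is to verify the three antipode axioms directly, relying as much as possible on the graphical calculus of $\ED$ rather than on F-symbol manipulations.

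First, I would observe that the antipode defined by \eqref{eq:tube-antipode} is, up to the scalar factor $d_f/d_h$, the operation of rotating a tube basis diagram by $180^\circ$ and replacing the two arcs $a,b$ by their duals $\bar a,\bar b$; equivalently, it is the 3-dimensional $\pi$-rotation of the tube around a horizontal axis composed with the pivotal isomorphism on the arcs. This already makes the sovereign/pivotal symmetry of the UMFC $\ED$ available as a bookkeeping device. As a preliminary step I would check that $S$ is well-defined with respect to the grading on $\ED$ (so that $\bar a,\bar b$ land in the correct graded components and the fusion conditions at the four new trivalent vertices are satisfied), and record the two side identities $S\comp \eta = \eta$, $\varepsilon \comp S = \varepsilon$, which follow immediately from the explicit formulas for the unit, counit and $S$ and from $d_{\bar a}=d_a$. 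I would also note that the prefactor $d_f/d_h$ is exactly what is needed to compensate for the change of the ``hole weight'' between the two sides of the tube when the outer and inner arcs are swapped.

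Next I would verify the first two antipode axioms, $\mu\comp(\id\otimes S)\comp\Delta=\varepsilon_L$ and $\mu\comp(S\otimes\id)\comp\Delta=\varepsilon_R$. Using the description of $\Delta$ as insertion of a wall $i\in\Irr(\ED)$ followed by parallel moves \eqref{eq:VerticalFusion}--\eqref{eq:TubeComultiplication}, applying $S$ to the upper tensor factor flips that half-tube upside down with dualized arcs, and then the multiplication glues it on top of the lower factor. After the $180^\circ$ reflection the two $a$-arcs (resp.\ two $b$-arcs) meet into a pair $a,\bar a$ (resp.\ $b,\bar b$) on the same side of the tube, which can be collapsed by the loop move \eqref{eq:loopev}; summing over the inserted label $i$ with the weight $\sqrt{d_j/(d_ad_i)}\sqrt{d_k/(d_jd_b)}$ against the quantum dimension factor $d_f/d_h$ in $S$ produces, via completeness of fusion basis, precisely the projector onto the configuration where the left and right arcs carry the tensor unit. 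Comparing with the explicit formulas for $\varepsilon_L$ and $\varepsilon_R$ displayed just before the proposition then yields the two axioms. The heart of this step is the identity
\begin{equation*}
\sum_{i}\sqrt{\frac{d_j}{d_ad_i}}\sqrt{\frac{d_k}{d_jd_b}}\,\frac{d_f}{d_h}\;(\text{loop-close }a\bar a\text{ and }b\bar b) \;=\; \sqrt{\frac{d_a d_f d_b}{d_h}}\,(\text{tensor unit arcs}),
\end{equation*}
which is just the loop move repackaged; the compatibility of the projections on the two inner edges with the Markov inner product of vertices (Definition~\ref{def:innerproduct}) delivers the Kronecker deltas $\delta_{e,f}$, $\delta_{c,h}$, $\delta_{d,g}$, $\delta_{\nu,\zeta}$, $\delta_{\mu,\gamma}$ appearing in $\varepsilon_L,\varepsilon_R$.

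Finally, for the third antipode axiom $S = \mu\comp(\mu\otimes\id)\comp(S\otimes\id\otimes S)\comp(\Delta\otimes\id)\comp\Delta$, I would use the second characterisation of $\Delta$: inserting two horizontal walls at generic intermediate labels decomposes the tube into three stacked sub-tubes, and applying $S$ to the top and bottom ones amounts to flipping them each upside down (with dualization of the outer arcs). The resulting composite is then a ``folded'' configuration whose outer $a,\bar a,a$ and inner $b,\bar b,b$ strands can be straightened by two successive loop moves, leaving exactly the $180^\circ$-rotated tube with dualized arcs and the correct quantum-dimension factor, i.e.\ $S(X)$. This essentially reduces to the snake identity $(\id_X\otimes d_X)\comp(b_X\otimes\id_X)=\id_X$ in $\ED$ applied to both the outer and inner arcs. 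The main obstacle, and the place where one actually has to compute, will be bookkeeping the prefactors $d_f/d_h$ coming from $S$, the $\sqrt{d_jd_k/\cdots}$ coming from each $\Delta$, and the local normalizations $Y^{ab}_c$ appearing in the loop and parallel moves; I expect the cleanest way to handle this is to choose the standard gauge $Y^{ab}_c=\sqrt{d_ad_b/d_c}$ throughout and to check a single generic basis vector, since the manipulation is local and independent of the remaining labels on the tube.
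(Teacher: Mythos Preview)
Your proposal is correct and follows essentially the same graphical strategy as the paper: for the first two axioms the paper also stacks $X^{(1)}$ with $S(X^{(2)})$ (resp.\ $S(X^{(1)})$ with $X^{(2)}$), then collapses the resulting $a,\bar a$ and $b,\bar b$ half-loops by the parallel move \eqref{eq:paraev} and a loop move, tracking the same prefactors you describe. One small slip: for $\mu\comp(\id\otimes S)\comp\Delta=\varepsilon_L$ it is the \emph{second} Sweedler factor $X^{(2)}$ that receives $S$, which in the paper's stacking convention is the lower sub-tube, not the upper one; this does not affect your argument.

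The only genuine difference is in the third axiom. You propose to verify $\sum S(X^{(1)})X^{(2)}S(X^{(3)})=S(X)$ directly by inserting two walls, flipping the outer two sub-tubes, and straightening via the snake identities. The paper instead uses the already-established identity~\eqref{eq:antipode-1} to rewrite $X^{(2)}S(X^{(3)})=\varepsilon_L(X^{(2)})$, reducing the computation to $\sum S(X^{(1)})\varepsilon_L(X^{(2)})$, which collapses to $S(X)$ after the $\delta$'s from $\varepsilon_L$ kill the summation. Both routes are valid; the paper's shortcut avoids the double snake bookkeeping you flag as the main obstacle, so if you find the prefactors getting unwieldy in your direct approach, this substitution is the cleaner way through.
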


\begin{proof}
For $X=[a;b]_{c,d,e;\mu,\nu}^{f,g,h;\zeta,\gamma}$, we need to verify three identities:
\begin{gather}
    \sum_{(X)}X^{(1)}S(X^{(2)}) = \varepsilon_L(X), \label{eq:antipode-1}\\
    \sum_{(X)}S(X^{(1)})X^{(2)} = \varepsilon_R(X),\label{eq:antipode-2}\\ \sum_{(X)}S(X^{(1)})X^{(2)}S(X^{(3)}) = S(X). \label{eq:antipode-3}
\end{gather}
The left-hand side of \eqref{eq:antipode-1} is equal to 
\begin{align*}
      & \sum_{i,j,k,\rho,\sigma} \sqrt{\frac{d_j}{d_ad_i}}\sqrt{\frac{d_k}{d_jd_b}} \frac{d_i}{d_k}\begin{aligned}\begin{tikzpicture}
             \draw[line width=.6pt,black] (0,0.5)--(0,1.5);
             \draw[line width=.6pt,black] (0,-0.5)--(0,-1.5);
             \draw[red] (0,0.8) arc[start angle=90, end angle=270, radius=0.8];
             \draw[blue] (0,1.3) arc[start angle=90, end angle=-90, radius=1.3];
            \node[ line width=0.6pt, dashed, draw opacity=0.5] (a) at (0,1.7){$h$};
             \node[ line width=0.6pt, dashed, draw opacity=0.5] (a) at (0,-1.7){$k$};
            \node[ line width=0.6pt, dashed, draw opacity=0.5] (a) at (-1,0){$a$};
            \node[ line width=0.6pt, dashed, draw opacity=0.5] (a) at (1.5,0){$b$};
            \node[ line width=0.6pt, dashed, draw opacity=0.5] (a) at (-0.2,-1){$j$};
            \node[ line width=0.6pt, dashed, draw opacity=0.5] (a) at (-0.2,-1.4){$\rho$};
            \node[ line width=0.6pt, dashed, draw opacity=0.5] (a) at (0.2,-0.8){$\sigma$};
            \node[ line width=0.6pt, dashed, draw opacity=0.5] (a) at (0,-0.3){$i$};
            \node[ line width=0.6pt, dashed, draw opacity=0.5] (a) at (0,0.3){$f$};
            \node[ line width=0.6pt, dashed, draw opacity=0.5] (a) at (-0.2,1){$g$};
            \node[ line width=0.6pt, dashed, draw opacity=0.5] (a) at (-0.2,1.3){$\gamma$};
            \node[ line width=0.6pt, dashed, draw opacity=0.5] (a) at (0.2,0.8){$\zeta$};
        \end{tikzpicture}
    \end{aligned}  \; 
    \begin{aligned}
    \begin{tikzpicture}
             \draw[line width=.6pt,black] (0,0.5)--(0,1.5);
             \draw[line width=.6pt,black] (0,-0.5)--(0,-1.5);
             \draw[blue] (0,0.8) arc[start angle=90, end angle=-90, radius=0.8];
             \draw[red] (0,1.3) arc[start angle=90, end angle=270, radius=1.3];
            \node[ line width=0.6pt, dashed, draw opacity=0.5] (a) at (0,1.7){$e$};
             \node[ line width=0.6pt, dashed, draw opacity=0.5] (a) at (0,-1.7){$i$};
            \node[ line width=0.6pt, dashed, draw opacity=0.5] (a) at (-1.1,0){$\bar{a}$};
            \node[ line width=0.6pt, dashed, draw opacity=0.5] (a) at (1,0){$\bar{b}$};
            \node[ line width=0.6pt, dashed, draw opacity=0.5] (a) at (0.2,-1){$j$};
            \node[ line width=0.6pt, dashed, draw opacity=0.5] (a) at (0.2,-1.4){$\sigma$};
            \node[ line width=0.6pt, dashed, draw opacity=0.5] (a) at (-0.2,-0.8){$\rho$};
            \node[ line width=0.6pt, dashed, draw opacity=0.5] (a) at (0,-0.3){$k$};
            \node[ line width=0.6pt, dashed, draw opacity=0.5] (a) at (0,0.3){$c$};
            \node[ line width=0.6pt, dashed, draw opacity=0.5] (a) at (0.2,1){$d$};
            \node[ line width=0.6pt, dashed, draw opacity=0.5] (a) at (0.2,1.3){$\nu$};
            \node[ line width=0.6pt, dashed, draw opacity=0.5] (a) at (-0.2,0.8){$\mu$};
        \end{tikzpicture}
    \end{aligned} \\
    = & \sum_{i,j,k,\rho,\sigma} \sqrt{\frac{d_i}{d_ad_j}}\sqrt{\frac{d_j}{d_kd_b}}\,\delta_{e,f} \begin{aligned}
    \begin{tikzpicture}
             \draw[line width=.6pt,black] (0,0.5)--(0,2.5);
             \draw[line width=.6pt,black] (0,-0.5)--(0,-2.5);
             \draw[blue] (0,0.8) arc[start angle=90, end angle=-90, radius=0.8];
             \draw[red] (0,1.3) arc[start angle=90, end angle=270, radius=1.3];
             \draw[red] (0,1.8) arc[start angle=90, end angle=270, radius=1.8];
             \draw[blue] (0,2.3) arc[start angle=90, end angle=-90, radius=2.3];
            \node[ line width=0.6pt, dashed, draw opacity=0.5] (a) at (-0.2,1.5){$e$};
             \node[ line width=0.6pt, dashed, draw opacity=0.5] (a) at (-0.2,-1.5){$i$};
            \node[ line width=0.6pt, dashed, draw opacity=0.5] (a) at (-1.1,0){$\bar{a}$};
            \node[ line width=0.6pt, dashed, draw opacity=0.5] (a) at (1,0){$\bar{b}$};
            \node[ line width=0.6pt, dashed, draw opacity=0.5] (a) at (0.2,-1){$j$};
            \node[ line width=0.6pt, dashed, draw opacity=0.5] (a) at (0.2,-1.4){$\sigma$};
            \node[ line width=0.6pt, dashed, draw opacity=0.5] (a) at (-0.2,-0.8){$\rho$};
            \node[ line width=0.6pt, dashed, draw opacity=0.5] (a) at (0,-0.3){$k$};
            \node[ line width=0.6pt, dashed, draw opacity=0.5] (a) at (0,0.3){$c$};
            \node[ line width=0.6pt, dashed, draw opacity=0.5] (a) at (0.2,1){$d$};
            \node[ line width=0.6pt, dashed, draw opacity=0.5] (a) at (0.2,1.3){$\nu$};
            \node[ line width=0.6pt, dashed, draw opacity=0.5] (a) at (-0.2,0.8){$\mu$};
            \node[ line width=0.6pt, dashed, draw opacity=0.5] (a) at (-2,0){$a$};
            \node[ line width=0.6pt, dashed, draw opacity=0.5] (a) at (2.1,0){$b$};
            \node[ line width=0.6pt, dashed, draw opacity=0.5] (a) at (0,2.7){$h$};
            \node[ line width=0.6pt, dashed, draw opacity=0.5] (a) at (0,-2.7){$k$};
            \node[ line width=0.6pt, dashed, draw opacity=0.5] (a) at (-0.2,2.2){$\gamma$};
            \node[ line width=0.6pt, dashed, draw opacity=0.5] (a) at (0.2,2){$g$};
            \node[ line width=0.6pt, dashed, draw opacity=0.5] (a) at (0.2,1.63){$\zeta$}; 
            \node[ line width=0.6pt, dashed, draw opacity=0.5] (a) at (0.2,-1.7){$\sigma$};
            \node[ line width=0.6pt, dashed, draw opacity=0.5] (a) at (0.2,-2){$j$};
            \node[ line width=0.6pt, dashed, draw opacity=0.5] (a) at (-0.2,-2.2){$\rho$};
        \end{tikzpicture}
    \end{aligned} \\
    = &\quad \sum_{k}\delta_{e,f} \begin{aligned}
    \begin{tikzpicture}
             \draw[line width=.6pt,black] (0,-0.6)--(0,2.6);
             \draw[line width=.6pt,black] (0,-1)--(0,-2);
             \draw[red] (0,1.7) arc[start angle=90, end angle=270, radius=0.7];
             \draw[blue] (0,2.3) arc[start angle=90, end angle=-90, radius=1.3];
            \node[ line width=0.6pt, dashed, draw opacity=0.5] (a) at (-0.2,1){$f$};
            \node[ line width=0.6pt, dashed, draw opacity=0.5] (a) at (-0.2,-1.5){$k$};
            \node[ line width=0.6pt, dashed, draw opacity=0.5] (a) at (-0.2,-0.5){$c$};
            \node[ line width=0.6pt, dashed, draw opacity=0.5] (a) at (0.2,0.1){$d$};
            \node[ line width=0.6pt, dashed, draw opacity=0.5] (a) at (0.2,0.5){$\nu$};
            \node[ line width=0.6pt, dashed, draw opacity=0.5] (a) at (-0.2,-0.2){$\mu$};
            \node[ line width=0.6pt, dashed, draw opacity=0.5] (a) at (-0.9,1){$a$};
            \node[ line width=0.6pt, dashed, draw opacity=0.5] (a) at (1.5,1){$b$};
            \node[ line width=0.6pt, dashed, draw opacity=0.5] (a) at (0,2.85){$h$};
            \node[ line width=0.6pt, dashed, draw opacity=0.5] (a) at (-0.2,2.4){$\gamma$};
            \node[ line width=0.6pt, dashed, draw opacity=0.5] (a) at (0.2,2){$g$};
            \node[ line width=0.6pt, dashed, draw opacity=0.5] (a) at (0.2,1.63){$\zeta$}; 
        \end{tikzpicture}
    \end{aligned} = \sum_k \delta_{e,f}\delta_{d,g}\delta_{\nu,\zeta}\delta_{c,h}\delta_{\mu,\gamma}\sqrt{\frac{d_ad_f}{d_d}}\sqrt{\frac{d_dd_b}{d_h}} \;\;\begin{aligned}
        \begin{tikzpicture}
             \draw[line width=.6pt,black] (0,0.5)--(0,1.5);
             \draw[line width=.6pt,black] (0,-0.5)--(0,-1.5);
             \draw[red, dotted] (0,0.8) arc[start angle=90, end angle=270, radius=0.8];
             \draw[blue,dotted] (0,1.3) arc[start angle=90, end angle=-90, radius=1.3];
            \node[ line width=0.6pt, dashed, draw opacity=0.5] (a) at (-0.2,-1){$k$};
            \node[ line width=0.6pt, dashed, draw opacity=0.5] (a) at (-0.2,1){$h$};
        \end{tikzpicture}
    \end{aligned}\;,
\end{align*}
which is just $\varepsilon_L(X)$. Here in the second equality we used the parallel move twice, namely, we first apply $\displaystyle \sum_{i,\sigma}\sqrt{\frac{d_i}{d_ad_j}}
	\begin{aligned}
		\begin{tikzpicture}
			\draw[-latex,line width=.6pt,black] (0.4,-0.4) -- (0.1,-0.1);
			\draw[line width=.6pt,black] (0.3,-0.3) -- (0,0);
			\draw[-latex,line width=.6pt,red] (-0.4,-0.4) -- (-0.1,-0.1);
			\draw[line width=.6pt,red] (-0.3,-0.3) -- (0,0);
			\draw[-latex,line width=.6pt,black](0,0)--(0,0.45);
			\draw[line width=.6pt,black](0,0.1)--(0,0.7);
			\draw[line width=.6pt,red](-0.4,1.1)--(0,0.7);
			\draw[line width=.6pt,black](0.4,1.1)--(0,0.7);
			\draw[-latex,line width=.6pt,black](0,0.7)--(0.3,1);
			\draw[line width=.6pt,black](0.4,1.1)--(0,0.7);
			\draw[-latex,line width=.6pt,red](0,0.7)--(-0.3,1);
			\node[ line width=0.6pt, dashed, draw opacity=0.5] (a) at (0.4,-0.6){$j$};
			\node[ line width=0.6pt, dashed, draw opacity=0.5] (a) at (-0.4,-0.6){$a$};
			\node[ line width=0.6pt, dashed, draw opacity=0.5] (a) at (0.3,0.1){$\sigma$};
			\node[ line width=0.6pt, dashed, draw opacity=0.5] (a) at (0.3,0.6){$\sigma$};
			\node[ line width=0.6pt, dashed, draw opacity=0.5] (a) at (-0.2,0.3){$i$};
			\node[ line width=0.6pt, dashed, draw opacity=0.5] (a) at (0.4,1.3){$j$};
			\node[ line width=0.6pt, dashed, draw opacity=0.5] (a) at (-0.4,1.3){$a$};
		\end{tikzpicture}
	\end{aligned} = \begin{aligned}
		\begin{tikzpicture}
			\draw[line width=.6pt,red](0,-1.1)--(0,0.5);
			\draw[-latex,line width=.6pt,red](0,-0.3)--(0,0);
			\node[ line width=0.6pt, dashed, draw opacity=0.5] (a) at (0,0.7){$a$};
			\draw[line width=.6pt,black](0.6,-1.1)--(0.6,0.5);
			\draw[-latex,line width=.6pt,black](0.6,-0.3)--(0.6,0);
			\node[ line width=0.6pt, dashed, draw opacity=0.5] (a) at (0.6,0.7){$j$};
		\end{tikzpicture}
	\end{aligned}$ to separate $a$ and $j$, which forms the red half-loop up to planar isotopy, and then apply this move again to $k$ and $b$. Similarly, one also has Eq.~\eqref{eq:antipode-2}. Finally, one computes $\sum_{(X)}S(X^{(1)})X^{(2)}S(X^{(3)}) = \sum_{(X)}S(X^{(1)})\varepsilon_L(X^{(2)})$ as follows:
\begin{align*}
    \sum_{i,j,k,\rho,\sigma} &\sqrt{ \frac{d_j}{d_ad_i}}\sqrt{ \frac{d_k}{d_jd_b}} \sum_y \delta_{i,e}\delta_{k,c}\delta_{j,d}\delta_{\sigma,\nu}\delta_{\rho,\mu}\sqrt{\frac{d_ad_bd_e}{d_c}} \,\frac{d_f}{d_h}  
        \;\begin{aligned}\begin{tikzpicture}
             \draw[line width=.6pt,black] (0,0.5)--(0,1.5);
             \draw[line width=.6pt,black] (0,-0.5)--(0,-1.5);
             \draw[red] (0,1.3) arc[start angle=90, end angle=270, radius=1.3];
             \draw[blue] (0,0.8) arc[start angle=90, end angle=-90, radius=0.8];
            \node[ line width=0.6pt, dashed, draw opacity=0.5] (a) at (0,1.7){$i$};
             \node[ line width=0.6pt, dashed, draw opacity=0.5] (a) at (0,-1.7){$f$};
            \node[ line width=0.6pt, dashed, draw opacity=0.5] (a) at (-1,0){$\bar{a}$};
            \node[ line width=0.6pt, dashed, draw opacity=0.5] (a) at (1,0){$\bar{b}$};
            \node[ line width=0.6pt, dashed, draw opacity=0.5] (a) at (-0.2,-1){$g$};
            \node[ line width=0.6pt, dashed, draw opacity=0.5] (a) at (0.2,-1.3){$\zeta$};
            \node[ line width=0.6pt, dashed, draw opacity=0.5] (a) at (-0.2,-0.7){$\sigma$};
            \node[ line width=0.6pt, dashed, draw opacity=0.5] (a) at (0,-0.3){$h$};
            \node[ line width=0.6pt, dashed, draw opacity=0.5] (a) at (0,0.3){$k$};
            \node[ line width=0.6pt, dashed, draw opacity=0.5] (a) at (-0.2,1){$j$};
            \node[ line width=0.6pt, dashed, draw opacity=0.5] (a) at (-0.4,1.4){$\sigma$};
            \node[ line width=0.6pt, dashed, draw opacity=0.5] (a) at (0.3,1.0){$\rho$};
        \end{tikzpicture}
    \end{aligned} \; 
    \begin{aligned}
        \begin{tikzpicture}
             \draw[line width=.6pt,black] (0,0.5)--(0,1.5);
             \draw[line width=.6pt,black] (0,-0.5)--(0,-1.5);
             \draw[red, dotted] (0,0.8) arc[start angle=90, end angle=270, radius=0.8];
             \draw[blue,dotted] (0,1.3) arc[start angle=90, end angle=-90, radius=1.3];
            \node[ line width=0.6pt, dashed, draw opacity=0.5] (a) at (0,-1.7){$y$};
            \node[ line width=0.6pt, dashed, draw opacity=0.5] (a) at (0,1.7){$c$};
        \end{tikzpicture}
    \end{aligned} \\
    & = \sum_{i,j,k,\rho,\sigma} \sqrt{\frac{d_j}{d_ad_i}}\sqrt{ \frac{d_k}{d_jd_b}} \delta_{i,e}\delta_{k,c}\delta_{j,d}\delta_{\sigma,\nu}\delta_{\rho,\mu}\sqrt{\frac{d_ad_bd_e}{d_c}} \,\frac{d_f}{d_h}  
        \;\begin{aligned}\begin{tikzpicture}
             \draw[line width=.6pt,black] (0,0.5)--(0,1.5);
             \draw[line width=.6pt,black] (0,-0.5)--(0,-1.5);
             \draw[red] (0,1.3) arc[start angle=90, end angle=270, radius=1.3];
             \draw[blue] (0,0.8) arc[start angle=90, end angle=-90, radius=0.8];
            \node[ line width=0.6pt, dashed, draw opacity=0.5] (a) at (0,1.7){$i$};
             \node[ line width=0.6pt, dashed, draw opacity=0.5] (a) at (0,-1.7){$f$};
            \node[ line width=0.6pt, dashed, draw opacity=0.5] (a) at (-1,0){$\bar{a}$};
            \node[ line width=0.6pt, dashed, draw opacity=0.5] (a) at (1,0){$\bar{b}$};
            \node[ line width=0.6pt, dashed, draw opacity=0.5] (a) at (-0.2,-1){$g$};
            \node[ line width=0.6pt, dashed, draw opacity=0.5] (a) at (0.2,-1.3){$\zeta$};
            \node[ line width=0.6pt, dashed, draw opacity=0.5] (a) at (-0.2,-0.7){$\sigma$};
            \node[ line width=0.6pt, dashed, draw opacity=0.5] (a) at (0,-0.3){$h$};
            \node[ line width=0.6pt, dashed, draw opacity=0.5] (a) at (0,0.3){$k$};
            \node[ line width=0.6pt, dashed, draw opacity=0.5] (a) at (-0.2,1){$j$};
            \node[ line width=0.6pt, dashed, draw opacity=0.5] (a) at (-0.4,1.4){$\sigma$};
            \node[ line width=0.6pt, dashed, draw opacity=0.5] (a) at (0.3,1.0){$\rho$};
        \end{tikzpicture}
    \end{aligned}\;,
\end{align*}
which is equal to $S(X)$. This finishes the proof. 
\end{proof}

\begin{remark} \label{rmk:antipode}
    For $X=[a;b]_{c,d,e;\mu,\nu}^{f,g,h;\zeta,\gamma}$, when computing $S^2(X)$, one first needs to rewrite the right-hand side of Eq.~\eqref{eq:tube-antipode} by two F-moves in terms of basis diagrams, and then apply $S$ to the basis diagrams. But it can be shown that this process can be achieved by doing a similar operation directly as that in the definition of $S$. Specifically, one has 
    \begin{equation} 
          S\left( \begin{aligned}\begin{tikzpicture}
             \draw[line width=.6pt,black] (0,0.5)--(0,1.5);
             \draw[line width=.6pt,black] (0,-0.5)--(0,-1.5);
             \draw[red] (0,1.3) arc[start angle=90, end angle=270, radius=1.3];
             \draw[blue] (0,0.8) arc[start angle=90, end angle=-90, radius=0.8];
            \node[ line width=0.6pt, dashed, draw opacity=0.5] (a) at (0,1.7){$e$};
             \node[ line width=0.6pt, dashed, draw opacity=0.5] (a) at (0,-1.7){$f$};
            \node[ line width=0.6pt, dashed, draw opacity=0.5] (a) at (-1,0){$\bar{a}$};
            \node[ line width=0.6pt, dashed, draw opacity=0.5] (a) at (1,0){$\bar{b}$};
            \node[ line width=0.6pt, dashed, draw opacity=0.5] (a) at (-0.2,-1){$g$};
            \node[ line width=0.6pt, dashed, draw opacity=0.5] (a) at (0.2,-1.3){$\zeta$};
            \node[ line width=0.6pt, dashed, draw opacity=0.5] (a) at (-0.2,-0.7){$\gamma$};
            \node[ line width=0.6pt, dashed, draw opacity=0.5] (a) at (0,-0.3){$h$};
            \node[ line width=0.6pt, dashed, draw opacity=0.5] (a) at (0,0.3){$c$};
            \node[ line width=0.6pt, dashed, draw opacity=0.5] (a) at (-0.2,1){$d$};
            \node[ line width=0.6pt, dashed, draw opacity=0.5] (a) at (-0.4,1.4){$\nu$};
            \node[ line width=0.6pt, dashed, draw opacity=0.5] (a) at (0.3,1.0){$\mu$};
        \end{tikzpicture}
    \end{aligned}
          \right) =\frac{d_c}{d_e}\; \begin{aligned}\begin{tikzpicture}
             \draw[line width=.6pt,black] (0,0.5)--(0,1.5);
             \draw[line width=.6pt,black] (0,-0.5)--(0,-1.5);
             \draw[red] (0,0.8) arc[start angle=90, end angle=270, radius=0.8];
             \draw[blue] (0,1.3) arc[start angle=90, end angle=-90, radius=1.3];
            \node[ line width=0.6pt, dashed, draw opacity=0.5] (a) at (0,1.7){$h$};
             \node[ line width=0.6pt, dashed, draw opacity=0.5] (a) at (0,-1.7){$c$};
            \node[ line width=0.6pt, dashed, draw opacity=0.5] (a) at (-1,0){$a$};
            \node[ line width=0.6pt, dashed, draw opacity=0.5] (a) at (1.5,0){$b$};
            \node[ line width=0.6pt, dashed, draw opacity=0.5] (a) at (-0.2,-1){$d$};
            \node[ line width=0.6pt, dashed, draw opacity=0.5] (a) at (-0.4,-1.3){$\mu$};
            \node[ line width=0.6pt, dashed, draw opacity=0.5] (a) at (0.2,-0.8){$\nu$};
            \node[ line width=0.6pt, dashed, draw opacity=0.5] (a) at (0,-0.3){$e$};
            \node[ line width=0.6pt, dashed, draw opacity=0.5] (a) at (0,0.3){$f$};
            \node[ line width=0.6pt, dashed, draw opacity=0.5] (a) at (-0.2,1){$g$};
            \node[ line width=0.6pt, dashed, draw opacity=0.5] (a) at (-0.4,1.3){$\gamma$};
            \node[ line width=0.6pt, dashed, draw opacity=0.5] (a) at (0.2,0.8){$\zeta$};
        \end{tikzpicture}
    \end{aligned}\;.
      \end{equation}
    Therefore, one has $S^2(X) = (d_fd_c/d_hd_e)X$ by inspecting the diagrams. This means that $S^2\neq \id$ in general, which is different from the Hopf algebra case. 
    For a general weak Hopf algebra $W$, there exists a grouplike element $\xi\in W$ such that $S^2(h)=\xi h\xi^{-1}$ holds for all $h\in W$.  For the tube algebra, we can write down a grouplike element which realizes this condition. In fact, this grouplike element $\xi$ of $\mathbf{Tube}({_{\ED}}\ED_{\ED})$ and its inverse are given by 
    \begin{equation}
    \xi = \sum_{x,y} \frac{d_y}{d_x} 
    \begin{aligned}
    \begin{tikzpicture}
         \draw[line width=.6pt,black] (0,0.2)--(0,0.6);
         \draw[line width=.6pt,black] (0,-0.2)--(0,-0.6);
            \node[ line width=0.6pt, dashed, draw opacity=0.5] (a) at (-0.2,0.3){$x$};
        \node[ line width=0.6pt, dashed, draw opacity=0.5] (a) at (-0.2,-0.3){$y$};
    \end{tikzpicture}
\end{aligned}\;, \quad 
\xi^{-1} = \sum_{x',y'} \frac{d_{x'}}{d_{y'}} 
    \begin{aligned}
    \begin{tikzpicture}
         \draw[line width=.6pt,black] (0,0.2)--(0,0.6);
         \draw[line width=.6pt,black] (0,-0.2)--(0,-0.6);
            \node[ line width=0.6pt, dashed, draw opacity=0.5] (a) at (-0.2,0.3){$x'$};
        \node[ line width=0.6pt, dashed, draw opacity=0.5] (a) at (-0.2,-0.3){$y'$};
    \end{tikzpicture}
\end{aligned}\;. 
\end{equation}
That is because 
    \begin{align*}
        \xi X\xi^{-1}=\sum_{x,y}\sum_{x',y'}\delta_{x,h}\delta_{y,c}\frac{d_y}{d_x}\delta_{x',f}\delta_{y',e}\frac{d_{x'}}{d_{y'}}\begin{aligned}\begin{tikzpicture}
             \draw[line width=.6pt,black] (0,0.5)--(0,1.5);
             \draw[line width=.6pt,black] (0,-0.5)--(0,-1.5);
             \draw[red] (0,0.8) arc[start angle=90, end angle=270, radius=0.8];
             \draw[blue] (0,1.3) arc[start angle=90, end angle=-90, radius=1.3];
            \node[ line width=0.6pt, dashed, draw opacity=0.5] (a) at (0,1.7){$h$};
             \node[ line width=0.6pt, dashed, draw opacity=0.5] (a) at (0,-1.7){$c$};
            \node[ line width=0.6pt, dashed, draw opacity=0.5] (a) at (-1,0){$a$};
            \node[ line width=0.6pt, dashed, draw opacity=0.5] (a) at (1.5,0){$b$};
            \node[ line width=0.6pt, dashed, draw opacity=0.5] (a) at (-0.2,-1){$d$};
            \node[ line width=0.6pt, dashed, draw opacity=0.5] (a) at (-0.4,-1.3){$\mu$};
            \node[ line width=0.6pt, dashed, draw opacity=0.5] (a) at (0.2,-0.8){$\nu$};
            \node[ line width=0.6pt, dashed, draw opacity=0.5] (a) at (0,-0.3){$e$};
            \node[ line width=0.6pt, dashed, draw opacity=0.5] (a) at (0,0.3){$f$};
            \node[ line width=0.6pt, dashed, draw opacity=0.5] (a) at (-0.2,1){$g$};
            \node[ line width=0.6pt, dashed, draw opacity=0.5] (a) at (-0.4,1.3){$\gamma$};
            \node[ line width=0.6pt, dashed, draw opacity=0.5] (a) at (0.2,0.8){$\zeta$};
        \end{tikzpicture}
    \end{aligned}  = \frac{d_cd_f}{d_hd_e}\begin{aligned}\begin{tikzpicture}
             \draw[line width=.6pt,black] (0,0.5)--(0,1.5);
             \draw[line width=.6pt,black] (0,-0.5)--(0,-1.5);
             \draw[red] (0,0.8) arc[start angle=90, end angle=270, radius=0.8];
             \draw[blue] (0,1.3) arc[start angle=90, end angle=-90, radius=1.3];
            \node[ line width=0.6pt, dashed, draw opacity=0.5] (a) at (0,1.7){$h$};
             \node[ line width=0.6pt, dashed, draw opacity=0.5] (a) at (0,-1.7){$c$};
            \node[ line width=0.6pt, dashed, draw opacity=0.5] (a) at (-1,0){$a$};
            \node[ line width=0.6pt, dashed, draw opacity=0.5] (a) at (1.5,0){$b$};
            \node[ line width=0.6pt, dashed, draw opacity=0.5] (a) at (-0.2,-1){$d$};
            \node[ line width=0.6pt, dashed, draw opacity=0.5] (a) at (-0.4,-1.3){$\mu$};
            \node[ line width=0.6pt, dashed, draw opacity=0.5] (a) at (0.2,-0.8){$\nu$};
            \node[ line width=0.6pt, dashed, draw opacity=0.5] (a) at (0,-0.3){$e$};
            \node[ line width=0.6pt, dashed, draw opacity=0.5] (a) at (0,0.3){$f$};
            \node[ line width=0.6pt, dashed, draw opacity=0.5] (a) at (-0.2,1){$g$};
            \node[ line width=0.6pt, dashed, draw opacity=0.5] (a) at (-0.4,1.3){$\gamma$};
            \node[ line width=0.6pt, dashed, draw opacity=0.5] (a) at (0.2,0.8){$\zeta$};
        \end{tikzpicture}
    \end{aligned} = S^2(X). 
    \end{align*}

\end{remark}

A $*$-weak Hopf algebra is a weak Hopf algebra $W$ equipped with a $C^*$ structure $*:W\to W$ such that $\Delta$ is a $*$-homomorphism. In other words,
\begin{equation}
(x^*)^* = x,\,  (x+y)^* = x^* + y^*, \, (xy)^* = y^*x^*, \, (\alpha x)^* = \bar{\alpha} x^*,
\end{equation}
and
\begin{equation}
\Delta(x)^* = \Delta(x^*),
\end{equation}
hold for all $x, y \in W$ and $\alpha\in \mathbb{C}$. Additionally, it is shown that $S(x^*) = S^{-1}(x)^*$. Note that the antipode $S$ is invertible. 
The term $C^*$ weak Hopf algebra is used if there exists a fully faithful $*$-representation $\rho: W\to \mathbf{B}(\mathcal{H})$ for some operator space over a Hilbert space $\mathcal{H}$. Notice that a $C^*$ weak Hopf algebra is semisimple.
Let us now demonstrate that we can establish a $C^*$ structure over the tube algebra.

\begin{proposition}
For any given UMFC $\ED$, the tube algebra $\mathbf{Tube}({_{\ED}\ED_{\ED}})$  is a $C^*$
weak Hopf algebra. The $*$-operation is given by
\begin{equation}
        \left( \begin{aligned}\begin{tikzpicture}
             \draw[line width=.6pt,black] (0,0.5)--(0,1.5);
             \draw[line width=.6pt,black] (0,-0.5)--(0,-1.5);
             \draw[red] (0,0.8) arc[start angle=90, end angle=270, radius=0.8];
             \draw[blue] (0,1.3) arc[start angle=90, end angle=-90, radius=1.3];
            \node[ line width=0.6pt, dashed, draw opacity=0.5] (a) at (0,1.7){$h$};
             \node[ line width=0.6pt, dashed, draw opacity=0.5] (a) at (0,-1.7){$c$};
            \node[ line width=0.6pt, dashed, draw opacity=0.5] (a) at (-1,0){$a$};
            \node[ line width=0.6pt, dashed, draw opacity=0.5] (a) at (1.5,0){$b$};
            \node[ line width=0.6pt, dashed, draw opacity=0.5] (a) at (-0.2,-1){$d$};
            \node[ line width=0.6pt, dashed, draw opacity=0.5] (a) at (-0.4,-1.3){$\mu$};
            \node[ line width=0.6pt, dashed, draw opacity=0.5] (a) at (0.2,-0.8){$\nu$};
            \node[ line width=0.6pt, dashed, draw opacity=0.5] (a) at (0,-0.3){$e$};
            \node[ line width=0.6pt, dashed, draw opacity=0.5] (a) at (0,0.3){$f$};
            \node[ line width=0.6pt, dashed, draw opacity=0.5] (a) at (-0.2,1){$g$};
            \node[ line width=0.6pt, dashed, draw opacity=0.5] (a) at (-0.4,1.3){$\gamma$};
            \node[ line width=0.6pt, dashed, draw opacity=0.5] (a) at (0.2,0.8){$\zeta$};
        \end{tikzpicture}
    \end{aligned} \right)^*= \frac{d_e}{d_c} \;
\begin{aligned}\begin{tikzpicture}
             \draw[line width=.6pt,black] (0,0.5)--(0,1.5);
             \draw[line width=.6pt,black] (0,-0.5)--(0,-1.5);
             \draw[red] (0,1.3) arc[start angle=90, end angle=270, radius=1.3];
             \draw[blue] (0,0.8) arc[start angle=90, end angle=-90, radius=0.8];
            \node[ line width=0.6pt, dashed, draw opacity=0.5] (a) at (0,1.7){$f$};
             \node[ line width=0.6pt, dashed, draw opacity=0.5] (a) at (0,-1.7){$e$};
            \node[ line width=0.6pt, dashed, draw opacity=0.5] (a) at (-1,0){$\bar{a}$};
            \node[ line width=0.6pt, dashed, draw opacity=0.5] (a) at (1,0){$\bar{b}$};
            \node[ line width=0.6pt, dashed, draw opacity=0.5] (a) at (-0.2,-1){$d$};
            \node[ line width=0.6pt, dashed, draw opacity=0.5] (a) at (0.2,-1.3){$\nu$};
            \node[ line width=0.6pt, dashed, draw opacity=0.5] (a) at (-0.2,-0.7){$\mu$};
            \node[ line width=0.6pt, dashed, draw opacity=0.5] (a) at (0,-0.3){$c$};
            \node[ line width=0.6pt, dashed, draw opacity=0.5] (a) at (0,0.3){$h$};
            \node[ line width=0.6pt, dashed, draw opacity=0.5] (a) at (-0.2,1){$g$};
            \node[ line width=0.6pt, dashed, draw opacity=0.5] (a) at (-0.4,1.4){$\zeta$};
            \node[ line width=0.6pt, dashed, draw opacity=0.5] (a) at (0.3,1.0){$\gamma$};
        \end{tikzpicture}
    \end{aligned}\;. \label{eq:star-operation}
\end{equation}
The morphisms on the right-hand side should be interpreted as the dual morphisms corresponding to the respective vertices.
\end{proposition}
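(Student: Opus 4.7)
My plan is to interpret the $*$-operation geometrically: up to the prefactor $d_e/d_c$, it sends a tube basis diagram $[a;b]_{c,d,e;\mu,\nu}^{f,g,h;\zeta,\gamma}$ to the diagram obtained by vertical reflection (swapping the inner and outer boundary edges) together with replacement of each vertex morphism by its Hermitian adjoint $\dagger$ in the unitary multifusion category $\ED$, and replacement of each edge label by its dual. This viewpoint reduces verification of the $*$-structure axioms to essentially diagrammatic statements plus careful bookkeeping of quantum dimensions. Antilinearity and the involution property $(x^*)^* = x$ follow immediately from the fact that $\dagger$ on vertex morphisms is an antilinear involution, that duality on simples satisfies $X^{**}\cong X$ canonically, and that the prefactor squares to one: $(d_e/d_c)\cdot(d_c/d_e)=1$.

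\textbf{Compatibility with the weak bialgebra structure.} For anti-multiplicativity, note that the multiplication in Eq.~\eqref{eq:tube-mult} corresponds diagrammatically to stacking two tubes and applying F-moves to return to basis form. Reflecting the stacked tube yields a stack where the roles of $x$ and $y$ are reversed; the F-coefficients of the reflected diagram are the complex conjugates of the original ones, which by unitarity $(F^{ijk}_l)^\dagger=(F^{ijk}_l)^{-1}$ are exactly the F-symbols needed to expand $y^*x^*$, establishing $(xy)^*=y^*x^*$. The prefactors telescope along the stack: $(d_e/d_c)(d_{e'}/d_{c'})$ with $e=c'$ (the gluing condition from Eq.~\eqref{eq:tube-mult}) collapses to $d_{e'}/d_c$, which matches the prefactor of the reflected stack. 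The compatibility $\Delta(x^*) = \Delta(x)^*$ is then verified by applying the insertion rule of Proposition~\ref{prop:tube-coalgebra} to the reflected diagram: the horizontal wall insertion and vertical reflection commute up to swapping the two tensor factors and relabeling, and the coefficients $\sqrt{d_j/(d_a d_i)}\sqrt{d_k/(d_j d_b)}$ are symmetric in a way that matches the $d_e/d_c$ prefactor correctly on both sides. The identity $S(x^*) = S^{-1}(x)^*$ then follows automatically from the uniqueness of the antipode in a weak bialgebra.

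\textbf{$C^*$ property.} To promote $*$ to a genuine $C^*$-structure, I will exhibit a faithful $*$-representation on the disk space $\mathbf{Disk}_{_{\ED}\ED_{\ED}}=\bigoplus_{x,y\in\Irr(\ED)}\Hom_{\ED}(x,y)$. This space carries a natural Hilbert space inner product induced by the Markov pairing of Definition~\ref{def:innerSN}, normalized by appropriate quantum dimensions. The left action in Eq.~\eqref{eq:TubeModule} is then a $*$-representation: the equality $\langle X\triangleright \Psi,\Phi\rangle = \langle \Psi, X^*\triangleright \Phi\rangle$ reduces, after loop evaluation on the closed red/blue strands, to an identity of quantum dimension factors, and the ratio $d_e/d_c$ introduced in Eq.~\eqref{eq:star-operation} is precisely what is required to make the two sides agree. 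Faithfulness is automatic since distinct basis diagrams produce distinct actions on the disk space. The principal technical obstacle will be the coefficient bookkeeping in the anti-multiplicativity calculation, which nominally involves six F-symbols on each side together with the normalization factors $\sqrt{d_ad_{a'}/d_k}\sqrt{d_bd_{b'}/d_s}$, but the geometric/topological picture organizes this into a single reflection step and makes the identity structurally transparent.
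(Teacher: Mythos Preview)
Your approach is essentially the same as the paper's for the core verifications: both interpret the $*$-operation as vertical reflection of the tube diagram combined with $\dagger$ on vertex morphisms, and both check $(XY)^*=Y^*X^*$ and $\Delta(X)^*=\Delta(X^*)$ diagrammatically, with the prefactor telescoping $(d_e/d_c)(d_{e'}/d_{c'})\big|_{e=c'}=d_{e'}/d_c$ exactly as you describe. The paper writes the stacked diagrams out explicitly rather than phrasing things in terms of ``reflection commutes with stacking,'' but the logic is identical.

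The one genuine difference is that you address the $C^*$ property by exhibiting the disk module as a faithful $*$-representation, whereas the paper's proof stops after verifying the $*$-weak Hopf algebra axioms and does not treat the $C^*$ part at all. Your addition is a reasonable way to close that gap; the remaining work is to actually carry out the loop-evaluation identity $\langle X\triangleright \Psi,\Phi\rangle = \langle \Psi, X^*\triangleright \Phi\rangle$ and confirm the quantum-dimension factors match, which you have only sketched.
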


\begin{proof}
    Fix $X=[a;b]_{c,d,e;\mu,\nu}^{f,g,h;\zeta,\gamma}$ and $Y = [a';b']_{c',d',e';\mu',\nu'}^{f',g',h';\zeta',\gamma'}$ in $\mathbf{Tube}({_{\ED}\ED_{\ED}})$. Note that the $*$-operation in Eq.~\eqref{eq:star-operation} is extended antilinearly, thus $(\alpha X)^* = \bar{\alpha}X^*$ for $\alpha\in \mathbb{C}$. It is clear that $(X^*)^* = X$, $(X+Y)^*=X^*+Y^*$. It remains to show that $(XY)^*=Y^*X^*$ and $\Delta(X)^* = \Delta(X^*)$. The first identity holds because of the following
    \begin{align*}
        \frac{d_{e'}}{d_{c'}}\frac{d_e}{d_c}\begin{aligned}\begin{tikzpicture}
             \draw[line width=.6pt,black] (0,0.5)--(0,1.5);
             \draw[line width=.6pt,black] (0,-0.5)--(0,-1.5);
             \draw[red] (0,1.3) arc[start angle=90, end angle=270, radius=1.3];
             \draw[blue] (0,0.8) arc[start angle=90, end angle=-90, radius=0.8];
            \node[ line width=0.6pt, dashed, draw opacity=0.5] (a) at (0,1.7){$f'$};
             \node[ line width=0.6pt, dashed, draw opacity=0.5] (a) at (0,-1.7){$e'$};
            \node[ line width=0.6pt, dashed, draw opacity=0.5] (a) at (-1,0){$\bar{a'}$};
            \node[ line width=0.6pt, dashed, draw opacity=0.5] (a) at (1,0){$\bar{b'}$};
            \node[ line width=0.6pt, dashed, draw opacity=0.5] (a) at (-0.2,-1){$d'$};
            \node[ line width=0.6pt, dashed, draw opacity=0.5] (a) at (0.2,-1.3){$\nu'$};
            \node[ line width=0.6pt, dashed, draw opacity=0.5] (a) at (-0.2,-0.7){$\mu'$};
            \node[ line width=0.6pt, dashed, draw opacity=0.5] (a) at (0,-0.3){$c'$};
            \node[ line width=0.6pt, dashed, draw opacity=0.5] (a) at (0,0.3){$h'$};
            \node[ line width=0.6pt, dashed, draw opacity=0.5] (a) at (-0.2,1){$g'$};
            \node[ line width=0.6pt, dashed, draw opacity=0.5] (a) at (-0.4,1.4){$\zeta'$};
            \node[ line width=0.6pt, dashed, draw opacity=0.5] (a) at (0.3,1.0){$\gamma'$};
        \end{tikzpicture}
    \end{aligned} 
    \begin{aligned}\begin{tikzpicture}
             \draw[line width=.6pt,black] (0,0.5)--(0,1.5);
             \draw[line width=.6pt,black] (0,-0.5)--(0,-1.5);
             \draw[red] (0,1.3) arc[start angle=90, end angle=270, radius=1.3];
             \draw[blue] (0,0.8) arc[start angle=90, end angle=-90, radius=0.8];
            \node[ line width=0.6pt, dashed, draw opacity=0.5] (a) at (0,1.7){$f$};
             \node[ line width=0.6pt, dashed, draw opacity=0.5] (a) at (0,-1.7){$e$};
            \node[ line width=0.6pt, dashed, draw opacity=0.5] (a) at (-1,0){$\bar{a}$};
            \node[ line width=0.6pt, dashed, draw opacity=0.5] (a) at (1,0){$\bar{b}$};
            \node[ line width=0.6pt, dashed, draw opacity=0.5] (a) at (-0.2,-1){$d$};
            \node[ line width=0.6pt, dashed, draw opacity=0.5] (a) at (0.2,-1.3){$\nu$};
            \node[ line width=0.6pt, dashed, draw opacity=0.5] (a) at (-0.2,-0.7){$\mu$};
            \node[ line width=0.6pt, dashed, draw opacity=0.5] (a) at (0,-0.3){$c$};
            \node[ line width=0.6pt, dashed, draw opacity=0.5] (a) at (0,0.3){$h$};
            \node[ line width=0.6pt, dashed, draw opacity=0.5] (a) at (-0.2,1){$g$};
            \node[ line width=0.6pt, dashed, draw opacity=0.5] (a) at (-0.4,1.4){$\zeta$};
            \node[ line width=0.6pt, dashed, draw opacity=0.5] (a) at (0.3,1.0){$\gamma$};
        \end{tikzpicture}
    \end{aligned}
    =\frac{d_{e'}}{d_{c}}\delta_{f,h'}\delta_{e,c'}\begin{aligned}
        \begin{tikzpicture}
             \draw[line width=.6pt,black] (0,0.5)--(0,2.5);
             \draw[line width=.6pt,black] (0,-0.5)--(0,-2.5);
             \draw[blue] (0,0.8) arc[start angle=90, end angle=-90, radius=0.8];
             \draw[red] (0,1.3) arc[start angle=90, end angle=270, radius=1.3];
            \node[ line width=0.6pt, dashed, draw opacity=0.5] (a) at (0.2,1.5){$f$};
            \node[ line width=0.6pt, dashed, draw opacity=0.5] (a) at (0.2,-1.5){$e$};
            \node[ line width=0.6pt, dashed, draw opacity=0.5] (a) at (-1.5,0){$\bar{a}$};
            \node[ line width=0.6pt, dashed, draw opacity=0.5] (a) at (1,0){$\bar{b}$};
            \node[ line width=0.6pt, dashed, draw opacity=0.5] (a) at (-0.2,-1){$d$};
            \node[ line width=0.6pt, dashed, draw opacity=0.5] (a) at (-0.2,-1.3){$\nu$};
            \node[ line width=0.6pt, dashed, draw opacity=0.5] (a) at (0.2,-0.8){$\mu$};
            \node[ line width=0.6pt, dashed, draw opacity=0.5] (a) at (0,-0.3){$c$};
            \node[ line width=0.6pt, dashed, draw opacity=0.5] (a) at (0,0.3){$h$};
            \node[ line width=0.6pt, dashed, draw opacity=0.5] (a) at (-0.2,1){$g$};
            \node[ line width=0.6pt, dashed, draw opacity=0.5] (a) at (-0.4,1.3){$\zeta$};
            \node[ line width=0.6pt, dashed, draw opacity=0.5] (a) at (0.2,0.8){$\gamma$};
            \draw[blue] (0,1.6) arc[start angle=90, end angle=-90, radius=1.6];
          \node[ line width=0.6pt, dashed, draw opacity=0.5] (a) at (-0.3,1.8){$\gamma'$};
          \node[ line width=0.6pt, dashed, draw opacity=0.5] (a) at (-0.3,-1.8){$\mu'$};
           \draw[red] (0,2.1) arc[start angle=90, end angle=270, radius=2.1];
        \node[ line width=0.6pt, dashed, draw opacity=0.5] (a) at (-2.3,0){$\bar{a'}$};
       \node[ line width=0.6pt, dashed, draw opacity=0.5] (a) at (1.9,0){$\bar{b'}$};
       \node[ line width=0.6pt, dashed, draw opacity=0.5] (a) at (0.2,1.8){$g'$};
       \node[ line width=0.6pt, dashed, draw opacity=0.5] (a) at (0.2,-1.8){$d'$};
       \node[ line width=0.6pt, dashed, draw opacity=0.5] (a) at (0.2,2.3){$\zeta'$};
       \node[ line width=0.6pt, dashed, draw opacity=0.5] (a) at (0.2,-2.3){$\nu'$};
       \node[ line width=0.6pt, dashed, draw opacity=0.5] (a) at (-0.3,2.5){$f'$};
       \node[ line width=0.6pt, dashed, draw opacity=0.5] (a) at (-0.3,-2.5){$e'$};
        \end{tikzpicture}
    \end{aligned},
    \end{align*}
    where the left-hand side is $Y^*X^*$ and the right-hand side is $(XY)^*$\,\footnote{This comes from a similar observation as that in Remark~\ref{rmk:take-counit}. To operate the $*$-operation on the diagram of the product, one does not need to perform the F-move first; instead, performing a similar operation as in Eq.~\eqref{eq:star-operation} on the entire diagram is sufficient.}. For the second identity, using Eq.~\eqref{eq:TubeComultiplication}, we have 
    \begin{align*}
       \Delta(X)^*&= \sum_{i,j,k,\rho,\sigma} \sqrt{ \frac{d_j}{d_ad_i}}\sqrt{ \frac{d_k}{d_jd_b}}\frac{d_i}{d_k}\frac{d_e}{d_c}\;
\begin{aligned}\begin{tikzpicture}
             \draw[line width=.6pt,black] (0,0.5)--(0,1.5);
             \draw[line width=.6pt,black] (0,-0.5)--(0,-1.5);
             \draw[red] (0,1.3) arc[start angle=90, end angle=270, radius=1.3];
             \draw[blue] (0,0.8) arc[start angle=90, end angle=-90, radius=0.8];
            \node[ line width=0.6pt, dashed, draw opacity=0.5] (a) at (0,1.7){$f$};
             \node[ line width=0.6pt, dashed, draw opacity=0.5] (a) at (0,-1.7){$i$};
            \node[ line width=0.6pt, dashed, draw opacity=0.5] (a) at (-1,0){$\bar{a}$};
            \node[ line width=0.6pt, dashed, draw opacity=0.5] (a) at (1,0){$\bar{b}$};
            \node[ line width=0.6pt, dashed, draw opacity=0.5] (a) at (-0.2,-1){$j$};
            \node[ line width=0.6pt, dashed, draw opacity=0.5] (a) at (0.2,-1.3){$\sigma$};
            \node[ line width=0.6pt, dashed, draw opacity=0.5] (a) at (-0.2,-0.7){$\rho$};
            \node[ line width=0.6pt, dashed, draw opacity=0.5] (a) at (0,-0.3){$k$};
            \node[ line width=0.6pt, dashed, draw opacity=0.5] (a) at (0,0.3){$h$};
            \node[ line width=0.6pt, dashed, draw opacity=0.5] (a) at (-0.2,1){$g$};
            \node[ line width=0.6pt, dashed, draw opacity=0.5] (a) at (-0.4,1.4){$\zeta$};
            \node[ line width=0.6pt, dashed, draw opacity=0.5] (a) at (0.3,1.0){$\gamma$};
        \end{tikzpicture}
    \end{aligned} \otimes 
    \begin{aligned}\begin{tikzpicture}
             \draw[line width=.6pt,black] (0,0.5)--(0,1.5);
             \draw[line width=.6pt,black] (0,-0.5)--(0,-1.5);
             \draw[red] (0,1.3) arc[start angle=90, end angle=270, radius=1.3];
             \draw[blue] (0,0.8) arc[start angle=90, end angle=-90, radius=0.8];
            \node[ line width=0.6pt, dashed, draw opacity=0.5] (a) at (0,1.7){$i$};
             \node[ line width=0.6pt, dashed, draw opacity=0.5] (a) at (0,-1.7){$e$};
            \node[ line width=0.6pt, dashed, draw opacity=0.5] (a) at (-1,0){$\bar{a}$};
            \node[ line width=0.6pt, dashed, draw opacity=0.5] (a) at (1,0){$\bar{b}$};
            \node[ line width=0.6pt, dashed, draw opacity=0.5] (a) at (-0.2,-1){$d$};
            \node[ line width=0.6pt, dashed, draw opacity=0.5] (a) at (0.2,-1.3){$\nu$};
            \node[ line width=0.6pt, dashed, draw opacity=0.5] (a) at (-0.2,-0.7){$\mu$};
            \node[ line width=0.6pt, dashed, draw opacity=0.5] (a) at (0,-0.3){$c$};
            \node[ line width=0.6pt, dashed, draw opacity=0.5] (a) at (0,0.3){$k$};
            \node[ line width=0.6pt, dashed, draw opacity=0.5] (a) at (-0.2,1){$j$};
            \node[ line width=0.6pt, dashed, draw opacity=0.5] (a) at (-0.4,1.4){$\sigma$};
            \node[ line width=0.6pt, dashed, draw opacity=0.5] (a) at (0.3,1.0){$\rho$};
        \end{tikzpicture}
    \end{aligned} \\
    & = \frac{d_e}{d_c}\sum_{i,j,k,\rho,\sigma} \sqrt{ \frac{d_j}{d_bd_k}}\sqrt{ \frac{d_i}{d_jd_a}}\;
\begin{aligned}\begin{tikzpicture}
             \draw[line width=.6pt,black] (0,0.5)--(0,1.5);
             \draw[line width=.6pt,black] (0,-0.5)--(0,-1.5);
             \draw[red] (0,1.3) arc[start angle=90, end angle=270, radius=1.3];
             \draw[blue] (0,0.8) arc[start angle=90, end angle=-90, radius=0.8];
            \node[ line width=0.6pt, dashed, draw opacity=0.5] (a) at (0,1.7){$f$};
             \node[ line width=0.6pt, dashed, draw opacity=0.5] (a) at (0,-1.7){$i$};
            \node[ line width=0.6pt, dashed, draw opacity=0.5] (a) at (-1,0){$\bar{a}$};
            \node[ line width=0.6pt, dashed, draw opacity=0.5] (a) at (1,0){$\bar{b}$};
            \node[ line width=0.6pt, dashed, draw opacity=0.5] (a) at (-0.2,-1){$j$};
            \node[ line width=0.6pt, dashed, draw opacity=0.5] (a) at (0.2,-1.3){$\sigma$};
            \node[ line width=0.6pt, dashed, draw opacity=0.5] (a) at (-0.2,-0.7){$\rho$};
            \node[ line width=0.6pt, dashed, draw opacity=0.5] (a) at (0,-0.3){$k$};
            \node[ line width=0.6pt, dashed, draw opacity=0.5] (a) at (0,0.3){$h$};
            \node[ line width=0.6pt, dashed, draw opacity=0.5] (a) at (-0.2,1){$g$};
            \node[ line width=0.6pt, dashed, draw opacity=0.5] (a) at (-0.4,1.4){$\zeta$};
            \node[ line width=0.6pt, dashed, draw opacity=0.5] (a) at (0.3,1.0){$\gamma$};
        \end{tikzpicture}
    \end{aligned} \otimes 
    \begin{aligned}\begin{tikzpicture}
             \draw[line width=.6pt,black] (0,0.5)--(0,1.5);
             \draw[line width=.6pt,black] (0,-0.5)--(0,-1.5);
             \draw[red] (0,1.3) arc[start angle=90, end angle=270, radius=1.3];
             \draw[blue] (0,0.8) arc[start angle=90, end angle=-90, radius=0.8];
            \node[ line width=0.6pt, dashed, draw opacity=0.5] (a) at (0,1.7){$i$};
             \node[ line width=0.6pt, dashed, draw opacity=0.5] (a) at (0,-1.7){$e$};
            \node[ line width=0.6pt, dashed, draw opacity=0.5] (a) at (-1,0){$\bar{a}$};
            \node[ line width=0.6pt, dashed, draw opacity=0.5] (a) at (1,0){$\bar{b}$};
            \node[ line width=0.6pt, dashed, draw opacity=0.5] (a) at (-0.2,-1){$d$};
            \node[ line width=0.6pt, dashed, draw opacity=0.5] (a) at (0.2,-1.3){$\nu$};
            \node[ line width=0.6pt, dashed, draw opacity=0.5] (a) at (-0.2,-0.7){$\mu$};
            \node[ line width=0.6pt, dashed, draw opacity=0.5] (a) at (0,-0.3){$c$};
            \node[ line width=0.6pt, dashed, draw opacity=0.5] (a) at (0,0.3){$k$};
            \node[ line width=0.6pt, dashed, draw opacity=0.5] (a) at (-0.2,1){$j$};
            \node[ line width=0.6pt, dashed, draw opacity=0.5] (a) at (-0.4,1.4){$\sigma$};
            \node[ line width=0.6pt, dashed, draw opacity=0.5] (a) at (0.3,1.0){$\rho$};
        \end{tikzpicture}
    \end{aligned}=\Delta(X^*), 
    \end{align*}
    where the last equality follows by inserting a $k$ wall in $X^*$, performing the parallel moves, and taking summation over $k\in \Irr(\ED)$. 
\end{proof}

\begin{remark}
  When applying $*$-operation to the right-hand side of Eq.~\eqref{eq:star-operation}, we can do the similar operation directly as that in Eq.~\eqref{eq:star-operation} without doing F-moves. More precisely, this means 
  \begin{equation}
        \left( \begin{aligned}\begin{tikzpicture}
             \draw[line width=.6pt,black] (0,0.5)--(0,1.5);
             \draw[line width=.6pt,black] (0,-0.5)--(0,-1.5);
             \draw[red] (0,1.3) arc[start angle=90, end angle=270, radius=1.3];
             \draw[blue] (0,0.8) arc[start angle=90, end angle=-90, radius=0.8];
            \node[ line width=0.6pt, dashed, draw opacity=0.5] (a) at (0,1.7){$f$};
             \node[ line width=0.6pt, dashed, draw opacity=0.5] (a) at (0,-1.7){$e$};
            \node[ line width=0.6pt, dashed, draw opacity=0.5] (a) at (-1,0){$\bar{a}$};
            \node[ line width=0.6pt, dashed, draw opacity=0.5] (a) at (1,0){$\bar{b}$};
            \node[ line width=0.6pt, dashed, draw opacity=0.5] (a) at (-0.2,-1){$d$};
            \node[ line width=0.6pt, dashed, draw opacity=0.5] (a) at (0.2,-1.3){$\nu$};
            \node[ line width=0.6pt, dashed, draw opacity=0.5] (a) at (-0.2,-0.7){$\mu$};
            \node[ line width=0.6pt, dashed, draw opacity=0.5] (a) at (0,-0.3){$c$};
            \node[ line width=0.6pt, dashed, draw opacity=0.5] (a) at (0,0.3){$h$};
            \node[ line width=0.6pt, dashed, draw opacity=0.5] (a) at (-0.2,1){$g$};
            \node[ line width=0.6pt, dashed, draw opacity=0.5] (a) at (-0.4,1.4){$\zeta$};
            \node[ line width=0.6pt, dashed, draw opacity=0.5] (a) at (0.3,1.0){$\gamma$};
        \end{tikzpicture}
    \end{aligned}
    \right)^*= \frac{d_c}{d_e} \;
    \begin{aligned}\begin{tikzpicture}
             \draw[line width=.6pt,black] (0,0.5)--(0,1.5);
             \draw[line width=.6pt,black] (0,-0.5)--(0,-1.5);
             \draw[red] (0,0.8) arc[start angle=90, end angle=270, radius=0.8];
             \draw[blue] (0,1.3) arc[start angle=90, end angle=-90, radius=1.3];
            \node[ line width=0.6pt, dashed, draw opacity=0.5] (a) at (0,1.7){$h$};
             \node[ line width=0.6pt, dashed, draw opacity=0.5] (a) at (0,-1.7){$c$};
            \node[ line width=0.6pt, dashed, draw opacity=0.5] (a) at (-1,0){$a$};
            \node[ line width=0.6pt, dashed, draw opacity=0.5] (a) at (1.5,0){$b$};
            \node[ line width=0.6pt, dashed, draw opacity=0.5] (a) at (-0.2,-1){$d$};
            \node[ line width=0.6pt, dashed, draw opacity=0.5] (a) at (-0.4,-1.3){$\mu$};
            \node[ line width=0.6pt, dashed, draw opacity=0.5] (a) at (0.2,-0.8){$\nu$};
            \node[ line width=0.6pt, dashed, draw opacity=0.5] (a) at (0,-0.3){$e$};
            \node[ line width=0.6pt, dashed, draw opacity=0.5] (a) at (0,0.3){$f$};
            \node[ line width=0.6pt, dashed, draw opacity=0.5] (a) at (-0.2,1){$g$};
            \node[ line width=0.6pt, dashed, draw opacity=0.5] (a) at (-0.4,1.3){$\gamma$};
            \node[ line width=0.6pt, dashed, draw opacity=0.5] (a) at (0.2,0.8){$\zeta$};
        \end{tikzpicture}
    \end{aligned} \;.
\end{equation}
  Actually we have used this fact while we showed that $*$-operation is involutive. It follows that $S(S(X^*)^*) = X$, namely $S(X^*) = S^{-1}(X)^*$, holds for any $X\in \mathbf{Tube}({_{\ED}\ED_{\ED}})$, which is consistent with the property for general weak Hopf algebras.
\end{remark}

\begin{remark}
It is worth noting that the coefficients chosen in the above definitions are not unique. We can apply local basis transformations and obtain new structure coefficients from the given ones.
\end{remark}

\subsection{Tube algebra as a crossed product}
\label{sec:TubeCross}

Since we are working on the trivial domain wall $_{\ED}\ED_{\ED}$, which forms a bimodule category over itself, if we disregard the right module structure, we obtain the following subalgebra:
\begin{equation}
 \begin{aligned}
     \mathbf{L}({_{\ED}\ED})
 \end{aligned}  =\operatorname{span}\left\{  
   \begin{aligned}
        \begin{tikzpicture}
             \draw[line width=.6pt,black] (0,0.5)--(0,1.5);
             \draw[line width=.6pt,black] (0,-0.5)--(0,-1.5);
             \draw[red] (0,0.8) arc[start angle=90, end angle=270, radius=0.8];
             \draw[blue,dotted] (0,1.3) arc[start angle=90, end angle=-90, radius=1.3];
          %  \node[ line width=0.6pt, dashed, draw opacity=0.5] (a) at (0,1.7){$h$};
            % \node[ line width=0.6pt, dashed, draw opacity=0.5] (a) at (0,-1.7){$c$};
            \node[ line width=0.6pt, dashed, draw opacity=0.5] (a) at (-1,0){$a$};
            \node[ line width=0.6pt, dashed, draw opacity=0.5] (a) at (1.5,0){$\one$};
            \node[ line width=0.6pt, dashed, draw opacity=0.5] (a) at (-0.2,-1.3){$d$};
           % \node[ line width=0.6pt, dashed, draw opacity=0.5] (a) at (-0.4,-1.3){$\mu$};
            \node[ line width=0.6pt, dashed, draw opacity=0.5] (a) at (0.2,-0.8){$\nu$};
            \node[ line width=0.6pt, dashed, draw opacity=0.5] (a) at (0,-0.3){$e$};
            \node[ line width=0.6pt, dashed, draw opacity=0.5] (a) at (0,0.3){$f$};
            \node[ line width=0.6pt, dashed, draw opacity=0.5] (a) at (-0.2,1.3){$g$};
          %  \node[ line width=0.6pt, dashed, draw opacity=0.5] (a) at (-0.4,1.3){$\gamma$};
            \node[ line width=0.6pt, dashed, draw opacity=0.5] (a) at (0.2,0.8){$\zeta$};
        \end{tikzpicture}
    \end{aligned}
   :\quad  a,\cdots,g\in \Irr(\ED), \nu,\zeta\in \Hom_{\ED} 
    \right\}.
    \label{eq:tubebasis-L}
\end{equation}
From the discussion in the previous subsection, it is clear that  $\mathbf{L}({_{\ED}\ED})$ is a $C^*$ weak Hopf algebra.
It is isomorphic to the boundary tube algebra that will be discussed in Sec.~\ref{sec:bdtheory}.  The boundary tube algebra is originally constructed in Ref.~\cite{Kitaev2012boundary}. Note that the structure coefficient we choose here is slightly different from that of Ref.~\cite{Kitaev2012boundary}; also, refer to Ref.~\cite{bridgeman2023invertible} for a discussion, where the algebra is named annular algebra.
As will be discussed later in Sec.~\ref{sec:bdtheory}, this boundary tube algebra gives the boundary excitations (regard $\ED$ as a left $\ED$-module category): $\Fun_{\ED}(\ED,\ED) \simeq \Rep(\tilde{\mathbf{L}}({_{\ED}\ED}))$; see Sec.~\ref{sec:bdtheory} for the definition of $\tilde{\mathbf{L}}({_{\ED}\ED})$. Eq.~\eqref{eq:tubebasis-L} provides a natural embedding of boundary tube algebra to the bulk tube algebra: $\tilde{\mathbf{L}}({_{\ED}\ED})\cong \mathbf{L}({_{\ED}\ED}) \hookrightarrow \mathbf{Tube}({_{\ED}}\ED_{\ED})$.

Analogously, disregarding the left module structure, we obtain the following $C^*$ weak Hopf subalgebra:
\begin{equation}
 \begin{aligned}
     \mathbf{R}(\ED_{\ED})
 \end{aligned}  =\operatorname{span}\left\{  
   \begin{aligned}
        \begin{tikzpicture}
             \draw[line width=.6pt,black] (0,0.5)--(0,1.5);
             \draw[line width=.6pt,black] (0,-0.5)--(0,-1.5);
             \draw[red,dotted] (0,0.8) arc[start angle=90, end angle=270, radius=0.8];
             \draw[blue] (0,1.3) arc[start angle=90, end angle=-90, radius=1.3];
            \node[ line width=0.6pt, dashed, draw opacity=0.5] (a) at (0,1.7){$h$};
             \node[ line width=0.6pt, dashed, draw opacity=0.5] (a) at (0,-1.7){$c$};
            \node[ line width=0.6pt, dashed, draw opacity=0.5] (a) at (-1,0){$\one$};
            \node[ line width=0.6pt, dashed, draw opacity=0.5] (a) at (1.5,0){$b$};
            % \node[ line width=0.6pt, dashed, draw opacity=0.5] (a) at (-0.2,-1){$d$};
            \node[ line width=0.6pt, dashed, draw opacity=0.5] (a) at (-0.2,-1.3){$\mu$};
            % \node[ line width=0.6pt, dashed, draw opacity=0.5] (a) at (0.2,-0.8){$\nu$};
            \node[ line width=0.6pt, dashed, draw opacity=0.5] (a) at (0,-0.3){$e$};
            \node[ line width=0.6pt, dashed, draw opacity=0.5] (a) at (0,0.3){$f$};
            % \node[ line width=0.6pt, dashed, draw opacity=0.5] (a) at (-0.2,1){$g$};
            \node[ line width=0.6pt, dashed, draw opacity=0.5] (a) at (-0.2,1.2){$\gamma$};
            % \node[ line width=0.6pt, dashed, draw opacity=0.5] (a) at (0.2,0.8){$\zeta$};
        \end{tikzpicture}
    \end{aligned}
   :\quad  b,\cdots,h\in \Irr(\ED), \mu,\gamma,\in \Hom_{\ED} 
    \right\}.
    \label{eq:tubebasis-R}
\end{equation}
We also have the embedding:  $\tilde{\mathbf{R}}({\ED_{\ED}})\cong \mathbf{R}({\ED_{\ED}}) \hookrightarrow \mathbf{Tube}({_{\ED}}\ED_{\ED})$; see Sec.~\ref{sec:bdtheory} for the definition of the right boundary tube algebra $\tilde{\mathbf{R}}({\ED_{\ED}})$.

We can introduce a crossed product between $\mathbf{R}(\ED_{\ED})$ and $\mathbf{L}(_{\ED}\ED)$. It is noteworthy to emphasize the strong similarity with the concept of the quantum double of a given weak Hopf algebra $W$ (see Definition~\ref{def:QuantumDouble}): $D(W)=(\hat{W}^{\rm cop}\otimes W)/J$, where the quantum double is derived from $W$ and its dual $\hat{W}$ through a crossed product.
Here, the crossed product is defined by 
\begin{equation}
    \Join \left(  
           \begin{aligned}
        \begin{tikzpicture}
             \draw[line width=.6pt,black] (0,0.5)--(0,1.5);
             \draw[line width=.6pt,black] (0,-0.5)--(0,-1.5);
             \draw[red,dotted] (0,0.8) arc[start angle=90, end angle=270, radius=0.8];
             \draw[blue] (0,1.3) arc[start angle=90, end angle=-90, radius=1.3];
            \node[ line width=0.6pt, dashed, draw opacity=0.5] (a) at (0,1.7){$h$};
             \node[ line width=0.6pt, dashed, draw opacity=0.5] (a) at (0,-1.7){$c$};
            \node[ line width=0.6pt, dashed, draw opacity=0.5] (a) at (-1,0){$\one$};
            \node[ line width=0.6pt, dashed, draw opacity=0.5] (a) at (1.5,0){$b$};
            \node[ line width=0.6pt, dashed, draw opacity=0.5] (a) at (0,-0.3){$d$};
            \node[ line width=0.6pt, dashed, draw opacity=0.5] (a) at (-0.4,-1.3){$\mu$};
            % \node[ line width=0.6pt, dashed, draw opacity=0.5] (a) at (0.2,-0.8){$\nu$};
         %   \node[ line width=0.6pt, dashed, draw opacity=0.5] (a) at (0,-0.3){$e$};
         %   \node[ line width=0.6pt, dashed, draw opacity=0.5] (a) at (0,0.3){$f$};
            \node[ line width=0.6pt, dashed, draw opacity=0.5] (a) at (0,0.3){$g$};
            \node[ line width=0.6pt, dashed, draw opacity=0.5] (a) at (-0.4,1.3){$\gamma$};
            % \node[ line width=0.6pt, dashed, draw opacity=0.5] (a) at (0.2,0.8){$\zeta$};
        \end{tikzpicture}
    \end{aligned}
    \otimes 
    \begin{aligned}
        \begin{tikzpicture}
             \draw[line width=.6pt,black] (0,0.5)--(0,1.5);
             \draw[line width=.6pt,black] (0,-0.5)--(0,-1.5);
             \draw[red] (0,0.8) arc[start angle=90, end angle=270, radius=0.8];
             \draw[blue,dotted] (0,1.3) arc[start angle=90, end angle=-90, radius=1.3];
          %  \node[ line width=0.6pt, dashed, draw opacity=0.5] (a) at (0,1.7){$h$};
            % \node[ line width=0.6pt, dashed, draw opacity=0.5] (a) at (0,-1.7){$c$};
            \node[ line width=0.6pt, dashed, draw opacity=0.5] (a) at (-1,0){$a$};
            \node[ line width=0.6pt, dashed, draw opacity=0.5] (a) at (1.5,0){$\one$};
            \node[ line width=0.6pt, dashed, draw opacity=0.5] (a) at (-0.2,-1.3){$d'$};
           % \node[ line width=0.6pt, dashed, draw opacity=0.5] (a) at (-0.4,-1.3){$\mu$};
            \node[ line width=0.6pt, dashed, draw opacity=0.5] (a) at (0.2,-0.8){$\nu$};
            \node[ line width=0.6pt, dashed, draw opacity=0.5] (a) at (0,-0.3){$e$};
            \node[ line width=0.6pt, dashed, draw opacity=0.5] (a) at (0,0.3){$f$};
            \node[ line width=0.6pt, dashed, draw opacity=0.5] (a) at (-0.2,1.3){$g'$};
          %  \node[ line width=0.6pt, dashed, draw opacity=0.5] (a) at (-0.4,1.3){$\gamma$};
            \node[ line width=0.6pt, dashed, draw opacity=0.5] (a) at (0.2,0.8){$\zeta$};
        \end{tikzpicture}
    \end{aligned}
   \right)
    = \delta_{g,g'}\delta_{d,d'}\begin{aligned}\begin{tikzpicture}
             \draw[line width=.6pt,black] (0,0.5)--(0,1.5);
             \draw[line width=.6pt,black] (0,-0.5)--(0,-1.5);
             \draw[red] (0,0.8) arc[start angle=90, end angle=270, radius=0.8];
             \draw[blue] (0,1.3) arc[start angle=90, end angle=-90, radius=1.3];
            \node[ line width=0.6pt, dashed, draw opacity=0.5] (a) at (0,1.7){$h$};
             \node[ line width=0.6pt, dashed, draw opacity=0.5] (a) at (0,-1.7){$c$};
            \node[ line width=0.6pt, dashed, draw opacity=0.5] (a) at (-1,0){$a$};
            \node[ line width=0.6pt, dashed, draw opacity=0.5] (a) at (1.5,0){$b$};
            \node[ line width=0.6pt, dashed, draw opacity=0.5] (a) at (-0.2,-1){$d$};
            \node[ line width=0.6pt, dashed, draw opacity=0.5] (a) at (-0.4,-1.3){$\mu$};
            \node[ line width=0.6pt, dashed, draw opacity=0.5] (a) at (0.2,-0.8){$\nu$};
            \node[ line width=0.6pt, dashed, draw opacity=0.5] (a) at (0,-0.3){$e$};
            \node[ line width=0.6pt, dashed, draw opacity=0.5] (a) at (0,0.3){$f$};
            \node[ line width=0.6pt, dashed, draw opacity=0.5] (a) at (-0.2,1){$g$};
            \node[ line width=0.6pt, dashed, draw opacity=0.5] (a) at (-0.4,1.3){$\gamma$};
            \node[ line width=0.6pt, dashed, draw opacity=0.5] (a) at (0.2,0.8){$\zeta$};
        \end{tikzpicture}
    \end{aligned}.
\end{equation}
Notice the crucial order here: if we take $\mathbf{R}(\ED_{\ED}) \Join \mathbf{L}(_{\ED}\ED)$, the multiplication is inherited directly from the tube diagrams, obviating the necessity for additional F-move operations.
It is easy to check that for $X\in \mathbf{R}(\ED_{\ED})$, $Y\in\mathbf{L}(_{\ED}\ED)$, $\Delta(X\Join Y)=\Delta(X)\Join \Delta(Y)$.
Using the natural embedding of the boundary tube algebras into the bulk tube algebra, the above construction implies that we can obtain the bulk tube algebra by taking crossed product of two boundary tube algebras $\tilde{\mathbf{L}}({_{\ED}\ED})$ and $\tilde{\mathbf{R}}({\ED_{\ED}})$.
To summarize, we have the following. 

\begin{proposition}
    The bulk tube algebra $\mathbf{Tube}({_{\ED}}\ED_{\ED})$  is isomorphic to the crossed product $\tilde{\mathbf{R}}(\ED_{\ED}) \Join \tilde{\mathbf{L}}({_{\ED}}\ED)$.
\end{proposition}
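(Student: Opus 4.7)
The plan is to verify the isomorphism in two stages: first establishing a vector space isomorphism via the $\Join$ map, and then transporting (or equivalently, characterizing) the algebra structure on the crossed product so that it matches the multiplication of the bulk tube algebra. Concretely, I will define the linear map
\[
\Phi : \tilde{\mathbf{R}}(\ED_\ED) \otimes \tilde{\mathbf{L}}({_\ED}\ED) \longrightarrow \mathbf{Tube}({_\ED}\ED_\ED)
\]
as the linear extension of the $\Join$ operation introduced in the previous subsection. Since the embeddings $\tilde{\mathbf{R}} \cong \mathbf{R}(\ED_\ED) \hookrightarrow \mathbf{Tube}$ and $\tilde{\mathbf{L}} \cong \mathbf{L}({_\ED}\ED) \hookrightarrow \mathbf{Tube}$ hit exactly the tube basis elements with $a=\one$ and $b=\one$ respectively, the matching Kronecker deltas $\delta_{g,g'}\delta_{d,d'}$ in the definition of $\Join$ produce precisely the generic tube basis $[a;b]_{c,d,e;\mu,\nu}^{f,g,h;\zeta,\gamma}$.

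For the linear isomorphism, I would carry out a direct basis count: restricting $\mathbf{R}$ by $a=\one$ forces $d=e$ and $g=f$, so its basis is parameterized by $(b,c,d,f,h,\mu,\gamma)$ with dimension $\sum N_c^{db} N_h^{fb}$; symmetrically the basis of $\mathbf{L}$ is parameterized by $(a,d,e,f,g,\nu,\zeta)$ with dimension $\sum N_d^{ae} N_g^{af}$. After imposing the matching $d=d'$ and $g=g'$, the product dimension becomes $\sum_{a,b,c,d,e,f,g,h} N_c^{db}N_d^{ae}N_g^{af}N_h^{gb}$, which equals $\dim \mathbf{Tube}({_\ED}\ED_\ED)$. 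Hence $\Phi$ is a linear bijection.

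For the algebra statement, I would define the multiplication on $\tilde{\mathbf{R}}\Join\tilde{\mathbf{L}}$ by transport from $\mathbf{Tube}$ via $\Phi$, and then verify that this transported multiplication is the natural ``stack-and-match'' rule. The key observation, already flagged in the text, is that when multiplying $\Phi(R_1\otimes L_1)\cdot \Phi(R_2\otimes L_2)$ using Eq.~\eqref{eq:tube-mult}, the six F-symbol factors drastically simplify: four of them carry $a$ and three carry $b$, but in our situation the red cap labels appearing in $R_i$ are $\one$ and the blue cap labels appearing in $L_i$ are $\one$. Every F-symbol with a $\one$ leg reduces to a product of Kronecker deltas via the unit axioms, so after cancellation the product decomposes into a multiplication happening entirely on the right-cap (blue) side inside $\tilde{\mathbf{R}}$ and one entirely on the left-cap (red) side inside $\tilde{\mathbf{L}}$, subject to the middle-strand matching constraints.

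The main obstacle is turning this heuristic ``F-symbols with $\one$ legs disappear'' into a clean equality. I would handle it diagrammatically rather than by brute-force F-symbol manipulation: embed each $R_i$ and each $L_i$ into $\mathbf{Tube}$ as in Eqs.~\eqref{eq:tubebasis-L}--\eqref{eq:tubebasis-R}, observe that the two embeddings commute up to the matching deltas because their supports lie on opposite halves of the annulus, and then apply two independent parallel moves (on the blue side for $b_1 b_2$ and on the red side for $a_1 a_2$) exactly as in Eq.~\eqref{eq:VerticalFusion}. This identifies the bulk product with $(R_1 R_2)\Join(L_1 L_2)$ up to the prescribed matching, so $\Phi$ becomes an algebra isomorphism and in particular the defining formula for $\Join$ agrees with the bulk multiplication without invoking any extra F-moves. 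If desired, the same strategy upgrades the statement: the coproduct, counit, antipode and $*$-operation on $\mathbf{Tube}$ all descend compatibly to $\tilde{\mathbf{R}}$ and $\tilde{\mathbf{L}}$, so that $\Phi$ is an isomorphism of $C^*$ weak Hopf algebras rather than merely of algebras.
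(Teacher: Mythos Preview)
Your linear-isomorphism step (the basis/dimension count showing that $\Join$ surjects onto the tube basis) together with defining the crossed-product multiplication by transport along $\Phi$ already proves the proposition, and this is exactly how the paper handles it: the crossed product is \emph{defined} as the image of $\Join$ equipped with the inherited tube multiplication, so the isomorphism holds by construction once surjectivity is noted. (One small imprecision: $\Phi$ as a map out of $\tilde{\mathbf{R}}\otimes\tilde{\mathbf{L}}$ is surjective but not injective; the bijection is with the quotient by the matching relations, which is what $\tilde{\mathbf{R}}\Join\tilde{\mathbf{L}}$ denotes.)

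Your third step, however, is incorrect. You claim the transported product factorizes as $(R_1R_2)\Join(L_1L_2)$, arguing that the F-symbols in Eq.~\eqref{eq:tube-mult} collapse because the red caps in $R_i$ and the blue caps in $L_i$ are $\one$. But after $\Join$, the element $X_i=R_i\Join L_i=[a_i;b_i]_{\cdots}$ carries the red cap $a_i$ coming from $L_i$ and the blue cap $b_i$ coming from $R_i$, both generically nontrivial. Stacking $X_1$ outside $X_2$ yields the cap pattern $a_2,b_2,a_1,b_1$ (inside-out), and reducing to basis form forces an F-move coupling $b_2$ with $a_1$ --- neither is $\one$. Equivalently: since $R\Join L=R\cdot L$ in $\mathbf{Tube}$ (this is what the paper's remark about the order $\mathbf{R}\Join\mathbf{L}$ ``obviating F-moves'' actually refers to --- the single $\Join$ operation, not the full product), your factorization would require $L_1R_2=R_2L_1$; but $L_1\cdot R_2$ stacks a nontrivial blue arc $b_2$ inside a nontrivial red arc $a_1$, and the relevant F-symbols do not vanish. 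The red and blue arcs do not commute merely by lying on opposite halves of the annulus; their attachment points on the central strand interleave. That nontrivial commutation relation is precisely the ``crossing'' that distinguishes a crossed product from a tensor product of algebras, so your attempted simplification cannot hold.
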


\begin{remark}
Noticeably, we can also view $\tilde{\mathbf{L}}({_{\ED}\ED})$ as the dual of $\tilde{\mathbf{R}}({\ED_{\ED}})$, and vice versa, by embedding them into bulk diagrams. This will be illustrated in detail in Sec.~\ref{sec:bdtheory}.    
\end{remark}

\subsection{Topological excitation and representation of tube algebra}
\label{subsec:FuntorExcitation}

It is pertinent to remember that our tube algebra is derived from the trivial domain wall $_{\ED}\ED_{\ED}$. From the macroscopic theory of multifusion string-net, we know that the excitations associated with domain walls are defined by the category of $\ED|\ED$-bimodule functors, denoted as $\Fun_{\ED|\ED}(\ED,\ED)$ (it is braided monoidal equivalent to the Drinfeld center $\mathcal{Z}(\ED)$).
The tensor product of two functors $F,G$ is defined as $F\otimes G:=G\comp F$, namely, for $M\in {_{\ED}}\ED_{\ED}$, we have $(F\otimes G)(M)=G\comp F(M)$.

For a bulk excitation $F\in \Fun_{\ED|\ED}(\ED,\ED)$, we can construct a corresponding lattice model in the following way. We introduce the graded vector spaces
\begin{equation}
    V_F:=\bigoplus_{x,y\in \Irr({_{\ED}}\ED_{\ED})} \Hom_{{_{\ED}}\ED_{\ED}}(F(x),y),\quad
    V^F:=\bigoplus_{x,y\in \Irr({_{\ED}}\ED_{\ED})} \Hom_{{_{\ED}}\ED_{\ED}}(x,F(y)).
\end{equation}
The corresponding bases of the two spaces can be diagrammatically represented as follows:
\begin{equation}
    \begin{aligned}
    \begin{tikzpicture}
  %  \draw [fill=lightgray,] (-0.6,-0.1) rectangle (0,1.5);
     \fill [Apricot] (-1.5,-0.1) rectangle (0,1.5);
      \fill [GreenYellow] (1.5,-0.1) rectangle (0,1.5);
        \draw[line width=1.4pt,black] (0,-0.1) -- (0,1.5); 
        \draw[line width=1.4pt,black] (0,-0.1) -- (0,1.5); 
        \draw[-latex,line width=1.4pt,black] (0,0.3) -- (0,0.5);
        \draw[-latex,line width=1.4pt,black] (0,1) -- (0,1.2);
       \draw[line width=0.6pt,red] (0,0.7) -- (0.6,0.7); 
        \draw[-latex,line width=0.6pt,red] (0.6,0.7) -- (0.2,0.7); 
        \node[ line width=0.6pt, dashed, draw opacity=0.5] (a) at (0.4,0.9){$F$};
        \node[ line width=0.6pt, dashed, draw opacity=0.5] (a) at (0.2,-0.1){$x$};
        \node[ line width=0.6pt, dashed, draw opacity=0.5] (a) at (0.2,1.5){$y$};
        \node[ line width=0.6pt, dashed, draw opacity=0.5] (a) at (-0.25,0.8){$\alpha$};
    \end{tikzpicture}
\end{aligned}\in V_F,\quad
    \begin{aligned}
    \begin{tikzpicture}
   %\draw [fill=Apricot,line width=0pt] (-1.5,-0.1) rectangle (0,1.5);
   \fill [Apricot] (-1.5,-0.1) rectangle (0,1.5);
      \fill [GreenYellow] (1.5,-0.1) rectangle (0,1.5);
        \draw[line width=1.4pt,black] (0,-0.1) -- (0,1.5); 
        \draw[-latex,line width=1.4pt,black] (0,0.3) -- (0,0.5);
        \draw[-latex,line width=1.4pt,black] (0,1) -- (0,1.2);
       \draw[line width=0.6pt,red] (0,0.7) -- (0.6,0.7); 
        \draw[-latex,line width=0.6pt,red] (0,0.7) -- (0.4,0.7); 
        \node[ line width=0.6pt, dashed, draw opacity=0.5] (a) at (0.4,0.9){$F$};
        \node[ line width=0.6pt, dashed, draw opacity=0.5] (a) at (0.2,-0.1){$x$};
        \node[ line width=0.6pt, dashed, draw opacity=0.5] (a) at (0.2,1.5){$y$};
        \node[ line width=0.6pt, dashed, draw opacity=0.5] (a) at (-0.25,0.8){$\beta$};
    \end{tikzpicture}
\end{aligned} \in V^F,
\end{equation}
where we represent the functor as a directed dangling edge.
To construct the lattice Hamiltonian, we consider the region surrounding the excitation:
\begin{equation}
      \begin{aligned}
    \begin{tikzpicture}
   \fill [Apricot] (-2,-1) rectangle (0,3);
      \fill [GreenYellow] (2,-1) rectangle (0,3);
      %  \draw[line width=1.4pt,black] (0,-1) -- (0,1.5); 
      %  \draw[-latex,line width=1.4pt,black] (0,0.3) -- (0,0.5);
       % \draw[-latex,line width=1.4pt,black] (0,1) -- (0,1.2);
       \draw[line width=0.6pt,red] (0,0.9) -- (0.6,0.9); 
       % \draw[-latex,line width=0.6pt,black] (0,0.7) -- (0.4,0.7); 
         \draw[line width=1.4pt,black] (0,-1) -- (0,3); 
       %\draw[-latex,line width=1.4pt,black] (0,1.5) -- (0,1.9); 
        \draw[line width=1.4pt,black] (0,1.5) -- (1.6,1); 
        \draw[line width=1.4pt,black] (1.6,0) -- (1.6,1); 
        \draw[line width=1.4pt,black] (1.6,1) -- (2,1.4); 
        \draw[line width=1.4pt,black] (1.6,0) -- (0,-0.5); 
        \draw[line width=1.4pt,black] (1.6,0) -- (2,-0.4); 
        \draw[line width=1.4pt,black] (0,0.5) -- (-1.6,1); 
        \draw[line width=1.4pt,black] (-1.6,2) -- (-1.6,1); 
        \draw[line width=1.4pt,black] (-1.6,2) -- (-2,2.4); 
        \draw[line width=1.4pt,black] (-1.6,2) -- (0,2.5); 
        \draw[line width=1.4pt,black] (-1.6,1) -- (-2,0.6); 
        \node[ line width=0.6pt, dashed, draw opacity=0.5] (a) at (0.8,0.7){$F$};
        \node[ line width=0.6pt, dashed, draw opacity=0.5] (a) at (0.2,-0.1){$6$};
        \node[ line width=0.6pt, dashed, draw opacity=0.5] (a) at (0.2,1.7){$3$};
        \node[ line width=0.6pt, dashed, draw opacity=0.5] (a) at (-0.25,1){$9$};
       \node[ line width=0.6pt, dashed, draw opacity=0.5] (a) at (-0.25,0.2){$7$};
       \node[ line width=0.6pt, dashed, draw opacity=0.5] (a) at (0.25,2.6){$2$}; 
       \node[ line width=0.6pt, dashed, draw opacity=0.5] (a) at (-1.6,2.3){$1$}; 
      \node[ line width=0.6pt, dashed, draw opacity=0.5] (a) at (-1.6,0.4){$8$}; 
      \node[ line width=0.6pt, dashed, draw opacity=0.5] (a) at (1.6,1.3){$4$}; 
      \node[ line width=0.6pt, dashed, draw opacity=0.5] (a) at (1.6,-0.3){$5$}; 
     \node[ line width=0.6pt, dashed, draw opacity=0.5] (a) at (-0.8,1.6){$f_L$};
     \node[ line width=0.6pt, dashed, draw opacity=0.5] (a) at (0.8,0.2){$f_R$}; 
    \end{tikzpicture}
\end{aligned}\label{eq:DefectFace}
\end{equation}
We can introduce the vertex stabilizer $Q_{v_9}:V_F\to \tilde{V}_F$ as a projector whose support space is the direct sum of the spaces such that $\Hom_{{_{\ED}}\ED_{\ED}}(F(x),y)\neq 0$, which we denote as $\tilde{V}_F$ (a similar definition works for $\tilde{V}^F$). The face operators $B_{f_L}, B_{f_R}$ can be defined in a similar way as that for the bulk. Thus, we obtain the Hamiltonian:
\begin{equation}
    H[\Sigma_{\rm ext},F]=-\sum_{i=1}^9 Q_{v_i}-B_{f_L}-B_{f_R}.
\end{equation}

It is clear that $V_F$ and $V^F$ form modules over the tube algebra, where the action is defined as in Eq.~\eqref{eq:TubeModule}.
The tube algebra assumes the role of a local algebra, with the domain wall excitations serving as representations of this algebra.

\subsection{Morita theory for the tube algebra}
\label{sec:TubeMorita}
In this subsection, we denote the tube algebra $\mathbf{Tube}({_{\ED}}\ED_{\ED})$ as $\mathbf{T}^{0,0;0,0}$. The significance of the superscripts will be elucidated later. Building upon our earlier discussion, we observe that the topological excitations of the multifusion string-net model are delineated by the representation category of the tube algebra $\mathbf{T}^{0,0;0,0}$. It is noteworthy that $\mathbf{T}^{0,0;0,0}$ represents the most basic tube obtained through local moves. This prompts an inquiry into the relationship between the algebras before and after these topological local moves.
The answer is that they are \emph{Morita equivalent}.

Recall that two algebras $\mathcal{A}$ and $\mathcal{B}$ are called  Morita equivalent if and only if their module categories $_{\Acal}\Mod$ and $_{\Bcal}\Mod$ are equivalent.
%For two weak Hopf algebras, we say that they are Morita equivalent if and only if their representation categories are equivalent as braided monoidal categories.
It can be proved that $\mathcal{A}$ and $\mathcal{B}$  are 
Morita equivalent if and only if there exist $\Acal|\Bcal$-bimodule $M$ and $\Bcal|\Acal$-bimodule $N$ such that $M\otimes_{\Bcal}N\cong \Acal$ as $\Acal|\Acal$-bimodules and $N\otimes_{\Acal}M\cong \Bcal$ as $\Bcal|\Bcal$-bimodules.

Let us consider the general tube space $\mathbf{T}^{m_0,m_1;n_0,n_1}$ spanned by the tube string-net configurations   :
\begin{equation}
 \mathbf{T}^{m_0,m_1;n_0,n_1}=\operatorname{span} \left\{      \begin{aligned}
        \begin{tikzpicture}
             \draw[line width=.6pt,black] (0,0.5)--(0,1.5);
             \draw[line width=.6pt,black] (0,-0.5)--(0,-1.5);
             \draw[red,line width=.6pt] (120:0.8) -- (120:1.1);
             \draw[red,line width=.6pt] (150:0.8) -- (150:0.5);
            \draw[red,line width=.6pt] (180:0.8) -- (180:1.1);
            \draw[red,line width=.6pt] (210:0.8) -- (210:0.5);
          \draw[red,line width=.6pt] (240:0.8) -- (240:1.1);
             \draw[red] (0,0.8) arc[start angle=90, end angle=270, radius=0.8];
             \draw[blue] (0,1.3) arc[start angle=90, end angle=-90, radius=1.3];
            \node[red, line width=0.6pt, dashed, draw opacity=0.5] (a) at (225:0.6)
            {\small{$\cdot$}};
            \node[red, line width=0.6pt, dashed, draw opacity=0.5] (a) at (240:0.6){\small{$\cdot$}};
          \node[red, line width=0.6pt, dashed, draw opacity=0.5] (a) at (255:0.6){\small{$\cdot$}};
             \node[red, line width=0.6pt, dashed, draw opacity=0.5] (a) at (195:1)
            {\small{$\cdot$}};
            \node[red, line width=0.6pt, dashed, draw opacity=0.5] (a) at (210:1){\small{$\cdot$}};
          \node[red, line width=0.6pt, dashed, draw opacity=0.5] (a) at (225:1){\small{$\cdot$}};
            \draw[blue,line width=.6pt] (300:1.3) -- (300:1.6);
             \draw[blue,line width=.6pt] (330:1) -- (330:1.3);
            \draw[blue,line width=.6pt] (360:1.3) -- (360:1.6);
            \draw[blue,line width=.6pt] (30:1) -- (30:1.3);
             \draw[blue,line width=.6pt] (60:1.6) -- (60:1.3);
          \node[blue, line width=0.6pt, dashed, draw opacity=0.5] (a) at (75:1)
            {\small{$\cdot$}};
            \node[blue, line width=0.6pt, dashed, draw opacity=0.5] (a) at (60:1){\small{$\cdot$}};
          \node[blue, line width=0.6pt, dashed, draw opacity=0.5] (a) at (45:1){\small{$\cdot$}};
         \node[blue, line width=0.6pt, dashed, draw opacity=0.5] (a) at (45:1.45)
            {\small{$\cdot$}};
            \node[blue, line width=0.6pt, dashed, draw opacity=0.5] (a) at (30:1.45){\small{$\cdot$}};
          \node[blue, line width=0.6pt, dashed, draw opacity=0.5] (a) at (15:1.45){\small{$\cdot$}};
          \node[blue, line width=0.6pt, dashed, draw opacity=0.5] (a) at (30:1.85)
            {$n_1$};
          \node[blue, line width=0.6pt, dashed, draw opacity=0.5] (a) at (60:0.65)
            {$n_0$};
              \node[red, line width=0.6pt, dashed, draw opacity=0.5] (a) at (210:1.35)
            {$m_1$};
          \node[red, line width=0.6pt, dashed, draw opacity=0.5] (a) at (250:0.35)
            {$m_0$};
        \end{tikzpicture}
    \end{aligned}: \text{edge}\in \Irr,\text{vertex}\in \Hom\right\},
\end{equation}
where $m_0$ and $m_1$ represent the number of red internal and external edges, $n_0$ and $n_1$ represent the number of blue internal and external edges.
An algebraic structure can be endowed upon $\mathbf{T}^{m,m;n,n}$ by merging two basis diagrams and applying topological local moves. More precisely, the multiplication is given as 
\begin{align}
    \mu&\left(\begin{aligned}
        \begin{tikzpicture}
             \draw[line width=.6pt,black] (0,0.5)--(0,1.5);
             \draw[line width=.6pt,black] (0,-0.5)--(0,-1.5);
             \draw[red,line width=.6pt] (120:0.8) -- (120:1.1);
             \draw[red,line width=.6pt] (150:0.8) -- (150:0.5);
            \draw[red,line width=.6pt] (180:0.8) -- (180:1.1);
            \draw[red,line width=.6pt] (210:0.8) -- (210:0.5);
          \draw[red,line width=.6pt] (240:0.8) -- (240:1.1);
             \draw[red] (0,0.8) arc[start angle=90, end angle=270, radius=0.8];
             \draw[blue] (0,1.3) arc[start angle=90, end angle=-90, radius=1.3];
            \node[red, line width=0.6pt, dashed, draw opacity=0.5] (a) at (225:0.6)
            {\small{$\cdot$}};
            \node[red, line width=0.6pt, dashed, draw opacity=0.5] (a) at (240:0.6){\small{$\cdot$}};
          \node[red, line width=0.6pt, dashed, draw opacity=0.5] (a) at (255:0.6){\small{$\cdot$}};
             \node[red, line width=0.6pt, dashed, draw opacity=0.5] (a) at (195:1)
            {\small{$\cdot$}};
            \node[red, line width=0.6pt, dashed, draw opacity=0.5] (a) at (210:1){\small{$\cdot$}};
          \node[red, line width=0.6pt, dashed, draw opacity=0.5] (a) at (225:1){\small{$\cdot$}};
            \draw[blue,line width=.6pt] (300:1.3) -- (300:1.6);
             \draw[blue,line width=.6pt] (330:1) -- (330:1.3);
            \draw[blue,line width=.6pt] (360:1.3) -- (360:1.6);
            \draw[blue,line width=.6pt] (30:1) -- (30:1.3);
             \draw[blue,line width=.6pt] (60:1.6) -- (60:1.3);
          \node[blue, line width=0.6pt, dashed, draw opacity=0.5] (a) at (75:1)
            {\small{$\cdot$}};
            \node[blue, line width=0.6pt, dashed, draw opacity=0.5] (a) at (60:1){\small{$\cdot$}};
          \node[blue, line width=0.6pt, dashed, draw opacity=0.5] (a) at (45:1){\small{$\cdot$}};
         \node[blue, line width=0.6pt, dashed, draw opacity=0.5] (a) at (45:1.45)
            {\small{$\cdot$}};
            \node[blue, line width=0.6pt, dashed, draw opacity=0.5] (a) at (30:1.45){\small{$\cdot$}};
          \node[blue, line width=0.6pt, dashed, draw opacity=0.5] (a) at (15:1.45){\small{$\cdot$}};
          \node[blue, line width=0.6pt, dashed, draw opacity=0.5] (a) at (60:1.85)
            {$b_1$};
            \node[blue, line width=0.6pt, dashed, draw opacity=0.5] (a) at (300:1.9)
            {$b_n$};
            \node[blue, line width=0.6pt, dashed, draw opacity=0.5] (a) at (0:2.1)
            {$b_{n-1}$};
          \node[blue, line width=0.6pt, dashed, draw opacity=0.5] (a) at (330:0.65)
            {$d_n$};
              \node[red, line width=0.6pt, dashed, draw opacity=0.5] (a) at (120:1.4)
            {$a_1$};
            \node[red, line width=0.6pt, dashed, draw opacity=0.5] (a) at (180:1.4)
            {$a_2$};
            \node[red, line width=0.6pt, dashed, draw opacity=0.5] (a) at (240:1.3)
            {$a_m$};
          \node[red, line width=0.6pt, dashed, draw opacity=0.5] (a) at (140:0.25)
            {$c_1$};
            \node[line width=0.6pt, dashed, draw opacity=0.5] (a) at (-90:1.7) {$y$};
            \node[line width=0.6pt, dashed, draw opacity=0.5] (a) at (-80:0.3)
            {$v$};
            \node[line width=0.6pt, dashed, draw opacity=0.5] (a) at (80:0.3) {$u$};
            \node[line width=0.6pt, dashed, draw opacity=0.5] (a) at (90:1.75) {$x$};
        \end{tikzpicture}
    \end{aligned}
    \otimes \begin{aligned}
        \begin{tikzpicture}
             \draw[line width=.6pt,black] (0,0.5)--(0,1.5);
             \draw[line width=.6pt,black] (0,-0.5)--(0,-1.5);
             \draw[red,line width=.6pt] (120:0.8) -- (120:1.1);
             \draw[red,line width=.6pt] (150:0.8) -- (150:0.5);
            \draw[red,line width=.6pt] (180:0.8) -- (180:1.1);
            \draw[red,line width=.6pt] (210:0.8) -- (210:0.5);
          \draw[red,line width=.6pt] (240:0.8) -- (240:1.1);
             \draw[red] (0,0.8) arc[start angle=90, end angle=270, radius=0.8];
             \draw[blue] (0,1.3) arc[start angle=90, end angle=-90, radius=1.3];
            \node[red, line width=0.6pt, dashed, draw opacity=0.5] (a) at (225:0.6)
            {\small{$\cdot$}};
            \node[red, line width=0.6pt, dashed, draw opacity=0.5] (a) at (240:0.6){\small{$\cdot$}};
          \node[red, line width=0.6pt, dashed, draw opacity=0.5] (a) at (255:0.6){\small{$\cdot$}};
             \node[red, line width=0.6pt, dashed, draw opacity=0.5] (a) at (195:1)
            {\small{$\cdot$}};
            \node[red, line width=0.6pt, dashed, draw opacity=0.5] (a) at (210:1){\small{$\cdot$}};
          \node[red, line width=0.6pt, dashed, draw opacity=0.5] (a) at (225:1){\small{$\cdot$}};
            \draw[blue,line width=.6pt] (300:1.3) -- (300:1.6);
             \draw[blue,line width=.6pt] (330:1) -- (330:1.3);
            \draw[blue,line width=.6pt] (360:1.3) -- (360:1.6);
            \draw[blue,line width=.6pt] (30:1) -- (30:1.3);
             \draw[blue,line width=.6pt] (60:1.6) -- (60:1.3);
          \node[blue, line width=0.6pt, dashed, draw opacity=0.5] (a) at (75:1)
            {\small{$\cdot$}};
            \node[blue, line width=0.6pt, dashed, draw opacity=0.5] (a) at (60:1){\small{$\cdot$}};
          \node[blue, line width=0.6pt, dashed, draw opacity=0.5] (a) at (45:1){\small{$\cdot$}};
         \node[blue, line width=0.6pt, dashed, draw opacity=0.5] (a) at (45:1.45)
            {\small{$\cdot$}};
            \node[blue, line width=0.6pt, dashed, draw opacity=0.5] (a) at (30:1.45){\small{$\cdot$}};
          \node[blue, line width=0.6pt, dashed, draw opacity=0.5] (a) at (15:1.45){\small{$\cdot$}};
          \node[blue, line width=0.6pt, dashed, draw opacity=0.5] (a) at (60:1.85)
            {$b_1'$};
            \node[blue, line width=0.6pt, dashed, draw opacity=0.5] (a) at (300:1.9)
            {$b_n'$};
            \node[blue, line width=0.6pt, dashed, draw opacity=0.5] (a) at (0:2.1)
            {$b_{n-1}'$};
          \node[blue, line width=0.6pt, dashed, draw opacity=0.5] (a) at (330:0.65)
            {$d_n'$};
              \node[red, line width=0.6pt, dashed, draw opacity=0.5] (a) at (120:1.4)
            {$a_1'$};
            \node[red, line width=0.6pt, dashed, draw opacity=0.5] (a) at (180:1.4)
            {$a_2'$};
            \node[red, line width=0.6pt, dashed, draw opacity=0.5] (a) at (240:1.3)
            {$a_m'$};
          \node[red, line width=0.6pt, dashed, draw opacity=0.5] (a) at (140:0.25)
            {$c_1'$};
        \node[line width=0.6pt, dashed, draw opacity=0.5] (a) at (-90:1.7) {$y'$};
        \node[line width=0.6pt, dashed, draw opacity=0.5] (a) at (-70:0.3)
        {$v'$};
        \node[line width=0.6pt, dashed, draw opacity=0.5] (a) at (70:0.4) {$u'$};
        \node[line width=0.6pt, dashed, draw opacity=0.5] (a) at (90:1.75) {$x'$};
        \end{tikzpicture}
    \end{aligned}
    \right) \nonumber \\
    & = \delta_{u,x'}\delta_{v,y'}\prod_{i=1}^m\delta_{c_i,a'_i}\prod_{j=1}^n\delta_{d_j,b_j'} \; 
    \begin{aligned}
        \begin{tikzpicture}
             \draw[line width=.6pt,black] (0,0.5)--(0,2.5);
             \draw[line width=.6pt,black] (0,-0.5)--(0,-2.5);
             \draw[red] (0,0.8) arc[start angle=90, end angle=270, radius=0.8];
             \draw[blue] (0,1.3) arc[start angle=90, end angle=-90, radius=1.3];
            \draw[red] (0,1.6) arc[start angle=90, end angle=270, radius=1.6];
           \draw[blue] (0,2.1) arc[start angle=90, end angle=-90, radius=2.1];
           \draw[red,line width=.6pt] (120:1.6) -- (120:1.9);
           \draw[red,line width=.6pt] (120:0.5) -- (120:0.8);
           \draw[red,line width=.6pt] (150:0.8) -- (150:1.6);
           \draw[red,line width=.6pt] (180:1.6) -- (180:1.9);
           \draw[red,line width=.6pt] (150:0.8) -- (150:1.6);
           \draw[red,line width=.6pt] (210:0.8) -- (210:1.6);
           \draw[red,line width=.6pt] (240:1.6) -- (240:1.9);
           \draw[red,line width=.6pt] (180:0.5) -- (180:0.8);
           \node[red, line width=0.6pt, dashed, draw opacity=0.5] (a) at (210:1.9){\small{$\cdot$}};
           \node[red, line width=0.6pt, dashed, draw opacity=0.5] (a) at (200:1.9){\small{$\cdot$}};
           \node[red, line width=0.6pt, dashed, draw opacity=0.5] (a) at (220:1.9){\small{$\cdot$}};
           \node[red, line width=0.6pt, dashed, draw opacity=0.5] (a) at (240:1.2){\small{$\cdot$}};
           \node[red, line width=0.6pt, dashed, draw opacity=0.5] (a) at (250:1.2){\small{$\cdot$}};
           \node[red, line width=0.6pt, dashed, draw opacity=0.5] (a) at (230:1.2){\small{$\cdot$}};
           \node[red, line width=0.6pt, dashed, draw opacity=0.5] (a) at (240:0.6){\small{$\cdot$}};
           \node[red, line width=0.6pt, dashed, draw opacity=0.5] (a) at (255:0.6){\small{$\cdot$}};
           \node[red, line width=0.6pt, dashed, draw opacity=0.5] (a) at (225:0.6){\small{$\cdot$}};
           \node[red, line width=0.6pt, dashed, draw opacity=0.5] (a) at (120:2.2)
            {$a_1$};
            \node[red, line width=0.6pt, dashed, draw opacity=0.5] (a) at (180:2.2)
            {$a_2$};
            \node[red, line width=0.6pt, dashed, draw opacity=0.5] (a) at (240:2.2)
            {$a_m$};
            \node[red, line width=0.6pt, dashed, draw opacity=0.5] (a) at (135:1.2)
            {$c_1$};
            \node[red, line width=0.6pt, dashed, draw opacity=0.5] (a) at (195:1.2)
            {$c_2$};
            \node[red, line width=0.6pt, dashed, draw opacity=0.5] (a) at (120:0.3)
            {$c_1'$};
            \draw[blue,line width=.6pt] (60:2.1) -- (60:2.4);
            % \draw[blue,line width=.6pt] (60:1) -- (60:1.3);
            \draw[blue,line width=.6pt] (30:1.3) -- (30:2.1);
            \draw[blue,line width=.6pt] (0:2.1) -- (0:2.4);
            \draw[blue,line width=.6pt] (0:1) -- (0:1.3);
            \draw[blue,line width=.6pt] (330:1.3) -- (330:2.1);
            \draw[blue,line width=.6pt] (300:2.1) -- (300:2.4);
            \draw[blue,line width=.6pt] (300:1) -- (300:1.3);
            \node[blue, line width=0.6pt, dashed, draw opacity=0.5] (a) at (60:2.7)
            {$b_1$};
            \node[blue, line width=0.6pt, dashed, draw opacity=0.5] (a) at (0:2.9)
            {$b_{n-1}$};
            \node[blue, line width=0.6pt, dashed, draw opacity=0.5] (a) at (300:2.7)
            {$b_{n}$};
            \node[blue, line width=0.6pt, dashed, draw opacity=0.5] (a) at (30:2.4){\small{$\cdot$}};
            \node[blue, line width=0.6pt, dashed, draw opacity=0.5] (a) at (20:2.4){\small{$\cdot$}};
            \node[blue, line width=0.6pt, dashed, draw opacity=0.5] (a) at (40:2.4){\small{$\cdot$}};
            \node[blue, line width=0.6pt, dashed, draw opacity=0.5] (a) at (45:1){\small{$\cdot$}};
            \node[blue, line width=0.6pt, dashed, draw opacity=0.5] (a) at (60:1){\small{$\cdot$}};
            \node[blue, line width=0.6pt, dashed, draw opacity=0.5] (a) at (30:1){\small{$\cdot$}};
            \node[blue, line width=0.6pt, dashed, draw opacity=0.5] (a) at (0:1.7){\small{$\cdot$}};
            \node[blue, line width=0.6pt, dashed, draw opacity=0.5] (a) at (10:1.7){\small{$\cdot$}};
            \node[blue, line width=0.6pt, dashed, draw opacity=0.5] (a) at (350:1.7){\small{$\cdot$}};
            \node[blue, line width=0.6pt, dashed, draw opacity=0.5] (a) at (40:1.7)
            {$d_1$};
            \node[blue, line width=0.6pt, dashed, draw opacity=0.5] (a) at (320:1.7)
            {$d_n$};
            \node[blue, line width=0.6pt, dashed, draw opacity=0.5] (a) at (300:0.7)
            {$d_n'$};
            \node[line width=0.6pt, dashed, draw opacity=0.5] (a) at (-95:2.4) {$y$};
            \node[line width=0.6pt, dashed, draw opacity=0.5] (a) at (-96:1.4)
            {$v$};
            \node[line width=0.6pt, dashed, draw opacity=0.5] (a) at (96:1.4) {$u$};
            \node[line width=0.6pt, dashed, draw opacity=0.5] (a) at (-70:0.3)
        {$v'$};
        \node[line width=0.6pt, dashed, draw opacity=0.5] (a) at (70:0.4) {$u'$};
            \node[line width=0.6pt, dashed, draw opacity=0.5] (a) at (95:2.4) {$x$};
        \end{tikzpicture}
    \end{aligned}\;. \label{eq:tube-space-prod}
\end{align}
Note that we have omitted the morphism labels and object labels around the arcs for simplicity. For the gluing process to be meaningful, the number of internal edges in the first diagram must match the number of external edges in the second diagram. Hence, we set $m_0=m_1=m$ and $n_0=n_1=n$ to ensure compatibility. It is clear that this operation is associative by the diagrams. One can easily show that the unit is of the form 
\begin{equation}
    1=\sum_{a_1,\cdots,b_n,c,d}\begin{aligned}
        \begin{tikzpicture}
             \draw[line width=.6pt,black] (0,0.5)--(0,1.5);
             \draw[line width=.6pt,black] (0,-0.5)--(0,-1.5);
             \draw[red,line width=.6pt] (120:0.8) -- (120:1.1);
             \draw[red,line width=.6pt] (150:0.8) -- (150:0.5);
            \draw[red,line width=.6pt] (180:0.8) -- (180:1.1);
            \draw[red,line width=.6pt] (210:0.8) -- (210:0.5);
          \draw[red,line width=.6pt] (240:0.8) -- (240:1.1);
             \draw[red,dotted] (0,0.8) arc[start angle=90, end angle=270, radius=0.8];
             \draw[blue,dotted] (0,1.3) arc[start angle=90, end angle=-90, radius=1.3];
            \node[red, line width=0.6pt, dashed, draw opacity=0.5] (a) at (225:0.6)
            {\small{$\cdot$}};
            \node[red, line width=0.6pt, dashed, draw opacity=0.5] (a) at (240:0.6){\small{$\cdot$}};
          \node[red, line width=0.6pt, dashed, draw opacity=0.5] (a) at (255:0.6){\small{$\cdot$}};
             \node[red, line width=0.6pt, dashed, draw opacity=0.5] (a) at (195:1)
            {\small{$\cdot$}};
            \node[red, line width=0.6pt, dashed, draw opacity=0.5] (a) at (210:1){\small{$\cdot$}};
          \node[red, line width=0.6pt, dashed, draw opacity=0.5] (a) at (225:1){\small{$\cdot$}};
            \draw[blue,line width=.6pt] (300:1.3) -- (300:1.6);
             \draw[blue,line width=.6pt] (330:1) -- (330:1.3);
            \draw[blue,line width=.6pt] (360:1.3) -- (360:1.6);
            \draw[blue,line width=.6pt] (30:1) -- (30:1.3);
             \draw[blue,line width=.6pt] (60:1.6) -- (60:1.3);
          \node[blue, line width=0.6pt, dashed, draw opacity=0.5] (a) at (75:1)
            {\small{$\cdot$}};
            \node[blue, line width=0.6pt, dashed, draw opacity=0.5] (a) at (60:1){\small{$\cdot$}};
          \node[blue, line width=0.6pt, dashed, draw opacity=0.5] (a) at (45:1){\small{$\cdot$}};
         \node[blue, line width=0.6pt, dashed, draw opacity=0.5] (a) at (45:1.45)
            {\small{$\cdot$}};
            \node[blue, line width=0.6pt, dashed, draw opacity=0.5] (a) at (30:1.45){\small{$\cdot$}};
          \node[blue, line width=0.6pt, dashed, draw opacity=0.5] (a) at (15:1.45){\small{$\cdot$}};
          \node[blue, line width=0.6pt, dashed, draw opacity=0.5] (a) at (60:1.85)
            {$b_1$};
            \node[blue, line width=0.6pt, dashed, draw opacity=0.5] (a) at (300:1.9)
            {$b_n$};
            \node[blue, line width=0.6pt, dashed, draw opacity=0.5] (a) at (0:2.1)
            {$b_{n-1}$};
          \node[blue, line width=0.6pt, dashed, draw opacity=0.5] (a) at (330:0.65)
            {$b_n$};
              \node[red, line width=0.6pt, dashed, draw opacity=0.5] (a) at (120:1.4)
            {$a_1$};
            \node[red, line width=0.6pt, dashed, draw opacity=0.5] (a) at (180:1.4)
            {$a_2$};
            \node[red, line width=0.6pt, dashed, draw opacity=0.5] (a) at (240:1.3)
            {$a_m$};
          \node[red, line width=0.6pt, dashed, draw opacity=0.5] (a) at (140:0.25)
            {$a_1$};
            \node[black, line width=0.6pt, dashed, draw opacity=0.5] (a) at (-90:1.75)
            {$d$};
            \node[black, line width=0.6pt, dashed, draw opacity=0.5] (a) at (90:1.7)
            {$c$};
        \end{tikzpicture}
    \end{aligned}.
\end{equation}

Similarly, for $\mathbf{T}^{m,s;n,t}$, both a right $\mathbf{T}^{m,m;n,n}$-action and a left $\mathbf{T}^{s,s;t,t}$-action can be established by merging the corresponding basis diagrams and implementing the topological local moves. For example, the left action of $\mathbf{T}^{s,s;t,t}$ on 
$\mathbf{T}^{m,s;n,t}$ is represented as 
\begin{align}
    &\begin{aligned}
        \begin{tikzpicture}
             \draw[line width=.6pt,black] (0,0.5)--(0,1.5);
             \draw[line width=.6pt,black] (0,-0.5)--(0,-1.5);
             \draw[red,line width=.6pt] (120:0.8) -- (120:1.1);
             \draw[red,line width=.6pt] (150:0.8) -- (150:0.5);
            \draw[red,line width=.6pt] (180:0.8) -- (180:1.1);
            \draw[red,line width=.6pt] (210:0.8) -- (210:0.5);
          \draw[red,line width=.6pt] (240:0.8) -- (240:1.1);
             \draw[red] (0,0.8) arc[start angle=90, end angle=270, radius=0.8];
             \draw[blue] (0,1.3) arc[start angle=90, end angle=-90, radius=1.3];
            \node[red, line width=0.6pt, dashed, draw opacity=0.5] (a) at (225:0.6)
            {\small{$\cdot$}};
            \node[red, line width=0.6pt, dashed, draw opacity=0.5] (a) at (240:0.6){\small{$\cdot$}};
          \node[red, line width=0.6pt, dashed, draw opacity=0.5] (a) at (255:0.6){\small{$\cdot$}};
             \node[red, line width=0.6pt, dashed, draw opacity=0.5] (a) at (195:1)
            {\small{$\cdot$}};
            \node[red, line width=0.6pt, dashed, draw opacity=0.5] (a) at (210:1){\small{$\cdot$}};
          \node[red, line width=0.6pt, dashed, draw opacity=0.5] (a) at (225:1){\small{$\cdot$}};
            \draw[blue,line width=.6pt] (300:1.3) -- (300:1.6);
             \draw[blue,line width=.6pt] (330:1) -- (330:1.3);
            \draw[blue,line width=.6pt] (360:1.3) -- (360:1.6);
            \draw[blue,line width=.6pt] (30:1) -- (30:1.3);
             \draw[blue,line width=.6pt] (60:1.6) -- (60:1.3);
          \node[blue, line width=0.6pt, dashed, draw opacity=0.5] (a) at (75:1)
            {\small{$\cdot$}};
            \node[blue, line width=0.6pt, dashed, draw opacity=0.5] (a) at (60:1){\small{$\cdot$}};
          \node[blue, line width=0.6pt, dashed, draw opacity=0.5] (a) at (45:1){\small{$\cdot$}};
         \node[blue, line width=0.6pt, dashed, draw opacity=0.5] (a) at (45:1.45)
            {\small{$\cdot$}};
            \node[blue, line width=0.6pt, dashed, draw opacity=0.5] (a) at (30:1.45){\small{$\cdot$}};
          \node[blue, line width=0.6pt, dashed, draw opacity=0.5] (a) at (15:1.45){\small{$\cdot$}};
          \node[blue, line width=0.6pt, dashed, draw opacity=0.5] (a) at (60:1.85)
            {$b_1$};
            \node[blue, line width=0.6pt, dashed, draw opacity=0.5] (a) at (300:1.9)
            {$b_t$};
            \node[blue, line width=0.6pt, dashed, draw opacity=0.5] (a) at (0:2.1)
            {$b_{t-1}$};
          \node[blue, line width=0.6pt, dashed, draw opacity=0.5] (a) at (330:0.65)
            {$d_t$};
              \node[red, line width=0.6pt, dashed, draw opacity=0.5] (a) at (120:1.4)
            {$a_1$};
            \node[red, line width=0.6pt, dashed, draw opacity=0.5] (a) at (180:1.4)
            {$a_2$};
            \node[red, line width=0.6pt, dashed, draw opacity=0.5] (a) at (240:1.3)
            {$a_s$};
          \node[red, line width=0.6pt, dashed, draw opacity=0.5] (a) at (140:0.25)
            {$c_1$};
            \node[line width=0.6pt, dashed, draw opacity=0.5] (a) at (-90:1.7) {$y$};
            \node[line width=0.6pt, dashed, draw opacity=0.5] (a) at (-80:0.3)
            {$v$};
            \node[line width=0.6pt, dashed, draw opacity=0.5] (a) at (80:0.3) {$u$};
            \node[line width=0.6pt, dashed, draw opacity=0.5] (a) at (90:1.75) {$x$};
        \end{tikzpicture}
    \end{aligned}
    \triangleright \begin{aligned}
        \begin{tikzpicture}
             \draw[line width=.6pt,black] (0,0.5)--(0,1.5);
             \draw[line width=.6pt,black] (0,-0.5)--(0,-1.5);
             \draw[red,line width=.6pt] (120:0.8) -- (120:1.1);
             \draw[red,line width=.6pt] (150:0.8) -- (150:0.5);
            \draw[red,line width=.6pt] (180:0.8) -- (180:1.1);
            \draw[red,line width=.6pt] (210:0.8) -- (210:0.5);
          \draw[red,line width=.6pt] (240:0.8) -- (240:1.1);
             \draw[red] (0,0.8) arc[start angle=90, end angle=270, radius=0.8];
             \draw[blue] (0,1.3) arc[start angle=90, end angle=-90, radius=1.3];
            \node[red, line width=0.6pt, dashed, draw opacity=0.5] (a) at (225:0.6)
            {\small{$\cdot$}};
            \node[red, line width=0.6pt, dashed, draw opacity=0.5] (a) at (240:0.6){\small{$\cdot$}};
          \node[red, line width=0.6pt, dashed, draw opacity=0.5] (a) at (255:0.6){\small{$\cdot$}};
             \node[red, line width=0.6pt, dashed, draw opacity=0.5] (a) at (195:1)
            {\small{$\cdot$}};
            \node[red, line width=0.6pt, dashed, draw opacity=0.5] (a) at (210:1){\small{$\cdot$}};
          \node[red, line width=0.6pt, dashed, draw opacity=0.5] (a) at (225:1){\small{$\cdot$}};
            \draw[blue,line width=.6pt] (300:1.3) -- (300:1.6);
             \draw[blue,line width=.6pt] (330:1) -- (330:1.3);
            \draw[blue,line width=.6pt] (360:1.3) -- (360:1.6);
            \draw[blue,line width=.6pt] (30:1) -- (30:1.3);
             \draw[blue,line width=.6pt] (60:1.6) -- (60:1.3);
          \node[blue, line width=0.6pt, dashed, draw opacity=0.5] (a) at (75:1)
            {\small{$\cdot$}};
            \node[blue, line width=0.6pt, dashed, draw opacity=0.5] (a) at (60:1){\small{$\cdot$}};
          \node[blue, line width=0.6pt, dashed, draw opacity=0.5] (a) at (45:1){\small{$\cdot$}};
         \node[blue, line width=0.6pt, dashed, draw opacity=0.5] (a) at (45:1.45)
            {\small{$\cdot$}};
            \node[blue, line width=0.6pt, dashed, draw opacity=0.5] (a) at (30:1.45){\small{$\cdot$}};
          \node[blue, line width=0.6pt, dashed, draw opacity=0.5] (a) at (15:1.45){\small{$\cdot$}};
          \node[blue, line width=0.6pt, dashed, draw opacity=0.5] (a) at (60:1.85)
            {$b_1'$};
            \node[blue, line width=0.6pt, dashed, draw opacity=0.5] (a) at (300:1.9)
            {$b_t'$};
            \node[blue, line width=0.6pt, dashed, draw opacity=0.5] (a) at (0:2.1)
            {$b_{t-1}'$};
          \node[blue, line width=0.6pt, dashed, draw opacity=0.5] (a) at (330:0.65)
            {$d_n'$};
              \node[red, line width=0.6pt, dashed, draw opacity=0.5] (a) at (120:1.4)
            {$a_1'$};
            \node[red, line width=0.6pt, dashed, draw opacity=0.5] (a) at (180:1.4)
            {$a_2'$};
            \node[red, line width=0.6pt, dashed, draw opacity=0.5] (a) at (240:1.35)
            {$a_s'$};
          \node[red, line width=0.6pt, dashed, draw opacity=0.5] (a) at (140:0.25)
            {$c_1'$};
        \node[line width=0.6pt, dashed, draw opacity=0.5] (a) at (-90:1.7) {$y'$};
        \node[line width=0.6pt, dashed, draw opacity=0.5] (a) at (-70:0.3)
        {$v'$};
        \node[line width=0.6pt, dashed, draw opacity=0.5] (a) at (70:0.4) {$u'$};
        \node[line width=0.6pt, dashed, draw opacity=0.5] (a) at (90:1.75) {$x'$};
        \end{tikzpicture}
    \end{aligned}
     \nonumber \\
    & = \delta_{u,x'}\delta_{v,y'}\prod_{i=1}^s\delta_{c_i,a'_i}\prod_{j=1}^t\delta_{d_j,b_j'} \; 
    \begin{aligned}
        \begin{tikzpicture}
             \draw[line width=.6pt,black] (0,0.5)--(0,2.5);
             \draw[line width=.6pt,black] (0,-0.5)--(0,-2.5);
             \draw[red] (0,0.8) arc[start angle=90, end angle=270, radius=0.8];
             \draw[blue] (0,1.3) arc[start angle=90, end angle=-90, radius=1.3];
            \draw[red] (0,1.6) arc[start angle=90, end angle=270, radius=1.6];
           \draw[blue] (0,2.1) arc[start angle=90, end angle=-90, radius=2.1];
           \draw[red,line width=.6pt] (120:1.6) -- (120:1.9);
           \draw[red,line width=.6pt] (120:0.5) -- (120:0.8);
           \draw[red,line width=.6pt] (150:0.8) -- (150:1.6);
           \draw[red,line width=.6pt] (180:1.6) -- (180:1.9);
           \draw[red,line width=.6pt] (150:0.8) -- (150:1.6);
           \draw[red,line width=.6pt] (210:0.8) -- (210:1.6);
           \draw[red,line width=.6pt] (240:1.6) -- (240:1.9);
           \draw[red,line width=.6pt] (180:0.5) -- (180:0.8);
           \node[red, line width=0.6pt, dashed, draw opacity=0.5] (a) at (210:1.9){\small{$\cdot$}};
           \node[red, line width=0.6pt, dashed, draw opacity=0.5] (a) at (200:1.9){\small{$\cdot$}};
           \node[red, line width=0.6pt, dashed, draw opacity=0.5] (a) at (220:1.9){\small{$\cdot$}};
           \node[red, line width=0.6pt, dashed, draw opacity=0.5] (a) at (240:1.2){\small{$\cdot$}};
           \node[red, line width=0.6pt, dashed, draw opacity=0.5] (a) at (250:1.2){\small{$\cdot$}};
           \node[red, line width=0.6pt, dashed, draw opacity=0.5] (a) at (230:1.2){\small{$\cdot$}};
           \node[red, line width=0.6pt, dashed, draw opacity=0.5] (a) at (240:0.6){\small{$\cdot$}};
           \node[red, line width=0.6pt, dashed, draw opacity=0.5] (a) at (255:0.6){\small{$\cdot$}};
           \node[red, line width=0.6pt, dashed, draw opacity=0.5] (a) at (225:0.6){\small{$\cdot$}};
           \node[red, line width=0.6pt, dashed, draw opacity=0.5] (a) at (120:2.2)
            {$a_1$};
            \node[red, line width=0.6pt, dashed, draw opacity=0.5] (a) at (180:2.2)
            {$a_2$};
            \node[red, line width=0.6pt, dashed, draw opacity=0.5] (a) at (240:2.2)
            {$a_s$};
            \node[red, line width=0.6pt, dashed, draw opacity=0.5] (a) at (135:1.2)
            {$c_1$};
            \node[red, line width=0.6pt, dashed, draw opacity=0.5] (a) at (195:1.2)
            {$c_2$};
            \node[red, line width=0.6pt, dashed, draw opacity=0.5] (a) at (120:0.3)
            {$c_1'$};
            \draw[blue,line width=.6pt] (60:2.1) -- (60:2.4);
            % \draw[blue,line width=.6pt] (60:1) -- (60:1.3);
            \draw[blue,line width=.6pt] (30:1.3) -- (30:2.1);
            \draw[blue,line width=.6pt] (0:2.1) -- (0:2.4);
            \draw[blue,line width=.6pt] (0:1) -- (0:1.3);
            \draw[blue,line width=.6pt] (330:1.3) -- (330:2.1);
            \draw[blue,line width=.6pt] (300:2.1) -- (300:2.4);
            \draw[blue,line width=.6pt] (300:1) -- (300:1.3);
            \node[blue, line width=0.6pt, dashed, draw opacity=0.5] (a) at (60:2.7)
            {$b_1$};
            \node[blue, line width=0.6pt, dashed, draw opacity=0.5] (a) at (0:2.9)
            {$b_{t-1}$};
            \node[blue, line width=0.6pt, dashed, draw opacity=0.5] (a) at (300:2.7)
            {$b_{t}$};
            \node[blue, line width=0.6pt, dashed, draw opacity=0.5] (a) at (30:2.4){\small{$\cdot$}};
            \node[blue, line width=0.6pt, dashed, draw opacity=0.5] (a) at (20:2.4){\small{$\cdot$}};
            \node[blue, line width=0.6pt, dashed, draw opacity=0.5] (a) at (40:2.4){\small{$\cdot$}};
            \node[blue, line width=0.6pt, dashed, draw opacity=0.5] (a) at (45:1){\small{$\cdot$}};
            \node[blue, line width=0.6pt, dashed, draw opacity=0.5] (a) at (60:1){\small{$\cdot$}};
            \node[blue, line width=0.6pt, dashed, draw opacity=0.5] (a) at (30:1){\small{$\cdot$}};
            \node[blue, line width=0.6pt, dashed, draw opacity=0.5] (a) at (0:1.7){\small{$\cdot$}};
            \node[blue, line width=0.6pt, dashed, draw opacity=0.5] (a) at (10:1.7){\small{$\cdot$}};
            \node[blue, line width=0.6pt, dashed, draw opacity=0.5] (a) at (350:1.7){\small{$\cdot$}};
            \node[blue, line width=0.6pt, dashed, draw opacity=0.5] (a) at (40:1.7)
            {$d_1$};
            \node[blue, line width=0.6pt, dashed, draw opacity=0.5] (a) at (320:1.7)
            {$d_t$};
            \node[blue, line width=0.6pt, dashed, draw opacity=0.5] (a) at (300:0.7)
            {$d_n'$};
            \node[line width=0.6pt, dashed, draw opacity=0.5] (a) at (-95:2.4) {$y$};
            \node[line width=0.6pt, dashed, draw opacity=0.5] (a) at (-96:1.4)
            {$v$};
            \node[line width=0.6pt, dashed, draw opacity=0.5] (a) at (96:1.4) {$u$};
            \node[line width=0.6pt, dashed, draw opacity=0.5] (a) at (-70:0.3)
        {$v'$};
        \node[line width=0.6pt, dashed, draw opacity=0.5] (a) at (70:0.4) {$u'$};
            \node[line width=0.6pt, dashed, draw opacity=0.5] (a) at (95:2.4) {$x$};
        \end{tikzpicture}
    \end{aligned}\;.
\end{align}
The right $\mathbf{T}^{m,m;n,n}$-action on $\mathbf{T}^{m,s;n,t}$ is represented by stacking diagrams of $\mathbf{T}^{m,m;n,n}$ from inside of diagrams of $\mathbf{T}^{m,s;n,t}$. Obviously these two actions are independent and hence endow the tube space $\mathbf{T}^{m,s;n,t}$ a bimodule structure.

To summarize, we have the following result:

\begin{theorem}
The tube space $\mathbf{T}^{m,m;n,n}$ are algebras for all  $m,n\in \mathbb{N}$. 
The tube space $\mathbf{T}^{m,s;n,t}$ forms a right-$\mathbf{T}^{m,m;n,n}$ and left-$\mathbf{T}^{s,s;t,t}$ bimodule. 
These structures form a Morita context in the sense that
\begin{equation} \label{eq:Morita-equiv}
   \mathbf{T}^{m,s;n,t}\otimes_{\mathbf{T}^{m,m;n,n}} \mathbf{T}^{s,m;t,n} \cong \mathbf{T}^{s,s;t,t},\quad \mathbf{T}^{s,m;t,n} \otimes_{\mathbf{T}^{s,s;t,t}} \mathbf{T}^{m,s;n,t}\cong \mathbf{T}^{m,m;n,n}.
\end{equation}
Thus $\mathbf{T}^{m,m;n,n}$'s  are Morita equivalent for all  $m,n\in \mathbb{N}$.
\end{theorem}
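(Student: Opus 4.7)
The plan is to organize the proof into three parts corresponding to (i) the algebra and bimodule structures, (ii) the two isomorphisms in \eqref{eq:Morita-equiv}, and (iii) the passage from these isomorphisms to Morita equivalence. The algebra and bimodule structures are essentially already in hand: the multiplication in $\mathbf{T}^{m,m;n,n}$ is given by the nested stacking in Eq.~\eqref{eq:tube-space-prod}, which is manifestly associative because both $(XY)Z$ and $X(YZ)$ correspond to the same triple-nested diagram with identical matching of external to internal edges. The unit is the diagram whose inner and outer rings carry the same labels connected by identity vertex morphisms, and unitality follows from the loop/parallel moves. The same recipe — gluing an element of $\mathbf{T}^{s,s;t,t}$ outside and an element of $\mathbf{T}^{m,m;n,n}$ inside a basis diagram of $\mathbf{T}^{m,s;n,t}$ — gives the bimodule action, where the two actions commute because they happen in disjoint annular regions.

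The heart of the argument is the forward map $\Phi\colon \mathbf{T}^{m,s;n,t}\otimes\mathbf{T}^{s,m;t,n}\to \mathbf{T}^{s,s;t,t}$ obtained by stacking $Y\in \mathbf{T}^{s,m;t,n}$ inside $X\in\mathbf{T}^{m,s;n,t}$ (noting that the external red/blue multiplicities $(m,n)$ of $Y$ match the internal multiplicities of $X$) and then applying topological local moves to absorb the middle layer of $m$ red and $n$ blue edges into a single annular string-net configuration in $\mathbf{T}^{s,s;t,t}$. The map $\Phi$ descends to $\otimes_{\mathbf{T}^{m,m;n,n}}$ because shifting an element $A\in\mathbf{T}^{m,m;n,n}$ from the right of $X$ to the left of $Y$ produces two nested tubes that differ only by an isotopy of the middle annular region and thus evaluate to the same configuration after local moves. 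The inverse map $\Psi$ uses the completeness of fusion via parallel moves (equation~\eqref{eq:paraev}) applied at the $s$ red external and $t$ blue internal arcs of a given diagram in $\mathbf{T}^{s,s;t,t}$: one inserts a complete basis of $(m,n)$-labeled intermediate edges, thereby splitting the diagram as a sum of stacked pairs in $\mathbf{T}^{m,s;n,t}\otimes\mathbf{T}^{s,m;t,n}$.

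The main obstacle will be checking that $\Phi$ and $\Psi$ are mutually inverse. Composing $\Phi\comp\Psi$ amounts to inserting a complete basis of intermediate edges and then contracting them back, which by the loop move \eqref{eq:loopev} collapses to the identity; composing $\Psi\comp\Phi$ requires verifying that the representation of a stacked pair as a single diagram and then re-splitting reproduces the original tensor up to the $\mathbf{T}^{m,m;n,n}$-balancing relation, which in turn is an instance of the F-move pentagon. The bookkeeping here is the only genuinely technical step and is essentially an adaptation of the computations already performed in proving associativity and compatibility of $\mu$, $\Delta$, and $S$ in the $\mathbf{T}^{0,0;0,0}$ case; no new identities on the input UMFC data are required beyond those encoded in Eqs.~\eqref{eq:loopev}, \eqref{eq:paraev}, and the F-move relations.

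The second isomorphism in \eqref{eq:Morita-equiv} follows by the symmetric argument, interchanging the roles of $(m,n)$ and $(s,t)$ and of ``inside'' versus ``outside'' in the stacking prescription. Once both isomorphisms in \eqref{eq:Morita-equiv} are established, Morita equivalence of $\mathbf{T}^{m,m;n,n}$ and $\mathbf{T}^{s,s;t,t}$ follows immediately from the standard characterization: the pair of bimodules $(\mathbf{T}^{m,s;n,t},\mathbf{T}^{s,m;t,n})$ together with \eqref{eq:Morita-equiv} is precisely a Morita context with invertible bimodules, so the module categories ${_{\mathbf{T}^{m,m;n,n}}}\mathsf{Mod}$ and ${_{\mathbf{T}^{s,s;t,t}}}\mathsf{Mod}$ are equivalent, completing the proof.
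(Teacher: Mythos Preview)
Your overall strategy matches the paper's: the forward map $\Phi$ is exactly the paper's $\alpha$, defined by stacking and local moves, and the algebra/bimodule discussion is the same. The substantive difference is in the construction of the inverse.

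The paper's inverse $\beta$ is a single-term map, not a summation: given a basis diagram $Z\in\mathbf{T}^{s,s;t,t}$, one sends it to the tensor of two diagrams, the first carrying the ``outer'' data of $Z$ with all $m$ internal red and $n$ internal blue edges set to $\mathbf{1}$ (and the appropriate arc trivialized), the second carrying the ``inner'' data of $Z$ with all external edges set to $\mathbf{1}$. With this choice, $\alpha\comp\beta=\id$ is immediate because the trivial middle layer disappears upon stacking, and $\beta\comp\alpha=\id$ is immediate because in the balanced tensor product over $\mathbf{T}^{m,m;n,n}$ the middle annular data can be shifted freely from one factor to the other. So the ``main obstacle'' you flag simply does not arise.

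Your proposed $\Psi$ via ``completeness of fusion'' and parallel moves is more roundabout and, as written, not quite well-defined: the parallel move \eqref{eq:paraev} merges two parallel strings into a sum over fusion channels, whereas splitting one annulus into two nested annuli is the insertion $c\cong c\otimes\mathbf{1}$, which involves no sum. If you unwind your construction carefully you will find that the only nonzero terms are those with trivial intermediate labels, and you recover the paper's $\beta$; but it is cleaner to write $\beta$ directly and bypass the resolution-of-identity detour.
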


\begin{proof}
    The first assertion has been shown, so it suffices to construct the isomorphisms. Define $\alpha:\mathbf{T}^{m,s;n,t}\otimes_{\mathbf{T}^{m,m;n,n}} \mathbf{T}^{s,m;t,n} \to \mathbf{T}^{s,s;t,t}$ by stacking the diagrams as in the definition of $\mu$ in Eq.~\eqref{eq:tube-space-prod}. This is a $\mathbf{T}^{s,s;t,t}|\mathbf{T}^{s,s;t,t}$-bimodule morphism. Define $\beta:\mathbf{T}^{s,s;t,t}\to\mathbf{T}^{m,s;n,t}\otimes_{\mathbf{T}^{m,m;n,n}} \mathbf{T}^{s,m;t,n}$ as follows: 
    \begin{align*}
        \beta&\left(\begin{aligned}
        \begin{tikzpicture}
             \draw[line width=.6pt,black] (0,0.5)--(0,1.5);
             \draw[line width=.6pt,black] (0,-0.5)--(0,-1.5);
             \draw[red,line width=.6pt] (120:0.8) -- (120:1.1);
             \draw[red,line width=.6pt] (150:0.8) -- (150:0.5);
            \draw[red,line width=.6pt] (180:0.8) -- (180:1.1);
            \draw[red,line width=.6pt] (210:0.8) -- (210:0.5);
          \draw[red,line width=.6pt] (240:0.8) -- (240:1.1);
             \draw[red] (0,0.8) arc[start angle=90, end angle=270, radius=0.8];
             \draw[blue] (0,1.3) arc[start angle=90, end angle=-90, radius=1.3];
            \node[red, line width=0.6pt, dashed, draw opacity=0.5] (a) at (225:0.6)
            {\small{$\cdot$}};
            \node[red, line width=0.6pt, dashed, draw opacity=0.5] (a) at (240:0.6){\small{$\cdot$}};
          \node[red, line width=0.6pt, dashed, draw opacity=0.5] (a) at (255:0.6){\small{$\cdot$}};
             \node[red, line width=0.6pt, dashed, draw opacity=0.5] (a) at (195:1)
            {\small{$\cdot$}};
            \node[red, line width=0.6pt, dashed, draw opacity=0.5] (a) at (210:1){\small{$\cdot$}};
          \node[red, line width=0.6pt, dashed, draw opacity=0.5] (a) at (225:1){\small{$\cdot$}};
            \draw[blue,line width=.6pt] (300:1.3) -- (300:1.6);
             \draw[blue,line width=.6pt] (330:1) -- (330:1.3);
            \draw[blue,line width=.6pt] (360:1.3) -- (360:1.6);
            \draw[blue,line width=.6pt] (30:1) -- (30:1.3);
             \draw[blue,line width=.6pt] (60:1.6) -- (60:1.3);
          \node[blue, line width=0.6pt, dashed, draw opacity=0.5] (a) at (75:1)
            {\small{$\cdot$}};
            \node[blue, line width=0.6pt, dashed, draw opacity=0.5] (a) at (60:1){\small{$\cdot$}};
          \node[blue, line width=0.6pt, dashed, draw opacity=0.5] (a) at (45:1){\small{$\cdot$}};
         \node[blue, line width=0.6pt, dashed, draw opacity=0.5] (a) at (45:1.45)
            {\small{$\cdot$}};
            \node[blue, line width=0.6pt, dashed, draw opacity=0.5] (a) at (30:1.45){\small{$\cdot$}};
          \node[blue, line width=0.6pt, dashed, draw opacity=0.5] (a) at (15:1.45){\small{$\cdot$}};
          \node[blue, line width=0.6pt, dashed, draw opacity=0.5] (a) at (60:1.85)
            {$b_1$};
            \node[blue, line width=0.6pt, dashed, draw opacity=0.5] (a) at (300:1.9)
            {$b_t$};
            \node[blue, line width=0.6pt, dashed, draw opacity=0.5] (a) at (0:2.1)
            {$b_{t-1}$};
          \node[blue, line width=0.6pt, dashed, draw opacity=0.5] (a) at (330:0.65)
            {$d_t$};
              \node[red, line width=0.6pt, dashed, draw opacity=0.5] (a) at (120:1.4)
            {$a_1$};
            \node[red, line width=0.6pt, dashed, draw opacity=0.5] (a) at (180:1.4)
            {$a_2$};
            \node[red, line width=0.6pt, dashed, draw opacity=0.5] (a) at (240:1.3)
            {$a_s$};
          \node[red, line width=0.6pt, dashed, draw opacity=0.5] (a) at (140:0.25)
            {$c_1$};
        \end{tikzpicture}
    \end{aligned}\right)  = \begin{aligned}
        \begin{tikzpicture}
             \draw[line width=.6pt,black] (0,0.5)--(0,1.5);
             \draw[line width=.6pt,black] (0,-0.5)--(0,-1.5);
             \draw[red,line width=.6pt] (120:0.8) -- (120:1.1);
             \draw[red,line width=.6pt] (150:0.8) -- (150:0.5);
            \draw[red,line width=.6pt] (180:0.8) -- (180:1.1);
            \draw[red,line width=.6pt] (210:0.8) -- (210:0.5);
          \draw[red,line width=.6pt] (240:0.8) -- (240:1.1);
             \draw[red,dotted] (0,0.8) arc[start angle=90, end angle=270, radius=0.8];
             \draw[blue] (0,1.3) arc[start angle=90, end angle=-90, radius=1.3];
            \node[red, line width=0.6pt, dashed, draw opacity=0.5] (a) at (225:0.6)
            {\small{$\cdot$}};
            \node[red, line width=0.6pt, dashed, draw opacity=0.5] (a) at (240:0.6){\small{$\cdot$}};
          \node[red, line width=0.6pt, dashed, draw opacity=0.5] (a) at (255:0.6){\small{$\cdot$}};
             \node[red, line width=0.6pt, dashed, draw opacity=0.5] (a) at (195:1)
            {\small{$\cdot$}};
            \node[red, line width=0.6pt, dashed, draw opacity=0.5] (a) at (210:1){\small{$\cdot$}};
          \node[red, line width=0.6pt, dashed, draw opacity=0.5] (a) at (225:1){\small{$\cdot$}};
            \draw[blue,line width=.6pt] (300:1.3) -- (300:1.6);
             \draw[blue,line width=.6pt] (330:1) -- (330:1.3);
            \draw[blue,line width=.6pt] (360:1.3) -- (360:1.6);
            \draw[blue,line width=.6pt] (30:1) -- (30:1.3);
             \draw[blue,line width=.6pt] (60:1.6) -- (60:1.3);
          \node[blue, line width=0.6pt, dashed, draw opacity=0.5] (a) at (75:1)
            {\small{$\cdot$}};
            \node[blue, line width=0.6pt, dashed, draw opacity=0.5] (a) at (60:1){\small{$\cdot$}};
          \node[blue, line width=0.6pt, dashed, draw opacity=0.5] (a) at (45:1){\small{$\cdot$}};
         \node[blue, line width=0.6pt, dashed, draw opacity=0.5] (a) at (45:1.45)
            {\small{$\cdot$}};
            \node[blue, line width=0.6pt, dashed, draw opacity=0.5] (a) at (30:1.45){\small{$\cdot$}};
          \node[blue, line width=0.6pt, dashed, draw opacity=0.5] (a) at (15:1.45){\small{$\cdot$}};
          \node[blue, line width=0.6pt, dashed, draw opacity=0.5] (a) at (60:1.85)
            {$b_1$};
            \node[blue, line width=0.6pt, dashed, draw opacity=0.5] (a) at (300:1.9)
            {$b_t$};
            \node[blue, line width=0.6pt, dashed, draw opacity=0.5] (a) at (0:2.1)
            {$b_{t-1}$};
          \node[blue, line width=0.6pt, dashed, draw opacity=0.5] (a) at (330:0.65)
            {$\one$};
              \node[red, line width=0.6pt, dashed, draw opacity=0.5] (a) at (120:1.4)
            {$a_1$};
            \node[red, line width=0.6pt, dashed, draw opacity=0.5] (a) at (180:1.4)
            {$a_2$};
            \node[red, line width=0.6pt, dashed, draw opacity=0.5] (a) at (240:1.3)
            {$a_s$};
          \node[red, line width=0.6pt, dashed, draw opacity=0.5] (a) at (140:0.25)
            {$\one$};
        \end{tikzpicture}
    \end{aligned}
    \otimes \begin{aligned}
        \begin{tikzpicture}
             \draw[line width=.6pt,black] (0,0.5)--(0,1.5);
             \draw[line width=.6pt,black] (0,-0.5)--(0,-1.5);
             \draw[red,line width=.6pt] (120:0.8) -- (120:1.1);
             \draw[red,line width=.6pt] (150:0.8) -- (150:0.5);
            \draw[red,line width=.6pt] (180:0.8) -- (180:1.1);
            \draw[red,line width=.6pt] (210:0.8) -- (210:0.5);
          \draw[red,line width=.6pt] (240:0.8) -- (240:1.1);
             \draw[red] (0,0.8) arc[start angle=90, end angle=270, radius=0.8];
             \draw[blue,dotted] (0,1.3) arc[start angle=90, end angle=-90, radius=1.3];
            \node[red, line width=0.6pt, dashed, draw opacity=0.5] (a) at (225:0.6)
            {\small{$\cdot$}};
            \node[red, line width=0.6pt, dashed, draw opacity=0.5] (a) at (240:0.6){\small{$\cdot$}};
          \node[red, line width=0.6pt, dashed, draw opacity=0.5] (a) at (255:0.6){\small{$\cdot$}};
             \node[red, line width=0.6pt, dashed, draw opacity=0.5] (a) at (195:1)
            {\small{$\cdot$}};
            \node[red, line width=0.6pt, dashed, draw opacity=0.5] (a) at (210:1){\small{$\cdot$}};
          \node[red, line width=0.6pt, dashed, draw opacity=0.5] (a) at (225:1){\small{$\cdot$}};
            \draw[blue,line width=.6pt] (300:1.3) -- (300:1.6);
             \draw[blue,line width=.6pt] (330:1) -- (330:1.3);
            \draw[blue,line width=.6pt] (360:1.3) -- (360:1.6);
            \draw[blue,line width=.6pt] (30:1) -- (30:1.3);
             \draw[blue,line width=.6pt] (60:1.6) -- (60:1.3);
          \node[blue, line width=0.6pt, dashed, draw opacity=0.5] (a) at (75:1)
            {\small{$\cdot$}};
            \node[blue, line width=0.6pt, dashed, draw opacity=0.5] (a) at (60:1){\small{$\cdot$}};
          \node[blue, line width=0.6pt, dashed, draw opacity=0.5] (a) at (45:1){\small{$\cdot$}};
         \node[blue, line width=0.6pt, dashed, draw opacity=0.5] (a) at (45:1.45)
            {\small{$\cdot$}};
            \node[blue, line width=0.6pt, dashed, draw opacity=0.5] (a) at (30:1.45){\small{$\cdot$}};
          \node[blue, line width=0.6pt, dashed, draw opacity=0.5] (a) at (15:1.45){\small{$\cdot$}};
          \node[blue, line width=0.6pt, dashed, draw opacity=0.5] (a) at (60:1.85)
            {$\one$};
            \node[blue, line width=0.6pt, dashed, draw opacity=0.5] (a) at (300:1.9)
            {$\one$};
            \node[blue, line width=0.6pt, dashed, draw opacity=0.5] (a) at (0:2.1)
            {$\one$};
          \node[blue, line width=0.6pt, dashed, draw opacity=0.5] (a) at (330:0.65)
            {$d_t$};
              \node[red, line width=0.6pt, dashed, draw opacity=0.5] (a) at (120:1.4)
            {$\one$};
            \node[red, line width=0.6pt, dashed, draw opacity=0.5] (a) at (180:1.4)
            {$\one$};
            \node[red, line width=0.6pt, dashed, draw opacity=0.5] (a) at (240:1.3)
            {$\one$};
          \node[red, line width=0.6pt, dashed, draw opacity=0.5] (a) at (140:0.25)
            {$c_1$};
        \end{tikzpicture}
    \end{aligned}. 
    \end{align*}
    Clearly $\beta$ is also a $\mathbf{T}^{s,s;t,t}|\mathbf{T}^{s,s;t,t}$-bimodule morphism. It is immediate that $\alpha\comp\beta=\id$ and $\beta\comp\alpha=\id$ hold, thus the first isomorphism in Eq.~\eqref{eq:Morita-equiv} is proven. The second isomorphism can be demonstrated similarly. 
\end{proof}

\section{Gapped boundary theory}
\label{sec:bdtheory}

Gapped boundary conditions for the Levin-Wen string-net model can be established through various equivalent approaches \cite{Kitaev2012boundary,hu2018boundary,jia2023boundary}. Here, we will outline the application of the Kitaev-Kong construction to the multifusion string-net model. In this context, the gapped boundaries are characterized by module categories over the input UMFC $\ED=\Rep(W)$ for some weak Hopf algebra $W$, which means that the bulk gauge weak Hopf symmetry is $W$.

It is crucial to highlight that, in contrast to the original Levin-Wen string-net model, even though the bulk phase is a UMTC, the ensuing boundary phase deviates from being a UFC. Instead, it assumes the structure of a UMFC, thereby introducing a notable distinction from the characteristics observed in the original model.

\begin{definition}
A gapped boundary phase is called stable if the boundary topological phase is a UFC. This inspires us to give the following two definitions:
\begin{enumerate}
    \item For a UMFC $\ED$, a $\ED$-module category $\EM$ is called stable if the functor category $\Fun_{\ED}(\EM,\EM)$ is a UFC.
    \item For a weak Hopf algebra $W$, a $W$-comodule algebra $\mathfrak{A}$ is called stable if the functor category $\Fun_{\Rep(W)}({_{\mathfrak{A}}}\Mod,{_{\mathfrak{A}}}\Mod)$ is a UFC.
\end{enumerate}

\end{definition}

The lattice Hamiltonian of the gapped boundary $_{\ED}\EM$ can be formulated as follows. It is important to note that the boundary is oriented such that the bulk resides on the left-hand side of the boundary. For a boundary vertex $v_b$, it is connected to one bulk edge $e$ and two boundary edges $e_1$ and $e_0$: $e_1$ ends at $v_b$, while $e_0$ originates from $v_b$.
The bulk edge $e$ is labeled with simple objects $j$ in $\ED$, but boundary edges $e_1$ and $e_0$ are labeled with simple objects $x_1$ and $x_0$ in $\EM$.
We define $Q_{v_b}$ as the projection operator such that, when applied to a boundary string-net configuration, $Q_{v_b}=1$ if $\Hom_{\EM}(j\otimes x_1,x_0)\neq 0$, and $Q_{v_b}=0$ otherwise.
The boundary face operator $B^k_{f_b}$ with $k\in \Irr (\ED)$ is defined as inserting a loop labeled by $k$ in the boundary face $f_b$, and $ B_{f_b}=\sum_k\frac{d_k}{\sum_l d_l^2} B^k_{f_b}$.
The boundary Hamiltonian is of the following form
\begin{equation}
    H[\partial \Sigma, {_{\ED}}\EM]=-\sum_{v_b} Q_{v_b}-\sum_{f_b}B_{f_b}.
\end{equation}
Notice that despite the similarity in expression to the bulk vertex and face operators, the boundary vertex and face operators are inherently distinct. When computing the matrix form of $B_{f_b}$, we require boundary topological local moves, which rely on the $\ED$-module structure of $\EM$. And in the definition of $Q_{v_b}$ we need the fusion of bulk label and boundary label, which is also based on the $\ED$-module structure of $\EM$.
An intriguing observation is that any boundary model can be transformed into a UMFC string-net model. However, it is worth noting that the UMFC resulting from this transformation is larger than the original one, as will be discussed in Sec.~\ref{sec:DefectSN}.

\subsection{Boundary gauge symmetry}

As previously discussed, the bulk gauge symmetry of the multifusion string-net model $\mathsf{SN}_{\ED}$ is described by a weak Hopf algebra $W$ such that $\Rep(W)\simeq\ED$. An important question that arises is how to model the boundary gauge symmetry corresponding to the left $\ED$-module category $\EM$.
Here, we propose that the boundary gauge symmetry can be equivalently characterized by:
\begin{enumerate}
\item $W$-comodule algebra $\mathfrak{A}$. By definition, a left $W$-comodule algebra is a left $W$-comodule with left coaction $\beta_{\mathfrak{A}}:\mathfrak{A}\to W\otimes \mathfrak{A}$ such that $\beta_{\mathfrak{A}}(xy)=\beta_{\mathfrak{A}}(x)\beta_{\mathfrak{A}}(y)$ and $\beta_{\mathfrak{A}}(1_{\mathfrak{A}})=1_W\otimes 1_{\mathfrak{A}}$. A right $W$-comodule algebra can be defined similarly.
\item $W$-module algebra $\mathfrak{M}$. By definition, a left $W$-module algebra is a left $W$-module with left action $h\triangleright x$ such that $h\triangleright (xy)=\sum_{(h)}(h^{(1)}\triangleright x)(h^{(2)}\triangleright y)$ and $h\triangleright 1_\mathfrak{M}=\varepsilon(h)1_\mathfrak{M}$. A right $W$-module algebra can be defined similarly.
\end{enumerate}
The equivalence between these two approaches is carefully discussed in our previous work \cite{jia2023boundary}.
Hereinafter, we will take the $W$-comodule algebra as a gapped boundary gauge symmetry.
For a given boundary gauge symmetry $\mathfrak{A}$, the module category $\EM={_{\mathfrak{A}}}\Mod$ is a module category over $\ED=\Rep(W)$.
The boundary excitations are characterized by $\ED$-module  functor category $\Fun_{\ED}(\EM,\EM)$; via Eilenberg-Watts theorem, the boundary excitations can be equivalently characterized by the category $_{\mathfrak{A}}\Mod_{\mathfrak{A}}^W$ of $W$-equivariant $\mathfrak{A}|\mathfrak{A}$-bimodules.

\subsection{Boundary tube algebra}

The boundary charge symmetry is given by the boundary tube algebra.
For the boundary, as a generalization of the algebra constructed in Ref.~\cite{Kitaev2012boundary}, for a left module category $\EM$ over a UMFC $\ED$,  we have the following tube algebra:

\begin{equation}
 \begin{aligned}
    \tilde{\mathbf{L}}({_{\ED}\EM})
 \end{aligned}  =\operatorname{span}\left\{  
   \begin{aligned}
        \begin{tikzpicture}
             \draw[line width=.6pt,black] (0,0.5)--(0,1.3);
             \draw[line width=.6pt,black] (0,-0.5)--(0,-1.3);
             \draw[red] (0,0.8) arc[start angle=90, end angle=270, radius=0.8];
            \node[ line width=0.6pt, dashed, draw opacity=0.5] (a) at (-1,0){$a$};
            \node[ line width=0.6pt, dashed, draw opacity=0.5] (a) at (-0.2,-1.3){$d$};
            \node[ line width=0.6pt, dashed, draw opacity=0.5] (a) at (0.2,-0.8){$\nu$};
            \node[ line width=0.6pt, dashed, draw opacity=0.5] (a) at (0,-0.3){$e$};
            \node[ line width=0.6pt, dashed, draw opacity=0.5] (a) at (0,0.3){$f$};
            \node[ line width=0.6pt, dashed, draw opacity=0.5] (a) at (-0.2,1.3){$g$};
            \node[ line width=0.6pt, dashed, draw opacity=0.5] (a) at (0.2,0.8){$\zeta$};
        \end{tikzpicture}
    \end{aligned}
   :\quad  a\in \Irr(\ED),f,g,e,d\in \Irr(\EM), \nu,\zeta\in \Hom_{\EM} 
    \right\}.
    \label{eq:LeftBdTubebasis}
\end{equation}
Building upon the discourse presented in Sec.~\ref{sec:tube}, we can endow $\tilde{\mathbf{L}}({_{\ED}\EM})$ with a $C^*$ weak Hopf algebra structure.
Analogously, for a right $\ED$-module $\EM_{\ED}$, we establish the $C^*$ weak Hopf tube algebra:
\begin{equation}
 \begin{aligned}
    \tilde{\mathbf{R}}(\EM_{\ED})
 \end{aligned}  =\operatorname{span}\left\{  
   \begin{aligned}
        \begin{tikzpicture}
             \draw[line width=.6pt,black] (0,0.5)--(0,1.3);
             \draw[line width=.6pt,black] (0,-0.5)--(0,-1.3);
            \draw[blue] (0,0.9) arc[start angle=90, end angle=-90, radius=0.9];
            \node[ line width=0.6pt, dashed, draw opacity=0.5] (a) at (0.7,0){$b$};
            \node[ line width=0.6pt, dashed, draw opacity=0.5] (a) at (-0.2,-1.3){$d$};
            \node[ line width=0.6pt, dashed, draw opacity=0.5] (a) at (0.2,-1.2){$\nu$};
            \node[ line width=0.6pt, dashed, draw opacity=0.5] (a) at (0,-0.3){$e$};
            \node[ line width=0.6pt, dashed, draw opacity=0.5] (a) at (0,0.3){$f$};
            \node[ line width=0.6pt, dashed, draw opacity=0.5] (a) at (-0.2,1.3){$g$};
            \node[ line width=0.6pt, dashed, draw opacity=0.5] (a) at (0.2,1.2){$\zeta$};
        \end{tikzpicture}
    \end{aligned}
   :\quad  b\in \Irr(\ED),f,g,e,d\in \Irr(\EM), \nu,\zeta\in \Hom_{\EM} 
    \right\}.
    \label{eq:RightBdTubebasis}
\end{equation}

\begin{remark}
   We will also use the notations $\mathbf{Tube}({_{\ED}}\EM)$ or $\mathbf{Tube}(\EM{_{\ED}})$ for left and right boundary tube algebra whenever there is no ambiguity.
\end{remark}

\begin{proposition}
   Both the boundary tube algebras  $ \tilde{\mathbf{L}}({_{\ED}\EM})$ and  $ \tilde{\mathbf{R}}(\EM_{\ED})$ are $C^*$ weak Hopf algebras. More precisely, the $C^*$ weak Hopf algebra structure of $ \tilde{\mathbf{L}}({_{\ED}\EM})$ is given by 
   \begin{gather}
       1=\sum_{a,b} \;\begin{aligned}
        \begin{tikzpicture}
             \draw[line width=.6pt,black] (0,0.5)--(0,1.3);
             \draw[line width=.6pt,black] (0,-0.5)--(0,-1.3);
             \draw[red, dotted] (0,0.8) arc[start angle=90, end angle=270, radius=0.8];
             % \draw[blue,dotted] (0,1.3) arc[start angle=90, end angle=-90, radius=1.3];
            \node[ line width=0.6pt, dashed, draw opacity=0.5] (a) at (-0.2,-1){$a$};
            \node[ line width=0.6pt, dashed, draw opacity=0.5] (a) at (-0.2,1){$b$};
        \end{tikzpicture}
    \end{aligned}\;, \quad \quad 
    \mu\left( \begin{aligned}
        \begin{tikzpicture}
             \draw[line width=.6pt,black] (0,0.5)--(0,1.3);
             \draw[line width=.6pt,black] (0,-0.5)--(0,-1.3);
             \draw[red] (0,0.8) arc[start angle=90, end angle=270, radius=0.8];
            \node[ line width=0.6pt, dashed, draw opacity=0.5] (a) at (-1,0){$a$};
            % \node[ line width=0.6pt, dashed, draw opacity=0.5] (a) at (1.5,0){$b$};
            \node[ line width=0.6pt, dashed, draw opacity=0.5] (a) at (-0.2,-1){$d$};
            \node[ line width=0.6pt, dashed, draw opacity=0.5] (a) at (0.2,-0.8){$\nu$};
            \node[ line width=0.6pt, dashed, draw opacity=0.5] (a) at (0,-0.3){$e$};
            \node[ line width=0.6pt, dashed, draw opacity=0.5] (a) at (0,0.3){$f$};
            \node[ line width=0.6pt, dashed, draw opacity=0.5] (a) at (-0.2,1){$g$};
            \node[ line width=0.6pt, dashed, draw opacity=0.5] (a) at (0.2,0.8){$\zeta$};
        \end{tikzpicture}
    \end{aligned}\otimes \begin{aligned}
        \begin{tikzpicture}
             \draw[line width=.6pt,black] (0,0.5)--(0,1.3);
             \draw[line width=.6pt,black] (0,-0.5)--(0,-1.3);
             \draw[red] (0,0.8) arc[start angle=90, end angle=270, radius=0.8];
            \node[ line width=0.6pt, dashed, draw opacity=0.5] (a) at (-1,0){$a'$};
            \node[ line width=0.6pt, dashed, draw opacity=0.5] (a) at (-0.2,-1){$d'$};
            \node[ line width=0.6pt, dashed, draw opacity=0.5] (a) at (0.2,-0.8){$\nu'$};
            \node[ line width=0.6pt, dashed, draw opacity=0.5] (a) at (0,-0.3){$e'$};
            \node[ line width=0.6pt, dashed, draw opacity=0.5] (a) at (0,0.3){$f'$};
            \node[ line width=0.6pt, dashed, draw opacity=0.5] (a) at (-0.2,1){$g'$};
            \node[ line width=0.6pt, dashed, draw opacity=0.5] (a) at (0.2,0.8){$\zeta'$};
        \end{tikzpicture}
    \end{aligned}\right)=\delta_{f,g'}\delta_{e,d'}\;\begin{aligned}
        \begin{tikzpicture}
             \draw[line width=.6pt,black] (0,0.5)--(0,1.9);
             \draw[line width=.6pt,black] (0,-0.5)--(0,-1.9);
             \draw[red] (0,0.8) arc[start angle=90, end angle=270, radius=0.8];
            \node[ line width=0.6pt, dashed, draw opacity=0.5] (a) at (-0.2,1.05){$f$};
             \node[ line width=0.6pt, dashed, draw opacity=0.5] (a) at (-0.2,-1.05){$e$};
            \node[ line width=0.6pt, dashed, draw opacity=0.5] (a) at (-1,0){$a'$};
            \node[ line width=0.6pt, dashed, draw opacity=0.5] (a) at (0.3,-0.8){$\nu'$};
            \node[ line width=0.6pt, dashed, draw opacity=0.5] (a) at (0,-0.3){$e'$};
            \node[ line width=0.6pt, dashed, draw opacity=0.5] (a) at (0,0.3){$f'$};
            \node[ line width=0.6pt, dashed, draw opacity=0.5] (a) at (0.3,0.8){$\zeta'$};
            \draw[red] (0,1.4) arc[start angle=90, end angle=270, radius=1.4];
            \node[ line width=0.6pt, dashed, draw opacity=0.5] (a) at (0.3,1.4){$\zeta$};
          \node[ line width=0.6pt, dashed, draw opacity=0.5] (a) at (0.25,-1.5){$\nu$};
            \node[ line width=0.6pt, dashed, draw opacity=0.5] (a) at (-1.5,0.5){$a$};
         \node[ line width=0.6pt, dashed, draw opacity=0.5] (a) at (-0.3,1.7){$g$};
          \node[ line width=0.6pt, dashed, draw opacity=0.5] (a) at (-0.2,-1.7){$d$};
        \end{tikzpicture}
    \end{aligned}\;, \\
    \varepsilon\left(\begin{aligned}\begin{tikzpicture}
             \draw[line width=.6pt,black] (0,0.5)--(0,1.3);
             \draw[line width=.6pt,black] (0,-0.5)--(0,-1.3);
             \draw[red] (0,0.8) arc[start angle=90, end angle=270, radius=0.8];
            \node[ line width=0.6pt, dashed, draw opacity=0.5] (a) at (-1,0){$a$};
            \node[ line width=0.6pt, dashed, draw opacity=0.5] (a) at (-0.2,-1){$d$};
            \node[ line width=0.6pt, dashed, draw opacity=0.5] (a) at (0.2,-0.8){$\nu$};
            \node[ line width=0.6pt, dashed, draw opacity=0.5] (a) at (0,-0.3){$e$};
            \node[ line width=0.6pt, dashed, draw opacity=0.5] (a) at (0,0.3){$f$};
            \node[ line width=0.6pt, dashed, draw opacity=0.5] (a) at (-0.2,1){$g$};
            \node[ line width=0.6pt, dashed, draw opacity=0.5] (a) at (0.2,0.8){$\zeta$};
        \end{tikzpicture}
    \end{aligned}\right)
    =\frac{\delta_{e,f}\delta_{d,g}}{d_g}
    \begin{aligned}\begin{tikzpicture}
             \draw[line width=.6pt,black] (0,-1.3)--(0,1.3);
             \draw[red] (0,0.8) arc[start angle=90, end angle=270, radius=0.8];
             \draw[black] (0,1.3) arc[start angle=90, end angle=-90, radius=1.3];
            \node[ line width=0.6pt, dashed, draw opacity=0.5] (a) at (1.1,0){$g$};
            \node[ line width=0.6pt, dashed, draw opacity=0.5] (a) at (-0.6,0){$a$};
            \node[ line width=0.6pt, dashed, draw opacity=0.5] (a) at (0.2,-0.8){$\nu$};
            \node[ line width=0.6pt, dashed, draw opacity=0.5] (a) at (0.2,0){$f$};
            \node[ line width=0.6pt, dashed, draw opacity=0.5] (a) at (0.2,0.8){$\zeta$};
        \end{tikzpicture}
    \end{aligned}=\delta_{e,f}\delta_{d,g}\delta_{\nu,\zeta}\sqrt{\frac{d_a d_f}{d_g}}\;, \\ 
    \Delta\left(\begin{aligned}\begin{tikzpicture}
             \draw[line width=.6pt,black] (0,0.5)--(0,1.3);
             \draw[line width=.6pt,black] (0,-0.5)--(0,-1.3);
             \draw[red] (0,0.8) arc[start angle=90, end angle=270, radius=0.8];
            \node[ line width=0.6pt, dashed, draw opacity=0.5] (a) at (-1,0){$a$};
            \node[ line width=0.6pt, dashed, draw opacity=0.5] (a) at (-0.2,-1){$d$};
            \node[ line width=0.6pt, dashed, draw opacity=0.5] (a) at (0.2,-0.8){$\nu$};
            \node[ line width=0.6pt, dashed, draw opacity=0.5] (a) at (0,-0.3){$e$};
            \node[ line width=0.6pt, dashed, draw opacity=0.5] (a) at (0,0.3){$f$};
            \node[ line width=0.6pt, dashed, draw opacity=0.5] (a) at (-0.2,1){$g$};
            \node[ line width=0.6pt, dashed, draw opacity=0.5] (a) at (0.2,0.8){$\zeta$};
        \end{tikzpicture}
    \end{aligned}\right)
    =\sum_{i,j,\sigma} \sqrt{ \frac{d_j}{d_ad_i}}
    \begin{aligned}\begin{tikzpicture}
             \draw[line width=.6pt,black] (0,0.5)--(0,1.3);
             \draw[line width=.6pt,black] (0,-0.5)--(0,-1.3);
             \draw[red] (0,0.8) arc[start angle=90, end angle=270, radius=0.8];
            \node[ line width=0.6pt, dashed, draw opacity=0.5] (a) at (-1,0){$a$};
            \node[ line width=0.6pt, dashed, draw opacity=0.5] (a) at (-0.2,-1){$j$};
            \node[ line width=0.6pt, dashed, draw opacity=0.5] (a) at (0.2,-0.8){$\sigma$};
            \node[ line width=0.6pt, dashed, draw opacity=0.5] (a) at (0,-0.3){$i$};
            \node[ line width=0.6pt, dashed, draw opacity=0.5] (a) at (0,0.3){$f$};
            \node[ line width=0.6pt, dashed, draw opacity=0.5] (a) at (-0.2,1){$g$};
            \node[ line width=0.6pt, dashed, draw opacity=0.5] (a) at (0.2,0.8){$\zeta$};
        \end{tikzpicture}
    \end{aligned} 
    \otimes 
    \begin{aligned}\begin{tikzpicture}
             \draw[line width=.6pt,black] (0,0.5)--(0,1.3);
             \draw[line width=.6pt,black] (0,-0.5)--(0,-1.3);
             \draw[red] (0,0.8) arc[start angle=90, end angle=270, radius=0.8];
            \node[ line width=0.6pt, dashed, draw opacity=0.5] (a) at (-1,0){$a$};
            \node[ line width=0.6pt, dashed, draw opacity=0.5] (a) at (-0.2,-1){$d$};
            \node[ line width=0.6pt, dashed, draw opacity=0.5] (a) at (0.2,-0.8){$\nu$};
            \node[ line width=0.6pt, dashed, draw opacity=0.5] (a) at (0,-0.3){$e$};
            \node[ line width=0.6pt, dashed, draw opacity=0.5] (a) at (0,0.3){$i$};
            \node[ line width=0.6pt, dashed, draw opacity=0.5] (a) at (-0.2,1){$j$};
            \node[ line width=0.6pt, dashed, draw opacity=0.5] (a) at (0.2,0.8){$\sigma$};
        \end{tikzpicture}
\end{aligned}\;, \\
S\left( 
          \begin{aligned}\begin{tikzpicture}
             \draw[line width=.6pt,black] (0,0.5)--(0,1.3);
             \draw[line width=.6pt,black] (0,-0.5)--(0,-1.3);
             \draw[red] (0,0.8) arc[start angle=90, end angle=270, radius=0.8];
            \node[ line width=0.6pt, dashed, draw opacity=0.5] (a) at (-1,0){$a$};
            \node[ line width=0.6pt, dashed, draw opacity=0.5] (a) at (-0.2,-1){$d$};
            \node[ line width=0.6pt, dashed, draw opacity=0.5] (a) at (0.2,-0.8){$\nu$};
            \node[ line width=0.6pt, dashed, draw opacity=0.5] (a) at (0,-0.3){$e$};
            \node[ line width=0.6pt, dashed, draw opacity=0.5] (a) at (0,0.3){$f$};
            \node[ line width=0.6pt, dashed, draw opacity=0.5] (a) at (-0.2,1){$g$};
            \node[ line width=0.6pt, dashed, draw opacity=0.5] (a) at (0.2,0.8){$\zeta$};
        \end{tikzpicture}
    \end{aligned}
          \right) =\frac{d_f}{d_g}\;\begin{aligned}\begin{tikzpicture}
             \draw[line width=.6pt,black] (0,0.5)--(0,1.3);
             \draw[line width=.6pt,black] (0,-0.5)--(0,-1.3);
             \draw[red] (0,0.8) arc[start angle=90, end angle=270, radius=0.8];
            \node[ line width=0.6pt, dashed, draw opacity=0.5] (a) at (-1,0){$\bar{a}$};
            \node[ line width=0.6pt, dashed, draw opacity=0.5] (a) at (-0.2,-1){$f$};
            \node[ line width=0.6pt, dashed, draw opacity=0.5] (a) at (0.2,-0.8){$\zeta$};
            \node[ line width=0.6pt, dashed, draw opacity=0.5] (a) at (0,-0.3){$g$};
            \node[ line width=0.6pt, dashed, draw opacity=0.5] (a) at (0,0.3){$d$};
            \node[ line width=0.6pt, dashed, draw opacity=0.5] (a) at (-0.2,1){$e$};
            \node[ line width=0.6pt, dashed, draw opacity=0.5] (a) at (0.2,0.8){$\nu$};
        \end{tikzpicture}
    \end{aligned}\;, \quad  \quad 
    \left( \begin{aligned}\begin{tikzpicture}
             \draw[line width=.6pt,black] (0,0.5)--(0,1.3);
             \draw[line width=.6pt,black] (0,-0.5)--(0,-1.3);
             \draw[red] (0,0.8) arc[start angle=90, end angle=270, radius=0.8];
            \node[ line width=0.6pt, dashed, draw opacity=0.5] (a) at (-1,0){$a$};
            \node[ line width=0.6pt, dashed, draw opacity=0.5] (a) at (-0.2,-1){$d$};
            \node[ line width=0.6pt, dashed, draw opacity=0.5] (a) at (0.2,-0.8){$\nu$};
            \node[ line width=0.6pt, dashed, draw opacity=0.5] (a) at (0,-0.3){$e$};
            \node[ line width=0.6pt, dashed, draw opacity=0.5] (a) at (0,0.3){$f$};
            \node[ line width=0.6pt, dashed, draw opacity=0.5] (a) at (-0.2,1){$g$};
            \node[ line width=0.6pt, dashed, draw opacity=0.5] (a) at (0.2,0.8){$\zeta$};
        \end{tikzpicture}
    \end{aligned} \right)^*= \frac{d_e}{d_d} \; \begin{aligned}\begin{tikzpicture}
             \draw[line width=.6pt,black] (0,0.5)--(0,1.3);
             \draw[line width=.6pt,black] (0,-0.5)--(0,-1.3);
             \draw[red] (0,0.8) arc[start angle=90, end angle=270, radius=0.8];
            \node[ line width=0.6pt, dashed, draw opacity=0.5] (a) at (-1,0){$\bar{a}$};
            \node[ line width=0.6pt, dashed, draw opacity=0.5] (a) at (-0.2,-1){$e$};
            \node[ line width=0.6pt, dashed, draw opacity=0.5] (a) at (0.2,-0.8){$\nu$};
            \node[ line width=0.6pt, dashed, draw opacity=0.5] (a) at (0,-0.3){$d$};
            \node[ line width=0.6pt, dashed, draw opacity=0.5] (a) at (0,0.3){$g$};
            \node[ line width=0.6pt, dashed, draw opacity=0.5] (a) at (-0.2,1){$f$};
            \node[ line width=0.6pt, dashed, draw opacity=0.5] (a) at (0.2,0.8){$\zeta$};
        \end{tikzpicture}
    \end{aligned}\;.
   \end{gather}
   The $C^*$ weak Hopf algebra structure of $ \tilde{\mathbf{R}}(\EM_{\ED})$ is given similarly. 
\end{proposition}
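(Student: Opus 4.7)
The plan is to verify the $C^*$ weak Hopf algebra axioms for $\tilde{\mathbf{L}}({_{\ED}}\EM)$ by observing that its diagrammatic presentation is structurally identical to the bulk tube algebra $\mathbf{Tube}({_{\ED}}\ED_{\ED})$ treated in Sec.~\ref{sec:tube}, after two modifications: the outer (blue) closing arc is replaced by a vacant arc (so the right $\ED$-action plays no role), and the four labels $d,e,f,g$ appearing on the inner arc are taken from $\Irr(\EM)$ rather than from $\Irr(\ED)$. This is already realised in the special case $\EM=\ED$ through the embedding $\mathbf{L}({_{\ED}}\ED)\hookrightarrow\mathbf{Tube}({_{\ED}}\ED_{\ED})$ of Sec.~\ref{sec:TubeCross}, and extends to general $\EM$ because every diagrammatic move that appears in the bulk proofs lives entirely on the red side of the tube and therefore only consumes the left $\ED$-action on $\EM$: the associator, loop move, and parallel move required here are exactly the module-category F-symbols $[F^{ijk}_l]$ with $i,j\in\Irr(\ED)$, $k,l\in\Irr(\EM)$, together with the pivotal structure of the unitary $\ED$-module $\EM$.

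Concretely, I would verify the axioms in the same order as in Sec.~\ref{sec:tube}: associativity and the unit identity for $\mu$ are immediate by inspecting the stacked diagrams; coassociativity and the counit identities for $\Delta$ are established by the same direct computation as in Proposition~\ref{prop:tube-coalgebra}, using the fact noted in Remark~\ref{rmk:take-counit} that evaluating $\varepsilon$ on a product amounts to completing the stacked diagram with a single outer closing arc; the weak bialgebra compatibilities $\Delta\comp\mu=(\mu\otimes\mu)\comp(\id\otimes c\otimes\id)\comp(\Delta\otimes\Delta)$, the two formulas for $(\Delta\otimes\id)\comp\Delta(1)$, and the weak multiplicativity of $\varepsilon$ follow by inserting an auxiliary wall labelled by $i\in\Irr(\ED)$ between the two halves and applying a single parallel move on the red arc. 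The antipode identities $\mu\comp(\id\otimes S)\comp\Delta=\varepsilon_L$, $\mu\comp(S\otimes\id)\comp\Delta=\varepsilon_R$, and $S=\mu\comp(\mu\otimes\id)\comp(S\otimes\id\otimes S)\comp(\Delta\otimes\id)\comp\Delta$ reduce to the same half-loop reflection argument used in the bulk case, and the grouplike element implementing $S^2$ is defined by the direct analogue of $\xi=\sum_{x,y}(d_y/d_x)|x,y\rangle$ of Remark~\ref{rmk:antipode}, now with $x,y\in\Irr(\EM)$. The $C^*$ structure is then verified by checking, as in the bulk, that $(XY)^*=Y^*X^*$, $(X^*)^*=X$, and $\Delta(X)^*=\Delta(X^*)$, each reducing to the corresponding diagrammatic identity.

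The right-handed version $\tilde{\mathbf{R}}(\EM_{\ED})$ is obtained by the mirror-symmetric argument: swap the roles of the red and blue arcs and replace the left $\ED$-action on $\EM$ by the right $\ED$-action. All the preceding proofs carry over verbatim under this reflection, so it suffices to state the structure maps (as the proposition does) and note the symmetry.

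The only real technical obstacle is bookkeeping of the grading. When $\ED=\bigoplus_{i,j\in I}\ED_{i,j}$ and $\EM=\bigoplus_k\EM_k$ decomposes into indecomposable components, one must verify at each step that the sums appearing in $\Delta$, in the parallel moves, and in the antipode restrict correctly to fusion channels compatible with the module grading — for instance, the insertion of the wall $i$ in the computation of $\Delta(X)$ is constrained by the requirement that $a\otimes i$ and $i\otimes $ (inner label) be nonzero in the module category. These constraints, however, are automatic consequences of the module axioms and the nonzero-fusion structure already encoded in the basis of Eqs.~\eqref{eq:LeftBdTubebasis} and \eqref{eq:RightBdTubebasis}, so no new computation beyond the bulk case is required.
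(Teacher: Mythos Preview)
Your proposal is correct and follows the same approach as the paper, which simply states that the proof is identical to that for $\mathbf{Tube}({_{\ED}}\ED_{\ED})$; your write-up is essentially a careful unpacking of why that one-line reduction is legitimate. One small slip: the auxiliary wall $i$ inserted when computing $\Delta$ (and when verifying the weak bialgebra compatibilities) is an object of $\Irr(\EM)$, not of $\Irr(\ED)$, since it sits on the black vertical line---you implicitly acknowledge this in your final paragraph, but the earlier phrasing ``labelled by $i\in\Irr(\ED)$'' should be corrected.
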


\begin{proof}
    The proof is the same as that for $\mathbf{Tube}({_\ED}\ED_\ED)$. 
\end{proof}

\begin{remark}
    For the smooth boundary $\EM=\ED$ (regard $\ED$ as a left $\ED$-module category in a natural way), the boundary tube algebra $ \tilde{\mathbf{L}}({_{\ED}\ED})$ is isomorphic to ${\mathbf{L}}({_{\ED}\ED})$ which is a subalgebra of the bulk tube algebra $\mathbf{Tube}({_{\ED}}\ED_{\ED})$. Similarly,  $ \tilde{\mathbf{R}}(\ED_{\ED})\cong {\mathbf{R}}(\ED_{\ED})\hookrightarrow \mathbf{Tube}({_{\ED}}\ED_{\ED})$.
    We can introduce a pairing between basis diagrams of $ \tilde{\mathbf{L}}({_{\ED}\ED})$ and  $ \tilde{\mathbf{R}}(\ED_{\ED})$ as follows:
\begin{align}
\left\langle 
   \begin{aligned}
        \begin{tikzpicture}
             \draw[line width=.6pt,black] (0,0.5)--(0,1.3);
             \draw[line width=.6pt,black] (0,-0.5)--(0,-1.3);
             \draw[red] (0,0.8) arc[start angle=90, end angle=270, radius=0.8];
           %  \draw[blue,dotted] (0,1.3) arc[start angle=90, end angle=-90, radius=1.3];
          %  \node[ line width=0.6pt, dashed, draw opacity=0.5] (a) at (0,1.7){$h$};
            % \node[ line width=0.6pt, dashed, draw opacity=0.5] (a) at (0,-1.7){$c$};
            \node[ line width=0.6pt, dashed, draw opacity=0.5] (a) at (-1,0){$a$};
           % \node[ line width=0.6pt, dashed, draw opacity=0.5] (a) at (1.5,0){$\one$};
            \node[ line width=0.6pt, dashed, draw opacity=0.5] (a) at (-0.2,-1.3){$d$};
           % \node[ line width=0.6pt, dashed, draw opacity=0.5] (a) at (-0.4,-1.3){$\mu$};
            \node[ line width=0.6pt, dashed, draw opacity=0.5] (a) at (0.2,-0.8){$\nu$};
            \node[ line width=0.6pt, dashed, draw opacity=0.5] (a) at (0,-0.3){$e$};
            \node[ line width=0.6pt, dashed, draw opacity=0.5] (a) at (0,0.3){$f$};
            \node[ line width=0.6pt, dashed, draw opacity=0.5] (a) at (-0.2,1.3){$g$};
          %  \node[ line width=0.6pt, dashed, draw opacity=0.5] (a) at (-0.4,1.3){$\gamma$};
            \node[ line width=0.6pt, dashed, draw opacity=0.5] (a) at (0.2,0.8){$\zeta$};
        \end{tikzpicture}
    \end{aligned},
   \begin{aligned}
        \begin{tikzpicture}
             \draw[line width=.6pt,black] (0,0.5)--(0,1.3);
             \draw[line width=.6pt,black] (0,-0.5)--(0,-1.3);
            % \draw[red] (0,0.8) arc[start angle=90, end angle=270, radius=0.8];
            \draw[blue] (0,0.9) arc[start angle=90, end angle=-90, radius=0.9];
          %  \node[ line width=0.6pt, dashed, draw opacity=0.5] (a) at (0,1.7){$h$};
            % \node[ line width=0.6pt, dashed, draw opacity=0.5] (a) at (0,-1.7){$c$};
            \node[ line width=0.6pt, dashed, draw opacity=0.5] (a) at (0.7,0){$\bar{b}$};
           % \node[ line width=0.6pt, dashed, draw opacity=0.5] (a) at (1.5,0){$\one$};
            \node[ line width=0.6pt, dashed, draw opacity=0.5] (a) at (-0.2,-1.3){$\bar{d'}$};
           % \node[ line width=0.6pt, dashed, draw opacity=0.5] (a) at (-0.4,-1.3){$\mu$};
            \node[ line width=0.6pt, dashed, draw opacity=0.5] (a) at (0.2,-1.2){$\nu'$};
            \node[ line width=0.6pt, dashed, draw opacity=0.5] (a) at (0,-0.3){$\bar{e'}$};
            \node[ line width=0.6pt, dashed, draw opacity=0.5] (a) at (0,0.2){$\bar{f'}$};
            \node[ line width=0.6pt, dashed, draw opacity=0.5] (a) at (-0.2,1.3){$\bar{g'}$};
          %  \node[ line width=0.6pt, dashed, draw opacity=0.5] (a) at (-0.4,1.3){$\gamma$};
            \node[ line width=0.6pt, dashed, draw opacity=0.5] (a) at (0.2,1.2){$\zeta'$};
        \end{tikzpicture}
    \end{aligned}
\right \rangle  
& = \delta_{e,e'}\delta_{d,d'}\delta_{f,f'}\delta_{g,g'}\delta_{\nu,\nu'}\delta_{\zeta,\zeta'}\delta_{a,b}\nonumber\\
& =
\frac{\delta_{e,e'}\delta_{d,d'}\delta_{f,f'}\delta_{g,g'} }{\sqrt{d_fd_gd_dd_e}}
\begin{aligned}
        \begin{tikzpicture}
           %  \draw[line width=.6pt,black] (0,0.5)--(0,1.3);
            % \draw[line width=.6pt,black] (0,-0.5)--(0,-1.3);
             \draw[red] (0,0.8) arc[start angle=90, end angle=270, radius=0.8];
           %  \draw[blue,dotted] (0,1.3) arc[start angle=90, end angle=-90, radius=1.3];
          %  \node[ line width=0.6pt, dashed, draw opacity=0.5] (a) at (0,1.7){$h$};
            % \node[ line width=0.6pt, dashed, draw opacity=0.5] (a) at (0,-1.7){$c$};
            \node[ line width=0.6pt, dashed, draw opacity=0.5] (a) at (-1,0){$a$};
           % \node[ line width=0.6pt, dashed, draw opacity=0.5] (a) at (1.5,0){$\one$};
            \node[ line width=0.6pt, dashed, draw opacity=0.5] (a) at (0.25,-1.3){$d$};
           % \node[ line width=0.6pt, dashed, draw opacity=0.5] (a) at (-0.4,-1.3){$\mu$};
            \node[ line width=0.6pt, dashed, draw opacity=0.5] (a) at (-0.3,-0.8){$\nu$};
            \node[ line width=0.6pt, dashed, draw opacity=0.5] (a) at (0.25,-0.3){$e$};
            \node[ line width=0.6pt, dashed, draw opacity=0.5] (a) at (0.25,0.3){$f$};
            \node[ line width=0.6pt, dashed, draw opacity=0.5] (a) at (0.23,1.3){$g$};
          %  \node[ line width=0.6pt, dashed, draw opacity=0.5] (a) at (-0.4,1.3){$\gamma$};
            \node[ line width=0.6pt, dashed, draw opacity=0.5] (a) at (-0.3,0.8){$\zeta$};
          %%%%%%%%%%%%%%%%%%%%%%%%
            \draw[black] (0.5,0.8) arc[start angle=-360, end angle=0, radius=0.25];
            \draw[black] (0.5,-0.8) arc[start angle=-360, end angle=0, radius=0.25];
            % \draw[line width=.6pt,black] (0.5,0.5)--(0,0.5);
            % \draw[line width=.6pt,black] (0.5,1.3)--(0,1.3);
             %\draw[line width=.6pt,black] (0.5,-0.5)--(0,-0.5);
            % \draw[line width=.6pt,black] (0.5,-1.3)--(0,-1.3);
            % \draw[line width=.6pt,black] (0.5,0.5)--(0.5,1.3);
            % \draw[line width=.6pt,black] (0.5,-0.5)--(0.5,-1.3);
            % \draw[red] (0,0.8) arc[start angle=90, end angle=270, radius=0.8];
            \draw[blue] (0.5,0.8) arc[start angle=90, end angle=-90, radius=0.8];
          %  \node[ line width=0.6pt, dashed, draw opacity=0.5] (a) at (0,1.7){$h$};
            % \node[ line width=0.6pt, dashed, draw opacity=0.5] (a) at (0,-1.7){$c$};
            \node[ line width=0.6pt, dashed, draw opacity=0.5] (a) at (1.45,0){$\bar{b}$};
           % \node[ line width=0.6pt, dashed, draw opacity=0.5] (a) at (1.5,0){$\one$};
          %  \node[ line width=0.6pt, dashed, draw opacity=0.5] (a) at (0.3,-1.3){$d'$};
           % \node[ line width=0.6pt, dashed, draw opacity=0.5] (a) at (-0.4,-1.3){$\mu$};
            \node[ line width=0.6pt, dashed, draw opacity=0.5] (a) at (0.8,-1){$\nu'$};
           % \node[ line width=0.6pt, dashed, draw opacity=0.5] (a) at (0.5,-0.3){$e'$};
          %  \node[ line width=0.6pt, dashed, draw opacity=0.5] (a) at (0.5,0.3){$f'$};
          %  \node[ line width=0.6pt, dashed, draw opacity=0.5] (a) at (0.3,1.3){$g'$};
          %  \node[ line width=0.6pt, dashed, draw opacity=0.5] (a) at (-0.4,1.3){$\gamma$};
            \node[ line width=0.6pt, dashed, draw opacity=0.5] (a) at (0.8,1.2){$\zeta'$};
        \end{tikzpicture}
    \end{aligned}\;.
\end{align}
Thus, we can perceive the bases of $ \tilde{\mathbf{L}}({_{\ED}\ED})$ and $ \tilde{\mathbf{R}}(\ED_{\ED})$ as dual to each other. 
This pairing is crucial for us to regard the bulk tube algebra as a quantum double, the details will be studied in our future work.
\end{remark}

\begin{remark}
 We can also introduce more bulk edges like that in Sec.~\ref{sec:TubeMorita} to construct the boundary tube space $\tilde{\mathbf{L}}^{m_0,m_1}({_{\ED}\EM})$.
Different algebras $\tilde{\mathbf{L}}^{m,m}({_{\ED}\EM})$
 and  $\tilde{\mathbf{L}}^{n,n}({_{\ED}\EM})$ are Morita equivalent for $m,n\in \mathbb{N}$ with Morita context given by bimodules $\tilde{\mathbf{L}}^{m,n}({_{\ED}\EM})$ and $\tilde{\mathbf{L}}^{n,m}({_{\ED}\EM})$.   
\end{remark}

\begin{remark}
By leveraging the boundary-bulk duality, the boundary tube algebra can be viewed as the bulk gauge symmetry. Utilizing the boundary tube algebra as input data for the weak Hopf quantum double model yields a lattice model with identical topological excitations as those derived from the multifusion string-net bulk.
\end{remark}

\subsection{Anyon condensation}

The gapped boundary phase can be regarded as a phase obtained from the bulk phase via anyon condensation \cite{Bais2002,bais2003hopf,Bais2009,Kong2014,eliens2010anyon,burnell2018anyon}. 
There are at least three equivalent ways to understand bulk-to-boundary anyon condensation.

    \vspace{1em} 
    \emph{Shrinking approach.} ---
    For the multifusion string-net $\mathsf{SN}_{\ED}$, the bulk excitations can be regarded as the defects over the domain wall $_{\ED}\ED_{\ED}$, \emph{viz.}, we regarded $\ED$ as a $\ED|\ED$-bimodule category.
    The corresponding UMTC is given by $\Fun_{\ED|\ED}(\ED,\ED)= \mathcal{Z}(\ED)$.
    The topological phase of $_{\ED}\EM$ boundary is the left $\ED$-module functor category $\Fun_{\ED}(\EM,\EM)$. The anyon condensation can be regarded as a shrink of the region between the bulk $_{\ED}\ED_{\ED}$ domain wall and $_{\ED}\EM$ boundary.
    This shrinking process is achieved by the Deligne tensor product between module categories $_{\ED}\ED_{\ED}$ and $_{\ED}\EM$: $\ED\boxtimes_{\ED} \EM\simeq \EM$.
    A bulk excitation is a $\ED|\ED$-bimodule functor $F:\ED\to \ED$; after anyon condensation, it becomes the functor $\EM\simeq \ED\boxtimes_{\ED}\EM\overset{F\boxtimes \id_{\EM}}{\to} \ED\boxtimes_{\ED}\EM\simeq \EM$, which is the same as anyon condensation of Levin-Wen string-net \cite{Kitaev2012boundary}.

    \vspace{1em}
    \emph{Lagrangian algebra approach.} ---
    Unlike the shrinking approach above and the tube algebra approach below, it is more convenient to distinguish the stable and unstable boundaries.
    For the stable boundary,  the bulk-to-boundary anyon condensation is described by the one-step anyon condensation theory \cite{Kong2014,Cong2017,jia2022electricmagnetic,jia2023anyon}, and there are two equivalent descriptions. The bulk phase is given by the UMTC $\mathcal{Z}(\ED)$. The gapped boundary is determined by a Lagrangian algebra $\Acal\in \mathcal{Z}(\ED)$. The boundary phase is given by the category of $\Acal$-modules  in $\mathcal{Z}(\ED)$\,\footnote{For readers not familiar with tensor category theory, it is important to note that the concept of an $\Acal$-module here is distinct from the module over an algebra as typically defined in $\Vect$ (the category of vector spaces).
    An $\Acal$-module in $\mathcal{Z}(\ED)$ is an object $M\in \mathcal{Z}(\ED)$ equipped with a morphism $\mu_{M}:\Acal\otimes M \to M$ that satisfies the conditions: $\mu_M\comp (\mu_{\Acal}\otimes \id_M)=\mu_M\comp ( \id_{\Acal}\otimes\mu_M)\comp a_{\Acal,\Acal,M}$ and $\mu_M\comp ( \eta_{\Acal}\otimes\id_M)=l_M$, where $\eta:\one \to \Acal$ and $\mu_{\Acal}: \Acal\otimes \Acal \to \Acal$ are structure morphisms of $\Acal$, $a_{\Acal,\Acal,M}$ is the associativity isomorphisms, and $l_M$ is the left unit isomorphisms in $\mathcal{Z}(\ED)$.}, which is usually denoted as $\mathcal{Z}(\ED)_{\Acal}$.
    The Lagrangian algebra $\Acal$ also has a Frobenius algebra structure, from which we can construct a pre-quotient category $\mathcal{Z}(\ED)/\Acal$, then via taking Karoubi envelope, we obtain the boundary phase $\EuScript{Q}(\mathcal{Z}(\ED);\Acal)$. It can be proved \cite{MUGER2004galois} that  $\EuScript{Q}(\mathcal{Z}(\ED);\Acal)\simeq \mathcal{Z}(\ED)_{\Acal}$, thus two descriptions are equivalent.

    For the general case, including scenarios that are not necessarily stable, the bulk-to-boundary anyon condensation is described by a more intricate two-step condensation theory. In this part, we will provide an outline of the anyon condensation theory for multifusion string-nets. A more comprehensive discussion will be presented in \cite{jia2023anyon}. It is worth recalling that the bulks we consider are always stable, i.e., their topological excitations are characterized by a unitary braided fusion category. The general multifusion phase will also be addressed in \cite{jia2023anyon}.
    Before we proceed to discuss the anyon condensation theory for multifusion topological phase, let us first introduce some necessary definitions.

\begin{definition}
A condensable algebra $\Acal$ in a braided fusion category $\EP$ is an object $\Acal\in \EP$ equipped with two structure morphisms:  multiplication $\mu_{\Acal}:\Acal \otimes \Acal\to  \Acal$ and unit map $\eta_{\Acal}: \mathbf{1}\to \Acal$, which satisfy
\begin{enumerate}
\item $(\Acal,\mu_{\Acal},\eta_{\Acal})$ is an algebra.

\item $\Acal$ is commutative, i.e., $\mu_{\Acal} \comp c_{\Acal,\Acal}=\mu_{\Acal}$ where $c_{\Acal,\Acal}$ is the braiding map in $\EP$.

\item $\Acal$ is separable, i.e., $\mu_{A}: \Acal \otimes \Acal \to \Acal$ splits as a morphism of $\Acal|\Acal$-bimodules. Namely, there is an $\Acal|\Acal$-bimodule map $e_{\Acal}: \Acal \rightarrow \Acal \otimes \Acal$ such that $\mu_{\Acal} \comp e_{\Acal}=\mathrm{id}_{\Acal}$. 

\item $\Acal$ is connected, viz., $\operatorname{dim} \operatorname{Hom}_{\EP}(\mathbf{1}, \Acal)=1$, the vacuum charge $\mathbf{1}$ only appears once in the decomposition of $\Acal=\mathbf{1}\oplus a\oplus b\oplus \cdots$.
\end{enumerate}
If an algebra satisfies conditions 1 and 2, it is referred to as an \'{e}tale algebra \cite{davydov2013witt}. If a condensable algebra $\Acal$ additionally satisfies $(\operatorname{dim} \Acal)^2=\operatorname{dim} \EP:=\sum_{a\in \mathrm{Irr}(\EP)}(d_a)^2$, then $\Acal$ is termed a Lagrangian algebra.
\end{definition}

A left $\Acal$-module in $\EP$ is an object $M$ equipped with a morphism $\mu_M:\Acal\otimes M\to M$ such that, $\mu_{M}\comp (\mu_{\Acal}\otimes \id_M)=\mu_M\comp (\id_{\Acal}\otimes \mu_M)\comp a_{\Acal,\Acal,M}$ (where $a_{\Acal,\Acal,M}$ is the associator in $\EP$) and $\mu_{M}\comp (\eta_{\Acal}\otimes \id_M)=l_M$ (where $l_M$ is the left unit isomorphism in $\EP$). 
The right $\Acal$-module can be defined similarly.
For a commutative algebra, a left $\Acal$-module $M$ is called local if and only if $\mu_M=\mu_M\comp c_{M,\Acal}\comp c_{\Acal,M}$.

It can be proved that for Lagrangian algebra $\Acal\in \EP$, the category $\EP_{\Acal}^{\rm loc}$ of local $\Acal$-modules in $\EP$ is equivalent (as braided fusion category) to the category $\Hilb$ of finite-dimensional Hilbert spaces (corresponding to the trivial topological phase). And the category $\EP_{\Acal}$ of $\Acal$-modules is a UFC.

To characterize the gapped boundary of multifusion string-net, we need extra data, an algebra $\Bcal$ in $\EP$ equipped with an algebra homomorphism $\zeta_{\Acal,\Bcal}:\Acal \to \Bcal$. We say that $\Bcal$ is an algebra over $\Acal$ if and only if
$\mu_{\Bcal}\comp (\zeta_{\Acal,\Bcal}\otimes \id_\Bcal)= \mu_{\Bcal}\comp (\id_{\Bcal}\otimes \zeta_{\Acal,\Bcal})\comp c^{-1}_{\Bcal,\Acal}$.

With the above preparation, we are now ready to give our main result:

\begin{theorem}
    For a multifusion string-net with bulk phase $\EP=\mathcal{Z}(\ED)$, the bulk-to-boundary anyon condensation is characterized by a pair of algebras $(\Acal,\Bcal)$ where $\Acal$ is a Lagrangian algebra and $\Bcal$ is a separable algebra over $\Acal$. The boundary phase is characterized by the category $\EP_{\Bcal|\Bcal}$ of $\Bcal|\Bcal$-bimodules, which is a UMFC.
    The bulk-to-boundary condensation is given by the monoidal functor 
    \begin{equation}
        \bullet \otimes \Bcal: \EP \to \EP_{\Bcal|\Bcal}.
    \end{equation}
\end{theorem}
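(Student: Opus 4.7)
The plan is to treat the theorem as an extension of the classical one-step (Lagrangian) condensation theory to the multifusion (unstable) setting, reducing as much as possible to established results in module-category theory and then verifying the new bimodule-theoretic ingredients. The argument naturally breaks into four stages: (a) associating a pair $(\Acal,\Bcal)$ to a gapped boundary, (b) recovering the boundary phase as $\EP_{\Bcal|\Bcal}$, (c) verifying that $\EP_{\Bcal|\Bcal}$ is indeed a UMFC, and (d) identifying the condensation functor as $\bullet \otimes \Bcal$.

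First, I would start from the dictionary between gapped boundaries and left $\ED$-module categories $\EM$ established by the Kitaev--Kong construction, which is already used throughout the paper. Decompose $\EM = \bigoplus_{i} \EM_i$ into indecomposable summands. For a chosen reference summand $\EM_0$, Ostrik's theorem yields a connected separable algebra $A_0 \in \ED$ with $\EM_0 \simeq {_{A_0}}\Mod(\ED)$; lifting its internal endomorphism object through the central functor $\ED \to \mathcal{Z}(\ED)$ produces a connected commutative separable (hence étale) algebra in $\EP = \mathcal{Z}(\ED)$. The dimensional count used in the proof that Lagrangian algebras classify stable boundaries (Davydov--Kong--Runkel, or Kong 2014) gives $(\dim \Acal)^2 = \dim \EP$, so $\Acal$ is Lagrangian. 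For the full, possibly decomposable $\EM$, pick a generator $M \in \EM$ and set $\Bcal := \underline{\mathrm{End}}_{\ED}(M)$ promoted to $\mathcal{Z}(\ED)$; the inclusion $\Acal \hookrightarrow \Bcal$ coming from selecting the summand corresponding to $\EM_0$ defines $\zeta_{\Acal,\Bcal}$, and commutativity of $\Acal$ inside $\Bcal$ in the braided sense is precisely the ``algebra over $\Acal$'' condition.

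Next, the identification of the boundary phase with $\EP_{\Bcal|\Bcal}$ comes from the equivalence $\Fun_{\ED}(\EM,\EM) \simeq {_{\Bcal}}\Mod_{\Bcal}(\EP)$, which is a folklore upgrade (via the central functor) of the well-known $\Fun_{\ED}(\EM,\EM) \simeq {_{B}}\Mod_B(\ED)$ for the internal End algebra $B$. Concretely, for $X \in \EP$, the bimodule $X \otimes \Bcal$ corresponds to the functor $M \mapsto X \cdot M$, and $\otimes_\Bcal$ of bimodules corresponds to composition of functors. Semisimplicity of $\EP_{\Bcal|\Bcal}$ follows from separability of $\Bcal$; rigidity is inherited from the rigidity of $\EP$ together with the Frobenius property of $\Bcal$ (which is automatic for separable connected-over-$\Acal$ algebras in a pivotal setting); finiteness from the finiteness of $\EP$. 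Because $\Bcal$ may be disconnected, the tensor unit of $\EP_{\Bcal|\Bcal}$, namely $\Bcal$ itself, is generally not simple, and the grading decomposition of the UMFC matches the decomposition of $\Bcal$ into its summands. The unitary structure is inherited from $\ED$ via the central functor and the $C^*$ structure on the tube algebra developed in Section~\ref{sec:tube}, matching the boundary tube algebra description.

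Finally, the condensation functor $F_\Bcal := \bullet \otimes \Bcal \colon \EP \to \EP_{\Bcal|\Bcal}$ is defined on objects by equipping $X \otimes \Bcal$ with the obvious right $\Bcal$-action through $\mu_\Bcal$ and the left action through the half-braiding $c^{-1}_{X,\Bcal}$ (available precisely because $X \in \mathcal{Z}(\ED)$). The monoidal structure $(X \otimes \Bcal) \otimes_\Bcal (Y \otimes \Bcal) \xrightarrow{\sim} (X \otimes Y) \otimes \Bcal$ uses the half-braiding on $Y$ to move $\Bcal$ past, together with $\mu_\Bcal$ to collapse the two copies; the coherence (pentagon) reduces to the hexagon for the half-braiding and the associativity of $\mu_\Bcal$. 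I expect the main obstacle to be the verification that this construction is consistent across the two pictures, namely that the pair $(\Acal,\Bcal)$ extracted from $\EM$ via the internal-End construction reproduces, under $F_\Bcal$, the genuine bulk-to-boundary map predicted by the shrinking functor $\ED\boxtimes_{\ED}\EM \xrightarrow{\sim} \EM$ from Section~\ref{sec:bdtheory}; this amounts to matching two a priori different monoidal equivalences $\Fun_\ED(\EM,\EM) \simeq \EP_{\Bcal|\Bcal}$, and will be handled by an explicit comparison using generating objects and the universal property of the internal End. The remaining work (separability, unitarity, and the statement that $\Bcal$ can always be chosen such that $\EM \simeq {_\Bcal}\Mod_{\ED\text{-acts-canonically}}$) is routine once the matching is in place.
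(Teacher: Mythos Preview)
Your proposal is substantially more ambitious than the paper's own proof. The paper's argument is two sentences: since $\Acal$ is Lagrangian, the category $\EP_{\Acal}^{\rm loc}$ of local $\Acal$-modules is braided-equivalent to $\Hilb$, so the condensed phase is trivial; and then $\EP_{\Bcal|\Bcal}$ is asserted to be a UMFC. There is no construction of the pair $(\Acal,\Bcal)$ from $\EM$, no verification that $\bullet\otimes\Bcal$ is monoidal, and no matching against the shrinking functor---the surrounding text explicitly defers a comprehensive discussion to a companion reference.

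Your outline, by contrast, actually proves the classification and the identification of the condensation functor, which the paper only announces. The strategy is sound and is essentially what the companion paper is advertised to contain: extract $\Acal$ from an indecomposable summand via internal End plus induction to the centre, take $\Bcal$ from a generator of the full $\EM$, and read off the bimodule description. One technical point to watch: your ``folklore upgrade'' $\Fun_{\ED}(\EM,\EM)\simeq {_{\Bcal}}\Mod_{\Bcal}(\EP)$ is not automatic from the standard ${_B}\Mod_B(\ED)$ statement---the algebra $B=\underline{\End}_{\ED}(M)$ lives in $\ED$, and its ``promotion'' to $\EP=\mathcal{Z}(\ED)$ must go through the induction (right adjoint to the forgetful functor), after which the desired equivalence rests on $\EP_{\Acal}\simeq \ED$ for the Lagrangian $\Acal$. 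You correctly flag this as the main obstacle; it is where the genuine content of the unstable-boundary extension sits, and it is precisely what the paper's short proof sidesteps.
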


\begin{proof}
  Notice that the condensed phase is characterized by local $\Acal$-modules in $\EP$. Since $\Acal$ is a Lagrangian algebra, we have $\EP_{\Acal}^{\rm loc}\simeq \Hilb$ as unitary braided fusion categories. This implies that the condensed phase is a trivial topological phase. The boundary phase $\EP_{\Bcal|\Bcal}$ is a UMFC.
\end{proof}

 %%% revision %%%%%%%%%%%%%%%%%%%%   
    
    \vspace{1em}
    \emph{Tube algebra approach.} ---
    Using the tube algebra description of the topological excitations that we have established before, we have another description of the bulk-to-boundary anyon condensation.
    Since the bulk phase is given by the representation category of the bulk tube algebra $\Rep(\mathbf{Tube}({_{\ED}}\ED_{\ED}))$, and the boundary phase is given by the representation category of the boundary tube algebra $\Rep(\mathbf{Tube}({_{\ED}}\EM))$, the anyon condensation is a monoidal functor $\mathbf{Cond}:\Rep(\mathbf{Tube}({_{\ED}}\ED_{\ED})) \to \Rep(\mathbf{Tube}({_{\ED}}\EM))$.

    Consider a disk region $\Gamma$ in the bulk; if there are excitations inside this region, we examine the tube region $\operatorname{Tube}_{\Gamma}$ surrounding this disk and construct the corresponding tube algebra. Anyon condensation can be viewed as enlarging the disk such that it can be treated as a boundary disk $\Gamma'$.
    For this boundary disk, we will have the corresponding boundary tube region $\operatorname{Tube}_{\Gamma'}$, and through certain boundary topological local moves, we can derive the corresponding boundary tube algebra. In this manner, the bulk excitation transitions into a boundary excitation. See Fig.~\ref{fig:tubeBulkBoundary} for an illustration.

\begin{figure}[t]
		\centering
		\includegraphics[width=9cm]{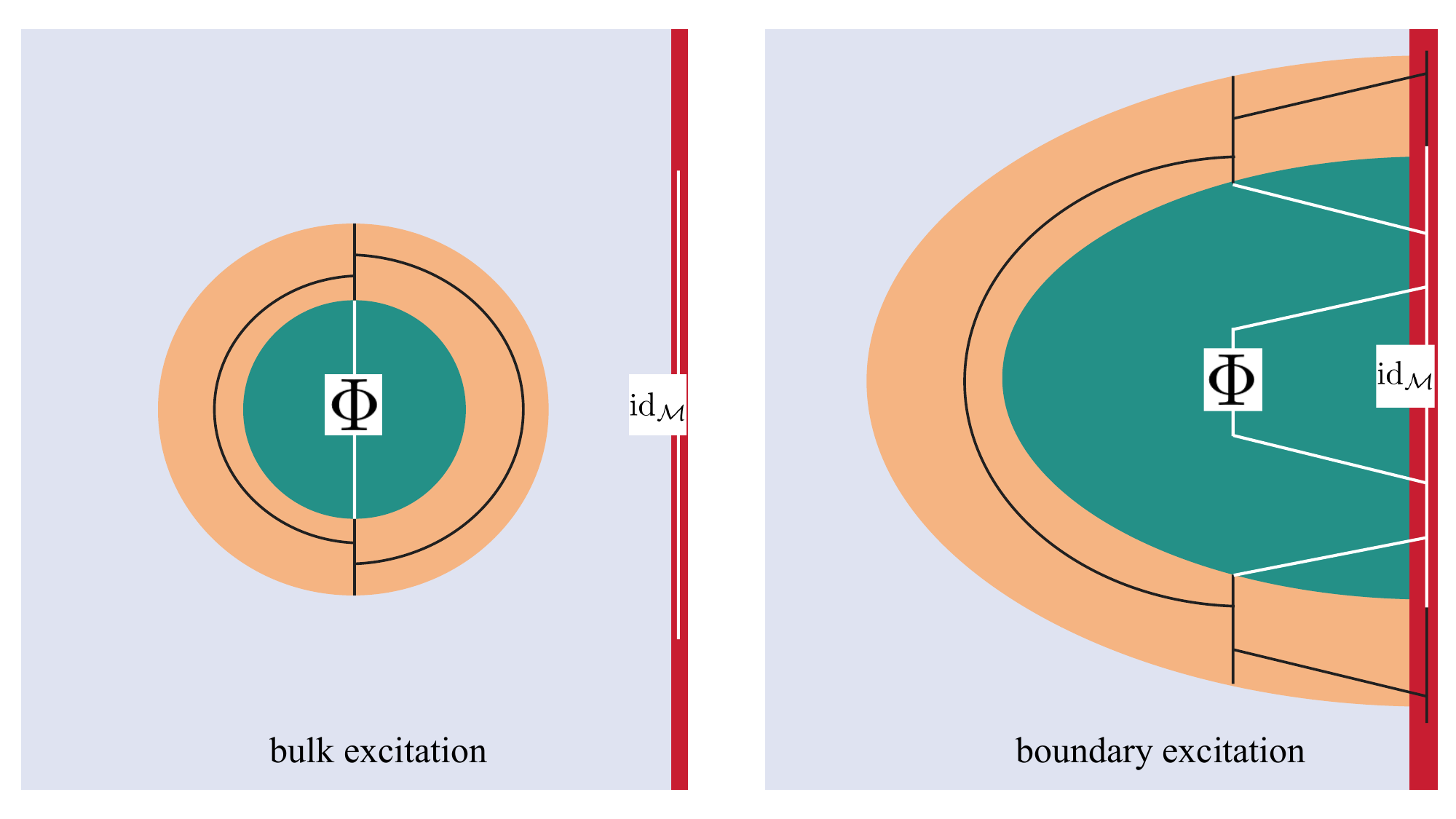}
		\caption{Illustration of bulk-to-boundary anyon condensation based on tube algebra. \label{fig:tubeBulkBoundary}}
\end{figure}

More precisely, let us consider a bulk excitation $M\in \Rep(\mathbf{Tube}({_{\ED}}\ED_{\ED}))$, we denote its basis diagrammatically as
\begin{equation}
        \begin{aligned}
        \begin{tikzpicture}
             \draw[line width=.6pt,black] (0,0)--(0,0.6);
             \draw [draw=black,fill=gray, fill opacity=0.2] (-0.3,0) rectangle (0.3,-0.6);
             \draw[line width=.6pt,black] (0,-0.6)--(0,-1.2);
              \node[ line width=0.6pt, dashed, draw opacity=0.5] (a) at (0,-.3){$\Phi$};
             \node[ line width=0.6pt, dashed, draw opacity=0.5] (a) at (0.2,-.9){$x$};
               \node[ line width=0.6pt, dashed, draw opacity=0.5] (a) at (0.2,.3){$y$};
        \end{tikzpicture}
    \end{aligned}.
\end{equation}
For the gapped boundary characterized by a $\ED$-module $\EM$, the boundary vacuum is the identity functor $\id_{\EM}\in \Fun_{\ED}(\EM,\EM)$. Using a construction similar to that given in subsection~\ref{subsec:FuntorExcitation}, we obtain a $\mathbf{Tube}({_{\ED}}\EM)$-module $V_{\operatorname{id}_{\EM}}$, which represents the boundary vacuum charge. The basis consists of identity morphisms $\operatorname{id}_m$, where $m\in \EM$.
Via parallel move, we obtain 
\begin{equation}
     \begin{aligned}
        \begin{tikzpicture}
             \draw[line width=.6pt,black] (0,0)--(0,0.6);
             \draw [draw=black,fill=gray, fill opacity=0.2] (-0.3,0) rectangle (0.3,-0.6);
             \draw[line width=.6pt,black] (0,-0.6)--(0,-1.2);
              \node[ line width=0.6pt, dashed, draw opacity=0.5] (a) at (0,-.3){$\Phi$};
             \node[ line width=0.6pt, dashed, draw opacity=0.5] (a) at (0.2,-.9){$x$};
               \node[ line width=0.6pt, dashed, draw opacity=0.5] (a) at (0.2,.3){$y$};
        \end{tikzpicture}
    \end{aligned}\quad 
      \begin{aligned}
        \begin{tikzpicture}
            % \draw[line width=.6pt,black] (0,0)--(0,0.6);
            % \draw [draw=black,fill=gray, fill opacity=0.2] (-0.3,0) rectangle (0.3,-0.6);
             \draw[line width=.6pt,black] (0,0.6)--(0,-1.2);
             % \node[ line width=0.6pt, dashed, draw opacity=0.5] (a) at (0,-.3){$\Phi$};
             \node[ line width=0.6pt, dashed, draw opacity=0.5] (a) at (0.2,-.9){$m$};
            %   \node[ line width=0.6pt, dashed, draw opacity=0.5] (a) at (0.2,.3){$y$};
        \end{tikzpicture}
    \end{aligned}=
    \sum_{s,t,\mu,\nu}  \frac{1}{Y_{s}^{xm}Y_t^{ym}}    \begin{aligned}
        \begin{tikzpicture}
             \draw[line width=.6pt,black] (0,0)--(0,0.6);
             \draw [draw=black,fill=gray, fill opacity=0.2] (-0.3,0) rectangle (0.3,-0.6);
             \draw[line width=.6pt,black] (0,-0.6)--(0,-1.2);
              \node[ line width=0.6pt, dashed, draw opacity=0.5] (a) at (0,-.3){$\Phi$};
             \node[ line width=0.6pt, dashed, draw opacity=0.5] (a) at (0.2,-.9){$x$};
               \node[ line width=0.6pt, dashed, draw opacity=0.5] (a) at (0.2,.3){$y$};
           \draw[line width=.6pt,black] (0.8,0.6)--(0.8,-1.2);
           \draw[line width=.6pt,black] (0.4,-1.5)--(0.8,-1.2);
          \draw[line width=.6pt,black] (0.4,-1.5)--(0,-1.2);
                   \node[ line width=0.6pt, dashed, draw opacity=0.5] (a) at (1.0,-.9){$m$};
   \draw[line width=.6pt,black] (0.4,0.9)--(0,0.6);
     \draw[line width=.6pt,black] (0.4,0.9)--(0.8,0.6);
     \draw[line width=.6pt,black] (0.4,0.9)--(.4,1.2);
          \draw[line width=.6pt,black] (0.4,-1.5)--(.4,-1.8);
                      \node[ line width=0.6pt, dashed, draw opacity=0.5] (a) at (.6,1.1){$t$};
                             \node[ line width=0.6pt, dashed, draw opacity=0.5] (a) at (.6,-1.7){$s$};
           \draw[line width=.6pt,black] (0.4,1.2)--(0,1.5);           \draw[line width=.6pt,black] (0.4,1.2)--(0.8,1.5);               \draw[line width=.6pt,black] (0.4,-1.8)--(0,-2.1);           \draw[line width=.6pt,black] (0.4,-1.8)--(0.8,-2.1);   
        \node[ line width=0.6pt, dashed, draw opacity=0.5] (a) at (-0.2,1.5){$y$};
                \node[ line width=0.6pt, dashed, draw opacity=0.5] (a) at (1,1.5){$m$};
                      \node[ line width=0.6pt, dashed, draw opacity=0.5] (a) at (-0.2,-2.1){$x$};
                \node[ line width=0.6pt, dashed, draw opacity=0.5] (a) at (1,-2.1){$m$};
        \end{tikzpicture}
    \end{aligned}.\label{eq:BulkPhi}
\end{equation}
Now, consider the boundary tube algebra $\mathbf{Tube}^{2;2}({_{\ED}}\EM)$ with the basis given by (see Fig.~\ref{fig:tubeBulkBoundary})
\begin{equation}
 \left\{  \begin{aligned}
        \begin{tikzpicture}
             \draw[line width=.6pt,black] (0,0.5)--(0,1.3);
             \draw[line width=.6pt,black] (0,-0.5)--(0,-1.3);
             \draw[red] (0,0.8) arc[start angle=90, end angle=270, radius=0.8];
             \draw[line width=.6pt,black] (0,-0.5)--(0,-1.3);
             \draw[line width=.6pt,black] (120:0.8)-- ++(120:0.3);
             \draw[line width=.6pt,black] (150:0.8)--++(150:-0.3);
               \draw[line width=.6pt,black] (210:0.8)-- ++(210:-0.3);
             \draw[line width=.6pt,black] (240:0.8)--++(240:0.3);
        \end{tikzpicture}
    \end{aligned}: \text{edge}\in \ED,\EM; \text{vertex}\in \Hom_{\ED},\Hom_{\EM} \right\}
\label{eq:BdTubebasis22}
\end{equation}
where we have omitted the edge and vertex labels.
There is a natural action of $\mathbf{Tube}^{2;2}({_{\ED}}\EM)$ on the basis elements given in Eq.~\eqref{eq:BulkPhi}, which means that we can map
bulk excitation $M\in \Rep(\mathbf{Tube}({_{\ED}}\ED_{\ED}))$ to boundary excitation  $F_1(M)\in \Rep(\mathbf{Tube}^{2;2}({_{\ED}}\EM))$ with $F_1$ denotes the corresponding functor.
Notice that  $\mathbf{Tube}^{2;2}({_{\ED}}\EM)$ is Morita equivalent to $\mathbf{Tube}^{0;0}({_{\ED}}\EM)=\mathbf{Tube}({_{\ED}}\EM)$; we denote the equivalence functor as $F_2:\Rep(\mathbf{Tube}^{2;2}({_{\ED}}\EM)) \to \Rep(\mathbf{Tube}({_{\ED}}\EM))$
The anyon condensation is thus described by $\mathbf{Cond}=F_2\circ F_1$.

\begin{remark}
Notice that Eq.~\eqref{eq:VerticalFusion} is the vertical fusion of two bulk excitations, where we need to use the comultiplication of the tube algebra. The vertical fusion corresponds to the composition $X\circ Y$ of functors $X,Y\in \Fun_{\ED|\ED}(\ED,\ED)$.
The anyon condensation we discussed above is indeed the horizontal fusion; it can also be generalized to the bulk topological excitations. Notice that in this case, the fusion corresponds to $X\boxtimes_{\ED}Y$ for $X,Y\in\Fun_{\ED|\ED}(\ED,\ED)$.
\end{remark}

\section{Gapped domain wall theory}
\label{sec:walltheory}

A gapped domain wall that separates two topological phases (UMTCs), denoted as $\EP_1$ and $\EP_2$, can be converted into a gapped boundary of the stacked topological phase $\EP_1\boxtimes \bar{\EP}_2$. Here, $\bar{\EP}_2$ represents the phase obtained by applying the time-reversal transformation to $\EP_2$. Mathematically, this transformation results in a UMTC with the same objects as $\EP_2$ but featuring anti-braiding structures.
This means that we can apply the result of gapped boundary in Sec.~\ref{sec:bdtheory} to the gapped domain wall.
In this section, rather than adopting the aforementioned approach, we will delve into the domain wall tube algebra and domain wall excitations within the framework of the domain wall picture.
This viewpoint is valuable for advancing our understanding of gapped domain walls and gapped boundaries. It is noteworthy that gapped boundaries can be conceived as gapped domain walls that delineate a topological phase from a trivial or vacuum phase denoted as $\mathsf{Hilb}$.

For vertices $v_w$ on the domain wall, we define the vertex operator $Q_{v_w}$ in a similar way as that for the boundary (notice that the domain wall vertex is either for left $\EC$-module structure over $\EM$ or for right $\ED$-module structure over $\EM$).
The face operator is also constructed by inserting loops inside the boundary faces.
The domain wall Hamiltonian is thus of the form
\begin{equation}
    H[\Sigma_1\cap \Sigma_2, {_{\EC}\EM_{\ED}}]=-\sum_{v_w}Q_{v_w}-\sum_{f_{w,L}} B_{f_{w,L}}-\sum_{f_{w,R}} B_{f_{w,R}}.
\end{equation}
If we set $\EC=\ED$ and $\EM=\ED$, we obtain the trivial ${_{\ED}}\ED_{\ED}$ domain wall. It is observed that both the vertex operator and the two-sided face operators become identical to those for the $\ED$ bulk.

\begin{remark}
Consider two domain walls $_{\EC}\EM_{\ED}$ that separating multifusion string-nets with input UMFCs $\EC$ and $\ED$, and $_{\ED}\EN_{\EE}$ that separating multifusion string-nets with input UMFCs $\ED$ and $\EE$, then two domain walls can fuse to obtain a new domain wall that separating multifusion string-nets with input UMFCs $\EC$ and $\EE$; this fusion process is mathematically characterized by the tensor product of module categories: $\EM\boxtimes_{\ED}\EN$.

A gapped domain wall $\EM$ is called invertible or transparent if it can be annihilated by its inverse domain wall $\EM^{\rm op}$.
Mathematically, for the $\EC|\ED$-bimodule category $\EM$, its opposite category $\EM^{\rm op}$ is a $\ED|\EC$-bimodule category; $\EM$ is called invertible if and only if there are equivalences $\EM\boxtimes_{\ED}\EM^{\rm op}\simeq \EC$ and $\EM^{\rm op}\boxtimes_{\EC}\EM\simeq \ED$ as bimodule categories \cite{etingof2010fusion}.
The invertible domain walls are crucial because that the topological excitations can tunnel through the domain wall from one side to another. 
\end{remark}

\subsection{Domain wall gauge symmetry}

Consider two $2d$ bulks of multifusion string-net with their respective UMFCs denoted as $\ED_1$ and $\ED_2$. The associated weak Hopf gauge symmetries are represented by $W_1$ and $W_2$ for  $\ED_1=\Rep(W_1)$ and $\ED_2=\Rep(W_2)$ respectively. The gauge symmetry of the $1d$ domain wall is characterized by the $W_1|W_2$-bicomodule algebra $\mathfrak{J}$, or equivalently by $W_1|W_2$-bimodule algebra $\mathfrak{K}$.

Assume the domain wall gauge symmetry to be $\mathfrak{J}$, then the module category $_{\mathfrak{J}}\Mod$ over $\mathfrak{J}$ is a $\ED_1|\ED_2$-bimodule category.
Thus the irreducible modules over $\mathfrak{J}$ are labels of the domain wall string.

For two domain walls with gauge symmetry $\mathfrak{J}_1$ and $\mathfrak{J}_2$ respectively, the point defect between two domain walls are characterized by $\mathfrak{J}_1|\mathfrak{J}_2$-bimodules.
If $\mathfrak{J}_1=\mathfrak{J}_2=\mathfrak{J}$, these bimodules describe the topological excitations of the domain wall with gauge symmetry $\mathfrak{J}$.

\subsection{Domain wall tube algebra}

For a gapped domain wall $_{\EC}\EM_{\ED}$, we have the tube algebra $\mathbf{Tube}({_{\EC}}\EM_{\ED})$ spanned by the following basis:
\begin{equation}
 \left\{  
   \begin{aligned}
        \begin{tikzpicture}
             \draw[line width=.6pt,black] (0,0.5)--(0,1.5);
             \draw[line width=.6pt,black] (0,-0.5)--(0,-1.5);
             \draw[red] (0,0.8) arc[start angle=90, end angle=270, radius=0.8];
             \draw[blue] (0,1.3) arc[start angle=90, end angle=-90, radius=1.3];
            \node[ line width=0.6pt, dashed, draw opacity=0.5] (a) at (0,1.7){$h$};
             \node[ line width=0.6pt, dashed, draw opacity=0.5] (a) at (0,-1.7){$c$};
            \node[ line width=0.6pt, dashed, draw opacity=0.5] (a) at (-1,0){$a$};
            \node[ line width=0.6pt, dashed, draw opacity=0.5] (a) at (1.5,0){$b$};
            \node[ line width=0.6pt, dashed, draw opacity=0.5] (a) at (-0.2,-1){$d$};
            \node[ line width=0.6pt, dashed, draw opacity=0.5] (a) at (-0.4,-1.3){$\mu$};
            \node[ line width=0.6pt, dashed, draw opacity=0.5] (a) at (0.2,-0.8){$\nu$};
            \node[ line width=0.6pt, dashed, draw opacity=0.5] (a) at (0,-0.3){$e$};
            \node[ line width=0.6pt, dashed, draw opacity=0.5] (a) at (0,0.3){$f$};
            \node[ line width=0.6pt, dashed, draw opacity=0.5] (a) at (-0.2,1){$g$};
            \node[ line width=0.6pt, dashed, draw opacity=0.5] (a) at (-0.4,1.3){$\gamma$};
            \node[ line width=0.6pt, dashed, draw opacity=0.5] (a) at (0.2,0.8){$\zeta$};
        \end{tikzpicture}
    \end{aligned}
   : 
   \begin{aligned}
     & a\in \Irr(\EC),b\in \Irr(\ED),f,g,h,c,d,e\in \Irr(\EM), \\
     & \mu,\nu,\gamma,\zeta\in \Hom_{\EM}  
   \end{aligned}
    \right\}.
    \label{eq:tubebasis-M}
\end{equation}

\begin{remark}
When considering the  defects between two domain walls $_{\EC}\EM_{\ED}$ and $_{\EC}\EN_{\ED}$, we can construct a tube algebra $\mathbf{Tube}[_{\EC}\EM_{\ED};{_{\EC}\EN_{\ED}}]$ by setting the labels $f,g,h\in \Irr(\EM)$, $\zeta,\gamma\in \Hom_{\EM}$ and $c,d,e\in  \Irr(\EN)$, $\mu,\nu\in \Hom_{\EN}$. 
It is clear that this forms an algebra, but there is no longer a coalgebra structure.
The defects between domain walls are thus characterized by the modules over $\mathbf{Tube}[_{\EC}\EM_{\ED};{_{\EC}\EN_{\ED}}]$.
\end{remark}

\begin{proposition}
    The domain wall tube algebra $\mathbf{Tube}({_{\EC}}\EM_{\ED})$ is a $C^*$ weak Hopf algebra with the following structure morphisms: 
    \begin{gather}
        1=\sum_{a,b} \;\begin{aligned}
        \begin{tikzpicture}
             \draw[line width=.6pt,black] (0,0.5)--(0,1.5);
             \draw[line width=.6pt,black] (0,-0.5)--(0,-1.5);
             \draw[red, dotted] (0,0.8) arc[start angle=90, end angle=270, radius=0.8];
             \draw[blue,dotted] (0,1.3) arc[start angle=90, end angle=-90, radius=1.3];
            \node[ line width=0.6pt, dashed, draw opacity=0.5] (a) at (-0.2,-1){$a$};
            \node[ line width=0.6pt, dashed, draw opacity=0.5] (a) at (-0.2,1){$b$};
        \end{tikzpicture}
    \end{aligned}\;, \\
    \mu\left( \begin{aligned}
        \begin{tikzpicture}
             \draw[line width=.6pt,black] (0,0.5)--(0,1.5);
             \draw[line width=.6pt,black] (0,-0.5)--(0,-1.5);
             \draw[red] (0,0.8) arc[start angle=90, end angle=270, radius=0.8];
             \draw[blue] (0,1.3) arc[start angle=90, end angle=-90, radius=1.3];
            \node[ line width=0.6pt, dashed, draw opacity=0.5] (a) at (0,1.7){$h$};
             \node[ line width=0.6pt, dashed, draw opacity=0.5] (a) at (0,-1.7){$c$};
            \node[ line width=0.6pt, dashed, draw opacity=0.5] (a) at (-1,0){$a$};
            \node[ line width=0.6pt, dashed, draw opacity=0.5] (a) at (1.5,0){$b$};
            \node[ line width=0.6pt, dashed, draw opacity=0.5] (a) at (-0.2,-1){$d$};
            \node[ line width=0.6pt, dashed, draw opacity=0.5] (a) at (-0.4,-1.3){$\mu$};
            \node[ line width=0.6pt, dashed, draw opacity=0.5] (a) at (0.2,-0.8){$\nu$};
            \node[ line width=0.6pt, dashed, draw opacity=0.5] (a) at (0,-0.3){$e$};
            \node[ line width=0.6pt, dashed, draw opacity=0.5] (a) at (0,0.3){$f$};
            \node[ line width=0.6pt, dashed, draw opacity=0.5] (a) at (-0.2,1){$g$};
            \node[ line width=0.6pt, dashed, draw opacity=0.5] (a) at (-0.4,1.3){$\gamma$};
            \node[ line width=0.6pt, dashed, draw opacity=0.5] (a) at (0.2,0.8){$\zeta$};
        \end{tikzpicture}
    \end{aligned}\otimes \begin{aligned}
        \begin{tikzpicture}
             \draw[line width=.6pt,black] (0,0.5)--(0,1.5);
             \draw[line width=.6pt,black] (0,-0.5)--(0,-1.5);
             \draw[red] (0,0.8) arc[start angle=90, end angle=270, radius=0.8];
             \draw[blue] (0,1.3) arc[start angle=90, end angle=-90, radius=1.3];
            \node[ line width=0.6pt, dashed, draw opacity=0.5] (a) at (0,1.7){$h'$};
             \node[ line width=0.6pt, dashed, draw opacity=0.5] (a) at (0,-1.7){$c'$};
            \node[ line width=0.6pt, dashed, draw opacity=0.5] (a) at (-1,0){$a'$};
            \node[ line width=0.6pt, dashed, draw opacity=0.5] (a) at (1.5,0){$b'$};
            \node[ line width=0.6pt, dashed, draw opacity=0.5] (a) at (-0.2,-1){$d'$};
            \node[ line width=0.6pt, dashed, draw opacity=0.5] (a) at (-0.4,-1.4){$\mu'$};
            \node[ line width=0.6pt, dashed, draw opacity=0.5] (a) at (0.2,-0.8){$\nu'$};
            \node[ line width=0.6pt, dashed, draw opacity=0.5] (a) at (0,-0.3){$e'$};
            \node[ line width=0.6pt, dashed, draw opacity=0.5] (a) at (0,0.3){$f'$};
            \node[ line width=0.6pt, dashed, draw opacity=0.5] (a) at (-0.2,1){$g'$};
            \node[ line width=0.6pt, dashed, draw opacity=0.5] (a) at (-0.4,1.3){$\gamma'$};
            \node[ line width=0.6pt, dashed, draw opacity=0.5] (a) at (0.2,0.8){$\zeta'$};
        \end{tikzpicture}
    \end{aligned}\right)=\delta_{f,h'}\delta_{e,c'}\begin{aligned}
        \begin{tikzpicture}
             \draw[line width=.6pt,black] (0,0.5)--(0,2.5);
             \draw[line width=.6pt,black] (0,-0.5)--(0,-2.5);
             \draw[red] (0,0.8) arc[start angle=90, end angle=270, radius=0.8];
             \draw[blue] (0,1.3) arc[start angle=90, end angle=-90, radius=1.3];
            \node[ line width=0.6pt, dashed, draw opacity=0.5] (a) at (0.2,1.5){$f$};
             \node[ line width=0.6pt, dashed, draw opacity=0.5] (a) at (0.2,-1.5){$e$};
            \node[ line width=0.6pt, dashed, draw opacity=0.5] (a) at (-1,0){$a'$};
            \node[ line width=0.6pt, dashed, draw opacity=0.5] (a) at (1.5,0){$b'$};
            \node[ line width=0.6pt, dashed, draw opacity=0.5] (a) at (-0.2,-1){$d'$};
            \node[ line width=0.6pt, dashed, draw opacity=0.5] (a) at (-0.2,-1.3){$\mu'$};
            \node[ line width=0.6pt, dashed, draw opacity=0.5] (a) at (0.2,-0.8){$\nu'$};
            \node[ line width=0.6pt, dashed, draw opacity=0.5] (a) at (0,-0.3){$e'$};
            \node[ line width=0.6pt, dashed, draw opacity=0.5] (a) at (0,0.3){$f'$};
            \node[ line width=0.6pt, dashed, draw opacity=0.5] (a) at (-0.2,1){$g'$};
            \node[ line width=0.6pt, dashed, draw opacity=0.5] (a) at (-0.4,1.3){$\gamma'$};
            \node[ line width=0.6pt, dashed, draw opacity=0.5] (a) at (0.2,0.8){$\zeta'$};
            \draw[red] (0,1.6) arc[start angle=90, end angle=270, radius=1.6];
            \node[ line width=0.6pt, dashed, draw opacity=0.5] (a) at (-0.3,1.8){$\zeta$};
          \node[ line width=0.6pt, dashed, draw opacity=0.5] (a) at (-0.3,-1.8){$\nu$};
           \draw[blue] (0,2.1) arc[start angle=90, end angle=-90, radius=2.1];
            \node[ line width=0.6pt, dashed, draw opacity=0.5] (a) at (-1.3,0.4){$a$};
                \node[ line width=0.6pt, dashed, draw opacity=0.5] (a) at (1.8,0.4){$b$};
         \node[ line width=0.6pt, dashed, draw opacity=0.5] (a) at (0.2,1.8){$g$};
          \node[ line width=0.6pt, dashed, draw opacity=0.5] (a) at (0.2,-1.8){$d$};
        \node[ line width=0.6pt, dashed, draw opacity=0.5] (a) at (0.2,2.3){$\gamma$};
        \node[ line width=0.6pt, dashed, draw opacity=0.5] (a) at (0.2,-2.3){$\mu$};
       \node[ line width=0.6pt, dashed, draw opacity=0.5] (a) at (-0.3,2.5){$h$};
       \node[ line width=0.6pt, dashed, draw opacity=0.5] (a) at (-0.3,-2.5){$c$};
        \end{tikzpicture}
    \end{aligned}\;, \\
    \varepsilon\left(\begin{aligned}\begin{tikzpicture}
             \draw[line width=.6pt,black] (0,0.5)--(0,1.5);
             \draw[line width=.6pt,black] (0,-0.5)--(0,-1.5);
             \draw[red] (0,0.8) arc[start angle=90, end angle=270, radius=0.8];
             \draw[blue] (0,1.3) arc[start angle=90, end angle=-90, radius=1.3];
            \node[ line width=0.6pt, dashed, draw opacity=0.5] (a) at (0,1.7){$h$};
             \node[ line width=0.6pt, dashed, draw opacity=0.5] (a) at (0,-1.7){$c$};
            \node[ line width=0.6pt, dashed, draw opacity=0.5] (a) at (-1,0){$a$};
            \node[ line width=0.6pt, dashed, draw opacity=0.5] (a) at (1.5,0){$b$};
            \node[ line width=0.6pt, dashed, draw opacity=0.5] (a) at (-0.2,-1){$d$};
            \node[ line width=0.6pt, dashed, draw opacity=0.5] (a) at (-0.4,-1.3){$\mu$};
            \node[ line width=0.6pt, dashed, draw opacity=0.5] (a) at (0.2,-0.8){$\nu$};
            \node[ line width=0.6pt, dashed, draw opacity=0.5] (a) at (0,-0.3){$e$};
            \node[ line width=0.6pt, dashed, draw opacity=0.5] (a) at (0,0.3){$f$};
            \node[ line width=0.6pt, dashed, draw opacity=0.5] (a) at (-0.2,1){$g$};
            \node[ line width=0.6pt, dashed, draw opacity=0.5] (a) at (-0.4,1.3){$\gamma$};
            \node[ line width=0.6pt, dashed, draw opacity=0.5] (a) at (0.2,0.8){$\zeta$};
        \end{tikzpicture}
    \end{aligned}\right)
    =\frac{\delta_{e,f}\delta_{c,h}}{d_h}
    \begin{aligned}\begin{tikzpicture}
             \draw[line width=.6pt,black] (0,-1.8)--(0,1.8);
            % \draw[line width=.6pt,black] (0,-0.5)--(0,-1.5);
             \draw[red] (0,0.8) arc[start angle=90, end angle=270, radius=0.8];
             \draw[blue] (0,1.3) arc[start angle=90, end angle=-90, radius=1.3];
             \draw[black] (0,1.8) arc[start angle=90, end angle=-90, radius=1.8];
            \node[ line width=0.6pt, dashed, draw opacity=0.5] (a) at (1.6,0){$h$};
           %  \node[ line width=0.6pt, dashed, draw opacity=0.5] (a) at (0,-1.7){$c$};
            \node[ line width=0.6pt, dashed, draw opacity=0.5] (a) at (-0.6,0){$a$};
            \node[ line width=0.6pt, dashed, draw opacity=0.5] (a) at (1.1,0){$b$};
            \node[ line width=0.6pt, dashed, draw opacity=0.5] (a) at (-0.2,-1){$d$};
            \node[ line width=0.6pt, dashed, draw opacity=0.5] (a) at (-0.4,-1.3){$\mu$};
            \node[ line width=0.6pt, dashed, draw opacity=0.5] (a) at (0.2,-0.8){$\nu$};
          %  \node[ line width=0.6pt, dashed, draw opacity=0.5] (a) at (0,-0.3){$e$};
            \node[ line width=0.6pt, dashed, draw opacity=0.5] (a) at (0.3,0){$f$};
            \node[ line width=0.6pt, dashed, draw opacity=0.5] (a) at (-0.2,1){$g$};
            \node[ line width=0.6pt, dashed, draw opacity=0.5] (a) at (-0.4,1.3){$\gamma$};
            \node[ line width=0.6pt, dashed, draw opacity=0.5] (a) at (0.2,0.8){$\zeta$};
        \end{tikzpicture}
    \end{aligned}=\delta_{e,f}\delta_{c,h}\delta_{d,g}\delta_{\nu,\zeta}\delta_{\mu,\gamma} \sqrt{\frac{d_a d_f d_b}{d_h}}\;, \\
    \Delta\left(\begin{aligned}\begin{tikzpicture}
             \draw[line width=.6pt,black] (0,0.5)--(0,1.5);
             \draw[line width=.6pt,black] (0,-0.5)--(0,-1.5);
             \draw[red] (0,0.8) arc[start angle=90, end angle=270, radius=0.8];
             \draw[blue] (0,1.3) arc[start angle=90, end angle=-90, radius=1.3];
            \node[ line width=0.6pt, dashed, draw opacity=0.5] (a) at (0,1.7){$h$};
             \node[ line width=0.6pt, dashed, draw opacity=0.5] (a) at (0,-1.7){$c$};
            \node[ line width=0.6pt, dashed, draw opacity=0.5] (a) at (-1,0){$a$};
            \node[ line width=0.6pt, dashed, draw opacity=0.5] (a) at (1.5,0){$b$};
            \node[ line width=0.6pt, dashed, draw opacity=0.5] (a) at (-0.2,-1){$d$};
            \node[ line width=0.6pt, dashed, draw opacity=0.5] (a) at (-0.4,-1.3){$\mu$};
            \node[ line width=0.6pt, dashed, draw opacity=0.5] (a) at (0.2,-0.8){$\nu$};
            \node[ line width=0.6pt, dashed, draw opacity=0.5] (a) at (0,-0.3){$e$};
            \node[ line width=0.6pt, dashed, draw opacity=0.5] (a) at (0,0.3){$f$};
            \node[ line width=0.6pt, dashed, draw opacity=0.5] (a) at (-0.2,1){$g$};
            \node[ line width=0.6pt, dashed, draw opacity=0.5] (a) at (-0.4,1.3){$\gamma$};
            \node[ line width=0.6pt, dashed, draw opacity=0.5] (a) at (0.2,0.8){$\zeta$};
        \end{tikzpicture}
    \end{aligned}\right)
    =\sum_{i,j,k,\rho,\sigma} \sqrt{ \frac{d_j}{d_ad_i}}\sqrt{ \frac{d_k}{d_jd_b}}
    \begin{aligned}\begin{tikzpicture}
             \draw[line width=.6pt,black] (0,0.5)--(0,1.5);
             \draw[line width=.6pt,black] (0,-0.5)--(0,-1.5);
             \draw[red] (0,0.8) arc[start angle=90, end angle=270, radius=0.8];
             \draw[blue] (0,1.3) arc[start angle=90, end angle=-90, radius=1.3];
            \node[ line width=0.6pt, dashed, draw opacity=0.5] (a) at (0,1.7){$h$};
             \node[ line width=0.6pt, dashed, draw opacity=0.5] (a) at (0,-1.7){$k$};
            \node[ line width=0.6pt, dashed, draw opacity=0.5] (a) at (-1,0){$a$};
            \node[ line width=0.6pt, dashed, draw opacity=0.5] (a) at (1.5,0){$b$};
            \node[ line width=0.6pt, dashed, draw opacity=0.5] (a) at (-0.2,-1){$j$};
            \node[ line width=0.6pt, dashed, draw opacity=0.5] (a) at (-0.3,-1.4){$\rho$};
            \node[ line width=0.6pt, dashed, draw opacity=0.5] (a) at (0.2,-0.8){$\sigma$};
            \node[ line width=0.6pt, dashed, draw opacity=0.5] (a) at (0,-0.3){$i$};
            \node[ line width=0.6pt, dashed, draw opacity=0.5] (a) at (0,0.3){$f$};
            \node[ line width=0.6pt, dashed, draw opacity=0.5] (a) at (-0.2,1){$g$};
            \node[ line width=0.6pt, dashed, draw opacity=0.5] (a) at (-0.4,1.3){$\gamma$};
            \node[ line width=0.6pt, dashed, draw opacity=0.5] (a) at (0.2,0.8){$\zeta$};
        \end{tikzpicture}
    \end{aligned} 
    \otimes 
    \begin{aligned}\begin{tikzpicture}
             \draw[line width=.6pt,black] (0,0.5)--(0,1.5);
             \draw[line width=.6pt,black] (0,-0.5)--(0,-1.5);
             \draw[red] (0,0.8) arc[start angle=90, end angle=270, radius=0.8];
             \draw[blue] (0,1.3) arc[start angle=90, end angle=-90, radius=1.3];
            \node[ line width=0.6pt, dashed, draw opacity=0.5] (a) at (0,1.7){$k$};
             \node[ line width=0.6pt, dashed, draw opacity=0.5] (a) at (0,-1.7){$c$};
            \node[ line width=0.6pt, dashed, draw opacity=0.5] (a) at (-1,0){$a$};
            \node[ line width=0.6pt, dashed, draw opacity=0.5] (a) at (1.5,0){$b$};
            \node[ line width=0.6pt, dashed, draw opacity=0.5] (a) at (-0.2,-1){$d$};
            \node[ line width=0.6pt, dashed, draw opacity=0.5] (a) at (-0.4,-1.3){$\mu$};
            \node[ line width=0.6pt, dashed, draw opacity=0.5] (a) at (0.2,-0.8){$\nu$};
            \node[ line width=0.6pt, dashed, draw opacity=0.5] (a) at (0,-0.3){$e$};
            \node[ line width=0.6pt, dashed, draw opacity=0.5] (a) at (0,0.3){$i$};
            \node[ line width=0.6pt, dashed, draw opacity=0.5] (a) at (-0.2,1){$j$};
            \node[ line width=0.6pt, dashed, draw opacity=0.5] (a) at (-0.4,1.3){$\rho$};
            \node[ line width=0.6pt, dashed, draw opacity=0.5] (a) at (0.2,0.8){$\sigma$};
        \end{tikzpicture}
\end{aligned}\;, \\
S\left( 
          \begin{aligned}\begin{tikzpicture}
             \draw[line width=.6pt,black] (0,0.5)--(0,1.5);
             \draw[line width=.6pt,black] (0,-0.5)--(0,-1.5);
             \draw[red] (0,0.8) arc[start angle=90, end angle=270, radius=0.8];
             \draw[blue] (0,1.3) arc[start angle=90, end angle=-90, radius=1.3];
            \node[ line width=0.6pt, dashed, draw opacity=0.5] (a) at (0,1.7){$h$};
             \node[ line width=0.6pt, dashed, draw opacity=0.5] (a) at (0,-1.7){$c$};
            \node[ line width=0.6pt, dashed, draw opacity=0.5] (a) at (-1,0){$a$};
            \node[ line width=0.6pt, dashed, draw opacity=0.5] (a) at (1.5,0){$b$};
            \node[ line width=0.6pt, dashed, draw opacity=0.5] (a) at (-0.2,-1){$d$};
            \node[ line width=0.6pt, dashed, draw opacity=0.5] (a) at (-0.4,-1.3){$\mu$};
            \node[ line width=0.6pt, dashed, draw opacity=0.5] (a) at (0.2,-0.8){$\nu$};
            \node[ line width=0.6pt, dashed, draw opacity=0.5] (a) at (0,-0.3){$e$};
            \node[ line width=0.6pt, dashed, draw opacity=0.5] (a) at (0,0.3){$f$};
            \node[ line width=0.6pt, dashed, draw opacity=0.5] (a) at (-0.2,1){$g$};
            \node[ line width=0.6pt, dashed, draw opacity=0.5] (a) at (-0.4,1.3){$\gamma$};
            \node[ line width=0.6pt, dashed, draw opacity=0.5] (a) at (0.2,0.8){$\zeta$};
        \end{tikzpicture}
    \end{aligned}
          \right) =\frac{d_f}{d_h}\;\begin{aligned}\begin{tikzpicture}
             \draw[line width=.6pt,black] (0,0.5)--(0,1.5);
             \draw[line width=.6pt,black] (0,-0.5)--(0,-1.5);
             \draw[red] (0,1.3) arc[start angle=90, end angle=270, radius=1.3];
             \draw[blue] (0,0.8) arc[start angle=90, end angle=-90, radius=0.8];
            \node[ line width=0.6pt, dashed, draw opacity=0.5] (a) at (0,1.7){$e$};
             \node[ line width=0.6pt, dashed, draw opacity=0.5] (a) at (0,-1.7){$f$};
            \node[ line width=0.6pt, dashed, draw opacity=0.5] (a) at (-1,0){$\bar{a}$};
            \node[ line width=0.6pt, dashed, draw opacity=0.5] (a) at (1,0){$\bar{b}$};
            \node[ line width=0.6pt, dashed, draw opacity=0.5] (a) at (-0.2,-1){$g$};
            \node[ line width=0.6pt, dashed, draw opacity=0.5] (a) at (0.2,-1.3){$\zeta$};
            \node[ line width=0.6pt, dashed, draw opacity=0.5] (a) at (-0.2,-0.7){$\gamma$};
            \node[ line width=0.6pt, dashed, draw opacity=0.5] (a) at (0,-0.3){$h$};
            \node[ line width=0.6pt, dashed, draw opacity=0.5] (a) at (0,0.3){$c$};
            \node[ line width=0.6pt, dashed, draw opacity=0.5] (a) at (-0.2,1){$d$};
            \node[ line width=0.6pt, dashed, draw opacity=0.5] (a) at (-0.4,1.4){$\nu$};
            \node[ line width=0.6pt, dashed, draw opacity=0.5] (a) at (0.3,1.0){$\mu$};
        \end{tikzpicture}
    \end{aligned}\;, \quad \left( \begin{aligned}\begin{tikzpicture}
             \draw[line width=.6pt,black] (0,0.5)--(0,1.5);
             \draw[line width=.6pt,black] (0,-0.5)--(0,-1.5);
             \draw[red] (0,0.8) arc[start angle=90, end angle=270, radius=0.8];
             \draw[blue] (0,1.3) arc[start angle=90, end angle=-90, radius=1.3];
            \node[ line width=0.6pt, dashed, draw opacity=0.5] (a) at (0,1.7){$h$};
             \node[ line width=0.6pt, dashed, draw opacity=0.5] (a) at (0,-1.7){$c$};
            \node[ line width=0.6pt, dashed, draw opacity=0.5] (a) at (-1,0){$a$};
            \node[ line width=0.6pt, dashed, draw opacity=0.5] (a) at (1.5,0){$b$};
            \node[ line width=0.6pt, dashed, draw opacity=0.5] (a) at (-0.2,-1){$d$};
            \node[ line width=0.6pt, dashed, draw opacity=0.5] (a) at (-0.4,-1.3){$\mu$};
            \node[ line width=0.6pt, dashed, draw opacity=0.5] (a) at (0.2,-0.8){$\nu$};
            \node[ line width=0.6pt, dashed, draw opacity=0.5] (a) at (0,-0.3){$e$};
            \node[ line width=0.6pt, dashed, draw opacity=0.5] (a) at (0,0.3){$f$};
            \node[ line width=0.6pt, dashed, draw opacity=0.5] (a) at (-0.2,1){$g$};
            \node[ line width=0.6pt, dashed, draw opacity=0.5] (a) at (-0.4,1.3){$\gamma$};
            \node[ line width=0.6pt, dashed, draw opacity=0.5] (a) at (0.2,0.8){$\zeta$};
        \end{tikzpicture}
    \end{aligned} \right)^*= \frac{d_e}{d_c} \;
\begin{aligned}\begin{tikzpicture}
             \draw[line width=.6pt,black] (0,0.5)--(0,1.5);
             \draw[line width=.6pt,black] (0,-0.5)--(0,-1.5);
             \draw[red] (0,1.3) arc[start angle=90, end angle=270, radius=1.3];
             \draw[blue] (0,0.8) arc[start angle=90, end angle=-90, radius=0.8];
            \node[ line width=0.6pt, dashed, draw opacity=0.5] (a) at (0,1.7){$f$};
             \node[ line width=0.6pt, dashed, draw opacity=0.5] (a) at (0,-1.7){$e$};
            \node[ line width=0.6pt, dashed, draw opacity=0.5] (a) at (-1,0){$\bar{a}$};
            \node[ line width=0.6pt, dashed, draw opacity=0.5] (a) at (1,0){$\bar{b}$};
            \node[ line width=0.6pt, dashed, draw opacity=0.5] (a) at (-0.2,-1){$d$};
            \node[ line width=0.6pt, dashed, draw opacity=0.5] (a) at (0.2,-1.3){$\nu$};
            \node[ line width=0.6pt, dashed, draw opacity=0.5] (a) at (-0.2,-0.7){$\mu$};
            \node[ line width=0.6pt, dashed, draw opacity=0.5] (a) at (0,-0.3){$c$};
            \node[ line width=0.6pt, dashed, draw opacity=0.5] (a) at (0,0.3){$h$};
            \node[ line width=0.6pt, dashed, draw opacity=0.5] (a) at (-0.2,1){$g$};
            \node[ line width=0.6pt, dashed, draw opacity=0.5] (a) at (-0.4,1.4){$\zeta$};
            \node[ line width=0.6pt, dashed, draw opacity=0.5] (a) at (0.3,1.0){$\gamma$};
        \end{tikzpicture}
    \end{aligned}\;. 
    \end{gather}
\end{proposition}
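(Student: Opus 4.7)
The plan is to imitate, essentially verbatim, the proofs already given for the bulk tube algebra $\mathbf{Tube}({_{\ED}}\ED_{\ED})$, since the statement of the proposition is just the generalization obtained by replacing the outer red/blue half-circles by simple objects of $\EC$ and $\ED$ respectively, and the inner edges by simple objects of the bimodule category $\EM$, with the vertices being morphisms in $\EM$. The bulk case is recovered by setting $\EC=\ED$ and $\EM=\ED_{\ED}^{\ED}$, so the entire diagrammatic calculus transfers word-for-word.

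First I would check the algebra structure: the unit formula and associativity are manifest from the diagrammatic definition of $\mu$, since stacking two tube annuli and then a third produces the same figure regardless of order. Then I would verify the coalgebra structure by a direct diagrammatic computation, exactly as in Proposition~\ref{prop:tube-coalgebra}: inserting an internal $\EM$-edge labeled by $i\in\Irr(\EM)$ and performing the parallel moves for $a\in\Irr(\EC)$ against $i$ (using the left $\EC$-module structure of $\EM$) and for $b\in\Irr(\ED)$ against $i$ (using the right $\ED$-module structure of $\EM$) yields $\Delta$, while the counit is just a closure of the annulus by a single $\EM$-loop. Coassociativity and the counit axioms follow as before from the two F-moves associated with the left and right module structure.

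Next I would verify the weak bialgebra compatibility conditions: multiplicativity of $\Delta$, and the weak comultiplicativity of $1$ and weak multiplicativity of $\varepsilon$. Since the parallel moves in $\EM$ commute with the vertical stacking of tubes (provided the bimodule associator $(a\otimes m)\otimes b\cong a\otimes(m\otimes b)$ is used consistently at the middle $\EM$-edge), the same diagrammatic manipulations as in the bulk proof — in particular the trick in Remark~\ref{rmk:take-counit} for evaluating $\varepsilon$ on products — go through unchanged. The antipode $S$ and the $*$-operation make sense because they act only by dualizing the outer $\EC$-edge to $\bar a\in\Irr(\EC)$ and the outer $\ED$-edge to $\bar b\in\Irr(\ED)$ and permuting the inner $\EM$-edges; no duality on objects of $\EM$ is required. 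The identities $\mu(\id\otimes S)\Delta=\varepsilon_L$, $\mu(S\otimes\id)\Delta=\varepsilon_R$, and $S=\mu(\mu\otimes\id)(S\otimes\id\otimes S)(\Delta\otimes\id)\Delta$, as well as $(XY)^*=Y^*X^*$ and $\Delta(X^*)=\Delta(X)^*$, reduce to the same parallel and loop moves performed in the bulk proof, applied now in $\EM$ rather than in $\ED$. Finally, the $C^*$ property follows from the unitary structure of $\EM$ as a module category over the unitary $\EC$ and $\ED$, together with the fact that the Markov-type inner product inherited from $\EM$ is positive definite.

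The main obstacle is conceptual rather than technical: one must be careful that every diagrammatic move used in the bulk proof — most notably the parallel move for an outer edge across an inner edge, and the F-moves appearing inside products — makes sense in the bimodule setting. This amounts to verifying that the $\EC$-action on $\EM$, the $\ED$-action on $\EM$, and the middle associator $(a\otimes m)\otimes b\cong a\otimes(m\otimes b)$ are all compatible with the unitary structure, which is guaranteed by the definition of a unitary $\EC|\ED$-bimodule category. Once this is recorded, every computation of Section~\ref{sec:tube} transports to $\mathbf{Tube}({_{\EC}}\EM_{\ED})$ by simply relabeling red edges with $\EC$-objects, blue edges with $\ED$-objects, and internal edges with $\EM$-objects, so no genuinely new identity needs to be established.
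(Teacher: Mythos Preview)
Your proposal is correct and matches the paper's approach exactly: the paper's proof is simply the one-line statement ``The proof is the same as that for $\mathbf{Tube}({_\ED}\ED_\ED)$.'' Your write-up is a faithful (and more explicit) elaboration of precisely this reduction, including the only nontrivial observation that the bimodule associator and unitary module structures make every diagrammatic move from Section~\ref{sec:tube} legitimate in the ${_{\EC}}\EM_{\ED}$ setting.
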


\begin{proof}
    The proof is the same as that for $\mathbf{Tube}({_\ED}\ED_\ED)$. 
\end{proof}

There are two canonical subalgebras of $\mathbf{Tube}({_{\EC}}\EM_{\ED})$ which can be constructed in a similar way as that in Eqs.~\eqref{eq:tubebasis-L} and \eqref{eq:tubebasis-R}.
We will denote them as $\mathbf{L}({_{\EC}}\EM)$ and $\mathbf{R}(\EM_{\ED})$.

\begin{proposition}
    The domain wall tube algebra $\mathbf{Tube}({_{\EC}}\EM_{\ED})$  is isomorphic to the crossed product $\mathbf{R}(\EM_{\ED}) \Join \mathbf{L}({_{\EC}}\EM)$.
\end{proposition}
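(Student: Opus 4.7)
The plan is to mimic the construction given in Section~\ref{sec:TubeCross} for the bulk tube algebra, where the analogous isomorphism $\mathbf{Tube}({_{\ED}}\ED_{\ED}) \cong \tilde{\mathbf{R}}(\ED_{\ED}) \Join \tilde{\mathbf{L}}({_{\ED}}\ED)$ was established. The domain wall setting replaces the single UMFC $\ED$ with the bimodule data $({_{\EC}}\EM_{\ED})$, but the diagrammatic structure of the tube is formally the same: a red arc labelled in $\EC$ on the left and a blue arc labelled in $\ED$ on the right, with edges in $\EM$ on the inside and outside. In particular, the basis in Eq.~\eqref{eq:tubebasis-M} is recovered exactly by concatenating a right half-tube (a basis element of $\mathbf{R}(\EM_{\ED})$) with a left half-tube (a basis element of $\mathbf{L}({_{\EC}}\EM)$) provided the inner and outer middle-edges of the two halves agree.

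First, I would define the crossed product $\Join: \mathbf{R}(\EM_{\ED}) \otimes \mathbf{L}({_{\EC}}\EM) \to \mathbf{Tube}({_{\EC}}\EM_{\ED})$ on basis elements by the same gluing rule as in Section~\ref{sec:TubeCross}: given a right half-tube with outer edge $g'$ and inner edge $d'$ together with a left half-tube with outer edge $g$ and inner edge $d$, set the value to be the merged domain wall tube diagram with the delta factor $\delta_{g,g'}\delta_{d,d'}$. Extending by linearity and taking the quotient by these relations defines $\mathbf{R}(\EM_{\ED}) \Join \mathbf{L}({_{\EC}}\EM)$ as a vector space, and the map $\Phi$ to $\mathbf{Tube}({_{\EC}}\EM_{\ED})$ is a linear isomorphism since every domain wall tube basis element in Eq.~\eqref{eq:tubebasis-M} arises uniquely from such a gluing.

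Next I would verify the weak Hopf algebra compatibility. The unit decomposes as the sum over $x,y$ of the gluing of the two respective half-tube units, so $\Phi$ preserves units. For multiplication, the key point---already used in the bulk case---is that stacking two domain wall tubes vertically with the factor $\Join$ in the order $\mathbf{R}\otimes \mathbf{L}$ involves no F-moves: the red $\EC$-arcs on the left and the blue $\ED$-arcs on the right act on independent sides, so the products $\mu_\mathbf{R}$ and $\mu_\mathbf{L}$ combine into $\mu_{\mathbf{Tube}}$ by inspection of the diagrams, exactly as in Eq.~\eqref{eq:tube-mult} restricted to one colour at a time. Coproduct, counit, antipode and $*$-operation are all defined by local operations on either the red half or the blue half of the tube, so they respect the decomposition, and $\Phi$ intertwines them.

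The main obstacle will be to formulate the crossed product cleanly enough that one does not need to rederive the F-move calculations of Eq.~\eqref{eq:tube-mult}. The subtle point is that the order $\mathbf{R}\Join\mathbf{L}$ is essential: with the opposite order one would need to F-move the inner/outer edges past each other, producing non-trivial $F$-symbol coefficients. I expect the cleanest route is to record once and for all the observation that any domain wall tube diagram admits a canonical decomposition as $\mathbf{R}$-half on the outside times $\mathbf{L}$-half on the inside, and then note that all structure morphisms of $\mathbf{Tube}({_{\EC}}\EM_{\ED})$ restrict to the corresponding structure morphisms on the two halves. This reduces the proof to a bookkeeping check analogous to, but simpler than, the bulk case, because the $\EC$-side and $\ED$-side never interact across the middle $\EM$-line.
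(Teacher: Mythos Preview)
Your proposal is correct and follows essentially the same approach as the paper: the paper simply writes down the map $\Phi$ on basis elements via the same $\delta_{g,g'}\delta_{d,d'}$ gluing rule you describe and then remarks that the verification is identical to the bulk case $\mathbf{Tube}({_{\ED}}\ED_{\ED})$ treated in Section~\ref{sec:TubeCross}. Your outline is more explicit than the paper's one-line proof, but the route is the same, including your observation that the order $\mathbf{R}\Join\mathbf{L}$ is what makes the multiplication compatible without extra F-moves.
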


\begin{proof}
    The isomorphism $\Phi:\mathbf{R}(\EM_{\ED}) \Join \mathbf{L}({_{\EC}}\EM)\to \mathbf{Tube}({_{\EC}}\EM_{\ED})$ is given by 
    \begin{equation}
        \Phi:\begin{aligned}
        \begin{tikzpicture}
             \draw[line width=.6pt,black] (0,0.5)--(0,1.5);
             \draw[line width=.6pt,black] (0,-0.5)--(0,-1.5);
             \draw[red,dotted] (0,0.8) arc[start angle=90, end angle=270, radius=0.8];
             \draw[blue] (0,1.3) arc[start angle=90, end angle=-90, radius=1.3];
            \node[ line width=0.6pt, dashed, draw opacity=0.5] (a) at (0,1.7){$h$};
             \node[ line width=0.6pt, dashed, draw opacity=0.5] (a) at (0,-1.7){$c$};
            \node[ line width=0.6pt, dashed, draw opacity=0.5] (a) at (-1,0){$\one$};
            \node[ line width=0.6pt, dashed, draw opacity=0.5] (a) at (1.5,0){$b$};
            \node[ line width=0.6pt, dashed, draw opacity=0.5] (a) at (0,-0.3){$d$};
            \node[ line width=0.6pt, dashed, draw opacity=0.5] (a) at (-0.4,-1.3){$\mu$};
            \node[ line width=0.6pt, dashed, draw opacity=0.5] (a) at (0,0.3){$g$};
            \node[ line width=0.6pt, dashed, draw opacity=0.5] (a) at (-0.4,1.3){$\gamma$};
        \end{tikzpicture}
    \end{aligned}
    \Join 
    \begin{aligned}
        \begin{tikzpicture}
             \draw[line width=.6pt,black] (0,0.5)--(0,1.5);
             \draw[line width=.6pt,black] (0,-0.5)--(0,-1.5);
             \draw[red] (0,0.8) arc[start angle=90, end angle=270, radius=0.8];
             \draw[blue,dotted] (0,1.3) arc[start angle=90, end angle=-90, radius=1.3];
            \node[ line width=0.6pt, dashed, draw opacity=0.5] (a) at (-1,0){$a$};
            \node[ line width=0.6pt, dashed, draw opacity=0.5] (a) at (1.5,0){$\one$};
            \node[ line width=0.6pt, dashed, draw opacity=0.5] (a) at (-0.2,-1.3){$d'$};
            \node[ line width=0.6pt, dashed, draw opacity=0.5] (a) at (0.2,-0.8){$\nu$};
            \node[ line width=0.6pt, dashed, draw opacity=0.5] (a) at (0,-0.3){$e$};
            \node[ line width=0.6pt, dashed, draw opacity=0.5] (a) at (0,0.3){$f$};
            \node[ line width=0.6pt, dashed, draw opacity=0.5] (a) at (-0.2,1.3){$g'$};
            \node[ line width=0.6pt, dashed, draw opacity=0.5] (a) at (0.2,0.8){$\zeta$};
        \end{tikzpicture}
    \end{aligned}
    \mapsto  \delta_{g,g'}\delta_{d,d'}\begin{aligned}\begin{tikzpicture}
             \draw[line width=.6pt,black] (0,0.5)--(0,1.5);
             \draw[line width=.6pt,black] (0,-0.5)--(0,-1.5);
             \draw[red] (0,0.8) arc[start angle=90, end angle=270, radius=0.8];
             \draw[blue] (0,1.3) arc[start angle=90, end angle=-90, radius=1.3];
            \node[ line width=0.6pt, dashed, draw opacity=0.5] (a) at (0,1.7){$h$};
             \node[ line width=0.6pt, dashed, draw opacity=0.5] (a) at (0,-1.7){$c$};
            \node[ line width=0.6pt, dashed, draw opacity=0.5] (a) at (-1,0){$a$};
            \node[ line width=0.6pt, dashed, draw opacity=0.5] (a) at (1.5,0){$b$};
            \node[ line width=0.6pt, dashed, draw opacity=0.5] (a) at (-0.2,-1){$d$};
            \node[ line width=0.6pt, dashed, draw opacity=0.5] (a) at (-0.4,-1.3){$\mu$};
            \node[ line width=0.6pt, dashed, draw opacity=0.5] (a) at (0.2,-0.8){$\nu$};
            \node[ line width=0.6pt, dashed, draw opacity=0.5] (a) at (0,-0.3){$e$};
            \node[ line width=0.6pt, dashed, draw opacity=0.5] (a) at (0,0.3){$f$};
            \node[ line width=0.6pt, dashed, draw opacity=0.5] (a) at (-0.2,1){$g$};
            \node[ line width=0.6pt, dashed, draw opacity=0.5] (a) at (-0.4,1.3){$\gamma$};
            \node[ line width=0.6pt, dashed, draw opacity=0.5] (a) at (0.2,0.8){$\zeta$};
        \end{tikzpicture}
    \end{aligned}\;.
    \end{equation}
    The proof is the same as that for $\mathbf{Tube}({_\ED}\ED_\ED)$. 
\end{proof}

\section{Defective string-net as a restricted multifusion string-net}
\label{sec:DefectSN}

In this section, we will elucidate a crucial application of the multifusion string-net model: a defective topological phase.
The main result is as follows:
Every defective string-net model is a restricted multifusion string-net model.
This implies that the multifusion string-net provides a general framework for investigating various defects of the $2d$ topological phases.

\subsection{Defective string-net from UMFC}
\label{subsec:DefectSN}

Consider a $1d$ surface with defects, including the $1d$ domain wall or boundary and $0d$ defects between different domain walls and boundaries. 
Notice that all $1d$ defects must have an orientation.
For an input UMFC $\ED=\oplus_{i,j}\ED_{i,j}$, we can label $2d$ bulks with diagonal components $\ED_{i,i}$, $1d$ defects with non-diagonal component $\ED_{i,j}$, the domain wall $0d$ defects with $\Fun_{\ED_{i,i}|\ED_{j,j}}(\ED_{i,j},\ED_{i,j})$, and the boundary $0d$ defects $\Fun_{\ED_{i,i}}(\ED_{i,j},\ED_{i,j})$.
Since $\ED_{i,j}$ is a $\ED_{i,i}|\ED_{j,j}$-bimodule category, and 
\begin{equation}
    \ED_{i,j}\boxtimes_{\ED_{j,j}}\ED_{j,k}\simeq \ED_{i,k},
\end{equation}
using the fact that $\Fun_{\EC|\ED}(\EM,\EN)=\EM^{\rm op}\boxtimes_{\EC}\EN$, we see that $0d$ defects can also be labeled by objects in the UMFC.
For example, we have a well-defined labeling, e.g., as follows:
\begin{equation}
   \begin{aligned}
        \begin{tikzpicture}
             \draw[dotted,fill=red] (0,1) -- (0,0) -- (-0.866,-0.5) -- (0,1);
         \draw[dotted,fill=green] (0,1) -- (0,0) -- (0.866,-0.5) -- (0,1);
           \draw[dotted,fill=cyan]  (0,0) -- (0.866,-0.5) -- (-0.866,-0.5) -- (0,0);
             \draw[line width=1pt,black,-latex] (0,0)--(0,1);
             \draw[line width=1pt,black,-latex] (-0.866,-0.5) -- (0,0);
            \draw[line width=1pt,black,-latex] (0.866,-0.5) -- (0,0);  
            \node[ line width=0.6pt, dashed, draw opacity=0.5] (a) at (-.4,0.3){\small{$\ED_{i,i}$}};
            \node[ line width=0.6pt, dashed, draw opacity=0.5] (a) at (0.4,0.3){\small{$\ED_{j,j}$}};
           \node[ line width=0.6pt, dashed, draw opacity=0.5] (a) at (0,-0.5){\small{$\ED_{k,k}$}};
           \node[ line width=0.6pt, dashed, draw opacity=0.5] (a) at (0,1.3){\small{$\ED_{i,j}$}};
            \node[ line width=0.6pt, dashed, draw opacity=0.5] (a) at (-1,-0.7){\small{$\ED_{i,k}$}};
           \node[ line width=0.6pt, dashed, draw opacity=0.5] (a) at (1,-0.7){\small{$\ED_{k,j}$}}; 
        \end{tikzpicture}
    \end{aligned} 
\end{equation}
the $0d$ defect is labeled by the functor category $\Fun_{\ED_{i,i}|\ED_{j,j}}(\ED_{i,j},\ED_{i,j})$.
Similar to \eqref{eq:DefectFace}, we can introduce face operators and vertex operators in the vicinity of the defects.
We see that different bulks, domain walls and boundaries are labeled with different components of the input UMFC $\ED$; this kind of model will be called a \emph{restricted multifusion string-net}.
The above discussion implies that: \emph{a restricted multifusion string-net gives  a defective string-net}.
This can be summarized as follows:

\begin{proposition}
For a string-net model on a disk with boundary and boundary defects, the input data can be determined by a multifusion category $\ED=\oplus_{i,j\in I}\ED_{i,j}$ whose diagonal components $\ED_{i,i}$ are UFCs that satisfy $\mathcal{Z}(\ED_{i,i})\simeq \mathcal{Z}(\ED_{j,j})$ for all $i,j\in I$. And we have:
\begin{enumerate}
    \item The bulk input data is $\ED_{0,0}$ and the bulk topological phase is given by the Drinfeld center $\mathcal{Z}(\ED_{0,0})$.
    \item The input data of type-$i$ boundary is $\ED_{0,i}$ and the corresponding boundary excitation is characterized by $\ED_{i,i}$.
    \item The boundary defect between type-$i$ and type-$j$ boundaries (the $i|j$-boundary defects\,\footnote{Since there is a given orientation of the boundary, the $i|j$-defect is different from the $j|i$-defect.}) are described by $\ED_{i,j}$.
\end{enumerate}
\end{proposition}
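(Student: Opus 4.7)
The plan is to build the UMFC $\ED$ directly from the data of the defective string-net, using the structural classification of indecomposable unitary multifusion categories: any such $\ED$ is of the form $\bigoplus_{i,j} \Fun_{\EC}(\EM_i, \EM_j)$ for some UFC $\EC$ and a finite collection $\{\EM_i\}_{i\in I}$ of indecomposable $\EC$-module categories. A defective string-net on a disk already provides exactly these ingredients: the bulk input is a UFC $\EC$, and each gapped boundary type is characterized, by the Kitaev-Kong construction recalled in Sec.~\ref{sec:bdtheory}, by an indecomposable $\EC$-module category.

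First I would package this data. Label the boundary types by $I = \{0,1,\dots,n\}$ with $\EM_0 := \EC$ (viewed as the canonical left $\EC$-module, corresponding to the bulk/smooth interface) and $\EM_1,\dots,\EM_n$ enumerating the remaining boundary types. Define
\begin{equation*}
\ED \;:=\; \bigoplus_{i,j \in I} \Fun_{\EC}(\EM_i, \EM_j),
\end{equation*}
with tensor product $F \otimes G := G \circ F$ and unit $\bigoplus_i \id_{\EM_i}$. Standard results on module categories ensure that $\ED$ is a UMFC whose canonical grading is exactly $\ED_{i,j} = \Fun_{\EC}(\EM_i, \EM_j)$, with each diagonal component $\ED_{i,i} = \Fun_{\EC}(\EM_i,\EM_i)$ a UFC (because $\EM_i$ is indecomposable). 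The required equivalences $\mathcal{Z}(\ED_{i,i}) \simeq \mathcal{Z}(\ED_{j,j})$ then follow immediately from Lemma~\ref{prop:UMFCcenter}(1), proved earlier in the excerpt.

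Next I would verify the three items. For (1), the monoidal equivalence $\Fun_{\EC}(\EC, \EC) \simeq \EC$ gives $\ED_{0,0} \simeq \EC$, so the bulk input is recovered, and the bulk topological phase is $\mathcal{Z}(\ED_{0,0})$ by the standard Levin-Wen result. For (2), the Eilenberg-Watts equivalence $\Fun_{\EC}(\EC, \EM_i) \simeq \EM_i$ as $\EC$-module categories identifies $\ED_{0,i}$ with the input module category of the type-$i$ boundary, while $\ED_{i,i} = \Fun_{\EC}(\EM_i,\EM_i)$ is exactly the category of boundary excitations by the Kitaev-Kong construction. For (3), a $0$-dimensional defect between boundaries of types $i$ and $j$ (oriented $i \to j$) is by definition a $\EC$-module functor $\EM_i \to \EM_j$, i.e.\ an object of $\ED_{i,j}$.

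The main obstacle I anticipate is matching the lattice-level consistency of the defective string-net with the graded structure of $\ED$. Concretely, one must check that the vertex admissibility conditions at defect vertices, which prescribe which triples of labels (from the various strata $\EM_i$, $\Fun_{\EC}(\EM_i,\EM_j)$, etc.) can meet, are precisely the nonvanishing hom-spaces dictated by the graded fusion $\ED_{i,j} \otimes \ED_{j,k} \to \ED_{i,k}$, with vanishing on the off-diagonal mismatches $j \ne k$. Likewise, the local moves (loop, parallel, and F-moves) in the neighborhood of a defect must coincide with the multifusion local moves of Sec.~\ref{Sec:SNmulti} restricted to the relevant graded components. Both reduce to coherence of the $\EC$-module structure on each $\EM_i$ and associativity of the composition functor $\Fun_{\EC}(\EM_i,\EM_j) \times \Fun_{\EC}(\EM_j,\EM_k) \to \Fun_{\EC}(\EM_i,\EM_k)$, which hold by construction of $\ED$, so the obstacle is largely bookkeeping rather than conceptual.
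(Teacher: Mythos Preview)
Your proposal is correct and follows essentially the same approach as the paper. The paper does not attach a formal proof to this proposition in Section~\ref{subsec:DefectSN}, treating it instead as a summary of the preceding labeling discussion; the explicit construction you describe---packaging the boundary module categories $\EM_0=\EC,\EM_1,\dots,\EM_n$ into the UMFC $\bigoplus_{i,j}\Fun_{\EC}(\EM_i,\EM_j)$---is exactly what the paper spells out in Section~\ref{sec:LWbd} (there denoted $\EC^{\mathrm{def}}_{\EM_0,\dots,\EM_n}$), and your invocation of Lemma~\ref{prop:UMFCcenter} for the Drinfeld-center equivalences matches the paper's reasoning.
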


Notice that the space spanned by restricted string-net configurations is a subspace of the corresponding multifusion string-net, where there is no restriction of the string labels for different bulks and domain walls or boundaries, meaning that they all are chosen arbitrarily from $\Irr(\ED)$.

For the reverse direction, we need to consider how to represent a given defective string-net as a restricted multifusion string-net.
This will be illustrated in the next subsection.

\subsection{Example: defective Levin-Wen string-net as multifusion string-net}
\label{sec:LWbd}

To understand why all defective string-net models can be regarded as restricted multifusion string-nets, let us consider the Levin-Wen string-net model as an example. We will demonstrate that \emph{any defective Levin-Wen string-net can be naturally interpreted as a multifusion string-net}, as previously briefly outlined in Refs.~\cite{Kitaev2012boundary,kong2012universal}.
In the context of a defective Levin-Wen string-net, we will focus on boundaries and boundary defects, as illustrated in Fig.~\ref{fig:StringNetBd}. This formulation proves to be sufficiently general, as a domain wall can be seamlessly transformed into a boundary, and domain wall defects can be equivalently transformed into boundary defects.
We will take the disk as an example; the generalization to a more general case is straightforward.

Recall that, in the Kitaev-Kong construction, for a string-net model with the bulk UFC $\EC$, the input data of the gapped boundary is an indecomposable $\EC$-module category $\EM$ \cite{Kitaev2012boundary}.
In the construction of a string-net with a boundary, careful consideration is required for handling both the boundary data in the module category $\EM$ and the bulk data in $\EC$. The labels for boundary edges are represented by simple objects denoted as $a \in \Irr(\EM)$, where no duality structure is inherently present.
To elucidate the meaning of $\bar{a}$, it is necessary to consider $\bar{a}$ as an object in $\EM^{\rm op}$, where $\EM^{\rm op}$ is the category that shares the same objects as $\EM$ but with reversed arrows. This prompts the question of how to define $b \otimes \bar{a}$ and $\bar{a} \otimes b$.
It turns out that $b \otimes \bar{a}$ can be regarded as an object in $\EC$ and $\bar{a}\otimes a$ can be regarded as an object in $\EC^{\vee}_{\EM}:=\Fun_{\EC}(\EM,\EM)$ \cite{Kitaev2012boundary}.
The boundary defects (excitations) are characterized by the $\EC$-module functors between $\EM$ and $\EM$.
For string-net evaluations involving boundaries to be well-defined, we are tasked with embedding both $\EC$ and $\EM$ into a multifusion category.

It is convenient to denote $\EM_0=\EC$ and $\EM_1=\EM$, which are two left $\EC$-module categories.
We can embed $\EC$ and $\EM$ into a multifusion category denoted as $\EC^{\mathrm{def}}_{\EM}:=\EC_{0,0}\oplus \EC_{0,1}\oplus \EC_{1,0}\oplus \EC_{1,1}$.
This expanded category can be understood as a direct sum of four components: 
\begin{enumerate}
    \item UFC $\EC_{0,0}:=\EC$. Notice that we can regard $\EC$ as a category of $\EC$-module functors since we have the equivalence $\EC \simeq \Fun_{\EC}(\EC,\EC)=\Fun_{\EC}(\EM_0,\EM_0)$.
    \item Left $\EC$-module category $\EC_{0,1}:=\EM$.
    \item $\EC_{1,0}:=\EM^{\rm op}$ (the opposite category of $\EM$, which has the same objects as $\EM$ but with reversed morphisms).
    \item $\EC_{1,1}:=\EC_{\EM}^{\vee}$, which encapsulates the category of $\EC$-module functors mapping $\EM$ onto itself and inherently serves as a UFC.
\end{enumerate}

The duality and fusion operation are closed for $\EC$ and $\EC_{\EM}^{\vee}$. The dual object of  $m\in\EM$ in $\EC_\EM^{\rm def}$ can be regarded as $\bar{m}\in \EM^{\rm op}$ which is the same object as $m$ but all arrows are reversed.
Fusion between objects in $\EM$ and $\EM^{\rm op}$ are defined as
\begin{equation}
m\otimes \bar{n}=\bigoplus_{i\in \operatorname{Irr}(\EC)}N_{in}^{m}i\in \EC.
\end{equation}
Fusion between $i\in \EC$ and $m\in\EM$ (resp.~$\bar{n}\in\EM^{\rm op}$) is given by the left (resp.~right) $\EC$-module category structure: $i\otimes m\in \EM$ (resp.~$\bar{n}\otimes i\in \EM^{\rm op}$).
We also need the fusion between $m\in\EM$ and $\tau\in\EC^{\vee}_{\EM}$: Since $\tau$ is a $\EC$-module functor from $\EM$ to itself, we have $m\otimes \tau =\tau(m)\in \EM$. The last one is the fusion of $\bar{m}\in \EM^{\rm op}$ and $n\in \EM$. We can define $\bar{m}\otimes n$ as a $\EC$-module functor by taking each $x\in \EM$ to $(x\otimes \bar{m})\otimes n$, thus $\bar{m}\otimes n\in \EC_{\EM}^{\vee}$.
In summary, we have a multifusion category
\begin{equation}\EC^{\mathrm{def}}_{\EM}=\left(
\begin{array}{cc}
\EC_{00}& \EC_{01}\\
\EC_{10} &\EC_{11}
\end{array}\right)
\end{equation}
with fusion defined as 
\begin{equation}\left(
\begin{array}{cc}
x_{00}& x_{01}\\
x_{10} &x_{11}
\end{array}\right)\otimes \left(
\begin{array}{cc}
y_{00}& y_{01}\\
y_{10} &y_{11}
\end{array}\right)=\left(
\begin{array}{cc}
(x_{00}\otimes y_{00})\oplus (x_{01}\otimes y_{10}) & (x_{00}\otimes y_{01})\oplus (x_{01}\otimes y_{11})\\
(x_{10}\otimes y_{00})\oplus (x_{11}\otimes y_{10}) &(x_{10}\otimes y_{01})\oplus (x_{11}\otimes y_{11}) 
\end{array}\right). \label{eq:mfusion}
\end{equation}
The tensor unit is $\one_{\EC}\oplus \one_{\EC_{\EM}^{\vee}}=\left(\begin{array}{cc} \one_{\EC} & \\ &\one_{\EC_{\EM}^{\vee}}\end{array}\right)$.
Now, we have all of the data we need: $\{N_{ab}^c, \delta_{abc}, d_a, {N'}_{am}^n, \delta'_{amn},d_m\}$, which determine the model.
Thus, to summarize, we see that a defective Levin-Wen string-net can be regarded as a multifusion string-net, for which the bulk is labeled by a diagonal component $\EC_{0,0}$, the boundary is labeled by the category $\EC_{0,1}$ (or $\EC_{1,0}$) which is in the same row (or column) as $\EC_{0,0}$. The boundary defects are characterized by another diagonal component $\EC_{1,1}$. The construction can be straightforwardly generalized to the case of $\Sigma=\Sigma_{g,l}$ with $l$-different boundaries.

For a string-net model with bulk determined by a UFC $\EC$, there are two types of defects: gapped boundaries (codimension~1) and boundary defects (codimension~2).
The gapped boundaries are classified by equivalence classes of $\EC$-module categories $\EM_0,\cdots,\EM_n$, where $\EM_0=\EC$ is the smooth boundary and $\EM_n=\Vect$ is the rough boundary. Note that there exist smooth and rough boundaries for any string-net model.

\begin{figure}[t]
		\centering
		\includegraphics[width=12cm]{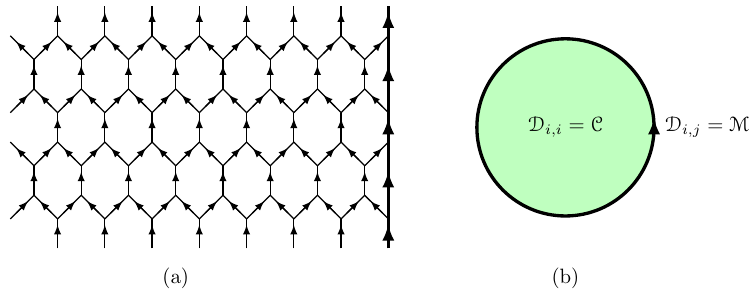}
		\caption{(a) A trivalent lattice in the vicinity of the boundary (bold arrows). (b) A disk with the bulk labeled by $\ED_{i,i}=\EC$ and the boundary labeled by $\ED_{i,j}=\EM$. \label{fig:StringNetBd}} 
\end{figure} 

The above observation can be generalized to the general case and we have the following result:

\begin{proposition}
    Any defective Levin-Wen string-net model can be regarded as a restricted multifusion string-net. 
    For the Levin-Wen string-net model with bulk input UFC $\EC$, the gapped boundaries are classified by equivalence classes of $\EC$-module categories $\EM_0,\cdots,\EM_n$, where $\EM_0=\EC$ is the smooth boundary and $\EM_n=\Vect$ is the rough boundary. %\,\footnote{Note that there exist smooth and rough boundaries for any string-net model.}
    The defective Levin-Wen string-net can be regarded as a multifusion string-net with input UMFC as $\EC_{\EM_0,\cdots,\EM_n}^{\rm def}$, where $\EC_{0,0}=\EC \simeq \mathsf{Fun}_{\EC}(\mathsf{Vect},\mathsf{Vect})$, $\EC_{i,j}=\mathsf{Fun}_{\EC}(\EM_i,\EM_j)$. 
\end{proposition}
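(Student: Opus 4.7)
The plan is to generalize the construction presented earlier for a single gapped boundary $\EM$ to the setting of $n+1$ gapped boundaries $\EM_0,\ldots,\EM_n$, and then to identify the defective Levin-Wen model with the restricted multifusion string-net introduced in Sec.~\ref{subsec:DefectSN}. The starting point is the classification result of Kitaev-Kong: every gapped boundary of the Levin-Wen model with bulk UFC $\EC$ is characterized by an indecomposable left $\EC$-module category, with $\EM_0=\EC$ corresponding to the smooth boundary and $\EM_n=\Vect$ to the rough boundary. This classification I would take as given.

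First, I would define $\EC^{\rm def}_{\EM_0,\ldots,\EM_n}:=\bigoplus_{i,j\in\{0,\ldots,n\}}\EC_{i,j}$ with $\EC_{i,j}:=\Fun_{\EC}(\EM_i,\EM_j)$, and endow it with the monoidal structure given by composition of functors, so that $\EC_{i,j}\otimes\EC_{j,k}\to \EC_{i,k}$ is $(F,G)\mapsto G\comp F$, and zero whenever the intermediate labels do not match. The tensor unit decomposes as $\mathbf{1}=\bigoplus_i \id_{\EM_i}$ with each $\id_{\EM_i}$ simple in $\EC_{i,i}$, matching the canonical grading of a UMFC. Rigidity follows because each $\EC$-module functor between finite semisimple module categories admits both a left and a right adjoint (being Morita duals of each other in the $\EC$-module setting), and the $C^{*}$/unitary structure is inherited from that of $\EC$. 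Thus $\EC^{\rm def}_{\EM_0,\ldots,\EM_n}$ is genuinely a UMFC. The diagonal piece $\EC_{0,0}=\Fun_{\EC}(\EC,\EC)\simeq \EC$ recovers the bulk input, via the Eilenberg-Watts equivalence sending $X\in\EC$ to the functor $-\otimes X$. Lemma~\ref{prop:UMFCcenter} then guarantees $\mathcal{Z}(\EC_{i,i})\simeq \mathcal{Z}(\EC)$ for all $i$, so the bulk topological phase is unchanged no matter which corner one stands in.

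Next I would match the defective Levin-Wen data against the restricted multifusion data, component by component. A $2d$ bulk region of type $i$ gets labeled by simple objects of $\EC_{i,i}$, so in particular the physical bulk, having only type $0$, uses $\EC_{0,0}=\EC$ as before. A $1d$ boundary segment of type $i$, in the Kitaev-Kong construction, carries simple objects of $\EM_i$; under the identification $\EM_i\simeq \Fun_{\EC}(\EC,\EM_i)=\EC_{0,i}$ this is precisely the off-diagonal component sitting between bulk type $0$ and boundary type $i$. A $0d$ defect separating two boundary segments of types $i$ and $j$ is a $\EC$-module functor $\EM_i\to\EM_j$, i.e., an object of $\EC_{i,j}$. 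All local moves (fusion of edges, F-moves, loop and parallel evaluations) coincide, because the monoidal and module-categorical structural maps in the Levin-Wen vertex spaces are exactly the restrictions of the tensor product and associators of $\EC^{\rm def}_{\EM_0,\ldots,\EM_n}$ to graded hom-spaces of the form $\Hom_{\EC_{i,k}}(a\otimes b,c)$ with $a\in\EC_{i,j}$, $b\in\EC_{j,k}$. Together with the matching of the vertex projectors $Q_v$ and face operators $B_f$ (which are defined through these same hom-spaces), this yields a lattice-level identification of the defective Levin-Wen Hamiltonian with the multifusion string-net Hamiltonian for $\EC^{\rm def}_{\EM_0,\ldots,\EM_n}$ restricted to labellings respecting the grading assignment dictated by the layout of defects.

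The step I expect to be the main obstacle is not the algebraic setup but the careful verification that the restriction to graded labellings is consistent with all topological local moves, in particular that the loop move and the F-move never produce intermediate edges whose grading falls outside the prescribed pattern. This follows from the multiplication rule $\EC_{i,j}\otimes\EC_{k,l}=0$ for $j\neq k$ and from the fact that every simple object sits in a unique graded component, but one has to check it on each elementary move and on the definition of $B_f^k$ near a defect (where $k$ must be chosen in the diagonal component associated with the face). Once this compatibility is established, the equivalence between the defective Levin-Wen model and the restricted multifusion string-net with input UMFC $\EC^{\rm def}_{\EM_0,\ldots,\EM_n}$ is immediate.
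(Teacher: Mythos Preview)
Your proposal is correct and follows the same approach as the paper, but is considerably more detailed. The paper's proof is a single sentence citing a result of Etingof--Nikshych--Ostrik that $\EC^{\mathrm{def}}_{\EM}$ is a unitary multifusion category; the identification of the defective Levin-Wen data with the restricted multifusion data is treated as already established by the preceding discussion in Sec.~\ref{sec:LWbd}. Your version spells out the UMFC structure, the matching of bulk/boundary/defect labels with graded components, and the compatibility of the topological local moves with the grading, all of which the paper leaves implicit or relegates to the surrounding text rather than to the proof itself.
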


\begin{proof}
    It is proved in Ref.~\cite{etingof2005fusion} that, for a given UFC $\EC$ and an indecomposable $\EC$-module category $\EM$, the category $\EC^{\mathrm{def}}_{\EM}$ is a unitary multifusion category.
\end{proof}

\begin{table}[t]
\centering \small 
\begin{tabular} {|l|c|c|c|} 
\hline
   & UMFC string-net & Kitaev-Kong string-net  & weak Hopf QD  \\ \hline
 Bulk  &  $\ED_{i,i}$ & $\EC=\ED_{i,i}=\mathsf{Rep}(W)$ & WHA $W$ \\ \hline
 Bulk phase  & 
 $\mathcal{Z}(\ED_{i,i})$ & $\mathcal{Z}(\EC)\simeq \mathsf{Fun}_{\EC|\EC}(\EC,\EC) $ & $\mathsf{Rep}(D(W))$  \\ \hline
 Boundary & $\ED_{i,j}$ & $\EM\simeq \ED_{i,j}\simeq   \mathsf{Mod}_{\mathfrak{A}}$ & $W$-comodule algebra $\mathfrak{A}$ \\\hline
 Boundary phase & $\ED_{j,j}$ & $\mathsf{Fun}_{\EC}(\EM,\EM)$ & ${^W_{\mathfrak{A}}}\mathsf{Mod}_{\mathfrak{A}}$ \\\hline
Boundary defect & $\ED_{j,k}$ & $\mathsf{Fun}_{\EC}(\EM,\EN)$  & 
${^W_{\mathfrak{B}}}\mathsf{Mod}_{\mathfrak{A}}$ \\\hline
\end{tabular}
\caption{The dictionary between $W$-comodule algebra description of weak Hopf quantum double boundary and string-net boundary.\label{tab:bdTop2}}
\end{table}

\begin{remark}
The above discussion also applies to the defective multifusion string-net model.
\end{remark}

Conversely, given a UMFC, it is also possible to construct a defective Levin-Wen string-net as we have pointed out before.
Suppose that we have a UMFC $\ED=\oplus_{i,j\in I}\ED_{i,j}$ with $|\Irr(\ED_{i,j})|=n_{i,j}$. Each diagonal component has a distinguished object $\one_i\in \ED_{i,i}$, and we have $\one_i\otimes X_{i,j}\cong X_{i,j}$ and $X_{i,j}\otimes \one_j\cong X_{i,j}$ for all $X_{i,j}\in \ED_{i,j}$. The tensor unit of $\ED$ is the direct sum of these objects $\one=\oplus_{i\in I}\one_i$.
The bulk's input data can be chosen as an arbitrary diagonal component $\ED_{i,i}$.
The boundary input data now can be chosen as some component $\ED_{i,j}$ which is a left $\ED_{i,i}$-module category. Since the dual object of $X_{i,j}\in \ED_{i,j}$ is $Y_{j,i}\in\ED_{j,i}$, the complete data for the string-net model with a single boundary (i.e., on a disk) is the following sub-UMFC of $\ED$
\begin{equation}\ED[i;j]=\left(
\begin{array}{cc}
\ED_{i,i}& \ED_{i,j}\\
\ED_{j,i} &\ED_{j,j}
\end{array}\right)
\end{equation}
with fusion defined as 
\begin{align}
&\left(
\begin{array}{cc}
X_{i,i}& X_{i,j}\\
X_{j,i} &X_{j,j}
\end{array}\right)\otimes \left(
\begin{array}{cc}
Y_{i,i}& Y_{i,j}\\
Y_{j,i} &Y_{j,j}
\end{array}\right)\nonumber \\
= &\left(
\begin{array}{cc}
(X_{i,i}\otimes Y_{i,i})\oplus (X_{i,j}\otimes Y_{j,i}) & (X_{i,i}\otimes Y_{i,j})\oplus (X_{i,j}\otimes Y_{j,j})\\
(X_{j,i}\otimes Y_{i,i})\oplus (X_{j,j}\otimes Y_{j,i}) &(X_{j,i}\otimes Y_{i,j})\oplus (X_{j,j}\otimes Y_{j,j}) 
\end{array}\right). \label{eq:mfusionD}
\end{align}
Notice that for a fixed bulk $\ED_{i,i}$, the UMFC gives $|I|$ different boundaries which are characterized by $\ED[i,j]$ with $j\in I$. 
There always exists a special case that $\ED[i;i]=\ED_{i,i}$, called smooth boundary, where both the bulk and boundary are characterized by the UFC $\ED_{i,i}$.

\begin{proposition}
	For a string-net with gapped boundary characterized by $\ED[i;j]$, where the bulk label is taken from $\ED_{i,i}$, and the boundary label is taken from $\ED_{i,j}$:
\begin{enumerate}
		\item The bulk phase is given by the Drinfeld center $\mathcal{Z}(\ED_{i,i})$ of $\ED_{i,i}$.
		\item The boundary phase is given by the $\ED_{j,j}$.
	%	\item The boundary-bulk duality is guaranteed by $\mathcal{Z}(\ED_{i,i}) \simeq \mathcal{Z}(\ED_{j,j})$.
\end{enumerate}
\end{proposition}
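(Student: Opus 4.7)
The strategy is to recognize $\ED[i;j]$ as an indecomposable sub-UMFC of $\ED$ and then reduce both statements to results already established in the paper, combined with one standard fact from the theory of invertible bimodule categories.

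\textbf{Part 1 (bulk phase).} The plan is to view the model in two equivalent ways. On the one hand, the bulk region is labeled only by $\ED_{i,i}$, which is a UFC, so locally the bulk is the (original, fusion-category) Levin-Wen string-net with input $\ED_{i,i}$, whose topological excitations are well known to be $\mathcal{Z}(\ED_{i,i})$. On the other hand, one can regard the whole model with its boundary as the multifusion string-net $\mathsf{SN}_{\ED[i;j]}$. Applying Theorem~\ref{thm:UMFC_SN} to $\ED[i;j]$ gives bulk excitations $\mathcal{Z}(\ED[i;j])$, and then applying Lemma~\ref{prop:UMFCcenter} (once one checks that $\ED[i;j]$ is indecomposable, which follows from nonvanishing of $\ED_{i,j}$ and $\ED_{j,i}$ as $\ED_{i,i}|\ED_{j,j}$-bimodule categories inside the indecomposable $\ED$) identifies $\mathcal{Z}(\ED[i;j])\simeq \mathcal{Z}(\ED[i;j]_{i,i})=\mathcal{Z}(\ED_{i,i})$ as UMTCs. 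The two pictures agree, proving part~1.

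\textbf{Part 2 (boundary phase).} The key is to interpret $\ED_{i,j}$ as a left $\ED_{i,i}$-module category via the tensor product in $\ED$. By the macroscopic boundary theory summarized in Section~\ref{sec:bdtheory}, the boundary excitations for the gapped boundary ${_{\ED_{i,i}}}\ED_{i,j}$ form the tensor category $\Fun_{\ED_{i,i}}(\ED_{i,j},\ED_{i,j})$. The remaining step is to exhibit a tensor equivalence
\begin{equation*}
\ED_{j,j}\;\xrightarrow{\ \simeq\ }\;\Fun_{\ED_{i,i}}(\ED_{i,j},\ED_{i,j}),\qquad d\;\mapsto\;R_d,\quad R_d(m):=m\otimes d.
\end{equation*}
That $R_d$ is a left $\ED_{i,i}$-module functor follows from associativity of $\otimes$ in $\ED$, and $R_{d_1}\otimes R_{d_2}=R_{d_2}\comp R_{d_1}=R_{d_1\otimes d_2}$ with the convention $F\otimes G:=G\comp F$ adopted in Section~\ref{subsec:FuntorExcitation}, so $R$ is a tensor functor. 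Fully faithfulness and essential surjectivity follow from the fact that $\ED_{i,j}$ is an \emph{invertible} $\ED_{i,i}|\ED_{j,j}$-bimodule category in the indecomposable multifusion $\ED$, i.e.\ $\ED_{j,i}\boxtimes_{\ED_{i,i}}\ED_{i,j}\simeq \ED_{j,j}$ and $\ED_{i,j}\boxtimes_{\ED_{j,j}}\ED_{j,i}\simeq \ED_{i,i}$; together these identify the right-action functor $R$ with the canonical equivalence between a module-category algebra and its dual. Composing gives $\Fun_{\ED_{i,i}}(\ED_{i,j},\ED_{i,j})\simeq \ED_{j,j}$ as UMFCs, which is the claim.

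\textbf{Main obstacle.} Part~1 is essentially a direct application of results already proved in the excerpt. The nontrivial step is Part~2: verifying that $R\colon \ED_{j,j}\to \Fun_{\ED_{i,i}}(\ED_{i,j},\ED_{i,j})$ is an equivalence of tensor categories. This relies on the invertibility of $\ED_{i,j}$ as a bimodule category, a standard fact about indecomposable multifusion categories (see Etingof--Nikshych--Ostrik and Section~\ref{sec:walltheory}), but requires checking that our conventions (e.g., the orientation of the boundary and the tensor product on $\Fun$ being reverse composition) are consistent so that one obtains $\ED_{j,j}$ on the nose rather than $\ED_{j,j}^{\mathrm{rev}}$. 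Once this bookkeeping is settled, both statements follow cleanly.
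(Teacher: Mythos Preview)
Your proposal is correct and follows the same logic the paper intends. The paper's own proof is the single line ``This is evident from our previous discussion,'' deferring to Section~\ref{sec:DefectSN} where the correspondence between defective string-nets and restricted multifusion string-nets already identifies the bulk phase with $\mathcal{Z}(\ED_{i,i})$ (via Lemma~\ref{prop:UMFCcenter} and Theorem~\ref{thm:UMFC_SN}) and $\ED_{j,j}$ with $\EC_{\EM}^{\vee}=\Fun_{\ED_{i,i}}(\ED_{i,j},\ED_{i,j})$; your Parts~1 and~2 simply make these two identifications explicit, so the approaches coincide.
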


\begin{proof}
This is evident from our previous discussion.
\end{proof}

%=======

\section{Conclusion and discussion}
\label{sec:conclusion}

In this work, we introduced the generalized multifusion string-net model and extensively discussed its macroscopic and microscopic properties.
Despite the progress that has been made, there remain several unresolved issues that warrant further investigation:

(i) A lattice realization of the correspondence between the multifusion string-net model and weak Hopf lattice gauge theory.
While for the Hopf case, the lattice realization of the equivalence is established in Refs.~\cite{Buerschaper2009mapping,buerschaper2013electric,jia2023boundary}, the problem for weak Hopf gauge symmetry largely remains open.

(ii) Applications in symmetry-enriched topological (SET) phases.
It has been noted in Ref.~\cite{chang2015enriching} that the multifusion string-net model provides a suitable framework for studying the SET phase.
A systematic and thorough investigation in this direction is still necessary.

(iii) The entanglement properties of the multifusion string-net are also of crucial importance. We know that the topological entanglement entropy is related to the quantum dimension of the topological excitations, and the entanglement entropy is sensitive to the defects of the model. Examining the entanglement features of the multifusion string-net model can provide us with deeper insights into its topological properties.

These problems will be the focus of investigation in our future endeavors.

%\vspace{1em}
\subsection*{Acknowledgements}
Z.J. sincerely thanks Liang Kong for sharing his valuable insights on the boundary theory of the Levin-Wen string-net model. Z.J. also acknowledges Quan Chen for bringing several references on DHR-modules to his attention. 
S.T. would like to express gratitude to Zhengwei Liu, Shuang Ming, Yilong Wang, Jinsong Wu and Sebastien Palcoux for their many helpful conversations.  
All authors are grateful for the referee's valuable suggestions. 
Z.J. and D.K. are supported by the National
Research Foundation in Singapore and A*STAR under
its CQT Bridging Grant.
S.T. is supported by postdoc fund from BIMSA.

%\vspace{1em}

%\appendix

%\section{Module categories over UMFC}

\bibliographystyle{apsrev4-1-title}
\bibliography{mybib}

%merlin.mbs apsrev4-1.bst 2010-07-25 4.21a (PWD, AO, DPC) hacked
%Control: key (0)
%Control: author (72) initials jnrlst
%Control: editor formatted (1) identically to author
%Control: production of article title (1) required
%Control: page (0) single
%Control: year (1) truncated
%Control: production of eprint (0) enabled
\begin{thebibliography}{88}%
\makeatletter
\providecommand \@ifxundefined [1]{%
 \@ifx{#1\undefined}
}%
\providecommand \@ifnum [1]{%
 \ifnum #1\expandafter \@firstoftwo
 \else \expandafter \@secondoftwo
 \fi
}%
\providecommand \@ifx [1]{%
 \ifx #1\expandafter \@firstoftwo
 \else \expandafter \@secondoftwo
 \fi
}%
\providecommand \natexlab [1]{#1}%
\providecommand \enquote  [1]{``#1''}%
\providecommand \bibnamefont  [1]{#1}%
\providecommand \bibfnamefont [1]{#1}%
\providecommand \citenamefont [1]{#1}%
\providecommand \href@noop [0]{\@secondoftwo}%
\providecommand \href [0]{\begingroup \@sanitize@url \@href}%
\providecommand \@href[1]{\@@startlink{#1}\@@href}%
\providecommand \@@href[1]{\endgroup#1\@@endlink}%
\providecommand \@sanitize@url [0]{\catcode `\\12\catcode `\$12\catcode
  `\&12\catcode `\#12\catcode `\^12\catcode `\_12\catcode `\%12\relax}%
\providecommand \@@startlink[1]{}%
\providecommand \@@endlink[0]{}%
\providecommand \url  [0]{\begingroup\@sanitize@url \@url }%
\providecommand \@url [1]{\endgroup\@href {#1}{\urlprefix }}%
\providecommand \urlprefix  [0]{URL }%
\providecommand \Eprint [0]{\href }%
\providecommand \doibase [0]{http://dx.doi.org/}%
\providecommand \selectlanguage [0]{\@gobble}%
\providecommand \bibinfo  [0]{\@secondoftwo}%
\providecommand \bibfield  [0]{\@secondoftwo}%
\providecommand \translation [1]{[#1]}%
\providecommand \BibitemOpen [0]{}%
\providecommand \bibitemStop [0]{}%
\providecommand \bibitemNoStop [0]{.\EOS\space}%
\providecommand \EOS [0]{\spacefactor3000\relax}%
\providecommand \BibitemShut  [1]{\csname bibitem#1\endcsname}%
\let\auto@bib@innerbib\@empty
%</preamble>
\bibitem [{\citenamefont {Sachdev}(2011)}]{sachdev1999quantum}%
  \BibitemOpen
  \bibfield  {author} {\bibinfo {author} {\bibfnamefont {S.}~\bibnamefont
  {Sachdev}},\ }\href
  {https://www.cambridge.org/core/books/quantum-phase-transitions/33C1C81500346005E54C1DE4223E5562}
  {\emph {\bibinfo {title} {Quantum phase transitions 2nd edition}}}\ (\bibinfo
   {publisher} {Cambridge University Press},\ \bibinfo {year}
  {2011})\BibitemShut {NoStop}%
\bibitem [{\citenamefont {Wen}(2004)}]{Wen2004}%
  \BibitemOpen
  \bibfield  {author} {\bibinfo {author} {\bibfnamefont {X.-G.}\ \bibnamefont
  {Wen}},\ }\href
  {http://www.oxfordscholarship.com/view/10.1093/acprof:oso/9780199227259.001.0001/acprof-9780199227259}
  {\emph {\bibinfo {title} {Quantum field theory of many-body systems: from the
  origin of sound to an origin of light and electrons}}}\ (\bibinfo
  {publisher} {Oxford University Press on Demand},\ \bibinfo {year}
  {2004})\BibitemShut {NoStop}%
\bibitem [{\citenamefont {Zeng}\ \emph {et~al.}(2019)\citenamefont {Zeng},
  \citenamefont {Chen}, \citenamefont {Zhou},\ and\ \citenamefont
  {Wen}}]{zeng2015quantum}%
  \BibitemOpen
  \bibfield  {author} {\bibinfo {author} {\bibfnamefont {B.}~\bibnamefont
  {Zeng}}, \bibinfo {author} {\bibfnamefont {X.}~\bibnamefont {Chen}}, \bibinfo
  {author} {\bibfnamefont {D.-L.}\ \bibnamefont {Zhou}}, \ and\ \bibinfo
  {author} {\bibfnamefont {X.-G.}\ \bibnamefont {Wen}},\ }\href
  {https://link.springer.com/book/10.1007/978-1-4939-9084-9} {\emph {\bibinfo
  {title} {Quantum Information Meets Quantum Matter--From Quantum Entanglement
  to Topological Phase in Many-Body Systems}}}\ (\bibinfo  {publisher}
  {Springer},\ \bibinfo {year} {2019})\ \Eprint
  {http://arxiv.org/abs/1508.02595} {arXiv:1508.02595 [cond-mat.str-el]}
  \BibitemShut {NoStop}%
\bibitem [{\citenamefont {Simon}(2023)}]{simon2023topological}%
  \BibitemOpen
  \bibfield  {author} {\bibinfo {author} {\bibfnamefont {S.~H.}\ \bibnamefont
  {Simon}},\ }\href
  {https://global.oup.com/academic/product/topological-quantum-9780198886723?cc=sg&lang=en&}
  {\emph {\bibinfo {title} {Topological Quantum}}}\ (\bibinfo  {publisher}
  {Oxford University Press},\ \bibinfo {year} {2023})\BibitemShut {NoStop}%
\bibitem [{\citenamefont {Dennis}\ \emph {et~al.}(2002)\citenamefont {Dennis},
  \citenamefont {Kitaev}, \citenamefont {Landahl},\ and\ \citenamefont
  {Preskill}}]{Dennis2002topological}%
  \BibitemOpen
  \bibfield  {author} {\bibinfo {author} {\bibfnamefont {E.}~\bibnamefont
  {Dennis}}, \bibinfo {author} {\bibfnamefont {A.}~\bibnamefont {Kitaev}},
  \bibinfo {author} {\bibfnamefont {A.}~\bibnamefont {Landahl}}, \ and\
  \bibinfo {author} {\bibfnamefont {J.}~\bibnamefont {Preskill}},\ }\bibfield
  {title} {\enquote {\bibinfo {title} {Topological quantum memory},}\ }\href
  {\doibase 10.1063/1.1499754} {\bibfield  {journal} {\bibinfo  {journal}
  {Journal of Mathematical Physics}\ }\textbf {\bibinfo {volume} {43}},\
  \bibinfo {pages} {4452–4505} (\bibinfo {year} {2002})},\ \Eprint
  {http://arxiv.org/abs/quant-ph/0110143} {arXiv:quant-ph/0110143 [quant-ph]}
  \BibitemShut {NoStop}%
\bibitem [{\citenamefont {Terhal}(2015)}]{Terhal2015quantum}%
  \BibitemOpen
  \bibfield  {author} {\bibinfo {author} {\bibfnamefont {B.~M.}\ \bibnamefont
  {Terhal}},\ }\bibfield  {title} {\enquote {\bibinfo {title} {Quantum error
  correction for quantum memories},}\ }\href {\doibase
  10.1103/RevModPhys.87.307} {\bibfield  {journal} {\bibinfo  {journal} {Rev.
  Mod. Phys.}\ }\textbf {\bibinfo {volume} {87}},\ \bibinfo {pages} {307}
  (\bibinfo {year} {2015})},\ \Eprint {http://arxiv.org/abs/1302.3428}
  {arXiv:1302.3428 [quant-ph]} \BibitemShut {NoStop}%
\bibitem [{\citenamefont {Kitaev}(2003)}]{Kitaev2003}%
  \BibitemOpen
  \bibfield  {author} {\bibinfo {author} {\bibfnamefont {A.~Y.}\ \bibnamefont
  {Kitaev}},\ }\bibfield  {title} {\enquote {\bibinfo {title} {Fault-tolerant
  quantum computation by anyons},}\ }\href {\doibase
  https://doi.org/10.1016/S0003-4916(02)00018-0} {\bibfield  {journal}
  {\bibinfo  {journal} {Annals of Physics}\ }\textbf {\bibinfo {volume}
  {303}},\ \bibinfo {pages} {2 } (\bibinfo {year} {2003})},\ \Eprint
  {http://arxiv.org/abs/quant-ph/9707021} {arXiv:quant-ph/9707021 [quant-ph]}
  \BibitemShut {NoStop}%
\bibitem [{\citenamefont {Freedman}\ \emph {et~al.}(2002)\citenamefont
  {Freedman}, \citenamefont {Larsen},\ and\ \citenamefont
  {Wang}}]{freedman2002modular}%
  \BibitemOpen
  \bibfield  {author} {\bibinfo {author} {\bibfnamefont {M.~H.}\ \bibnamefont
  {Freedman}}, \bibinfo {author} {\bibfnamefont {M.}~\bibnamefont {Larsen}}, \
  and\ \bibinfo {author} {\bibfnamefont {Z.}~\bibnamefont {Wang}},\ }\bibfield
  {title} {\enquote {\bibinfo {title} {A modular functor which is universal for
  quantum computation},}\ }\href
  {https://link.springer.com/article/10.1007/s002200200645} {\bibfield
  {journal} {\bibinfo  {journal} {Communications in Mathematical Physics}\
  }\textbf {\bibinfo {volume} {227}},\ \bibinfo {pages} {605} (\bibinfo {year}
  {2002})},\ \Eprint {http://arxiv.org/abs/quant-ph/0001108}
  {arXiv:quant-ph/0001108 [quant-ph]} \BibitemShut {NoStop}%
\bibitem [{\citenamefont {Nayak}\ \emph {et~al.}(2008)\citenamefont {Nayak},
  \citenamefont {Simon}, \citenamefont {Stern}, \citenamefont {Freedman},\ and\
  \citenamefont {Das~Sarma}}]{Nayak2008}%
  \BibitemOpen
  \bibfield  {author} {\bibinfo {author} {\bibfnamefont {C.}~\bibnamefont
  {Nayak}}, \bibinfo {author} {\bibfnamefont {S.~H.}\ \bibnamefont {Simon}},
  \bibinfo {author} {\bibfnamefont {A.}~\bibnamefont {Stern}}, \bibinfo
  {author} {\bibfnamefont {M.}~\bibnamefont {Freedman}}, \ and\ \bibinfo
  {author} {\bibfnamefont {S.}~\bibnamefont {Das~Sarma}},\ }\bibfield  {title}
  {\enquote {\bibinfo {title} {Non-{A}belian anyons and topological quantum
  computation},}\ }\href {\doibase 10.1103/RevModPhys.80.1083} {\bibfield
  {journal} {\bibinfo  {journal} {Rev. Mod. Phys.}\ }\textbf {\bibinfo {volume}
  {80}},\ \bibinfo {pages} {1083} (\bibinfo {year} {2008})},\ \Eprint
  {http://arxiv.org/abs/0707.1889} {arXiv:0707.1889 [cond-mat.str-el]}
  \BibitemShut {NoStop}%
\bibitem [{\citenamefont {Wang}(2010)}]{wang2010topological}%
  \BibitemOpen
  \bibfield  {author} {\bibinfo {author} {\bibfnamefont {Z.}~\bibnamefont
  {Wang}},\ }\href {https://web.math.ucsb.edu/~zhenghwa/data/course/cbms.pdf}
  {\emph {\bibinfo {title} {Topological quantum computation}}},\ \bibinfo
  {series} {CBMS}\ No.\ \bibinfo {number} {112}\ (\bibinfo  {publisher}
  {American Mathematical Soc.},\ \bibinfo {year} {2010})\BibitemShut {NoStop}%
\bibitem [{\citenamefont {Pachos}(2012)}]{pachos2012introduction}%
  \BibitemOpen
  \bibfield  {author} {\bibinfo {author} {\bibfnamefont {J.~K.}\ \bibnamefont
  {Pachos}},\ }\href
  {https://www.cambridge.org/core/books/introduction-to-topological-quantum-computation/F6C4B2C9F83E434E9BF3F73E492231F0}
  {\emph {\bibinfo {title} {Introduction to topological quantum computation}}}\
  (\bibinfo  {publisher} {Cambridge University Press},\ \bibinfo {year}
  {2012})\BibitemShut {NoStop}%
\bibitem [{\citenamefont {Chatterjee}\ \emph {et~al.}(2023)\citenamefont
  {Chatterjee}, \citenamefont {Ji},\ and\ \citenamefont
  {Wen}}]{chatterjee2023emergent}%
  \BibitemOpen
  \bibfield  {author} {\bibinfo {author} {\bibfnamefont {A.}~\bibnamefont
  {Chatterjee}}, \bibinfo {author} {\bibfnamefont {W.}~\bibnamefont {Ji}}, \
  and\ \bibinfo {author} {\bibfnamefont {X.-G.}\ \bibnamefont {Wen}},\
  }\href@noop {} {\enquote {\bibinfo {title} {Emergent generalized symmetry and
  maximal symmetry-topological-order},}\ } (\bibinfo {year} {2023}),\ \Eprint
  {http://arxiv.org/abs/2212.14432} {arXiv:2212.14432 [cond-mat.str-el]}
  \BibitemShut {NoStop}%
\bibitem [{\citenamefont {Wen}(2023)}]{Wen2023emergent}%
  \BibitemOpen
  \bibfield  {author} {\bibinfo {author} {\bibfnamefont {X.-G.}\ \bibnamefont
  {Wen}},\ }\href@noop {} {\enquote {\bibinfo {title} {Emergent generalized
  symmetry to gapless liquids theory ({I}) and ({II})},}\ }\bibinfo
  {howpublished} {Princeton Summer School on Condensed Matter Physics 2023}
  (\bibinfo {year} {2023})\BibitemShut {NoStop}%
\bibitem [{\citenamefont {Witten}(1988)}]{witten1988topological}%
  \BibitemOpen
  \bibfield  {author} {\bibinfo {author} {\bibfnamefont {E.}~\bibnamefont
  {Witten}},\ }\bibfield  {title} {\enquote {\bibinfo {title} {Topological
  quantum field theory},}\ }\href
  {https://link.springer.com/article/10.1007/BF01223371} {\bibfield  {journal}
  {\bibinfo  {journal} {Communications in Mathematical Physics}\ }\textbf
  {\bibinfo {volume} {117}},\ \bibinfo {pages} {353} (\bibinfo {year}
  {1988})}\BibitemShut {NoStop}%
\bibitem [{\citenamefont {Dijkgraaf}\ and\ \citenamefont
  {Witten}(1990)}]{dijkgraaf1990topological}%
  \BibitemOpen
  \bibfield  {author} {\bibinfo {author} {\bibfnamefont {R.}~\bibnamefont
  {Dijkgraaf}}\ and\ \bibinfo {author} {\bibfnamefont {E.}~\bibnamefont
  {Witten}},\ }\bibfield  {title} {\enquote {\bibinfo {title} {Topological
  gauge theories and group cohomology},}\ }\href
  {https://link.springer.com/article/10.1007/BF02096988} {\bibfield  {journal}
  {\bibinfo  {journal} {Communications in Mathematical Physics}\ }\textbf
  {\bibinfo {volume} {129}},\ \bibinfo {pages} {393} (\bibinfo {year}
  {1990})}\BibitemShut {NoStop}%
\bibitem [{\citenamefont {Wilczek}(1990)}]{wilczek1990fractional}%
  \BibitemOpen
  \bibfield  {author} {\bibinfo {author} {\bibfnamefont {F.}~\bibnamefont
  {Wilczek}},\ }\href
  {https://www.worldscientific.com/worldscibooks/10.1142/0961#t=aboutBook}
  {\emph {\bibinfo {title} {Fractional statistics and anyon
  superconductivity}}},\ Vol.~\bibinfo {volume} {5}\ (\bibinfo  {publisher}
  {World scientific},\ \bibinfo {year} {1990})\BibitemShut {NoStop}%
\bibitem [{\citenamefont {Wen}(1989)}]{Wen1989}%
  \BibitemOpen
  \bibfield  {author} {\bibinfo {author} {\bibfnamefont {X.~G.}\ \bibnamefont
  {Wen}},\ }\bibfield  {title} {\enquote {\bibinfo {title} {Vacuum degeneracy
  of chiral spin states in compactified space},}\ }\href {\doibase
  10.1103/PhysRevB.40.7387} {\bibfield  {journal} {\bibinfo  {journal} {Phys.
  Rev. B}\ }\textbf {\bibinfo {volume} {40}},\ \bibinfo {pages} {7387}
  (\bibinfo {year} {1989})}\BibitemShut {NoStop}%
\bibitem [{\citenamefont {Wen}\ and\ \citenamefont {Niu}(1990)}]{Wen1990}%
  \BibitemOpen
  \bibfield  {author} {\bibinfo {author} {\bibfnamefont {X.-G.}\ \bibnamefont
  {Wen}}\ and\ \bibinfo {author} {\bibfnamefont {Q.}~\bibnamefont {Niu}},\
  }\bibfield  {title} {\enquote {\bibinfo {title} {Ground-state degeneracy of
  the fractional quantum {H}all states in the presence of a random potential
  and on high-genus {R}iemann surfaces},}\ }\href {\doibase
  10.1103/PhysRevB.41.9377} {\bibfield  {journal} {\bibinfo  {journal} {Phys.
  Rev. B}\ }\textbf {\bibinfo {volume} {41}},\ \bibinfo {pages} {9377}
  (\bibinfo {year} {1990})}\BibitemShut {NoStop}%
\bibitem [{\citenamefont {Levin}\ and\ \citenamefont {Wen}(2005)}]{Levin2005}%
  \BibitemOpen
  \bibfield  {author} {\bibinfo {author} {\bibfnamefont {M.~A.}\ \bibnamefont
  {Levin}}\ and\ \bibinfo {author} {\bibfnamefont {X.-G.}\ \bibnamefont
  {Wen}},\ }\bibfield  {title} {\enquote {\bibinfo {title} {String-net
  condensation: A physical mechanism for topological phases},}\ }\href
  {\doibase 10.1103/PhysRevB.71.045110} {\bibfield  {journal} {\bibinfo
  {journal} {Phys. Rev. B}\ }\textbf {\bibinfo {volume} {71}},\ \bibinfo
  {pages} {045110} (\bibinfo {year} {2005})},\ \Eprint
  {http://arxiv.org/abs/cond-mat/0404617} {arXiv:cond-mat/0404617
  [cond-mat.str-el]} \BibitemShut {NoStop}%
\bibitem [{\citenamefont {Landau}(1937)}]{landau1937theorie}%
  \BibitemOpen
  \bibfield  {author} {\bibinfo {author} {\bibfnamefont {L.}~\bibnamefont
  {Landau}},\ }\bibfield  {title} {\enquote {\bibinfo {title} {Zur theorie der
  phasenumwandlungen {II}},}\ }\href@noop {} {\bibfield  {journal} {\bibinfo
  {journal} {Phys. Z. Sowjetunion}\ }\textbf {\bibinfo {volume} {11}},\
  \bibinfo {pages} {26} (\bibinfo {year} {1937})}\BibitemShut {NoStop}%
\bibitem [{\citenamefont {Kong}\ and\ \citenamefont
  {Wen}(2014)}]{kong2014braided}%
  \BibitemOpen
  \bibfield  {author} {\bibinfo {author} {\bibfnamefont {L.}~\bibnamefont
  {Kong}}\ and\ \bibinfo {author} {\bibfnamefont {X.-G.}\ \bibnamefont {Wen}},\
  }\bibfield  {title} {\enquote {\bibinfo {title} {Braided fusion categories,
  gravitational anomalies, and the mathematical framework for topological
  orders in any dimensions},}\ }\href {https://arxiv.org/abs/1405.5858}
  {\bibfield  {journal} {\bibinfo  {journal} {arXiv preprint arXiv:1405.5858}\
  } (\bibinfo {year} {2014})}\BibitemShut {NoStop}%
\bibitem [{\citenamefont {Wen}(2013)}]{Wen2015classifying}%
  \BibitemOpen
  \bibfield  {author} {\bibinfo {author} {\bibfnamefont {X.-G.}\ \bibnamefont
  {Wen}},\ }\bibfield  {title} {\enquote {\bibinfo {title} {Classifying gauge
  anomalies through symmetry-protected trivial orders and classifying
  gravitational anomalies through topological orders},}\ }\href {\doibase
  10.1103/PhysRevD.88.045013} {\bibfield  {journal} {\bibinfo  {journal} {Phys.
  Rev. D}\ }\textbf {\bibinfo {volume} {88}},\ \bibinfo {pages} {045013}
  (\bibinfo {year} {2013})},\ \Eprint {http://arxiv.org/abs/1303.1803}
  {arXiv:1303.1803 [hep-th]} \BibitemShut {NoStop}%
\bibitem [{\citenamefont {Kitaev}(2006)}]{KITAEV2006}%
  \BibitemOpen
  \bibfield  {author} {\bibinfo {author} {\bibfnamefont {A.}~\bibnamefont
  {Kitaev}},\ }\bibfield  {title} {\enquote {\bibinfo {title} {Anyons in an
  exactly solved model and beyond},}\ }\href {\doibase
  https://doi.org/10.1016/j.aop.2005.10.005} {\bibfield  {journal} {\bibinfo
  {journal} {Annals of Physics}\ }\textbf {\bibinfo {volume} {321}},\ \bibinfo
  {pages} {2 } (\bibinfo {year} {2006})},\ \bibinfo {note} {{J}anuary Special
  Issue}\BibitemShut {NoStop}%
\bibitem [{\citenamefont {Kane}\ and\ \citenamefont
  {Fisher}(1997)}]{Kane1997quantized}%
  \BibitemOpen
  \bibfield  {author} {\bibinfo {author} {\bibfnamefont {C.~L.}\ \bibnamefont
  {Kane}}\ and\ \bibinfo {author} {\bibfnamefont {M.~P.~A.}\ \bibnamefont
  {Fisher}},\ }\bibfield  {title} {\enquote {\bibinfo {title} {Quantized
  thermal transport in the fractional quantum {H}all effect},}\ }\href
  {\doibase 10.1103/PhysRevB.55.15832} {\bibfield  {journal} {\bibinfo
  {journal} {Phys. Rev. B}\ }\textbf {\bibinfo {volume} {55}},\ \bibinfo
  {pages} {15832} (\bibinfo {year} {1997})},\ \Eprint
  {http://arxiv.org/abs/cond-mat/9603118} {arXiv:cond-mat/9603118 [cond-mat]}
  \BibitemShut {NoStop}%
\bibitem [{\citenamefont {Turaev}\ and\ \citenamefont
  {Viro}(1992)}]{turaev1992state}%
  \BibitemOpen
  \bibfield  {author} {\bibinfo {author} {\bibfnamefont {V.~G.}\ \bibnamefont
  {Turaev}}\ and\ \bibinfo {author} {\bibfnamefont {O.~Y.}\ \bibnamefont
  {Viro}},\ }\bibfield  {title} {\enquote {\bibinfo {title} {State sum
  invariants of 3-manifolds and quantum 6$j$-symbols},}\ }\href
  {https://www.sciencedirect.com/science/article/pii/004093839290015A}
  {\bibfield  {journal} {\bibinfo  {journal} {Topology}\ }\textbf {\bibinfo
  {volume} {31}},\ \bibinfo {pages} {865} (\bibinfo {year} {1992})}\BibitemShut
  {NoStop}%
\bibitem [{\citenamefont {Barrett}\ and\ \citenamefont
  {Westbury}(1996)}]{barrett1996invariants}%
  \BibitemOpen
  \bibfield  {author} {\bibinfo {author} {\bibfnamefont {J.}~\bibnamefont
  {Barrett}}\ and\ \bibinfo {author} {\bibfnamefont {B.}~\bibnamefont
  {Westbury}},\ }\bibfield  {title} {\enquote {\bibinfo {title} {Invariants of
  piecewise-linear 3-manifolds},}\ }\href
  {https://www.ams.org/journals/tran/1996-348-10/S0002-9947-96-01660-1/}
  {\bibfield  {journal} {\bibinfo  {journal} {Transactions of the American
  Mathematical Society}\ }\textbf {\bibinfo {volume} {348}},\ \bibinfo {pages}
  {3997} (\bibinfo {year} {1996})},\ \Eprint
  {http://arxiv.org/abs/hep-th/9311155} {arXiv:hep-th/9311155 [hep-th]}
  \BibitemShut {NoStop}%
\bibitem [{\citenamefont {Kirillov~Jr}(2011)}]{kirillov2011stringnet}%
  \BibitemOpen
  \bibfield  {author} {\bibinfo {author} {\bibfnamefont {A.}~\bibnamefont
  {Kirillov~Jr}},\ }\href@noop {} {\enquote {\bibinfo {title} {String-net model
  of {T}uraev-{V}iro invariants},}\ } (\bibinfo {year} {2011}),\ \Eprint
  {http://arxiv.org/abs/1106.6033} {arXiv:1106.6033 [math.AT]} \BibitemShut
  {NoStop}%
\bibitem [{\citenamefont {Lan}\ and\ \citenamefont
  {Wen}(2014)}]{lan2014topological}%
  \BibitemOpen
  \bibfield  {author} {\bibinfo {author} {\bibfnamefont {T.}~\bibnamefont
  {Lan}}\ and\ \bibinfo {author} {\bibfnamefont {X.-G.}\ \bibnamefont {Wen}},\
  }\bibfield  {title} {\enquote {\bibinfo {title} {Topological quasiparticles
  and the holographic bulk-edge relation in $(2+1)$-dimensional string-net
  models},}\ }\href {\doibase 10.1103/PhysRevB.90.115119} {\bibfield  {journal}
  {\bibinfo  {journal} {Phys. Rev. B}\ }\textbf {\bibinfo {volume} {90}},\
  \bibinfo {pages} {115119} (\bibinfo {year} {2014})},\ \Eprint
  {http://arxiv.org/abs/1311.1784} {arXiv:1311.1784 [cond-mat.str-el]}
  \BibitemShut {NoStop}%
\bibitem [{\citenamefont {Lin}\ and\ \citenamefont
  {Levin}(2014)}]{Lin2014generalizations}%
  \BibitemOpen
  \bibfield  {author} {\bibinfo {author} {\bibfnamefont {C.-H.}\ \bibnamefont
  {Lin}}\ and\ \bibinfo {author} {\bibfnamefont {M.}~\bibnamefont {Levin}},\
  }\bibfield  {title} {\enquote {\bibinfo {title} {Generalizations and
  limitations of string-net models},}\ }\href {\doibase
  10.1103/PhysRevB.89.195130} {\bibfield  {journal} {\bibinfo  {journal} {Phys.
  Rev. B}\ }\textbf {\bibinfo {volume} {89}},\ \bibinfo {pages} {195130}
  (\bibinfo {year} {2014})}\BibitemShut {NoStop}%
\bibitem [{\citenamefont {Lin}\ \emph {et~al.}(2021)\citenamefont {Lin},
  \citenamefont {Levin},\ and\ \citenamefont {Burnell}}]{Lin2021generalized}%
  \BibitemOpen
  \bibfield  {author} {\bibinfo {author} {\bibfnamefont {C.-H.}\ \bibnamefont
  {Lin}}, \bibinfo {author} {\bibfnamefont {M.}~\bibnamefont {Levin}}, \ and\
  \bibinfo {author} {\bibfnamefont {F.~J.}\ \bibnamefont {Burnell}},\
  }\bibfield  {title} {\enquote {\bibinfo {title} {Generalized string-net
  models: A thorough exposition},}\ }\href {\doibase
  10.1103/PhysRevB.103.195155} {\bibfield  {journal} {\bibinfo  {journal}
  {Phys. Rev. B}\ }\textbf {\bibinfo {volume} {103}},\ \bibinfo {pages}
  {195155} (\bibinfo {year} {2021})},\ \Eprint
  {http://arxiv.org/abs/2012.14424} {arXiv:2012.14424 [cond-mat.str-el]}
  \BibitemShut {NoStop}%
\bibitem [{\citenamefont {Hahn}\ and\ \citenamefont
  {Wolf}(2020)}]{Hahn2020generalized}%
  \BibitemOpen
  \bibfield  {author} {\bibinfo {author} {\bibfnamefont {A.}~\bibnamefont
  {Hahn}}\ and\ \bibinfo {author} {\bibfnamefont {R.}~\bibnamefont {Wolf}},\
  }\bibfield  {title} {\enquote {\bibinfo {title} {Generalized string-net model
  for unitary fusion categories without tetrahedral symmetry},}\ }\href
  {\doibase 10.1103/PhysRevB.102.115154} {\bibfield  {journal} {\bibinfo
  {journal} {Phys. Rev. B}\ }\textbf {\bibinfo {volume} {102}},\ \bibinfo
  {pages} {115154} (\bibinfo {year} {2020})},\ \Eprint
  {http://arxiv.org/abs/2004.07045} {arXiv:2004.07045 [quant-ph]} \BibitemShut
  {NoStop}%
\bibitem [{\citenamefont {Buerschaper}\ and\ \citenamefont
  {Aguado}(2009)}]{Buerschaper2009mapping}%
  \BibitemOpen
  \bibfield  {author} {\bibinfo {author} {\bibfnamefont {O.}~\bibnamefont
  {Buerschaper}}\ and\ \bibinfo {author} {\bibfnamefont {M.}~\bibnamefont
  {Aguado}},\ }\bibfield  {title} {\enquote {\bibinfo {title} {Mapping
  {K}itaev's quantum double lattice models to {L}evin and {W}en's string-net
  models},}\ }\href {\doibase 10.1103/PhysRevB.80.155136} {\bibfield  {journal}
  {\bibinfo  {journal} {Phys. Rev. B}\ }\textbf {\bibinfo {volume} {80}},\
  \bibinfo {pages} {155136} (\bibinfo {year} {2009})},\ \Eprint
  {http://arxiv.org/abs/0907.2670} {arXiv:0907.2670 [cond-mat.str-el]}
  \BibitemShut {NoStop}%
\bibitem [{\citenamefont {Buerschaper}\ \emph
  {et~al.}(2013{\natexlab{a}})\citenamefont {Buerschaper}, \citenamefont
  {Mombelli}, \citenamefont {Christandl},\ and\ \citenamefont
  {Aguado}}]{Buerschaper2013a}%
  \BibitemOpen
  \bibfield  {author} {\bibinfo {author} {\bibfnamefont {O.}~\bibnamefont
  {Buerschaper}}, \bibinfo {author} {\bibfnamefont {J.~M.}\ \bibnamefont
  {Mombelli}}, \bibinfo {author} {\bibfnamefont {M.}~\bibnamefont
  {Christandl}}, \ and\ \bibinfo {author} {\bibfnamefont {M.}~\bibnamefont
  {Aguado}},\ }\bibfield  {title} {\enquote {\bibinfo {title} {A hierarchy of
  topological tensor network states},}\ }\href {\doibase 10.1063/1.4773316}
  {\bibfield  {journal} {\bibinfo  {journal} {Journal of Mathematical Physics}\
  }\textbf {\bibinfo {volume} {54}},\ \bibinfo {pages} {012201} (\bibinfo
  {year} {2013}{\natexlab{a}})},\ \Eprint {http://arxiv.org/abs/1007.5283}
  {arXiv:1007.5283 [cond-mat.str-el]} \BibitemShut {NoStop}%
\bibitem [{\citenamefont {Buerschaper}\ \emph
  {et~al.}(2013{\natexlab{b}})\citenamefont {Buerschaper}, \citenamefont
  {Christandl}, \citenamefont {Kong},\ and\ \citenamefont
  {Aguado}}]{buerschaper2013electric}%
  \BibitemOpen
  \bibfield  {author} {\bibinfo {author} {\bibfnamefont {O.}~\bibnamefont
  {Buerschaper}}, \bibinfo {author} {\bibfnamefont {M.}~\bibnamefont
  {Christandl}}, \bibinfo {author} {\bibfnamefont {L.}~\bibnamefont {Kong}}, \
  and\ \bibinfo {author} {\bibfnamefont {M.}~\bibnamefont {Aguado}},\
  }\bibfield  {title} {\enquote {\bibinfo {title} {Electric--magnetic duality
  of lattice systems with topological order},}\ }\href
  {https://www.sciencedirect.com/science/article/abs/pii/S0550321313004367?via%3Dihub}
  {\bibfield  {journal} {\bibinfo  {journal} {Nuclear Physics B}\ }\textbf
  {\bibinfo {volume} {876}},\ \bibinfo {pages} {619} (\bibinfo {year}
  {2013}{\natexlab{b}})},\ \Eprint {http://arxiv.org/abs/1006.5823}
  {arXiv:1006.5823 [cond-mat.str-el]} \BibitemShut {NoStop}%
\bibitem [{\citenamefont {Jia}\ \emph {et~al.}(2023{\natexlab{a}})\citenamefont
  {Jia}, \citenamefont {Kaszlikowski},\ and\ \citenamefont
  {Tan}}]{jia2023boundary}%
  \BibitemOpen
  \bibfield  {author} {\bibinfo {author} {\bibfnamefont {Z.}~\bibnamefont
  {Jia}}, \bibinfo {author} {\bibfnamefont {D.}~\bibnamefont {Kaszlikowski}}, \
  and\ \bibinfo {author} {\bibfnamefont {S.}~\bibnamefont {Tan}},\ }\bibfield
  {title} {\enquote {\bibinfo {title} {Boundary and domain wall theories of 2d
  generalized quantum double model},}\ }\href
  {https://link.springer.com/article/10.1007/JHEP07(2023)160} {\bibfield
  {journal} {\bibinfo  {journal} {Journal of High Energy Physics}\ }\textbf
  {\bibinfo {volume} {2023}},\ \bibinfo {pages} {1} (\bibinfo {year}
  {2023}{\natexlab{a}})},\ \Eprint {http://arxiv.org/abs/2207.03970}
  {arXiv:2207.03970 [quant-ph]} \BibitemShut {NoStop}%
\bibitem [{\citenamefont {B\"{o}hm}\ \emph {et~al.}(1999)\citenamefont
  {B\"{o}hm}, \citenamefont {Nill},\ and\ \citenamefont
  {Szlach\'{a}nyi}}]{BOHM1998weak}%
  \BibitemOpen
  \bibfield  {author} {\bibinfo {author} {\bibfnamefont {G.}~\bibnamefont
  {B\"{o}hm}}, \bibinfo {author} {\bibfnamefont {F.}~\bibnamefont {Nill}}, \
  and\ \bibinfo {author} {\bibfnamefont {K.}~\bibnamefont {Szlach\'{a}nyi}},\
  }\bibfield  {title} {\enquote {\bibinfo {title} {Weak {H}opf algebras: I.
  {I}ntegral theory and ${C}^*$-structure},}\ }\href {\doibase
  https://doi.org/10.1006/jabr.1999.7984} {\bibfield  {journal} {\bibinfo
  {journal} {Journal of Algebra}\ }\textbf {\bibinfo {volume} {221}},\ \bibinfo
  {pages} {385 } (\bibinfo {year} {1999})},\ \Eprint
  {http://arxiv.org/abs/math/9805116} {arXiv:math/9805116 [math.QA]}
  \BibitemShut {NoStop}%
\bibitem [{\citenamefont {Szlach{\'a}nyi}(2001)}]{szlachanyi2000finite}%
  \BibitemOpen
  \bibfield  {author} {\bibinfo {author} {\bibfnamefont {K.}~\bibnamefont
  {Szlach{\'a}nyi}},\ }\bibfield  {title} {\enquote {\bibinfo {title} {Finite
  quantum groupoids and inclusions of finite type},}\ }in\ \href@noop {} {\emph
  {\bibinfo {booktitle} {Mathematical physics in mathematics and physics:
  quantum and operator algebraic aspects}}},\ \bibinfo {series} {Fields
  Institute Communications}, Vol.~\bibinfo {volume} {30}\ (\bibinfo
  {publisher} {American Mathematical Soc.},\ \bibinfo {year} {2001})\ p.\
  \bibinfo {pages} {393–407},\ \Eprint {http://arxiv.org/abs/math/0011036}
  {arXiv:math/0011036 [math.QA]} \BibitemShut {NoStop}%
\bibitem [{\citenamefont {Ostrik}(2003)}]{ostrik2003module}%
  \BibitemOpen
  \bibfield  {author} {\bibinfo {author} {\bibfnamefont {V.}~\bibnamefont
  {Ostrik}},\ }\bibfield  {title} {\enquote {\bibinfo {title} {Module
  categories, weak {H}opf algebras and modular invariants},}\ }\href
  {https://link.springer.com/content/pdf/10.1007%2Fs00031-003-0515-6.pdf}
  {\bibfield  {journal} {\bibinfo  {journal} {Transformation Groups}\ }\textbf
  {\bibinfo {volume} {8}},\ \bibinfo {pages} {177} (\bibinfo {year} {2003})},\
  \Eprint {http://arxiv.org/abs/math/0111139} {arXiv:math/0111139 [math.QA]}
  \BibitemShut {NoStop}%
\bibitem [{\citenamefont {Jia}\ \emph {et~al.}(2023{\natexlab{b}})\citenamefont
  {Jia}, \citenamefont {Tan}, \citenamefont {Kaszlikowski},\ and\ \citenamefont
  {Chang}}]{Jia2023weak}%
  \BibitemOpen
  \bibfield  {author} {\bibinfo {author} {\bibfnamefont {Z.}~\bibnamefont
  {Jia}}, \bibinfo {author} {\bibfnamefont {S.}~\bibnamefont {Tan}}, \bibinfo
  {author} {\bibfnamefont {D.}~\bibnamefont {Kaszlikowski}}, \ and\ \bibinfo
  {author} {\bibfnamefont {L.}~\bibnamefont {Chang}},\ }\bibfield  {title}
  {\enquote {\bibinfo {title} {On weak {H}opf symmetry and weak {H}opf quantum
  double model},}\ }\href {\doibase 10.1007/s00220-023-04792-9} {\bibfield
  {journal} {\bibinfo  {journal} {Communications in Mathematical Physics}\
  }\textbf {\bibinfo {volume} {402}},\ \bibinfo {pages} {3045} (\bibinfo {year}
  {2023}{\natexlab{b}})},\ \Eprint {http://arxiv.org/abs/2302.08131}
  {arXiv:2302.08131 [hep-th]} \BibitemShut {NoStop}%
\bibitem [{\citenamefont {Chang}(2014)}]{chang2014kitaev}%
  \BibitemOpen
  \bibfield  {author} {\bibinfo {author} {\bibfnamefont {L.}~\bibnamefont
  {Chang}},\ }\bibfield  {title} {\enquote {\bibinfo {title} {Kitaev models
  based on unitary quantum groupoids},}\ }\href
  {https://aip.scitation.org/doi/abs/10.1063/1.4869326} {\bibfield  {journal}
  {\bibinfo  {journal} {Journal of Mathematical Physics}\ }\textbf {\bibinfo
  {volume} {55}},\ \bibinfo {pages} {041703} (\bibinfo {year} {2014})},\
  \Eprint {http://arxiv.org/abs/1309.4181} {arXiv:1309.4181 [math.QA]}
  \BibitemShut {NoStop}%
\bibitem [{\citenamefont {Ocneanu}(1994)}]{ocneanu1994chirality}%
  \BibitemOpen
  \bibfield  {author} {\bibinfo {author} {\bibfnamefont {A.}~\bibnamefont
  {Ocneanu}},\ }\bibfield  {title} {\enquote {\bibinfo {title} {Chirality for
  operator algebras},}\ }in\ \href
  {https://www.worldscientific.com/doi/10.1142/9789814534154} {\emph {\bibinfo
  {booktitle} {Subfactors: Proceedings of the Taniguchi Symposium on Operator
  Algebra}}}\ (\bibinfo  {publisher} {World Sci. Publ., River Edge},\ \bibinfo
  {year} {1994})\ pp.\ \bibinfo {pages} {39--63}\BibitemShut {NoStop}%
\bibitem [{\citenamefont {Ocneanu}(2001)}]{ocneanu2001operator}%
  \BibitemOpen
  \bibfield  {author} {\bibinfo {author} {\bibfnamefont {A.}~\bibnamefont
  {Ocneanu}},\ }\bibfield  {title} {\enquote {\bibinfo {title} {Operator
  algebras, topology and subgroups of quantum symmetry -- construction of
  subgroups of quantum groups --},}\ }in\ \href@noop {} {\emph {\bibinfo
  {booktitle} {Taniguchi Conference on Mathematics Nara'98}}},\ Vol.~\bibinfo
  {volume} {31}\ (\bibinfo {organization} {Mathematical Society of Japan},\
  \bibinfo {year} {2001})\ pp.\ \bibinfo {pages} {235--264}\BibitemShut
  {NoStop}%
\bibitem [{\citenamefont {Izumi}(2000)}]{izumi2000structure}%
  \BibitemOpen
  \bibfield  {author} {\bibinfo {author} {\bibfnamefont {M.}~\bibnamefont
  {Izumi}},\ }\bibfield  {title} {\enquote {\bibinfo {title} {The structure of
  sectors associated with {L}ongo--{R}ehren inclusions {I}. {G}eneral
  theory},}\ }\href {https://link.springer.com/article/10.1007/s002200000234}
  {\bibfield  {journal} {\bibinfo  {journal} {Communications in Mathematical
  Physics}\ }\textbf {\bibinfo {volume} {213}},\ \bibinfo {pages} {127}
  (\bibinfo {year} {2000})}\BibitemShut {NoStop}%
\bibitem [{\citenamefont {Izumi}(2001)}]{izumi2001structure}%
  \BibitemOpen
  \bibfield  {author} {\bibinfo {author} {\bibfnamefont {M.}~\bibnamefont
  {Izumi}},\ }\bibfield  {title} {\enquote {\bibinfo {title} {The structure of
  sectors associated with {L}ongo--{R}ehren inclusions {II}. {E}xamples},}\
  }\href {https://www.worldscientific.com/doi/10.1142/S0129055X01000818}
  {\bibfield  {journal} {\bibinfo  {journal} {Reviews in Mathematical Physics}\
  }\textbf {\bibinfo {volume} {13}},\ \bibinfo {pages} {603} (\bibinfo {year}
  {2001})}\BibitemShut {NoStop}%
\bibitem [{\citenamefont {M{\"u}ger}(2003)}]{muger2003subfactorsI}%
  \BibitemOpen
  \bibfield  {author} {\bibinfo {author} {\bibfnamefont {M.}~\bibnamefont
  {M{\"u}ger}},\ }\bibfield  {title} {\enquote {\bibinfo {title} {From
  subfactors to categories and topology {I}: {F}robenius algebras in and
  {M}orita equivalence of tensor categories},}\ }\href@noop {} {\bibfield
  {journal} {\bibinfo  {journal} {Journal of Pure and Applied Algebra}\
  }\textbf {\bibinfo {volume} {180}},\ \bibinfo {pages} {81} (\bibinfo {year}
  {2003})}\BibitemShut {NoStop}%
\bibitem [{\citenamefont {Christian}\ \emph {et~al.}(2023)\citenamefont
  {Christian}, \citenamefont {Green}, \citenamefont {Huston},\ and\
  \citenamefont {Penneys}}]{Christian2023}%
  \BibitemOpen
  \bibfield  {author} {\bibinfo {author} {\bibfnamefont {J.}~\bibnamefont
  {Christian}}, \bibinfo {author} {\bibfnamefont {D.}~\bibnamefont {Green}},
  \bibinfo {author} {\bibfnamefont {P.}~\bibnamefont {Huston}}, \ and\ \bibinfo
  {author} {\bibfnamefont {D.}~\bibnamefont {Penneys}},\ }\bibfield  {title}
  {\enquote {\bibinfo {title} {A lattice model for condensation in
  {L}evin-{W}en systems},}\ }\href {\doibase 10.1007/JHEP09(2023)055}
  {\bibfield  {journal} {\bibinfo  {journal} {Journal of High Energy Physics}\
  }\textbf {\bibinfo {volume} {2023}},\ \bibinfo {pages} {55} (\bibinfo {year}
  {2023})},\ \Eprint {http://arxiv.org/abs/2303.04711} {arXiv:2303.04711
  [cond-mat.str-el]} \BibitemShut {NoStop}%
\bibitem [{\citenamefont {Kawagoe}\ \emph {et~al.}(2024)\citenamefont
  {Kawagoe}, \citenamefont {Jones}, \citenamefont {Sanford}, \citenamefont
  {Green},\ and\ \citenamefont {Penneys}}]{kawagoe2024levinwen}%
  \BibitemOpen
  \bibfield  {author} {\bibinfo {author} {\bibfnamefont {K.}~\bibnamefont
  {Kawagoe}}, \bibinfo {author} {\bibfnamefont {C.}~\bibnamefont {Jones}},
  \bibinfo {author} {\bibfnamefont {S.}~\bibnamefont {Sanford}}, \bibinfo
  {author} {\bibfnamefont {D.}~\bibnamefont {Green}}, \ and\ \bibinfo {author}
  {\bibfnamefont {D.}~\bibnamefont {Penneys}},\ }\href@noop {} {\enquote
  {\bibinfo {title} {{L}evin-{W}en is a gauge theory: entanglement from
  topology},}\ } (\bibinfo {year} {2024}),\ \Eprint
  {http://arxiv.org/abs/2401.13838} {arXiv:2401.13838 [cond-mat.str-el]}
  \BibitemShut {NoStop}%
\bibitem [{\citenamefont {Kitaev}\ and\ \citenamefont
  {Kong}(2012)}]{Kitaev2012boundary}%
  \BibitemOpen
  \bibfield  {author} {\bibinfo {author} {\bibfnamefont {A.}~\bibnamefont
  {Kitaev}}\ and\ \bibinfo {author} {\bibfnamefont {L.}~\bibnamefont {Kong}},\
  }\bibfield  {title} {\enquote {\bibinfo {title} {Models for gapped boundaries
  and domain walls},}\ }\href {\doibase 10.1007/s00220-012-1500-5} {\bibfield
  {journal} {\bibinfo  {journal} {Communications in Mathematical Physics}\
  }\textbf {\bibinfo {volume} {313}},\ \bibinfo {pages} {351} (\bibinfo {year}
  {2012})},\ \Eprint {http://arxiv.org/abs/1104.5047} {arXiv:1104.5047
  [cond-mat.str-el]} \BibitemShut {NoStop}%
\bibitem [{\citenamefont {Bravyi}\ and\ \citenamefont
  {Kitaev}(1998)}]{bravyi1998quantum}%
  \BibitemOpen
  \bibfield  {author} {\bibinfo {author} {\bibfnamefont {S.~B.}\ \bibnamefont
  {Bravyi}}\ and\ \bibinfo {author} {\bibfnamefont {A.~Y.}\ \bibnamefont
  {Kitaev}},\ }\bibfield  {title} {\enquote {\bibinfo {title} {Quantum codes on
  a lattice with boundary},}\ }\href@noop {} {\  (\bibinfo {year} {1998})},\
  \Eprint {http://arxiv.org/abs/quant-ph/9811052} {arXiv:quant-ph/9811052
  [quant-ph]} \BibitemShut {NoStop}%
\bibitem [{\citenamefont {Bombin}\ and\ \citenamefont
  {Martin-Delgado}(2008)}]{Bombin2008family}%
  \BibitemOpen
  \bibfield  {author} {\bibinfo {author} {\bibfnamefont {H.}~\bibnamefont
  {Bombin}}\ and\ \bibinfo {author} {\bibfnamefont {M.~A.}\ \bibnamefont
  {Martin-Delgado}},\ }\bibfield  {title} {\enquote {\bibinfo {title} {Family
  of non-{A}belian {K}itaev models on a lattice: Topological condensation and
  confinement},}\ }\href {\doibase 10.1103/PhysRevB.78.115421} {\bibfield
  {journal} {\bibinfo  {journal} {Phys. Rev. B}\ }\textbf {\bibinfo {volume}
  {78}},\ \bibinfo {pages} {115421} (\bibinfo {year} {2008})},\ \Eprint
  {http://arxiv.org/abs/0712.0190} {arXiv:0712.0190 [cond-mat.str-el]}
  \BibitemShut {NoStop}%
\bibitem [{\citenamefont {Freedman}\ and\ \citenamefont
  {Meyer}(2001)}]{freedman2001projective}%
  \BibitemOpen
  \bibfield  {author} {\bibinfo {author} {\bibfnamefont {M.~H.}\ \bibnamefont
  {Freedman}}\ and\ \bibinfo {author} {\bibfnamefont {D.~A.}\ \bibnamefont
  {Meyer}},\ }\bibfield  {title} {\enquote {\bibinfo {title} {Projective plane
  and planar quantum codes},}\ }\href
  {https://link.springer.com/article/10.1007/s102080010013} {\bibfield
  {journal} {\bibinfo  {journal} {Foundations of Computational Mathematics}\
  }\textbf {\bibinfo {volume} {1}},\ \bibinfo {pages} {325} (\bibinfo {year}
  {2001})},\ \Eprint {http://arxiv.org/abs/quant-ph/9810055}
  {arXiv:quant-ph/9810055 [quant-ph]} \BibitemShut {NoStop}%
\bibitem [{\citenamefont {Beigi}\ \emph {et~al.}(2011)\citenamefont {Beigi},
  \citenamefont {Shor},\ and\ \citenamefont {Whalen}}]{Beigi2011the}%
  \BibitemOpen
  \bibfield  {author} {\bibinfo {author} {\bibfnamefont {S.}~\bibnamefont
  {Beigi}}, \bibinfo {author} {\bibfnamefont {P.~W.}\ \bibnamefont {Shor}}, \
  and\ \bibinfo {author} {\bibfnamefont {D.}~\bibnamefont {Whalen}},\
  }\bibfield  {title} {\enquote {\bibinfo {title} {The quantum double model
  with boundary: Condensations and symmetries},}\ }\href {\doibase
  10.1007/s00220-011-1294-x} {\bibfield  {journal} {\bibinfo  {journal}
  {Communications in Mathematical Physics}\ }\textbf {\bibinfo {volume}
  {306}},\ \bibinfo {pages} {663} (\bibinfo {year} {2011})},\ \Eprint
  {http://arxiv.org/abs/1006.5479} {arXiv:1006.5479 [quant-ph]} \BibitemShut
  {NoStop}%
\bibitem [{\citenamefont {Levin}(2013)}]{Levin2013protected}%
  \BibitemOpen
  \bibfield  {author} {\bibinfo {author} {\bibfnamefont {M.}~\bibnamefont
  {Levin}},\ }\bibfield  {title} {\enquote {\bibinfo {title} {Protected edge
  modes without symmetry},}\ }\href {\doibase 10.1103/PhysRevX.3.021009}
  {\bibfield  {journal} {\bibinfo  {journal} {Phys. Rev. X}\ }\textbf {\bibinfo
  {volume} {3}},\ \bibinfo {pages} {021009} (\bibinfo {year} {2013})},\ \Eprint
  {http://arxiv.org/abs/1301.7355} {arXiv:1301.7355 [cond-mat.str-el]}
  \BibitemShut {NoStop}%
\bibitem [{\citenamefont {Wang}\ and\ \citenamefont
  {Wen}(2015)}]{Wang2015boundary}%
  \BibitemOpen
  \bibfield  {author} {\bibinfo {author} {\bibfnamefont {J.~C.}\ \bibnamefont
  {Wang}}\ and\ \bibinfo {author} {\bibfnamefont {X.-G.}\ \bibnamefont {Wen}},\
  }\bibfield  {title} {\enquote {\bibinfo {title} {Boundary degeneracy of
  topological order},}\ }\href {\doibase 10.1103/PhysRevB.91.125124} {\bibfield
   {journal} {\bibinfo  {journal} {Phys. Rev. B}\ }\textbf {\bibinfo {volume}
  {91}},\ \bibinfo {pages} {125124} (\bibinfo {year} {2015})},\ \Eprint
  {http://arxiv.org/abs/1212.4863} {arXiv:1212.4863 [cond-mat.str-e]}
  \BibitemShut {NoStop}%
\bibitem [{\citenamefont {Cong}\ \emph {et~al.}(2017)\citenamefont {Cong},
  \citenamefont {Cheng},\ and\ \citenamefont {Wang}}]{Cong2017}%
  \BibitemOpen
  \bibfield  {author} {\bibinfo {author} {\bibfnamefont {I.}~\bibnamefont
  {Cong}}, \bibinfo {author} {\bibfnamefont {M.}~\bibnamefont {Cheng}}, \ and\
  \bibinfo {author} {\bibfnamefont {Z.}~\bibnamefont {Wang}},\ }\bibfield
  {title} {\enquote {\bibinfo {title} {Hamiltonian and algebraic theories of
  gapped boundaries in topological phases of matter},}\ }\href {\doibase
  10.1007/s00220-017-2960-4} {\bibfield  {journal} {\bibinfo  {journal}
  {Communications in Mathematical Physics}\ }\textbf {\bibinfo {volume}
  {355}},\ \bibinfo {pages} {645} (\bibinfo {year} {2017})},\ \Eprint
  {http://arxiv.org/abs/1707.04564} {arXiv:1707.04564 [cond-mat.str-el]}
  \BibitemShut {NoStop}%
\bibitem [{\citenamefont {Wang}\ \emph {et~al.}(2020)\citenamefont {Wang},
  \citenamefont {Li}, \citenamefont {Hu},\ and\ \citenamefont
  {Wan}}]{wang2020electric}%
  \BibitemOpen
  \bibfield  {author} {\bibinfo {author} {\bibfnamefont {H.}~\bibnamefont
  {Wang}}, \bibinfo {author} {\bibfnamefont {Y.}~\bibnamefont {Li}}, \bibinfo
  {author} {\bibfnamefont {Y.}~\bibnamefont {Hu}}, \ and\ \bibinfo {author}
  {\bibfnamefont {Y.}~\bibnamefont {Wan}},\ }\bibfield  {title} {\enquote
  {\bibinfo {title} {Electric-magnetic duality in the quantum double models of
  topological orders with gapped boundaries},}\ }\href
  {https://link.springer.com/article/10.1007%2FJHEP02%282020%29030} {\bibfield
  {journal} {\bibinfo  {journal} {Journal of High Energy Physics}\ }\textbf
  {\bibinfo {volume} {2020}},\ \bibinfo {pages} {1} (\bibinfo {year} {2020})},\
  \Eprint {http://arxiv.org/abs/1910.13441} {arXiv:1910.13441
  [cond-mat.str-el]} \BibitemShut {NoStop}%
\bibitem [{\citenamefont {Bais}\ \emph {et~al.}(2003)\citenamefont {Bais},
  \citenamefont {Schroers},\ and\ \citenamefont {Slingerland}}]{bais2003hopf}%
  \BibitemOpen
  \bibfield  {author} {\bibinfo {author} {\bibfnamefont {A.~F.}\ \bibnamefont
  {Bais}}, \bibinfo {author} {\bibfnamefont {B.~J.}\ \bibnamefont {Schroers}},
  \ and\ \bibinfo {author} {\bibfnamefont {J.~K.}\ \bibnamefont
  {Slingerland}},\ }\bibfield  {title} {\enquote {\bibinfo {title} {Hopf
  symmetry breaking and confinement in (2+1)-dimensional gauge theory},}\
  }\href {https://doi.org/10.1088/1126-6708/2003/05/068} {\bibfield  {journal}
  {\bibinfo  {journal} {Journal of High Energy Physics}\ }\textbf {\bibinfo
  {volume} {2003}},\ \bibinfo {pages} {068} (\bibinfo {year} {2003})},\ \Eprint
  {http://arxiv.org/abs/hep-th/0205114} {arXiv:hep-th/0205114 [hep-th]}
  \BibitemShut {NoStop}%
\bibitem [{\citenamefont {Meusburger}(2017)}]{meusburger2017kitaev}%
  \BibitemOpen
  \bibfield  {author} {\bibinfo {author} {\bibfnamefont {C.}~\bibnamefont
  {Meusburger}},\ }\bibfield  {title} {\enquote {\bibinfo {title} {Kitaev
  lattice models as a {H}opf algebra gauge theory},}\ }\href
  {https://link.springer.com/article/10.1007%2Fs00220-017-2860-7} {\bibfield
  {journal} {\bibinfo  {journal} {Communications in Mathematical Physics}\
  }\textbf {\bibinfo {volume} {353}},\ \bibinfo {pages} {413} (\bibinfo {year}
  {2017})},\ \Eprint {http://arxiv.org/abs/1607.01144} {arXiv:1607.01144
  [math.QA]} \BibitemShut {NoStop}%
\bibitem [{\citenamefont {Szlach\'anyi}(2023)}]{szlachanyi2023oriented}%
  \BibitemOpen
  \bibfield  {author} {\bibinfo {author} {\bibfnamefont {K.}~\bibnamefont
  {Szlach\'anyi}},\ }\href@noop {} {\enquote {\bibinfo {title} {Oriented closed
  surface complexes and the {K}itaev model},}\ } (\bibinfo {year} {2023}),\
  \Eprint {http://arxiv.org/abs/2302.08027} {arXiv:2302.08027 [math.QA]}
  \BibitemShut {NoStop}%
\bibitem [{\citenamefont {Etingof}\ \emph {et~al.}(2016)\citenamefont
  {Etingof}, \citenamefont {Gelaki}, \citenamefont {Nikshych},\ and\
  \citenamefont {Ostrik}}]{etingof2016tensor}%
  \BibitemOpen
  \bibfield  {author} {\bibinfo {author} {\bibfnamefont {P.}~\bibnamefont
  {Etingof}}, \bibinfo {author} {\bibfnamefont {S.}~\bibnamefont {Gelaki}},
  \bibinfo {author} {\bibfnamefont {D.}~\bibnamefont {Nikshych}}, \ and\
  \bibinfo {author} {\bibfnamefont {V.}~\bibnamefont {Ostrik}},\ }\href
  {https://bookstore.ams.org/surv-205} {\emph {\bibinfo {title} {Tensor
  categories}}},\ Vol.\ \bibinfo {volume} {205}\ (\bibinfo  {publisher}
  {American Mathematical Soc.},\ \bibinfo {year} {2016})\ pp.\ \bibinfo {pages}
  {xvi+343}\BibitemShut {NoStop}%
\bibitem [{\citenamefont {Turaev}(2016)}]{turaev2016quantum}%
  \BibitemOpen
  \bibfield  {author} {\bibinfo {author} {\bibfnamefont {V.~G.}\ \bibnamefont
  {Turaev}},\ }\href
  {https://www.degruyter.com/document/doi/10.1515/9783110435221/html} {\emph
  {\bibinfo {title} {Quantum invariants of knots and 3-manifolds}}},\
  Vol.~\bibinfo {volume} {18}\ (\bibinfo  {publisher} {De Gruyter},\ \bibinfo
  {year} {2016})\ pp.\ \bibinfo {pages} {xii+592}\BibitemShut {NoStop}%
\bibitem [{\citenamefont {Turaev}\ and\ \citenamefont
  {Virelizier}(2017)}]{turaev2017monoidal}%
  \BibitemOpen
  \bibfield  {author} {\bibinfo {author} {\bibfnamefont {V.}~\bibnamefont
  {Turaev}}\ and\ \bibinfo {author} {\bibfnamefont {A.}~\bibnamefont
  {Virelizier}},\ }\href
  {https://link.springer.com/book/10.1007/978-3-319-49834-8} {\emph {\bibinfo
  {title} {Monoidal categories and topological field theory}}},\ \bibinfo
  {series} {Progress in Mathematics}, Vol.\ \bibinfo {volume} {322}\ (\bibinfo
  {year} {2017})\ pp.\ \bibinfo {pages} {xii+523}\BibitemShut {NoStop}%
\bibitem [{\citenamefont {Cui}\ and\ \citenamefont
  {Wang}(2017)}]{cui2017state}%
  \BibitemOpen
  \bibfield  {author} {\bibinfo {author} {\bibfnamefont {S.~X.}\ \bibnamefont
  {Cui}}\ and\ \bibinfo {author} {\bibfnamefont {Z.}~\bibnamefont {Wang}},\
  }\bibfield  {title} {\enquote {\bibinfo {title} {State sum invariants of
  three manifolds from spherical multi-fusion categories},}\ }\href
  {https://www.worldscientific.com/doi/abs/10.1142/S0218216517501048}
  {\bibfield  {journal} {\bibinfo  {journal} {Journal of Knot Theory and Its
  Ramifications}\ }\textbf {\bibinfo {volume} {26}},\ \bibinfo {pages}
  {1750104} (\bibinfo {year} {2017})},\ \Eprint
  {http://arxiv.org/abs/1702.07113} {arXiv:1702.07113 [math.QA]} \BibitemShut
  {NoStop}%
\bibitem [{\citenamefont {Reutter}(2023)}]{Reutter2023uniqueness}%
  \BibitemOpen
  \bibfield  {author} {\bibinfo {author} {\bibfnamefont {D.}~\bibnamefont
  {Reutter}},\ }\bibfield  {title} {\enquote {\bibinfo {title} {Uniqueness of
  unitary structure for unitarizable fusion categories},}\ }\href {\doibase
  10.1007/s00220-022-04425-7} {\bibfield  {journal} {\bibinfo  {journal}
  {Communications in Mathematical Physics}\ }\textbf {\bibinfo {volume}
  {397}},\ \bibinfo {pages} {37} (\bibinfo {year} {2023})},\ \Eprint
  {http://arxiv.org/abs/1906.09710} {arXiv:1906.09710 [math.QA]} \BibitemShut
  {NoStop}%
\bibitem [{\citenamefont {Ciamprone}\ \emph {et~al.}(2023)\citenamefont
  {Ciamprone}, \citenamefont {Giannone},\ and\ \citenamefont
  {Pinzari}}]{ciamprone2023weak}%
  \BibitemOpen
  \bibfield  {author} {\bibinfo {author} {\bibfnamefont {S.}~\bibnamefont
  {Ciamprone}}, \bibinfo {author} {\bibfnamefont {M.~V.}\ \bibnamefont
  {Giannone}}, \ and\ \bibinfo {author} {\bibfnamefont {C.}~\bibnamefont
  {Pinzari}},\ }\href@noop {} {\enquote {\bibinfo {title} {{W}eak quasi-{H}opf
  algebras, ${C}^*$-tensor categories and conformal field theory, and the
  {K}azhdan-{L}usztig-{F}inkelberg theorem},}\ } (\bibinfo {year} {2023}),\
  \Eprint {http://arxiv.org/abs/2101.10016} {arXiv:2101.10016 [math.QA]}
  \BibitemShut {NoStop}%
\bibitem [{\citenamefont {Kong}\ and\ \citenamefont
  {Zheng}(2018)}]{Kong2018the}%
  \BibitemOpen
  \bibfield  {author} {\bibinfo {author} {\bibfnamefont {L.}~\bibnamefont
  {Kong}}\ and\ \bibinfo {author} {\bibfnamefont {H.}~\bibnamefont {Zheng}},\
  }\bibfield  {title} {\enquote {\bibinfo {title} {The center functor is fully
  faithful},}\ }\href {\doibase https://doi.org/10.1016/j.aim.2018.09.031}
  {\bibfield  {journal} {\bibinfo  {journal} {Advances in Mathematics}\
  }\textbf {\bibinfo {volume} {339}},\ \bibinfo {pages} {749} (\bibinfo {year}
  {2018})},\ \Eprint {http://arxiv.org/abs/1507.00503} {arXiv:1507.00503
  [math.CT]} \BibitemShut {NoStop}%
\bibitem [{\citenamefont {Chang}\ \emph {et~al.}(2015)\citenamefont {Chang},
  \citenamefont {Cheng}, \citenamefont {Cui}, \citenamefont {Hu}, \citenamefont
  {Jin}, \citenamefont {Movassagh}, \citenamefont {Naaijkens}, \citenamefont
  {Wang},\ and\ \citenamefont {Young}}]{chang2015enriching}%
  \BibitemOpen
  \bibfield  {author} {\bibinfo {author} {\bibfnamefont {L.}~\bibnamefont
  {Chang}}, \bibinfo {author} {\bibfnamefont {M.}~\bibnamefont {Cheng}},
  \bibinfo {author} {\bibfnamefont {S.~X.}\ \bibnamefont {Cui}}, \bibinfo
  {author} {\bibfnamefont {Y.}~\bibnamefont {Hu}}, \bibinfo {author}
  {\bibfnamefont {W.}~\bibnamefont {Jin}}, \bibinfo {author} {\bibfnamefont
  {R.}~\bibnamefont {Movassagh}}, \bibinfo {author} {\bibfnamefont
  {P.}~\bibnamefont {Naaijkens}}, \bibinfo {author} {\bibfnamefont
  {Z.}~\bibnamefont {Wang}}, \ and\ \bibinfo {author} {\bibfnamefont
  {A.}~\bibnamefont {Young}},\ }\bibfield  {title} {\enquote {\bibinfo {title}
  {On enriching the {L}evin--{W}en model with symmetry},}\ }\href
  {https://iopscience.iop.org/article/10.1088/1751-8113/48/12/12FT01}
  {\bibfield  {journal} {\bibinfo  {journal} {Journal of Physics A:
  Mathematical and Theoretical}\ }\textbf {\bibinfo {volume} {48}},\ \bibinfo
  {pages} {12FT01} (\bibinfo {year} {2015})},\ \Eprint
  {http://arxiv.org/abs/1412.6589} {arXiv:1412.6589 [cond-mat.str-el]}
  \BibitemShut {NoStop}%
\bibitem [{\citenamefont {Nikshych}\ \emph {et~al.}(2003)\citenamefont
  {Nikshych}, \citenamefont {Turaev},\ and\ \citenamefont
  {Vainerman}}]{nikshych2003invariants}%
  \BibitemOpen
  \bibfield  {author} {\bibinfo {author} {\bibfnamefont {D.}~\bibnamefont
  {Nikshych}}, \bibinfo {author} {\bibfnamefont {V.}~\bibnamefont {Turaev}}, \
  and\ \bibinfo {author} {\bibfnamefont {L.}~\bibnamefont {Vainerman}},\
  }\bibfield  {title} {\enquote {\bibinfo {title} {Invariants of knots and
  3-manifolds from quantum groupoids},}\ }\href
  {https://www.sciencedirect.com/science/article/pii/S016686410200055X}
  {\bibfield  {journal} {\bibinfo  {journal} {Topology and its Applications}\
  }\textbf {\bibinfo {volume} {127}},\ \bibinfo {pages} {91} (\bibinfo {year}
  {2003})},\ \Eprint {http://arxiv.org/abs/math/0006078} {arXiv:math/0006078
  [math.QA]} \BibitemShut {NoStop}%
\bibitem [{\citenamefont {Etingof}\ \emph {et~al.}(2005)\citenamefont
  {Etingof}, \citenamefont {Nikshych},\ and\ \citenamefont
  {Ostrik}}]{etingof2005fusion}%
  \BibitemOpen
  \bibfield  {author} {\bibinfo {author} {\bibfnamefont {P.}~\bibnamefont
  {Etingof}}, \bibinfo {author} {\bibfnamefont {D.}~\bibnamefont {Nikshych}}, \
  and\ \bibinfo {author} {\bibfnamefont {V.}~\bibnamefont {Ostrik}},\
  }\bibfield  {title} {\enquote {\bibinfo {title} {On fusion categories},}\
  }\href {https://annals.math.princeton.edu/2005/162-2/p01} {\bibfield
  {journal} {\bibinfo  {journal} {Annals of Mathematics}\ ,\ \bibinfo {pages}
  {581}} (\bibinfo {year} {2005})},\ \Eprint
  {http://arxiv.org/abs/math/0203060} {arXiv:math/0203060 [math.QA]}
  \BibitemShut {NoStop}%
\bibitem [{\citenamefont {Bais}\ \emph {et~al.}(2002)\citenamefont {Bais},
  \citenamefont {Schroers},\ and\ \citenamefont {Slingerland}}]{Bais2002}%
  \BibitemOpen
  \bibfield  {author} {\bibinfo {author} {\bibfnamefont {F.~A.}\ \bibnamefont
  {Bais}}, \bibinfo {author} {\bibfnamefont {B.~J.}\ \bibnamefont {Schroers}},
  \ and\ \bibinfo {author} {\bibfnamefont {J.~K.}\ \bibnamefont
  {Slingerland}},\ }\bibfield  {title} {\enquote {\bibinfo {title} {Broken
  quantum symmetry and confinement phases in planar physics},}\ }\href
  {https://link.aps.org/doi/10.1103/PhysRevLett.89.181601} {\bibfield
  {journal} {\bibinfo  {journal} {Phys. Rev. Lett.}\ }\textbf {\bibinfo
  {volume} {89}},\ \bibinfo {pages} {181601} (\bibinfo {year}
  {2002})}\BibitemShut {NoStop}%
\bibitem [{\citenamefont {Bais}\ and\ \citenamefont
  {Slingerland}(2009)}]{Bais2009}%
  \BibitemOpen
  \bibfield  {author} {\bibinfo {author} {\bibfnamefont {F.~A.}\ \bibnamefont
  {Bais}}\ and\ \bibinfo {author} {\bibfnamefont {J.~K.}\ \bibnamefont
  {Slingerland}},\ }\bibfield  {title} {\enquote {\bibinfo {title}
  {Condensate-induced transitions between topologically ordered phases},}\
  }\href {\doibase 10.1103/PhysRevB.79.045316} {\bibfield  {journal} {\bibinfo
  {journal} {Phys. Rev. B}\ }\textbf {\bibinfo {volume} {79}},\ \bibinfo
  {pages} {045316} (\bibinfo {year} {2009})}\BibitemShut {NoStop}%
\bibitem [{\citenamefont {Kong}(2014)}]{Kong2014}%
  \BibitemOpen
  \bibfield  {author} {\bibinfo {author} {\bibfnamefont {L.}~\bibnamefont
  {Kong}},\ }\bibfield  {title} {\enquote {\bibinfo {title} {Anyon condensation
  and tensor categories},}\ }\href {\doibase
  https://doi.org/10.1016/j.nuclphysb.2014.07.003} {\bibfield  {journal}
  {\bibinfo  {journal} {Nuclear Physics B}\ }\textbf {\bibinfo {volume}
  {886}},\ \bibinfo {pages} {436 } (\bibinfo {year} {2014})},\ \Eprint
  {http://arxiv.org/abs/1307.8244} {arXiv:1307.8244 [cond-mat.str-el]}
  \BibitemShut {NoStop}%
\bibitem [{\citenamefont {Eli{\"e}ns}(2010)}]{eliens2010anyon}%
  \BibitemOpen
  \bibfield  {author} {\bibinfo {author} {\bibfnamefont {S.}~\bibnamefont
  {Eli{\"e}ns}},\ }\href@noop {} {\enquote {\bibinfo {title} {Anyon
  condensation},}\ }\bibinfo {howpublished} {MSc Thesis, University of
  Amsterdam, 2010, available at
  \href{https://www.cs.vu.nl/~eliens/download/thesis-sebas.pdf}{https://www.cs.vu.nl/~eliens/download/thesis-sebas.pdf}}
  (\bibinfo {year} {2010})\BibitemShut {NoStop}%
\bibitem [{\citenamefont {Burnell}(2018)}]{burnell2018anyon}%
  \BibitemOpen
  \bibfield  {author} {\bibinfo {author} {\bibfnamefont {F.~J.}\ \bibnamefont
  {Burnell}},\ }\bibfield  {title} {\enquote {\bibinfo {title} {Anyon
  condensation and its applications},}\ }\href
  {https://www.annualreviews.org/doi/10.1146/annurev-conmatphys-033117-054154}
  {\bibfield  {journal} {\bibinfo  {journal} {Annual Review of Condensed Matter
  Physics}\ }\textbf {\bibinfo {volume} {9}},\ \bibinfo {pages} {307} (\bibinfo
  {year} {2018})},\ \Eprint {http://arxiv.org/abs/1706.04940} {arXiv:1706.04940
  [cond-mat.str-el]} \BibitemShut {NoStop}%
\bibitem [{\citenamefont {Hong}(2009)}]{hong2009symmetrization}%
  \BibitemOpen
  \bibfield  {author} {\bibinfo {author} {\bibfnamefont {S.-M.}\ \bibnamefont
  {Hong}},\ }\href@noop {} {\enquote {\bibinfo {title} {On symmetrization of
  $6j$-symbols and {L}evin-{W}en {H}amiltonian},}\ } (\bibinfo {year} {2009}),\
  \Eprint {http://arxiv.org/abs/0907.2204} {arXiv:0907.2204 [math.GT]}
  \BibitemShut {NoStop}%
\bibitem [{\citenamefont {Hu}\ \emph {et~al.}(2012)\citenamefont {Hu},
  \citenamefont {Stirling},\ and\ \citenamefont {Wu}}]{Hu2012ground}%
  \BibitemOpen
  \bibfield  {author} {\bibinfo {author} {\bibfnamefont {Y.}~\bibnamefont
  {Hu}}, \bibinfo {author} {\bibfnamefont {S.~D.}\ \bibnamefont {Stirling}}, \
  and\ \bibinfo {author} {\bibfnamefont {Y.-S.}\ \bibnamefont {Wu}},\
  }\bibfield  {title} {\enquote {\bibinfo {title} {Ground-state degeneracy in
  the {L}evin-{W}en model for topological phases},}\ }\href {\doibase
  10.1103/PhysRevB.85.075107} {\bibfield  {journal} {\bibinfo  {journal} {Phys.
  Rev. B}\ }\textbf {\bibinfo {volume} {85}},\ \bibinfo {pages} {075107}
  (\bibinfo {year} {2012})},\ \Eprint {http://arxiv.org/abs/1105.5771}
  {arXiv:1105.5771 [cond-mat.str-el]} \BibitemShut {NoStop}%
\bibitem [{\citenamefont {Hu}\ \emph {et~al.}(2018)\citenamefont {Hu},
  \citenamefont {Luo}, \citenamefont {Pankovich}, \citenamefont {Wan},\ and\
  \citenamefont {Wu}}]{hu2018boundary}%
  \BibitemOpen
  \bibfield  {author} {\bibinfo {author} {\bibfnamefont {Y.}~\bibnamefont
  {Hu}}, \bibinfo {author} {\bibfnamefont {Z.-X.}\ \bibnamefont {Luo}},
  \bibinfo {author} {\bibfnamefont {R.}~\bibnamefont {Pankovich}}, \bibinfo
  {author} {\bibfnamefont {Y.}~\bibnamefont {Wan}}, \ and\ \bibinfo {author}
  {\bibfnamefont {Y.-S.}\ \bibnamefont {Wu}},\ }\bibfield  {title} {\enquote
  {\bibinfo {title} {Boundary {H}amiltonian theory for gapped topological
  phases on an open surface},}\ }\href
  {https://link.springer.com/article/10.1007/JHEP01(2018)134} {\bibfield
  {journal} {\bibinfo  {journal} {Journal of High Energy Physics}\ }\textbf
  {\bibinfo {volume} {2018}},\ \bibinfo {pages} {1} (\bibinfo {year} {2018})},\
  \Eprint {http://arxiv.org/abs/1706.03329} {arXiv:1706.03329
  [cond-mat.str-el]} \BibitemShut {NoStop}%
\bibitem [{\citenamefont {Fuchs}\ and\ \citenamefont
  {Gr{\o}sfjeld}(2023)}]{fuchs2023tetrahedral}%
  \BibitemOpen
  \bibfield  {author} {\bibinfo {author} {\bibfnamefont {J.}~\bibnamefont
  {Fuchs}}\ and\ \bibinfo {author} {\bibfnamefont {T.}~\bibnamefont
  {Gr{\o}sfjeld}},\ }\bibfield  {title} {\enquote {\bibinfo {title}
  {Tetrahedral symmetry of $6j$-symbols in fusion categories},}\ }\href@noop {}
  {\bibfield  {journal} {\bibinfo  {journal} {Journal of Pure and Applied
  Algebra}\ }\textbf {\bibinfo {volume} {227}},\ \bibinfo {pages} {107112}
  (\bibinfo {year} {2023})},\ \Eprint {http://arxiv.org/abs/2106.16186}
  {arXiv:2106.16186 [math.QA]} \BibitemShut {NoStop}%
\bibitem [{\citenamefont {Vidal}(2022)}]{Vidal2022partition}%
  \BibitemOpen
  \bibfield  {author} {\bibinfo {author} {\bibfnamefont {J.}~\bibnamefont
  {Vidal}},\ }\bibfield  {title} {\enquote {\bibinfo {title} {Partition
  function of the {L}evin-{W}en model},}\ }\href {\doibase
  10.1103/PhysRevB.105.L041110} {\bibfield  {journal} {\bibinfo  {journal}
  {Phys. Rev. B}\ }\textbf {\bibinfo {volume} {105}},\ \bibinfo {pages}
  {L041110} (\bibinfo {year} {2022})},\ \Eprint
  {http://arxiv.org/abs/2108.13425} {arXiv:2108.13425 [cond-mat.str-el]}
  \BibitemShut {NoStop}%
\bibitem [{\citenamefont {Ritz-Zwilling}\ \emph {et~al.}(2024)\citenamefont
  {Ritz-Zwilling}, \citenamefont {Fuchs}, \citenamefont {Simon},\ and\
  \citenamefont {Vidal}}]{ritzzwilling2023topological}%
  \BibitemOpen
  \bibfield  {author} {\bibinfo {author} {\bibfnamefont {A.}~\bibnamefont
  {Ritz-Zwilling}}, \bibinfo {author} {\bibfnamefont {J.-N.}\ \bibnamefont
  {Fuchs}}, \bibinfo {author} {\bibfnamefont {S.~H.}\ \bibnamefont {Simon}}, \
  and\ \bibinfo {author} {\bibfnamefont {J.}~\bibnamefont {Vidal}},\ }\bibfield
   {title} {\enquote {\bibinfo {title} {Topological and nontopological
  degeneracies in generalized string-net models},}\ }\href {\doibase
  10.1103/PhysRevB.109.045130} {\bibfield  {journal} {\bibinfo  {journal}
  {Phys. Rev. B}\ }\textbf {\bibinfo {volume} {109}},\ \bibinfo {pages}
  {045130} (\bibinfo {year} {2024})},\ \Eprint
  {http://arxiv.org/abs/2309.00343} {arXiv:2309.00343 [cond-mat.other]}
  \BibitemShut {NoStop}%
\bibitem [{\citenamefont {Moore}\ and\ \citenamefont
  {Seiberg}(1989)}]{Moore1989}%
  \BibitemOpen
  \bibfield  {author} {\bibinfo {author} {\bibfnamefont {G.}~\bibnamefont
  {Moore}}\ and\ \bibinfo {author} {\bibfnamefont {N.}~\bibnamefont
  {Seiberg}},\ }\bibfield  {title} {\enquote {\bibinfo {title} {Classical and
  quantum conformal field theory},}\ }\href {\doibase 10.1007/BF01238857}
  {\bibfield  {journal} {\bibinfo  {journal} {Communications in Mathematical
  Physics}\ }\textbf {\bibinfo {volume} {123}},\ \bibinfo {pages} {177}
  (\bibinfo {year} {1989})}\BibitemShut {NoStop}%
\bibitem [{\citenamefont {Bridgeman}\ \emph {et~al.}(2023)\citenamefont
  {Bridgeman}, \citenamefont {Lootens},\ and\ \citenamefont
  {Verstraete}}]{bridgeman2023invertible}%
  \BibitemOpen
  \bibfield  {author} {\bibinfo {author} {\bibfnamefont {J.~C.}\ \bibnamefont
  {Bridgeman}}, \bibinfo {author} {\bibfnamefont {L.}~\bibnamefont {Lootens}},
  \ and\ \bibinfo {author} {\bibfnamefont {F.}~\bibnamefont {Verstraete}},\
  }\bibfield  {title} {\enquote {\bibinfo {title} {Invertible bimodule
  categories and generalized {S}chur orthogonality},}\ }\href
  {https://link.springer.com/article/10.1007/s00220-023-04781-y} {\bibfield
  {journal} {\bibinfo  {journal} {Communications in Mathematical Physics}\
  }\textbf {\bibinfo {volume} {402}},\ \bibinfo {pages} {2691} (\bibinfo {year}
  {2023})},\ \Eprint {http://arxiv.org/abs/2211.01947} {arXiv:2211.01947
  [math.QA]} \BibitemShut {NoStop}%
\bibitem [{\citenamefont {Jia}\ \emph {et~al.}(2022)\citenamefont {Jia},
  \citenamefont {Kaszlikowski},\ and\ \citenamefont
  {Tan}}]{jia2022electricmagnetic}%
  \BibitemOpen
  \bibfield  {author} {\bibinfo {author} {\bibfnamefont {Z.}~\bibnamefont
  {Jia}}, \bibinfo {author} {\bibfnamefont {D.}~\bibnamefont {Kaszlikowski}}, \
  and\ \bibinfo {author} {\bibfnamefont {S.}~\bibnamefont {Tan}},\ }\href@noop
  {} {\enquote {\bibinfo {title} {Electric-magnetic duality and $\mathbb{Z}_2$
  symmetry enriched cyclic abelian lattice gauge theory},}\ } (\bibinfo {year}
  {2022}),\ \Eprint {http://arxiv.org/abs/2201.12361} {arXiv:2201.12361
  [quant-ph]} \BibitemShut {NoStop}%
\bibitem [{\citenamefont {Jia}()}]{jia2023anyon}%
  \BibitemOpen
  \bibfield  {author} {\bibinfo {author} {\bibfnamefont {Z.}~\bibnamefont
  {Jia}},\ }\href@noop {} {\enquote {\bibinfo {title} {Anyon condensation: weak
  {Hopf} symmetry breaking perspective},}\ }\bibinfo {howpublished} {in
  preparation}\BibitemShut {NoStop}%
\bibitem [{\citenamefont {M\"{u}ger}(2004)}]{MUGER2004galois}%
  \BibitemOpen
  \bibfield  {author} {\bibinfo {author} {\bibfnamefont {M.}~\bibnamefont
  {M\"{u}ger}},\ }\bibfield  {title} {\enquote {\bibinfo {title} {Galois
  extensions of braided tensor categories and braided crossed
  {$G$}-categories},}\ }\href {\doibase
  https://doi.org/10.1016/j.jalgebra.2004.02.026} {\bibfield  {journal}
  {\bibinfo  {journal} {Journal of Algebra}\ }\textbf {\bibinfo {volume}
  {277}},\ \bibinfo {pages} {256 } (\bibinfo {year} {2004})},\ \Eprint
  {http://arxiv.org/abs/math/0209093} {arXiv:math/0209093 [math.CT]}
  \BibitemShut {NoStop}%
\bibitem [{\citenamefont {Davydov}\ \emph {et~al.}(2013)\citenamefont
  {Davydov}, \citenamefont {M{\"u}ger}, \citenamefont {Nikshych},\ and\
  \citenamefont {Ostrik}}]{davydov2013witt}%
  \BibitemOpen
  \bibfield  {author} {\bibinfo {author} {\bibfnamefont {A.}~\bibnamefont
  {Davydov}}, \bibinfo {author} {\bibfnamefont {M.}~\bibnamefont {M{\"u}ger}},
  \bibinfo {author} {\bibfnamefont {D.}~\bibnamefont {Nikshych}}, \ and\
  \bibinfo {author} {\bibfnamefont {V.}~\bibnamefont {Ostrik}},\ }\bibfield
  {title} {\enquote {\bibinfo {title} {The {W}itt group of non-degenerate
  braided fusion categories},}\ }\href
  {https://www.degruyter.com/document/doi/10.1515/crelle.2012.014/html}
  {\bibfield  {journal} {\bibinfo  {journal} {Journal f{\"u}r die reine und
  angewandte Mathematik}\ }\textbf {\bibinfo {volume} {2013}},\ \bibinfo
  {pages} {135} (\bibinfo {year} {2013})},\ \Eprint
  {http://arxiv.org/abs/1009.2117} {arXiv:1009.2117 [math.QA]} \BibitemShut
  {NoStop}%
\bibitem [{\citenamefont {Etingof}\ \emph {et~al.}(2010)\citenamefont
  {Etingof}, \citenamefont {Nikshych},\ and\ \citenamefont
  {Ostrik}}]{etingof2010fusion}%
  \BibitemOpen
  \bibfield  {author} {\bibinfo {author} {\bibfnamefont {P.}~\bibnamefont
  {Etingof}}, \bibinfo {author} {\bibfnamefont {D.}~\bibnamefont {Nikshych}}, \
  and\ \bibinfo {author} {\bibfnamefont {V.}~\bibnamefont {Ostrik}},\
  }\bibfield  {title} {\enquote {\bibinfo {title} {Fusion categories and
  homotopy theory},}\ }\href
  {https://www.ems-ph.org/journals/show_abstract.php?issn=1663-487X&vol=1&iss=3&rank=1}
  {\bibfield  {journal} {\bibinfo  {journal} {Quantum Topology}\ }\textbf
  {\bibinfo {volume} {1}},\ \bibinfo {pages} {209} (\bibinfo {year} {2010})},\
  \Eprint {http://arxiv.org/abs/0909.3140} {arXiv:0909.3140 [math.QA]}
  \BibitemShut {NoStop}%
\bibitem [{\citenamefont {Kong}(2012)}]{kong2012universal}%
  \BibitemOpen
  \bibfield  {author} {\bibinfo {author} {\bibfnamefont {L.}~\bibnamefont
  {Kong}},\ }\bibfield  {title} {\enquote {\bibinfo {title} {Some universal
  properties of {L}evin-{W}en models},}\ }in\ \href@noop {} {\emph {\bibinfo
  {booktitle} {XVIIth Interntional Congress on Mathematical Physics}}}\
  (\bibinfo {year} {2012})\ \Eprint {http://arxiv.org/abs/1211.4644}
  {arXiv:1211.4644 [cond-mat.str-el]} \BibitemShut {NoStop}%
\end{thebibliography}%

\end{document}